\numberwithin{equation}{section}
\newtheorem{thm}{Theorem}[section]
\newtheorem{cor}[thm]{Corollary}
\newtheorem{lem}[thm]{Lemma}
\newtheorem{prop}[thm]{Proposition}
\theoremstyle{definition}
\newtheorem{defn}[thm]{Definition}
\newtheorem{rem}[thm]{Remark}
\begin{document}


\newcommand{\arXivNumber}{1601.06179}

\renewcommand{\PaperNumber}{110}

\FirstPageHeading

\ShortArticleName{Commutation Relations and Discrete Garnier Systems}

\ArticleName{Commutation Relations and Discrete Garnier Systems}

\Author{Christopher M.~ORMEROD~$^\dag$ and Eric M.~RAINS~$^\ddag$}
\AuthorNameForHeading{C.M.~Ormerod and E.M.~Rains}
\Address{$^\dag$~University of Maine, Department of Mathemaitcs {\rm \&} Statistics,\\
\hphantom{$^\dag$}~5752 Neville Hall, Room 322, Orono, ME 04469, USA}
\EmailD{\href{mailto:christopher.ormerod@gmail.com}{christopher.ormerod@gmail.com}}
\URLaddressD{\url{http://math.umaine.edu/~ormerod/}}
\Address{$^\ddag$~California Institute of Technology, Mathematics 253-37, Pasadena, CA 91125, USA}
\EmailD{\href{mailto:rains@caltech.edu}{rains@caltech.edu}}
\URLaddressD{\url{http://www.math.caltech.edu/people/rains.html}}

\ArticleDates{Received March 30, 2016, in f\/inal form October 30, 2016; Published online November 08, 2016}

\Abstract{We present four classes of nonlinear systems which may be considered discrete analogues of the Garnier system. These systems arise as discrete isomonodromic deformations of systems of linear dif\/ference equations in which the associated Lax matrices are presented in a factored form. A system of discrete isomonodromic deformations is completely determined by commutation relations between the factors. We also reparameterize these systems in terms of the image and kernel vectors at singular points to obtain a separate birational form. A distinguishing feature of this study is the presence of a symmetry condition on the associated linear problems that only appears as a necessary feature of the Lax pairs for the least degenerate discrete Painlev\'e equations.}

\Keywords{integrable systems; dif\/ference equations; Lax pairs; discrete isomonodromy}

\Classification{39A10; 39A13; 37K15}

{\small \tableofcontents}

\section{Introduction}

Associated with any system of linear dif\/ferential equations is a linear representation of the fundamental group of a sphere punctured at the poles of the system, called the monodromy representation. An isomonodromic deformation is the way in which the system's coef\/f\/icients change while preserving this monodromy representation \cite{Jimbo:Monodromy1}. It is known that all the Painlev\'e equations arise as isomonodromic deformations of second-order dif\/ferential equations~\cite{Jimbo:Monodromy2}. The Garnier system arises as an isomonodromic deformation of a second-order Fuchsian scalar dif\/ferential equation with~$m$ apparent singularities and $m+3$ poles \cite{Garnier}. When we f\/ix three poles, we have $m$ remaining poles that are considered time variables~\cite{Jimbo:Monodromy2}. The simplest nontrivial case, where $m=1$, corresponds to the sixth Painlev\'e equation~\cite{Fuchs2, Fuchs1}.

The focus of this study is a collection of systems that may be regarded as discrete analogues of the Garnier system. We regard these to be nonlinear integrable systems arising as discrete isomonodromic deformations~\cite{Gramani:Isomonodromic}. Our starting point is a regular system of dif\/ference equations of the form
\begin{gather}\label{linearsigma}
\sigma Y(x) = A(x) Y(x),
\end{gather}
where $A(x)$ is a $2\times 2$ matrix polynomial whose determinant is of degree $N$ in $x$, which is called a~spectral variable, and where $\sigma = \sigma_h \colon f(x) \to f(x+h)$ or $\sigma = \sigma_q \colon f(x) \to f(qx)$. These operators are def\/ined in terms of two constants $h, q \in \mathbb{C}$ subject to the constraints $\Re h > 0$ and $0 < | q | < 1$. The goal of this work is to specify a parameterization of these matrices by giving a~factorization,
\begin{gather}\label{prodform}
A(x) = L_1(x) \cdots L_N(x),
\end{gather}
which will be conducive to f\/inding the discrete isomonodromic deformations of~\eqref{linearsigma}.

A discrete isomonodromic deformation is a transformation induced by an auxiliary system of dif\/ference equations, which may be written in matrix form as
\begin{gather}\label{Atildeev}
\tilde{Y}(x) = R(x) Y(x).
\end{gather}
The transformed matrix, $\tilde{Y}(x)$, satisifes a new equation of the form \eqref{linearsigma}, given by
\begin{gather}\label{transformedlinearsigma}
\sigma \tilde{Y}(x) = \tilde{A}(x)\tilde{Y}(x),
\end{gather}
where consistency in the calculation of $\sigma \tilde{Y}(x)$ imposes the relation
\begin{gather}\label{comp}
\tilde{A}(x)R(x) = \sigma R(x) A(x),
\end{gather}
which is compatible with \eqref{linearsigma}. Comparing the left and right-hand sides of \eqref{comp} def\/ines a~rational map between the entries of $A(x)$ and $\tilde{A}(x)$. The two operators appearing in~\eqref{linearsigma} and~\eqref{Atildeev} def\/ine a Lax pair for the resulting map. Compatibility conditions of the form~\eqref{Atildeev} give rise to discrete isomonodromic deformations in the sense of Papageorgiou et al.~\cite{Gramani:Isomonodromic}. It was shown later by Jimbo and Sakai that compatibility relations of the form~\eqref{Atildeev}, as a map between linear systems, also preserves a connection matrix~\cite{Sakai:qP6}. This connection matrix, introduced by Birkhof\/f~\cite{Birkhoff,Birkhoffallied}, is considered to be a discrete analogue of a~monodromy matrix. It is known that various discrete Painlev\'e equations, QRT maps and general classes of integrable mappings that characterize reductions of partial dif\/ference equations arise in this way~\cite{Ormerod2014a,OvdKQ:reductions, Gramani:Isomonodromic}.

Discrete isomonodromic deformations share more in common with Schlesinger transformations than isomonodromic deformations. That is, given an $A(x)$ we have a collection of transformations of the form~\eqref{Atildeev}. In a similar manner to Schlesinger transformations, the system of transformations governed by~\eqref{Atildeev} and~\eqref{comp} has the structure of a f\/initely generated lattice~\cite{Ormerodlattice, Ormerod2011b}. Our discrete Garnier systems are systems of elementary transformations generating an action of~$\mathbb{Z}^d$ for some dimension,~$d$. One of the consequences of~\eqref{prodform}, for the particular choice of~$L_i$ we propose, is that the resulting analogues of elementary Schlesinger transformations are simply expressed in terms of commutation relations between factors. This factorization, and their commutation relations, are also features of the work of Kajiwara et al.~\cite{KajiwaraSym}.

An additional novel feature of our work is the presence of {\em symmetric} Lax pairs, in which solutions of~\eqref{linearsigma} satisfy an extra symmetry constraint. We may take this into consideration by letting $Y(x)$ satisfy relations involving two operators
\begin{subequations}\label{lineartau}
\begin{gather}
\label{tau1}\tau_1 Y(x) = A(x) Y(x),\\
\label{tau2}\tau_2 Y(x) = Y(x),
\end{gather}
\end{subequations}
where $\tau_1 \colon f(x) \to f(-h-x)$ or $\tau_1 \colon f(x) \to f(1/qx)$ while $\tau_2 \colon f(x) \to f(-x)$ or $\tau_2 \colon f(x) \to f(1/x)$. The composition of~$\tau_1$ and~$\tau_2$ recovers~\eqref{linearsigma} while~\eqref{tau2} gives a constraint on the entries of~$A(x)$. The operators~$\tau_1$ and~$\tau_2$ generate a copy of the inf\/inite dihedral group.

The presence of this additional symmetry is a structure that plays an important role in hypergeometric and basic hypergeometric orthogonal polynomials, biorthogonal functions and related special functions. This constraint naturally manifests itself in the known Lax pairs for the elliptic Painlev\'e equation~\cite{rains:isomonodromy, Yamada:E8Lax} via their parameterization in terms of theta functions, but this property has not manifested itself in any obvious way in the known Lax pairs for more degenerate Painlev\'e equations.

There are a number of technical issues in presenting the discrete isomonodromic deformations of such systems: Firstly, the classical theory of Birkhof\/f (see~\cite{Birkhoff, Birkhoffallied}) is no longer suf\/f\/icient to guarantee the existence of solutions. For this we appeal to the work of Praagman~\cite{Praagman:Solutions}. Secondly, the theorems that prescribe discrete isomonodromic deformations do not necessarily preserve the required symmetry. What makes f\/inding the isomonodromic deformations of the symmetric cases tractable is that~$A(x)$ can be shown to admit a~factorization
\begin{gather}\label{AviaB}
A(x) = ( \tau_1 B(x))^{-1} B(x),
\end{gather}
for some rational matrix $B(x)$. By insisting that $B(x)$ takes the same factored form, namely
\begin{gather}\label{prodform2}
B(x) = L_1(x) \cdots L_{N'}(x),
\end{gather}
we \looseness=-1 are able to describe the discrete isomonodromic deformations of these systems in terms of the same commutation relations as the non-symmetric case. Thirdly, since the classical fundamental solutions of Birkhof\/f do not necessarily exist, it is not clear that the analogue of monodromy involving Birkhof\/f's connection matrix (see~\cite{Sakai:qP6}) is appropriate. To address this, we give a short account of how discrete isomonodromic deformations preserve the associated Galois group of~\eqref{linearsigma}.

This gives us four classes of system; two dif\/ference Garnier systems whose associated linear problems are of the form of \eqref{linearsigma}, and two symmetric dif\/ference Garnier systems, whose associated linear problems are of the form of \eqref{lineartau}. We reparameterize these systems in terms of the image and kernel vectors at the singular points. This provides a correspondence between one of our systems and Sakai's $q$-Garnier system~\cite{Sakai:Garnier}. We also consider specializations whose evolution coincides with discrete Painlev\'e equations of type $q$-$\mathrm{P}\big(A_k^{(1)}\big)$ for $k = 0,1,2,3$ and $d$-$\mathrm{P}\big(A_k^{(1)}\big)$ for $k=0,1,2$. The convention we use is that we list the type of the Af\/f\/ine root system associated with the surface of initial conditions~\cite{Sakai:Rational}. This means that the systems we treat appear as the top cases of the disrete Painlev\'e equations and $q$-Painlev\'e equations.

It should be recognised that the phrase symmetric and asymmetric discrete Painlev\'e equations has been applied to equations arising as deautonomized symmetric and asymmetric QRT maps respectively~\cite{Kruskal:AsymmetricdPs}. The way in which the word symmetric is used in the context of this article is that the associated linear problem possesses an additional symmetry. The ideas of having a~symmetric system of dif\/ference equations and having a symmetric QRT mapping or its deautonomization are very dif\/ferent and should not be confused.

The product form for $A(x)$ in \eqref{linearsigma} arises naturally in recent work on reductions of partial dif\/ference equations \cite{Ormerod:qP6, Ormerod2014a}. We present a way in which these systems characterize certain periodic and twisted reductions of the lattice Korteweg--de Vries (KdV) equation and the lattice Schwarzian KdV equation~\cite{Ormerod2014a}. A corollary of this work is that Sakai's $q$-Garnier system arises as a twisted reduction of the lattice Schwarzian KdV equation, as do any specializations. This work also gives an explicit expression for the evolution in terms of known Yang--Baxter maps.

The plan of the paper is as follows. In Section~\ref{sec:isomonodromy} we give an overview of the theory of linear systems of dif\/ference equations where we formalize the way in which we consider our systems to be isomonodromic. In Section~\ref{sec:Garnier} we provide evolution equations for the discrete Garnier systems in terms of viables naturally associated with~\eqref{prodform} and~\eqref{prodform2}, whereas Section~\ref{sec:reparam} gives the same evolution equations in terms of variables associated with the image and kernel at each value of~$x$ in which~$A(x)$ is singular. Section~\ref{sec:special} gives a number of cases in which the evolution of the discrete Garnier systems coincide with known case of discrete Painlev\'e equations. Section \ref{sec:reductions} shows how both cases of the non-symmetric Garnier systems, and their special cases, arise as reductions of discrete potential KdV equation and the discrete Schwarzian KdV equation.

\section{Linear systems of dif\/ference equations}\label{sec:isomonodromy}

This section aims to provide the relevant theorems concerning systems of linear dif\/ference equations. This includes a recapitulation of the classical results of Birkhof\/f on linear systems of dif\/ference and $q$-dif\/ference equations~\cite{Birkhoff, Birkhoffallied}. While the work of Birkhof\/f gives fundamental solutions to systems of dif\/ference equations of the form~\eqref{linearsigma}, they are not suf\/f\/icient to ensure solutions of systems of the form of~\eqref{lineartau}.

Secondly, given a system of the form of~\eqref{linearsigma} or~\eqref{lineartau}, we wish to specify the type of transformations we expect. The set of transformations has the structure of a f\/inite-dimensional lattice. Characterizating these transformations follows the work of Borodin~\cite{Borodin:connection}, who developed this theory in application to gap probabilities of random matrices \cite{Borodin2003}. We extend this to $q$-dif\/ference equations~\cite{Ormerodlattice}.

A secondary issue concerns what structures are being preserved by discrete isomonodromic deformations. The celebrated work of Jimbo and Sakai \cite{Sakai:qP6} argues that \eqref{Atildeev} preserves the connection matrix, when it exists. A fundamental object that is preserved under transformations of the form of \eqref{Atildeev} is the structure of the dif\/ference module~\cite{VanderPut2003}. This provides a more robust def\/inition of what it means to be a discrete isomonodromic deformation. In particular, this holds for discrete isomonodromic deformations of~\eqref{lineartau}, or any system in which the existence of a connection matrix may not be assumed.

\subsection[Systems of linear $h$-dif\/ference equations]{Systems of linear $\boldsymbol{h}$-dif\/ference equations}

We start with \eqref{linearsigma} where $\sigma= \sigma_h$, which we write as
\begin{gather}\label{lineardiff}
Y(x+h) = A(x)Y(x),
\end{gather}
where $A(x)$ is a rational $M \times M$ matrix that is invertible almost everywhere and $\Re h > 0$ as above. We may reduce to the case in which $A(x)$ is polynomial by multiplying $Y(x)$ by gamma functions. This means that the form of $A(x)$ can generally be taken to be
\begin{gather}\label{polydiff}
A(x) = A_0 + A_1 x + \dots + A_nx^n,
\end{gather}
where $A_n \neq 0$. Furthermore, if $A_n$ is invertible and semisimple then, by applying constant gauge transformations, we can assume that $A_n$ is diagonal. By the same argument, if $A_n = I$ and~$A_{n-1}$ is semisimple, we may assume that~$A_{n-1}$ is diagonal. Under these assumptions, it is useful to describe an asymptotic form of formal solutions which is the subject of the following theorem due to Birkhof\/f~\cite{Birkhoff}.

\begin{lem}
Let $A_n = \operatorname{diag}(\rho_1, \ldots, \rho_M)$, where the $\rho_i$ are pairwise distinct, or if $A_n = I$ and $A_{n-1} = \operatorname{diag}(r_1, \ldots, r_M)$ subject to the non resonnancy constraint
\begin{gather*}
r_i - r_j \notin \mathbb{Z}\setminus\{0\},
\end{gather*}
then there exists a unique formal matrix solution of the form
\begin{gather}\label{seriessolhdiff}
\hat{Y}(x) = x^{\frac{nx}{h}} e^{-{nx}{h}}\left(I + \frac{\mathcal{Y}_1}{x}+ \frac{\mathcal{Y}_2}{x^2} +\cdots \right)\operatorname{diag}\left( \rho_1^{x/h} \left(\frac{x}{h}\right)^{d_1}, \ldots, \rho_M^{x/h} \left(\frac{x}{h}\right)^{d_n}\right),
\end{gather}
where $\{d_i\}$ is some set of constants.
\end{lem}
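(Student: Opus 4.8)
The plan is to substitute the proposed form \eqref{seriessolhdiff} directly into \eqref{lineardiff} and to determine all of the unknowns by matching orders in a formal expansion in $1/x$. Write $\hat{Y}(x) = P(x)\hat{F}(x)D(x)$ with the scalar prefactor $P(x) = x^{nx/h}e^{-nx/h}$, the formal series $\hat{F}(x) = I + \sum_{k\geq 1}\mathcal{Y}_k x^{-k}$, and the diagonal factor $D(x) = \operatorname{diag}(\rho_i^{x/h}(x/h)^{d_i})$; in the second case all $\rho_i = 1$ and the subdominant data $r_i$ enters through the exponents $d_i$. Since $P$ is scalar it commutes through, and substituting into \eqref{lineardiff}, dividing on the left by $P(x+h)$ and on the right by $D(x)$, reduces the problem to $\hat{F}(x+h)\Lambda(x) = B(x)\hat{F}(x)$, where $\Lambda(x) = D(x+h)D(x)^{-1}$ and $B(x) = P(x)P(x+h)^{-1}A(x)$ are both explicit formal series in $1/x$.

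First I would pin down the prefactors. Expanding $\log P(x+h) - \log P(x)$ with Taylor's theorem gives $P(x+h)/P(x) = x^n(1 + O(1/x))$, so that $B(x)$ has the polynomial growth of $A(x)$ stripped away and begins with $A_n$. At the same time $\Lambda(x) = \operatorname{diag}(\rho_i(1+h/x)^{d_i})$ has leading term $\operatorname{diag}(\rho_1,\dots,\rho_M) = A_n$ in the first case, so the two leading coefficients cancel and the reduced equation is consistent at order $x^0$. In the second case ($A_n = I$) the leading terms agree trivially, and it is the order $x^{-1}$ contribution---governed by $A_{n-1} = \operatorname{diag}(r_i)$ together with $\Lambda(x) = I + \operatorname{diag}(d_i)h x^{-1} + \cdots$---that fixes the exponents $d_i$.

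The core of the argument is the order-by-order recurrence for the coefficients $\mathcal{Y}_k$. Expanding $(x+h)^{-k} = x^{-k}(1 - khx^{-1} + \cdots)$ and collecting the coefficient of $x^{-k}$, each step yields a matrix equation of the shape $\Lambda_0\mathcal{Y}_k - \mathcal{Y}_k\Lambda_0 = (\text{expression in } \mathcal{Y}_1,\dots,\mathcal{Y}_{k-1} \text{ and the known coefficients})$, where $\Lambda_0 = \operatorname{diag}(\rho_i)$. The commutator acts on the $(i,j)$ entry by multiplication by $\rho_i - \rho_j$; in the first case pairwise distinctness of the $\rho_i$ makes it invertible on the off-diagonal part, while the $(i,i)$ entries, lying in its kernel, are pinned down together with the $d_i$ by the scalar first-order difference equation read off from the diagonal. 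In the second case the effective multiplier on the $(i,j)$ entry becomes $r_i - r_j - m$ with $m$ an integer supplied by the shift expansion, and here the non-resonance condition $r_i - r_j \notin \mathbb{Z}\setminus\{0\}$ is precisely what guarantees invertibility at every order.

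I expect the main obstacle to be the bookkeeping needed to show that these two mechanisms---commutator invertibility from distinct leading eigenvalues in the first case, and the shifted non-resonance condition in the second---genuinely determine both the off-diagonal and the diagonal part of every $\mathcal{Y}_k$ with no residual freedom. Once that is in hand, uniqueness of the formal solution follows at once from the normalization $\hat{F}(x) = I + O(1/x)$: at each order $\mathcal{Y}_k$ is forced, so no two distinct formal series of the prescribed shape can satisfy \eqref{lineardiff}.
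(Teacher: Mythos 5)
There is nothing in the paper to compare your argument against: the lemma is part of the paper's recapitulation of classical results and is attributed to Birkhof\/f~\cite{Birkhoff} without proof, so it must be judged on its own merits. On those merits your plan is the standard, and correct, argument: the factorization $\hat{Y}=P\hat{F}D$, the expansion $P(x+h)/P(x)=x^{n}\big(1+\tfrac{nh}{2x}+O(x^{-2})\big)$ which strips the degree-$n$ growth of $A(x)$, and the reduction to the formal-series equation $\hat{F}(x+h)\Lambda(x)=B(x)\hat{F}(x)$ with $\Lambda_0=B_0=A_n$ are all right, and the two invertibility mechanisms you name are exactly the ones that drive the two cases.

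The bookkeeping you flag as the main obstacle is genuinely where the content lies, so it is worth recording how it resolves. In the first case the order-$x^{-k}$ equation reads $\mathcal{Y}_k\Lambda_0-\Lambda_0\mathcal{Y}_k=(\text{known})$, which determines only the off-diagonal part of $\mathcal{Y}_k$; the diagonal part lies in the kernel of the commutator and is fixed instead by the diagonal of the order-$x^{-(k+1)}$ equation, where, after the order-$x^{-1}$ diagonal equation has been used to solve for the exponents via $\rho_i d_i h=(B_1)_{ii}$, the coefficient multiplying $(\mathcal{Y}_k)_{ii}$ collapses to $kh\rho_i\neq 0$. The recursion is therefore staggered: order $k$ fixes the off-diagonal of $\mathcal{Y}_k$ and the diagonal of $\mathcal{Y}_{k-1}$, and uniqueness follows because both multipliers are nonzero. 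In the second case, since $\Lambda_0=B_0=I$, the coefficient $\mathcal{Y}_k$ cancels identically from the order-$x^{-k}$ equation, and what remains is a linear equation for $\mathcal{Y}_{k-1}$ whose $(i,j)$ entry carries the multiplier $r_j-r_i-(k-1)h$; invertibility at every order thus requires $r_i-r_j\notin h(\mathbb{Z}\setminus\{0\})$, which agrees with the stated condition under Birkhof\/f's normalization $h=1$ (that discrepancy in units sits in the paper's statement of the lemma, not in your argument). With these two computations made explicit, your outline becomes a complete proof of the existence and uniqueness of~\eqref{seriessolhdiff}.
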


Given a solution of \eqref{lineardiff} that is convergent when $\Re x \gg 0$ or $\Re x \ll 0$, and since $\Re h > 0$, we may use \eqref{lineardiff} to extend the solution by
\begin{gather*}
\hat{Y}(x) = A(x-h)A(x-2h) \cdots A(x-kh) \hat{Y}(x-kh),\\
\hat{Y}(x) = A(x)^{-1}A(x+h)^{-1} \cdots A(x+(k-1)h)^{-1} \hat{Y}(x+kh).
\end{gather*}
This extension introduces possible singularities at translates by integer multiples of $h$ of the points where $\det A(x) = 0$. The values of~$x$ in which $\det A(x) = 0$ play an important role in the theory of discrete isomonodromy, hence, it is useful to parameterize the determinant by
\begin{gather*}
\det A(x) = \rho_1 \cdots \rho_M(x-a_1)(x-a_2) \cdots (x-a_{Mn}).
\end{gather*}

\begin{thm}[see \cite{Borodin:connection}]\label{thm:diffexistence}
Assume that $A_0 = \operatorname{diag}(\rho_1, \ldots, \rho_M)$, with
\begin{gather*}
\rho_i \neq 0, \qquad \rho_i/\rho_j \notin \mathbb{R} \qquad \textrm{for all} \ \ i \neq j,
\end{gather*}
then there exists unique solutions of~\eqref{lineardiff}, $Y_l(x)$ and $Y_r(x)$, such that
\begin{enumerate}\itemsep=0pt
\item[$1.$] The functions $Y_l(x)$ and $Y_r(x)$ are analytic throughout the complex plane except at translates to the left and right by integer multiples of $h$ of the poles of $A(x)$ and $A(x-h)^{-1}$ respectively.
\item[$2.$] In any left or right half-plane, $Y_l(x)$ and $Y_r(x)$ are asymptotically represented by \eqref{seriessolhdiff}.
\end{enumerate}
\end{thm}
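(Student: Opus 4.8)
The plan is to upgrade the formal solution furnished by the preceding lemma into genuine analytic solutions and then to propagate them across the plane using the recurrence \eqref{lineardiff}. Write the formal solution \eqref{seriessolhdiff} as $\hat{Y}(x)=\hat{P}(x)D(x)$, where $D(x)=x^{nx/h}e^{-nx/h}\operatorname{diag}\bigl(\rho_1^{x/h}(x/h)^{d_1},\dots,\rho_M^{x/h}(x/h)^{d_M}\bigr)$ collects the explicit exponential, gamma- and power-type factors and $\hat{P}(x)=I+\mathcal{Y}_1/x+\cdots$ is the (in general divergent) formal power series. Truncating $\hat{P}$ at order $N$ and substituting $Y(x)=\hat{P}_N(x)D(x)Z(x)$ into \eqref{lineardiff}, and using that $\hat{Y}$ satisfies \eqref{lineardiff} to all formal orders, reduces the problem to producing a solution of a near-identity system $Z(x+h)=\bigl(I+O(x^{-N-1})\bigr)Z(x)$ with $Z(x)\to I$; any such $Z$ then carries the full asymptotic expansion \eqref{seriessolhdiff}.

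The analytic core is to show that this near-identity system admits a solution that is holomorphic and asymptotic to $I$ throughout an entire half-plane. I would realise $Z$ as the fixed point of the summation operator naturally attached to $Z(x+h)=(I+\cdots)Z(x)$ (equivalently through a N\"orlund/Laplace-type integral), summing the $O(x^{-N-1})$ corrections over the lattice $x+h\mathbb{Z}$ and choosing $N$ large enough that the resulting series converges uniformly on $\Re x\gg 0$ (for $Y_r$) or $\Re x\ll 0$ (for $Y_l$). The hypotheses $\rho_i\neq 0$ and $\rho_i/\rho_j\notin\mathbb{R}$ enter here as the genericity that places the leading exponentials $\rho_i^{x/h}$ in general position: it is what makes the summation converge and delivers the matrix-level asymptotics $Y(x)D(x)^{-1}\to\hat{P}(x)$ uniformly across the half-plane, rather than only along a single ray.

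Uniqueness is then the standard consequence of this separation: two solutions with the same full asymptotic expansion \eqref{seriessolhdiff} are intertwined by a matrix $P(x)$ with $P(x+h)=P(x)$ and $P(x)\to I$, and the condition $\rho_i/\rho_j\notin\mathbb{R}$ forbids $P$ from mixing columns of different exponential type, so that $P\equiv I$. Finally I would continue each half-plane solution to all of $\mathbb{C}$ through the two displayed recurrences, $\hat{Y}(x)=A(x-h)\cdots A(x-kh)\hat{Y}(x-kh)$ and $\hat{Y}(x)=A(x)^{-1}\cdots A(x+(k-1)h)^{-1}\hat{Y}(x+kh)$: each factor contributes poles only at $h$-translates of the poles of $A$, respectively of $A(x-h)^{-1}$, and reading off the singular set accumulated as $k\to\infty$ reproduces exactly the locations recorded in items~(1)--(2) (routine bookkeeping). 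I expect the decisive obstacle to be the half-plane existence-with-asymptotics of the second paragraph, namely taming the potentially divergent series \eqref{seriessolhdiff} and verifying that the genericity $\rho_i/\rho_j\notin\mathbb{R}$ really does suppress the resonant and Stokes degeneracies that would otherwise spoil both existence and uniqueness.
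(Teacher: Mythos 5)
A preliminary remark: the paper does not prove this theorem at all --- it is quoted from Borodin \cite{Borodin:connection}, resting on Birkhoff's analytic theory \cite{Birkhoff} --- so your attempt can only be measured against that classical argument. Measured so, your outline follows the right overall strategy (formal solution, truncation, half-plane construction, periodicity-based uniqueness, continuation via the recurrence), but it has a genuine gap at exactly the step you yourself flag as the analytic core, and it is not a gap that ``choosing $N$ large'' can close. Substituting $Y(x)=\hat P_N(x)D(x)Z(x)$ into \eqref{lineardiff} does \emph{not} give $Z(x+h)=(I+O(x^{-N-1}))Z(x)$; it gives $Z(x+h)=\bigl(I+D(x)^{-1}E(x)D(x)\bigr)Z(x)$ with $E(x)=O(x^{-N-1})$, because the error gets conjugated by the exponential part $D(x)$. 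The $(i,j)$ entry of $D(x)^{-1}E(x)D(x)$ carries the factor $(\rho_j/\rho_i)^{x/h}(x/h)^{d_j-d_i}$, and since $\log(\rho_j/\rho_i)$ has nonzero imaginary part (this is what $\rho_i/\rho_j\notin\mathbb{R}$ says), this factor grows exponentially along some directions inside \emph{every} left and right half-plane (for real $h$, send $\Im x\to+\infty$ or $-\infty$ with $\Re x$ bounded, with the sign chosen according to $\Im\log(\rho_j/\rho_i)$). No power $x^{-N-1}$ dominates $e^{c|x|}$, so the system is not near the identity, the summation/fixed-point operator you propose does not converge uniformly on any half-plane, and the asymptotics do not follow. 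This is not a removable technicality; it is the actual content of the theorem. Any correct construction must treat the entries (or sectors) separately --- running the summation, or choosing the contour in Birkhoff's integral representation, in the direction where the relevant exponential $(\rho_j/\rho_i)^{x/h}$ decays, entry by entry, and then patching --- and the non-real-ratio hypothesis is precisely what makes such consistent choices possible and upgrades sectorial asymptotics to full half-planes.

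The uniqueness step inherits the same leak. From equality of asymptotics you may not conclude that $P=Y_1^{-1}Y_2\to I$: what you actually get is $P=I+D(x)^{-1}o(1)D(x)$, whose off-diagonal entries are only $o(1)$ times the same exponentially growing factors. The correct route uses periodicity to write $P=p\big(e^{2\pi i x/h}\big)$ with $p$ analytic on $\mathbb{C}^{*}$, then uses the decay directions guaranteed by $\rho_i/\rho_j\notin\mathbb{R}$ to force the off-diagonal entries of $p$ to vanish at $z=0$ or $z=\infty$, and finishes with a Liouville-type argument; your phrase ``forbids $P$ from mixing columns of different exponential type'' is the right intuition but, as written, assumes what must be proved. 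That the hypothesis does real work here (and is not mere genericity) is shown by the excluded case of a positive real ratio $\rho_i/\rho_j<1$: then the \emph{constant} matrix $P=I+cE_{ij}$ (with $E_{ij}$ the elementary matrix) produces a second solution with the same asymptotics along every ray of the open right half-plane, so uniqueness genuinely fails without it. Your final continuation and pole-bookkeeping paragraph is fine and matches the discussion immediately following the theorem in the paper.
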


Both $Y_l(x)$ and $Y_r(x)$ form a basis for the solutions of \eqref{lineardiff}, which are both non-degenerate in the sense that in the limit as $x \to \pm \infty$, they possess a non-zero determinant, hence, are invertible almost everywhere. The notion that any two non-degenerate solutions of the same dif\/ference equation should be related leads us to the concept of a connection matrix. It should be clear as $Y_l(x)$ and $Y_r(x)$ are both solutions of \eqref{lineardiff}, the connection matrix, def\/ined by
\begin{gather}
P(x) = (Y_{l}(x))^{-1}Y_{r}(x),
\end{gather}
is periodic in $x$ with period $h$. The connection matrix and the monodromy matrices for systems of linear dif\/ferential equations play very similar roles \cite{Borodin:connection}.

Given a linear system of dif\/ference equations, it is useful to talk about the Riemann--Hilbert or monodromy map, which sends the Fuchsian system of dif\/ferential equations to a set of mo\-nodromy matrices~\cite{Boalch2005}. The monodromy matrices depend on the coef\/f\/icients of a given Fuchsian system. The collection of variables that specify the monodromy matrices are called the characteristic constants. The next theorem def\/ines the characteristic constants for systems of dif\/ference equations.

\begin{thm}
Under the general assumptions of Theorem~{\rm \ref{thm:diffexistence}} the entries of connection matrix take the general form
\begin{gather*}
(P(x))_{i,j} = \begin{cases}
p_{i,i}\big( e^{2\pi ix/h} \big) +e^{2\pi i (d_k + x/h)} & \text{if} \ i = j,\\
e^{2\pi i \lambda_{i,j}x/h} p_{i,j}\big(e^{2\pi ix/h}\big) , &
\end{cases}
\end{gather*}
where each $p_{i,j}(x)$ is a polynomial of degree $n-1$ with $p_{i,i}(0) = 1$ and $\lambda_{i,j}$ denotes the least integer as great as the real part of $(\log(\rho_i) - \log(\rho_j))/2\pi i$.
\end{thm}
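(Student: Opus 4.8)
The plan is to use the $h$-periodicity of $P(x)=Y_l(x)^{-1}Y_r(x)$ to regard each entry as a function of the single variable $t=e^{2\pi i x/h}$, and then to read off its poles and its growth at $t=0$ and $t=\infty$ directly from the asymptotic normal form \eqref{seriessolhdiff}. Since $P$ is meromorphic and satisfies $P(x+h)=P(x)$, every entry $(P(x))_{i,j}$ descends to a meromorphic function $\mathcal P_{i,j}(t)$ on the punctured plane $\mathbb C^\ast$, the two punctures $t\to 0$ and $t\to\infty$ corresponding to $\operatorname{Im}(x/h)\to+\infty$ and $\operatorname{Im}(x/h)\to-\infty$ respectively.

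The first substantive step is to show that $\mathcal P_{i,j}$ is in fact holomorphic on $\mathbb C^\ast$, so that the only possible singular behaviour sits at the two punctures. By Theorem~\ref{thm:diffexistence} the solution $Y_l$ is analytic and invertible throughout a left half-plane and $Y_r$ throughout a right half-plane, each continued by \eqref{lineardiff}; I would show that along any vertical line with $|\operatorname{Im} x|$ large both factors are simultaneously analytic and invertible, since their possible singular translates of the zeros $a_k$ of $\det A$ accumulate only near the real axis, so that $P$ is holomorphic for $|\operatorname{Im}(x/h)|$ large. The remaining point, that the would-be poles of $Y_r$ at translates of the $a_k$ are cancelled by those of $Y_l^{-1}$ so that $P$ is regular in the intermediate annulus as well, follows because $Y_l$ and $Y_r$ solve the same recursion \eqref{lineardiff}, so their quotient is insensitive to the coincident singularities introduced by continuing $A(x)^{-1}$. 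Holomorphy on $\mathbb C^\ast$ then reduces the problem to bounding the growth at the two punctures: polynomial growth there forces each entry to be a Laurent polynomial in $t$.

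The growth is computed from \eqref{seriessolhdiff}. Writing the normal form as a scalar factor $x^{nx/h}e^{-nx/h}$ times $(I+O(1/x))$ times $\operatorname{diag}(\rho_k^{x/h}(x/h)^{d_k})$, the scalar and the $(I+O(1/x))$ pieces cancel locally in $Y_l^{-1}Y_r$, so the off-diagonal growth of $(P(x))_{i,j}$ in each vertical direction is governed by the ratio $(\rho_j/\rho_i)^{x/h}$ of diagonal factors. Ordering the $\rho_k$ by argument—which the hypothesis $\rho_i/\rho_j\notin\mathbb R$ makes a strict total order, free of the degeneracies that would spoil the estimate—one matches the lowest power of $t$ in $\mathcal P_{i,j}$ to this exponential in the direction $t\to 0$, producing exactly the factor $e^{2\pi i\lambda_{i,j}x/h}=t^{\lambda_{i,j}}$ with $\lambda_{i,j}=\lceil\operatorname{Re}((\log\rho_i-\log\rho_j)/2\pi i)\rceil$. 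The span of $t$-degrees, hence the degree $n-1$ of $p_{i,j}$, is fixed by the branch monodromy $t^{n}$ of the scalar factor $x^{nx/h}$ when it is continued across the imaginary direction, which is where the polynomial degree $n$ of $A(x)$ enters. For the diagonal entries $\lambda_{i,i}=0$, the normalisation $p_{i,i}(0)=1$ is inherited from the leading $I$ in \eqref{seriessolhdiff} as $t\to 0$, and the residual term $e^{2\pi i(d_k+x/h)}$ records the subdominant power $(x/h)^{d_k}$ together with the formal monodromy $e^{2\pi i d_k}$ at the opposite puncture.

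I expect the main obstacle to be the analytic step connecting the half-plane asymptotics of \eqref{seriessolhdiff} to the growth of $P$ in the two vertical directions $t\to0,\infty$: this is the difference-equation analogue of a Stokes analysis, and controlling the subdominant exponentials uniformly—so that the matching of leading powers is legitimate and no spurious poles are created—is precisely where the hypotheses $\rho_i\neq0$ and $\rho_i/\rho_j\notin\mathbb R$ do their work. Once that uniform control is in hand, the identification of $\lambda_{i,j}$, the degree $n-1$, and the diagonal normalisation reduce to bookkeeping on the exponents appearing in \eqref{seriessolhdiff}.
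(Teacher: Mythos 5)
The paper itself gives no proof of this theorem: it is quoted as a classical result of Birkhoff \cite{Birkhoff} (see also Borodin \cite{Borodin:connection}), so your proposal has to be judged against the standard argument in those sources. Your overall route is the right one and matches that argument in outline: periodicity turns $P$ into a function of $t=e^{2\pi i x/h}$ on $\mathbb{C}^*$; the asymptotics \eqref{seriessolhdiff} in the two vertical directions control the growth at $t=0,\infty$; the branch discrepancy of the multivalued factors $x^{nx/h}$ and $(x/h)^{d_k}$ between the representations of $Y_l$ and $Y_r$ produces the window of width $n$ in the allowed $t$-exponents (hence degree $n-1$), the factor $e^{2\pi i d_i}$, and the normalisation $p_{i,i}(0)=1$; and the hypothesis $\rho_i/\rho_j\notin\mathbb{R}$ makes $\operatorname{Re}\big((\log\rho_i-\log\rho_j)/2\pi i\big)$ non-integral, so the ceiling $\lambda_{i,j}$ is pinned down despite the logarithmic corrections coming from $(x/h)^{d_j-d_i}$. (Carried out carefully, your bookkeeping actually yields the leading diagonal term $e^{2\pi i(d_i+nx/h)}$, i.e., $e^{2\pi i d_i}t^n$, which corrects an apparent typo in the printed statement.)

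The genuine gap is the holomorphy step, which is the crux of the theorem. Your proposed mechanism --- that the poles of $Y_r$ are cancelled by those of $Y_l^{-1}$ ``because $Y_l$ and $Y_r$ solve the same recursion'' --- cannot be right as stated, because it is symmetric under exchanging $Y_l$ and $Y_r$, while the conclusion is not: $P^{-1}=Y_r^{-1}Y_l$ is generically \emph{not} of the stated polynomial form, having poles at the points $t=e^{2\pi i a_k/h}$. This is visible already in the scalar example $y(x+h)=(x-a)y(x)$, where one ordering of the quotient is an affine polynomial in $t$ and the other is its reciprocal. The correct argument must break this symmetry, and does so as follows: since $A(x)$ is polynomial, $Y_l$ is entire, and the zeros of $\det Y_l$ are located from the scalar equation $\det Y_l(x+h)=\det A(x)\det Y_l(x)$ together with the non-vanishing of the asymptotic form in left half-planes; this forces the zeros of $\det Y_l$, hence the poles of $Y_l^{-1}$, to lie in the one-sided family $\{a_k+mh\colon m\geq 1\}$, which is the \emph{same} one-sided family as the poles of $Y_r$. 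Thus every potential pole orbit $a_k+h\mathbb{Z}$ of the periodic function $P$ contains points (far to the left) where $P$ is manifestly holomorphic, and periodicity then kills all poles, giving holomorphy on $\mathbb{C}^*$. Without locating the zeros of $\det Y_l$, your argument cannot distinguish $P$ from $P^{-1}$ and therefore cannot prove the theorem; once this step is supplied, the rest of your growth analysis goes through as you describe.
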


Perhaps the simplest nontrivial example of such a connection matrix arises from solutions of the one dimension case, which can be broken down into linear factors of the form
\begin{gather*}
y(x+h) = \left(1+\frac{d}{x}\right)y(x).
\end{gather*}

In this way we associate a set of constants to each system of linear dif\/ference equations, by giving a map
\begin{gather*}
(A_0, \ldots, A_n) \mapsto (\{d_k\}; \{p_{i,j}(x)\} ).
\end{gather*}
This gives us $M(Mn+1)$ constants in total, which is also the number of entries in the coef\/f\/icient matrices.

\begin{thm}[Birkhof\/f~\cite{Birkhoff}]
Assume there exists two polynomials $A(x) = A_0 + A_1 x + \cdots + A_nx^n$ and $\tilde{A}(x)=\tilde{A}_0 + \tilde{A}_1 x + \cdots + \tilde{A}_nx^n$ such that $A_n = \tilde{A}_n = \operatorname{diag}(\rho_1, \ldots, \rho_M)$ with the same sets of characteristic constants, then there exists a rational matrix $R(x)$, such that $\tilde{A}(x)$ and $A(x)$ are related by~\eqref{comp} $($where $\sigma = \sigma_h)$. The fundamental solutions of \eqref{linearsigma}, $Y_l(x)$ and $Y_r(x)$, are related to the fundamental solutions of~\eqref{transformedlinearsigma}, denoted~$\tilde{Y}_l(x)$ and $\tilde{Y}_r(x)$, by~\eqref{Atildeev} respectively.
\end{thm}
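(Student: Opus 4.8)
The plan is to build $R(x)$ explicitly out of the fundamental solutions guaranteed by Theorem~\ref{thm:diffexistence}, and then to show that the hypotheses force it to be rational. First I would apply Theorem~\ref{thm:diffexistence} twice, once to $A(x)$ and once to $\tilde{A}(x)$, obtaining fundamental solutions $Y_l, Y_r$ of \eqref{lineardiff} and $\tilde{Y}_l, \tilde{Y}_r$ of its tilded counterpart. Since $A_n = \tilde{A}_n = \operatorname{diag}(\rho_1,\dots,\rho_M)$, both pairs are asymptotically normalized by the same series \eqref{seriessolhdiff}, with a common scalar prefactor and common diagonal factor $\operatorname{diag}(\rho_i^{x/h}(x/h)^{d_i})$; the agreement of the $d_i$ together with the remaining connection data is precisely the statement that the two systems carry the same characteristic constants. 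Consequently the connection matrices coincide,
\[
P(x) = Y_l(x)^{-1}Y_r(x) = \tilde{Y}_l(x)^{-1}\tilde{Y}_r(x),
\]
so that $Y_r = Y_l P$ and $\tilde{Y}_r = \tilde{Y}_l P$ with the same $P$.

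Next I would set $R(x) = \tilde{Y}_l(x) Y_l(x)^{-1}$. Substituting $Y_l = Y_r P^{-1}$ and $\tilde{Y}_l = \tilde{Y}_r P^{-1}$ shows that $R = \tilde{Y}_r Y_r^{-1}$ as well, so $R$ carries each fundamental solution of \eqref{lineardiff} to the corresponding solution of the tilded system; this is exactly \eqref{Atildeev} applied to $Y_l$ and to $Y_r$, which is the final assertion of the theorem. The compatibility relation \eqref{comp} is then automatic: applying $\sigma = \sigma_h$ and using $\sigma Y_l = A Y_l$ and $\sigma \tilde{Y}_l = \tilde{A}\tilde{Y}_l$ gives
\[
\sigma R = (\sigma \tilde{Y}_l)(\sigma Y_l)^{-1} = \tilde{A}\,\tilde{Y}_l Y_l^{-1} A^{-1} = \tilde{A} R A^{-1},
\]
which rearranges to $\tilde{A}(x)R(x) = \sigma R(x)\, A(x)$, i.e.\ \eqref{comp}.

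The real work, and what I expect to be the main obstacle, is upgrading this $R(x)$ from a meromorphic to a genuinely rational matrix function. I would argue in two stages. First, matched asymptotics force $R \to I$ at infinity: substituting the normalized form \eqref{seriessolhdiff} for $Y_l$ and $\tilde{Y}_l$, the common scalar and diagonal factors cancel in $\tilde{Y}_l Y_l^{-1}$, leaving $R = I + O(1/x)$ in any right half-plane, and the parallel computation with $Y_r, \tilde{Y}_r$ yields the same estimate in any left half-plane; in particular $R$ is analytic and bounded for $|\Re x|$ large. Second, I would confine the poles in the remaining vertical strip by iterating the functional equation $R(x) = \tilde{A}(x)^{-1} R(x+h) A(x)$ a fixed finite number of times (using $\Re h > 0$) to push the argument into the right half-plane, thereby expressing $R$ on the strip as a finite product of the rational matrices $\tilde{A}^{-1}$ and $A$ at shifted arguments, multiplied by a bounded analytic factor; hence $R$ has only finitely many poles in the strip and none outside it. A meromorphic matrix function on $\mathbb{C}$ with finitely many poles that tends to $I$ at infinity is rational, which completes the argument. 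The delicate points are verifying that $\det Y_l$ and $\det Y_r$ do not vanish in the relevant half-planes, so that $R$ is genuinely analytic there, and that the cancellation of prefactors is uniform; both rely on the nonresonance and genericity hypotheses on the $\rho_i$ carried over from Theorem~\ref{thm:diffexistence}.
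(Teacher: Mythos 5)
The paper never proves this statement: it is quoted as a classical theorem of Birkhoff (just as Theorem~\ref{hisolattice} is quoted from Borodin), so your proposal can only be measured against the classical argument---and it essentially \emph{is} that argument. Defining $R=\tilde{Y}_l Y_l^{-1}$, noting that equal characteristic constants together with the common $\rho_i$ (which fix the $\lambda_{i,j}$) force equal connection matrices and hence the second representation $R=\tilde{Y}_r Y_r^{-1}$, deriving \eqref{comp} by applying $\sigma_h$, and proving rationality from matched asymptotics plus a Liouville-type argument are exactly the classical steps, and your algebra is correct; the relation $\tilde{Y}_{l,r}=R\,Y_{l,r}$ holds by construction, which is the theorem's last assertion.

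Two analytic points should be tightened. First, the pairing of representations with half-planes is the reverse of what you wrote: $\tilde{Y}_lY_l^{-1}$ can only fail to be analytic at the zeros of $\det Y_l$, and these move off to the \emph{right} (in the scalar model the left solution is $1/\Gamma((a+h-x)/h)$, whose zeros are at $a+h,a+2h,\dots$), so this representation controls left half-planes, while $\tilde{Y}_rY_r^{-1}$, whose poles are those of $\tilde{Y}_r$ and move off to the left, controls right half-planes. Since you use both representations and the argument is symmetric, nothing breaks, but the bookkeeping is what confines the poles to finitely many points in a vertical strip, so it should be stated correctly. Second, the Liouville step needs $R$ bounded as $|x|\to\infty$ in \emph{all} directions, including vertically inside that strip. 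Your device of iterating $R(x)=\tilde{A}(x)^{-1}R(x+h)A(x)$ a fixed number $N$ of times does deliver this, but only because $A$ and $\tilde{A}$ share the \emph{same} leading coefficient $\operatorname{diag}(\rho_1,\dots,\rho_M)$: each layer has the form $\tilde{A}(y)^{-1}(I+E)A(y)$ with $E=o(1)$, and this is $I+o(1)$ precisely because $\tilde{A}(y)^{-1}A(y)\to I$ and conjugation by the constant diagonal leading term is harmless. This is where the hypothesis $A_n=\tilde{A}_n$ enters the rationality proof (beyond making the prefactors of \eqref{seriessolhdiff} cancel), and it is worth saying explicitly; with that, and with the uniformity of the asymptotics in closed half-planes inherited from Theorem~\ref{thm:diffexistence}, your proof is complete and coincides with the one Birkhoff and Borodin give.
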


If we f\/ix $A_n = \operatorname{diag}(\rho_1, \ldots, \rho_M)$, we may denote the algebraic variety of all $n$-tuples, $(A_0, \ldots$, $A_{n-1})$ such that
\begin{gather*}
\det\big( A_n x^n + A_{n-1} x^{n-1} + \cdots + A_0\big) = \prod_{j=1}^{M} \rho_j \prod_{k = 1}^{Mn}(x-a_k),
\end{gather*}
 by $\mathcal{M}_h(a_1, \ldots, a_{Mn}; d_1, \ldots, d_M;\rho_1, \ldots, \rho_M)$. The space of discrete isomonodromic deformations is characterized by the following theorem of Borodin~\cite{Borodin:connection}.

\begin{thm}\label{hisolattice}
For any $\epsilon_1, \ldots, \epsilon_{Mn} \in \mathbb{Z}$, $\delta_1, \ldots, \delta_M \in \mathbb{Z}$ such that
\begin{gather*}
\sum_{i=1}^{Mn} \epsilon_i + \sum_{j=1}^{M} \delta_j = 0,
\end{gather*}
there exists a non-empty Zariski open subset, $\mathcal{A} \subset \mathcal{M}_h(a_1, \ldots, a_{Mn}; d_1, \ldots, d_M;\rho_1, \ldots, \rho_M)$, such that for any $(A_0,\ldots, A_{n-1}) \in \mathcal{A}$ there exists a unique rational matrix, $R(x)$ and a matrix, $\tilde{A}(x)$ related by \eqref{comp} such that
\begin{gather*}
\tilde{A} = \tilde{A}_0 + \tilde{A}_1x + \cdots + \tilde{A}_{n-1}x^{n-1} + A_nx^n, \\
(\tilde{A}_0, \ldots, \tilde{A}_{n-1}) \in \mathcal{M}_h(a_1 + \epsilon_1, \ldots,a_{Mn} + \epsilon_{Mn}; d_1 + \delta_1, \ldots, d_M + \delta_M;\rho_1, \ldots, \rho_M),
\end{gather*}
and where $\tilde{Y}_{\pm \infty}(x)$ are related to $Y_{\pm \infty}(x)$ by~\eqref{Atildeev}.
\end{thm}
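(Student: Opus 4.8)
The plan is to realise the general deformation as a composition of \emph{elementary} transformations and to treat isomonodromy as the statement that integer shifts of the local data leave the characteristic constants fixed. The admissible exponent vectors $(\epsilon_1,\ldots,\epsilon_{Mn},\delta_1,\ldots,\delta_M)$ form the kernel of the summation map $\mathbb{Z}^{Mn+M}\to\mathbb{Z}$, a lattice generated by the elementary differences $e_\alpha-e_\beta$ of standard basis vectors. Since the composite of two transformations of the form \eqref{Atildeev} is again of that form (the rational matrices multiply as $R(x)=R^{(2)}(x)R^{(1)}(x)$, the relations \eqref{comp} compose, and the parameter shifts add), it suffices to establish the theorem for a single generator and then iterate; because each step is a dominant rational map, the intersection of the finitely many Zariski-open loci on which the generators are defined is again non-empty and open, which will furnish $\mathcal{A}$. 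First I would therefore fix an elementary shift, say $\epsilon_k=+1$, $\delta_j=-1$ with all other entries zero.

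For such an elementary shift I would build $R(x)$ directly from the local structure of $A(x)$ at the singularity $x=a_k$. Generically $\det A(a_k)=0$ is a simple zero, so $A(a_k)$ has a one-dimensional kernel and image; I would take $R(x)$ to be a rational matrix with a single rank-one residue at $x=a_k$, aligned with the appropriate kernel/image vector, together with the prescribed $\operatorname{diag}\big((x/h)^{\delta_i}\big)$ behaviour at infinity forced by the exponent shift, so that the factor $(x-a_k)$ is removed from $\det A$ and reinstated at $x=a_k+1$. Substituting into \eqref{comp} and checking that the poles introduced by $R(x)$ and by $R(x+h)^{-1}$ cancel shows that $\tilde{A}(x)=R(x+h)A(x)R(x)^{-1}$ is again polynomial, and the asymptotic normalisation $I+\mathcal{Y}_1/x+\cdots$ of \eqref{seriessolhdiff}, preserved by the construction, guarantees that $\tilde{A}$ retains the leading coefficient $A_n=\operatorname{diag}(\rho_1,\ldots,\rho_M)$, so that $\tilde{A}$ lies in $\mathcal{M}_h$ with exactly the shifted parameters.

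The role of the balance condition appears when one takes determinants in \eqref{comp}: since $\det A$ and $\det\tilde{A}$ share the leading coefficient $\prod_j\rho_j$ while their finite zeros differ by the shifts $a_k\mapsto a_k+\epsilon_k$, one obtains the first-order relation $\det R(x+h)/\det R(x)=\prod_k (x-a_k-\epsilon_k)/(x-a_k)$. A rational solution $\det R(x)$ exists precisely because the shifts are integers, and the requirement that $\det R$ be a genuine rational function, i.e.\ that its orders over $\mathbb{P}^1$ sum to zero, relates the total finite order (governed by the shifts $\epsilon_i$) to the order at infinity (governed by the growth $(x/h)^{\sum_j\delta_j}$ of $\det R$); this relation is precisely $\sum_i\epsilon_i+\sum_j\delta_j=0$. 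Isomonodromy is then immediate: integer shifts of the $d_j$ and of the $a_k$ multiply the asymptotic factors in \eqref{seriessolhdiff} by quantities periodic of period $h$ (equivalently $e^{2\pi i\,\delta_j}=1$), so the connection matrix $P(x)=(Y_l(x))^{-1}Y_r(x)$, and hence the characteristic constants $\{p_{i,j}\}$, are unchanged; the relation $\tilde{Y}_{\pm\infty}(x)=R(x)Y_{\pm\infty}(x)$ then holds by the preceding theorem of Birkhof\/f.

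For uniqueness and genericity I would argue as follows. Any two rational matrices implementing the same shift with the same normalised asymptotics differ by a matrix that is entire, tends to $I$ at infinity, and is compatible with both linear systems; a Liouville-type argument forces it to be the identity, giving uniqueness of $R(x)$ and of $\tilde{A}(x)$. The construction above is valid only on the locus where the zeros of $\det A$ are simple, the kernel/image vectors at the $a_k$ are in general position, and the shifted data $(a_k+\epsilon_k;\,d_j+\delta_j)$ remains non-resonant with the $a$'s distinct; each of these is a non-empty Zariski-open condition under the running hypotheses of Theorem~\ref{thm:diffexistence}, and their intersection is the required set $\mathcal{A}$. The main obstacle is the degree bookkeeping of the previous paragraph: one must control the behaviour of $R(x)$ simultaneously at the finite singularities and at infinity so that $\tilde{A}(x)$ is \emph{exactly} a degree-$n$ polynomial with unchanged leading term and no spurious poles, and it is precisely here that the balance condition $\sum_i\epsilon_i+\sum_j\delta_j=0$ is indispensable.
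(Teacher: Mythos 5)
Your proposal follows what is, in substance, the paper's own approach: the paper never actually proves Theorem~\ref{hisolattice} (it is quoted as a theorem of Borodin~\cite{Borodin:connection}), but the proof it gives for the $q$-difference analogue, Theorem~\ref{qlattice}, is exactly your plan. There, an atomic operation is realised by a constant change of basis aligning $\ker A(a_1)$ and the eigenvectors of the leading coefficient with the coordinate directions, followed by dividing one column by a linear factor and multiplying one row by another --- that is, by a rational $R(x)$ differing from a constant matrix by a rank-one term with a single simple pole --- and the full lattice of transformations is then generated by composing with the permutation action and taking inverses. Your degree count at infinity forcing $\sum_i \epsilon_i + \sum_j \delta_j = 0$, your Liouville-type uniqueness argument (which is meaningful exactly because the clause $\tilde Y_{\pm\infty} = R\, Y_{\pm\infty}$ normalises $R$; without it, conjugation by constant diagonal matrices would destroy uniqueness), and your genericity argument via pullbacks of Zariski-open loci under birational generators are all consistent with that template and fill in steps the paper leaves implicit.

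There is, however, one step that fails as literally written, and it sits at the heart of your determinant bookkeeping. You assert that the relation
\begin{gather*}
\frac{\det R(x+h)}{\det R(x)} = \prod_{k} \frac{x - a_k - \epsilon_k}{x - a_k}
\end{gather*}
has a rational solution ``precisely because the shifts are integers''. This is false for generic $h$. For any rational $f$, the divisor of $f(x+h)/f(x)$ telescopes along arithmetic progressions of step $h$: on each coset of $h\mathbb{Z}$ its total degree is zero, so every zero $a_k+\epsilon_k$ of the right-hand side must be matched by poles lying in $a_k+\epsilon_k + h\mathbb{Z}$. For $h \notin \mathbb{Q}$ and $a_k$ in general position this forces $\epsilon_k \in h\mathbb{Z}$, which no nonzero integer satisfies; hence no rational $R(x)$ exists at all. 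The defect is inherited from the statement itself, which transcribes Borodin's theorem (proved for step $1$) into the step-$h$ setting without rescaling: everywhere the paper actually uses the result, the parameters move by multiples of $h$, e.g.\ $T_i\colon a_i \mapsto a_i + h$, and in \eqref{hGarnierAction} both $a_i$, $a_j$ and $d_1$, $d_2$ are shifted by $\pm h$. Your proof becomes correct once you either rescale $x \mapsto hx$ (so that the step is $1$) or read $\epsilon_i$, $\delta_j$ as shifts in units of $h$; with that convention the elementary case $\epsilon_k = 1$, $\sum_j \delta_j = -1$ is solved by $\det R(x) = c\,(x-a_k-h)^{-1}$, exactly matching the determinant of the paper's elementary matrix $R(x) = L(x-h,u_1,a_1)^{-1}$ and of \eqref{Rhform}. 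You should make this normalisation explicit rather than let the two incompatible scales coexist in the same computation.
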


Fixing some translation, we obtain that \eqref{lineardiff} and \eqref{Atildeev} are a Lax pair for a birational map of algebraic varieties
\begin{gather*}
\phi \colon \ \mathcal{M}_h (a_1, \ldots, a_{Mn}; d_1, \ldots, d_M;\rho_1, \ldots, \rho_M) \\
\hphantom{\phi \colon}{} \ {}\to \mathcal{M}_h(a_1 + \epsilon_1, \ldots,a_{Mn} + \epsilon_{Mn}; d_1 + \delta_1, \ldots, d_M + \delta_M;\rho_1, \ldots, \rho_M),
\end{gather*}
which we wish to identify as some integrable system. For reasons of simplif\/ication, if $A_n = I$ we will also assume that $A_{n-1}$ is semisimple, in which case we we may apply a constant gauge transformation so that $A_{n-1}$ is also diagonal. Hence, if $A_n = I$, we will impose the condition that $A_{n-1}$ is diagonal for any tuple in $\mathcal{M}_h(a_1, \ldots, a_{Mn}; d_1, \ldots, d_M;1, \ldots, 1)$.

\subsection[Classical $q$-dif\/ference results]{Classical $\boldsymbol{q}$-dif\/ference results}

We write \eqref{linearsigma} where $\sigma= \sigma_q$ as
\begin{gather}\label{linearqdiff}
Y(qx) = A(x)Y(x),
\end{gather}
where $A(x)$ is a rational $M\times M$ matrix that is invertible almost everywhere and $0 < |q| < 1$ as above.

The functions required to express the solution of any scalar linear f\/irst-order system of $q$-dif\/ference equations are not as commonly used as the gamma function. Hence, before discussing some of the particular existence theorems, let us introduce some standard functions, all of which may be found in \cite{GasperRahman}. We def\/ine the $q$-Pochhammer symbol by
\begin{gather*}
( a;q )_{\infty} = \prod_{n=0}^{\infty} \big(1-aq^{n}\big),\qquad
( a;q )_m = \frac{(a;q)_{\infty}}{(aq^{m};q)_{\infty}}.
\end{gather*}
The important property of $(x;q)_{\infty}$ is that
\begin{gather*}
(qx;q)_{\infty} = \frac{(x;q)_{\infty}}{1-x}.
\end{gather*}
We also have the Jacobi theta function,
\begin{gather*}
\theta_q(x) = \sum_{n=-\infty}^{\infty} q^{\frac{n(n-1)}{2}} x^{n},
\end{gather*}
which is analytic over $\mathbb{C}^{*}$ and satisf\/ies
\begin{gather*}
\theta_q(qx) = qx\theta_q(x),\qquad
\theta_q(x) = (q;q)_{\infty}(-xq;q)_{\infty}(-q/x;q)_{\infty}.
\end{gather*}
The last expression is known as the Jacobi triple product identity~\cite{GasperRahman}. The function $\theta_q(x)$ has simple roots on~$-q^{\mathbb{Z}}$. We def\/ine the $q$-character to be
\begin{gather*}
e_{q,c}(x) = \frac{\theta_q(x)}{\theta_q (x/c)},
\end{gather*}
which satisf\/ies $e_{q,c}(qx) = ce_{q,c}(x)$ and has simple zeroes at $x=q^{\mathbb{Z}}$ and simple poles at $x=cq^{\mathbb{Z}}$. In the special case in which $c = q^n$ then $e_{q,q^n}(x)$ is proportional to~$x^n$. Lastly, we have the $q$-logarithm
\begin{gather*}
l_q(x) = x\frac{\theta_q'(x)}{\theta_q(x)},
\end{gather*}
which satisf\/ies
\begin{gather*}
\sigma_q l_q(x) = l_q(x)+1,
\end{gather*}
and is meromorphic over $\mathbb{C}^{*}$ with simple poles on $q^{\mathbb{Z}}$.

We may use the functions above to solve any scalar $q$-dif\/ference equation, hence transform~\eqref{linearqdiff} in which~$A(x)$ is rational to a~case in which~$A(x)$ is polynomial, given by~\eqref{polydiff}. If $A_0$ and $A_n$ are semisimple and invertible, then by using constant gauge transformations we can assume that one of them, say~$A_n$, is diagonal. Under these circumstances, we may specify two formal solutions.

\begin{lem}[Birkhof\/f \cite{Birkhoffallied}]\label{lemq}
Suppose $A_n = \operatorname{diag}(\kappa_1, \ldots, \kappa_M)$ and $A_0$ is semisimple with non-zero eigenvalues, $\theta_1, \ldots, \theta_M$, such that the non resonnancy conditions
\begin{gather*}
\frac{\kappa_i}{\kappa_j}, \ \frac{\theta_i}{\theta_j} \neq q, \ q^2, \ \ldots
\end{gather*}
are satisfied, then there exists two formal matrix solutions
\begin{subequations}
\begin{gather}\label{basicqdif}
Y_0(x) = \left(\mathcal{Y}_0 + \mathcal{Y}_1x + \cdots \right) \operatorname{diag}(e_{q,\theta_1}, \ldots, e_{q,\theta_M}),\\
Y_{\infty}(x) = \left(I + \frac{\mathcal{Y}_{-1}}{x} + \cdots \right) \operatorname{diag}(e_{q,\kappa_1}, \ldots, e_{q,\kappa_M}) \theta_q(x/q)^{-n},
\end{gather}
\end{subequations}
where $\mathcal{Y}_0$ diagonalizes $A_0$.
\end{lem}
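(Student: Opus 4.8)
The plan is to construct the two formal solutions by direct substitution of the ansatz into the $q$-difference equation \eqref{linearqdiff} and then solving the resulting recurrence relations for the matrix coefficients $\mathcal{Y}_j$, checking that the non-resonancy conditions guarantee that the recurrences are solvable at every order. The two solutions are distinguished by their behaviour at the two singular points $x=0$ and $x=\infty$, so I would treat them separately, each time diagonalizing the relevant leading coefficient ($A_0$ for $Y_0$, $A_n$ for $Y_\infty$) and peeling off the corresponding $q$-character factor.

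First, for the solution $Y_\infty(x)$ at infinity, I would write $Y_\infty(x)=G(x)\,D_\infty(x)$ where $D_\infty(x)=\operatorname{diag}(e_{q,\kappa_1},\ldots,e_{q,\kappa_M})\,\theta_q(x/q)^{-n}$ and $G(x)=I+\mathcal{Y}_{-1}/x+\cdots$. Using the functional equations $e_{q,\kappa_i}(qx)=\kappa_i e_{q,\kappa_i}(x)$ and $\theta_q(qx/q)=(x)\theta_q(x/q)$, the factor $D_\infty$ satisfies $D_\infty(qx)=x^{-n}A_n D_\infty(x)$ to leading order, so substituting into \eqref{linearqdiff} reduces the equation to a recurrence of the form $G(qx)\,x^{-n}A_n=A(x)G(x)$. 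Expanding both sides in powers of $1/x$ and matching coefficients yields, at each order $k$, an equation of the shape $A_n\mathcal{Y}_{-k}q^{\cdots}-\mathcal{Y}_{-k}A_n=(\text{known lower-order terms})$. This is a Sylvester-type equation for $\mathcal{Y}_{-k}$, and its solvability for the off-diagonal entries is precisely controlled by whether $\kappa_i/\kappa_j$ equals a power of $q$; the hypothesis $\kappa_i/\kappa_j\neq q,q^2,\ldots$ ensures the associated linear operator is invertible, so each $\mathcal{Y}_{-k}$ is determined uniquely (with diagonal entries fixed by the normalization $G=I+O(1/x)$).

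Second, for the solution $Y_0(x)$ at the origin, I would write $Y_0(x)=H(x)\,D_0(x)$ with $D_0(x)=\operatorname{diag}(e_{q,\theta_1},\ldots,e_{q,\theta_M})$ and $H(x)=\mathcal{Y}_0+\mathcal{Y}_1 x+\cdots$, where $\mathcal{Y}_0$ diagonalizes $A_0$, say $\mathcal{Y}_0^{-1}A_0\mathcal{Y}_0=\operatorname{diag}(\theta_1,\ldots,\theta_M)$. The leading-order relation is satisfied automatically by this diagonalization together with $e_{q,\theta_i}(qx)=\theta_i e_{q,\theta_i}(x)$, and matching the coefficient of $x^k$ again produces a Sylvester equation, now governed by the ratios $\theta_i/\theta_j$; the second non-resonancy condition $\theta_i/\theta_j\neq q,q^2,\ldots$ guarantees unique solvability of each $\mathcal{Y}_k$.

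The main obstacle is the resonancy analysis: one must verify that the homogeneous operator $\mathcal{Y}\mapsto \Lambda_1\mathcal{Y}-\mathcal{Y}\Lambda_2 q^{k}$ (with $\Lambda_1,\Lambda_2$ the relevant diagonal leading coefficients) appearing at each order is nonsingular precisely when the stated eigenvalue-ratio conditions hold, and that the inhomogeneous terms assembled from previously determined coefficients remain consistent with the triangular structure of the recursion. The diagonal entries require separate attention, since the operator annihilates them; here the normalization conditions ($G=I+O(1/x)$ at infinity, and the fixed choice of $\mathcal{Y}_0$ at the origin) remove the ambiguity and pin down the solution uniquely. I would emphasise that only the \emph{formal} existence is claimed, so no convergence estimates are needed—the entire argument is an order-by-order solvability statement for the coefficient recurrences.
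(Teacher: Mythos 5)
The paper does not actually prove this lemma; it is quoted as a classical result of Birkhoff (with the remark that the multiplicative factors have been rebuilt from $\theta_q$), so there is no in-paper argument to compare against. Your order-by-order construction is the standard proof of such statements and it is essentially correct: writing $Y_\infty = G(x)D_\infty(x)$ with $D_\infty(qx)=x^{-n}A_nD_\infty(x)$ (this identity is exact, not just ``to leading order'', since $\theta_q(x)=x\,\theta_q(x/q)$ and $e_{q,\kappa}(qx)=\kappa e_{q,\kappa}(x)$) reduces \eqref{linearqdiff} to $G(qx)A_n=\bigl(x^{-n}A(x)\bigr)G(x)$, and matching powers of $1/x$ gives $q^{-k}\mathcal{Y}_{-k}A_n-A_n\mathcal{Y}_{-k}=(\text{known})$, solvable entrywise when $\kappa_i/\kappa_j\neq q^{\pm k}$; the analogous computation at $x=0$ uses $\theta_i/\theta_j$.

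One detail in your resonancy analysis is off, though harmlessly so. You assert that the Sylvester operator ``annihilates the diagonal entries'' so that a separate normalization is needed to pin them down. That is true for differential and $h$-difference equations, where the operator at order $k$ is $\mathcal{Y}\mapsto \Lambda\mathcal{Y}-\mathcal{Y}\Lambda+k\mathcal{Y}$ type, but in the multiplicative setting the $(i,j)$ entry of the order-$k$ operator is multiplication by $q^{\mp k}\kappa_j-\kappa_i$, which on the diagonal equals $\kappa_i(q^{\mp k}-1)\neq 0$ for $k\geq 1$ because $0<|q|<1$ and $\kappa_i\neq 0$. So the diagonal entries of every $\mathcal{Y}_{\pm k}$, $k\geq 1$, are determined automatically; the non-resonancy hypotheses are needed only for the off-diagonal entries, and the only genuine normalizations are the leading terms ($I$ at infinity, the choice of diagonalizing matrix $\mathcal{Y}_0$ at the origin, the latter unique up to right multiplication by an invertible diagonal matrix when the $\theta_i$ are distinct). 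With that correction your argument is complete for the formal existence claimed.
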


By using $\theta_q$ as our building block for the multiplicative factors appearing on the right in~\eqref{basicqdif}, this formulation is slightly dif\/ferent from the original formulation of Birkhof\/f~\cite{Birkhoffallied}. These functions have nicer properties with respect to the Galois theory of dif\/ference equations~\cite{Sauloy, VanderPut2003}. We should mention that the above form can be generalized to the case in which some of the eigenvalues are $0$ by using the so-called Birkhof\/f--Guenther form~\cite{BirkhoddAdamsSum}. Formal solutions def\/ined in terms of the Birkhof\/f--Guenther form do not necessarily def\/ine convergent solutions. This issue of convergence gives rise to a $q$-analogue of the Stokes phenomenon for systems of linear dif\/ferential equations~\cite{FlaschkaNewell}. Regardless of the convergence, these solutions may be used to derive deformations of the form~\eqref{Atildeev}, as shown in~\cite{Ormerodlattice}.

We are interested in solutions def\/ined in open neighborhoods of $x= 0$ and $x= \infty$, which may be extended by
\begin{gather*}
Y_{\infty}(x) = A(x/q)A\big(x/q^2\big) \cdots A\big(x/q^k\big) Y_{\infty}\big(x/q^k\big),\\
Y_0(x) = A(x)^{-1}A(qx)^{-1} \cdots A\big(q^{k-1} x\big)^{-1} Y_0\big(q^kx\big).
\end{gather*}
The resulting solutions are singular at $q$-power multiples of the values of~$x$ where $\det A(x) = 0$. For this reason, it is once again convenient to f\/ix where~$A(x)$ is not invertible. If~$A_n$ is semisimple with eigenvalues $\kappa_1, \ldots, \kappa_M$ ($\kappa_i \neq 0$) then we parameterize the determinant as
\begin{gather}\label{qdetA}
\det A(x) = \kappa_1 \cdots \kappa_M (x-a_1) \cdots (x-a_{Mn}).
\end{gather}
The series part of the solution around $x= 0$, which we denote $\hat{Y}_0(x)$, satisf\/ies
\begin{gather*}
\hat{Y}_0(qx) A_0 = A(x)\hat{Y}_0(x),
\end{gather*}
whereas the series part of the solution around $x=\infty$, denoted $\hat{Y}_{\infty}(x)$, satisf\/ies a similar equation. By a succinct argument featured in van der Put and Singer~\cite[Section~12.2.1]{VanderPut2003} we have solutions, $\hat{Y}_0(x)$ and $\hat{Y}_{\infty}(x)$, that are convergent in neighborhoods of $x=0$ and $x=\infty$ respectively.

\begin{prop}The series part of the solutions, $\hat{Y}_0(x)$ and $\hat{Y}_{\infty}(x)$, specified by~\eqref{basicqdif} are holomorphic and invertible at $x=0$ and $x = \infty$ respectively. Moreover, $\hat{Y}_{\infty}(x)$ and $\hat{Y}_0(x)^{-1}$ have no poles, while $Y_{\infty}(x)^{-1}$ and $\hat{Y}_0(x)$ have possible poles at $q^{k+1}a_i$ and $q^{-k}a_i$ respectively for $i = 1,\ldots, mn$ and $k \in \mathbb{N}$.
\end{prop}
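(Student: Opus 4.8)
The plan is to convert the stated series solutions into functional equations obeyed by $\hat Y_0$, $\hat Y_\infty$ and their inverses, and then to propagate holomorphy and poles outward from the base points $x=0$ and $x=\infty$ by iterating these equations. The whole argument rests on the fact that, since $0<|q|<1$, the dilation $x\mapsto qx$ contracts toward $0$, so the orbit of any point under $x\mapsto q^{\pm k}x$ sweeps out all of $\mathbb{C}^{*}$; holomorphy (or a pole) at one point of the orbit can therefore be transported to the whole orbit. The essential phenomenon to exploit is an asymmetry: the recursion governing $\hat Y_0$ is driven by $A(x)^{-1}$, which manufactures poles at the zeros of $\det A$, whereas the recursion governing $\hat Y_0^{-1}$ is driven by the \emph{polynomial} $A(x)$, which creates nothing; the roles are interchanged at infinity.

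First I would record the base cases. The cited estimate of van der Put and Singer gives that the series parts converge in neighbourhoods of $x=0$ and $x=\infty$, so $\hat Y_0$ and $\hat Y_\infty$ are holomorphic there; invertibility is immediate from the leading coefficients, since $\hat Y_0(0)=\mathcal{Y}_0$ diagonalises $A_0$ (hence is invertible under the semisimplicity hypothesis of Lemma~\ref{lemq}) and $\hat Y_\infty(\infty)=I$. Next I would derive the functional equations. Substituting \eqref{basicqdif} into \eqref{linearqdiff} and cancelling the common $q$-character factors gives, as already noted in the text for the first,
\begin{gather*}
\hat Y_0(qx)A_0 = A(x)\hat Y_0(x), \qquad \hat Y_\infty(qx)A_n = \big(x^{-n}A(x)\big)\hat Y_\infty(x),
\end{gather*}
where $C(x):=x^{-n}A(x)$ is holomorphic on $\mathbb{C}^{*}$ with $C(\infty)=A_n$ and $\det C(x)=x^{-Mn}\det A(x)$ vanishing precisely at the $a_i$ of \eqref{qdetA}. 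Inverting each relation and solving for the value at $x$ yields the four recursions
\begin{gather*}
\hat Y_0(x)=A(x)^{-1}\hat Y_0(qx)A_0, \qquad \hat Y_0(x)^{-1}=A_0^{-1}\hat Y_0(qx)^{-1}A(x),\\
\hat Y_\infty(x)=C(x/q)\hat Y_\infty(x/q)A_n^{-1}, \qquad \hat Y_\infty(x)^{-1}=A_n\hat Y_\infty(x/q)^{-1}C(x/q)^{-1},
\end{gather*}
in which $A_0$ and $A_n$ are invertible by hypothesis.

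The core of the proof is then a straightforward induction: using these recursions to extend $\hat Y_0$ and $\hat Y_0^{-1}$ outward from the disc $\{|x|<r\}$ one factor of $|q|^{-1}$ at a time, and $\hat Y_\infty$, $\hat Y_\infty^{-1}$ inward from a neighbourhood of $\infty$. For $\hat Y_0^{-1}$ the continuation uses only the entire matrix $A(x)$, so it stays holomorphic at every stage and is pole-free on $\mathbb{C}$; iterating the companion recursion $\hat Y_0(x)=A(x)^{-1}\cdots A(q^{k-1}x)^{-1}\hat Y_0(q^kx)A_0^{k}$ shows the only singularities of $\hat Y_0$ are those of the factors $A(q^jx)^{-1}$, located at $q^jx=a_i$, namely at $x=q^{-k}a_i$ with $k\in\mathbb{N}$. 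The same bookkeeping at infinity, where now $C(x/q)$ is holomorphic on $\mathbb{C}^{*}$ while $C(x/q)^{-1}=(x/q)^{n}A(x/q)^{-1}$ has poles at $x/q=a_i$, shows that $\hat Y_\infty$ is holomorphic on $\mathbb{C}^{*}$ and that $\hat Y_\infty^{-1}$ acquires poles only at $x=q^{k+1}a_i$, $k\in\mathbb{N}$, which is the remaining assertion.

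The step I expect to be the main obstacle is controlling the infinite iteration rather than any single step: one must check that the normalised products built from $A(q^jx)^{-1}$ and $C(x/q^j)$ converge locally uniformly off the listed pole loci, so that the continuations are genuinely single-valued meromorphic functions with exactly the claimed polar sets and no spurious vanishing of the determinant. Convergence follows because $A(q^jx)\to A_0$ and $C(x/q^j)\to A_n$ geometrically as $j\to\infty$, matching the normalising constants $A_0^{k}$ and $A_n^{-k}$; the one point needing care is $x=0$ for $\hat Y_\infty$, where the accumulated factors $x^{-n}$ produce an essential singularity rather than a pole, consistent with the statement being read on $\mathbb{C}^{*}$.
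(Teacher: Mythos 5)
Your argument is correct and is essentially the one the paper intends: the paper gives no explicit proof, but the extension formulas it displays just before the proposition, together with the cited van der Put--Singer convergence result for the series near $x=0$ and $x=\infty$, amount to exactly your iteration of the functional equations, with poles produced only by the $A(q^jx)^{-1}$ (resp.\ $C(x/q^j)^{-1}$) factors. One small remark: the convergence issue you flag at the end is not actually an obstacle, since for each fixed $x$ only finitely many steps of the recursion are needed to reach the disc of convergence, so no infinite product has to be controlled.
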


While we do not have an explicit presentation of the connection matrix, it is generally known to be expressible in terms of elliptic theta functions. In particular, the entries of the connection matrix lie in the f\/ield of meromorphic functions on an elliptic curve, i.e., $\mathbb{C}^{*}/\langle q \rangle$.

Let us specify the required lattice actions in a similar way to the $h$-dif\/ference case. We denote the algebraic variety of all $n$-tuples of $M\times M$ matrices, $(A_1, \ldots, A_n)$ such that $A_n$ has eigenvalues $\kappa_1,\ldots, \kappa_M$ and $A_0 = \operatorname{diag}(\theta_1, \ldots, \theta_M)$ with determinant specif\/ied by \eqref{qdetA} by $\mathcal{M}_q(a_1, \ldots, a_{Mn}; \kappa_1, \ldots, \kappa_M; \theta_1, \ldots, \theta_M)$. The natural constraint obtained by evaluating \eqref{qdetA} at $x=0$, is that
\begin{gather}\label{qgenprodcon}
\prod_{j=1}^{m} \kappa_j \prod_{k = 1}^{Mn} a_k (-1)^{Mn} =\prod_{j=1}^{M} \theta_j.
\end{gather}

\begin{thm}\label{qlattice}
For any $\epsilon_1, \ldots, \epsilon_{Mn} \in \mathbb{Z}$ and $\delta_1, \ldots, \delta_M \in \mathbb{Z}$ such that
\begin{gather*}
\sum_{j=1}^{Mn} \epsilon_{j} + \sum_{i=1}^{M} \delta_i = 0,
\end{gather*}
there exists a non-empty Zariski open subset, $\mathcal{A} \subset \mathcal{M}_q(a_1, \ldots, a_{Mn}; \kappa_1, \ldots, \kappa_M; \theta_1, \ldots, \theta_M)$, such that for any $(A_1,\ldots, A_{n-1}) \in \mathcal{A}$ there exists a rational matrix, $R(x)$, and a matrix, $\tilde{A}(x)$ related by \eqref{comp} with
\begin{gather*}
\tilde{A}(x) = A_0 + \tilde{A}_1x + \cdots + \tilde{A}_{n-1}x^{n-1} + \tilde{A}_nx^n, \\
(\tilde{A}_1, \ldots, \tilde{A}_{n-1}) \in \mathcal{M}_q\big(a_1q^{\epsilon_1}, \ldots,a_{mn} q^{\epsilon_{mn}}; \kappa_1q^{\delta_1}, \ldots, \kappa_Mq^{\delta_M}; \theta_1, \ldots, \theta_M\big),
\end{gather*}
and where $\tilde{Y}_{0}(x)$ and $\tilde{Y}_{\infty}(x)$ are related to $Y_{0}(x)$ and $Y_{\infty}(x)$ by~\eqref{Atildeev}.
\end{thm}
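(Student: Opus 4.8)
The plan is to follow the strategy Borodin uses in the $h$-difference setting (Theorem~\ref{hisolattice}), with the gamma-function asymptotics replaced throughout by the $q$-character solutions of Lemma~\ref{lemq} and their convergence, guaranteed by the preceding Proposition. First I would observe that the admissible shifts form the sublattice
$$L=\left\{(\epsilon,\delta)\in\mathbb{Z}^{Mn}\times\mathbb{Z}^{M}\ \middle|\ \textstyle\sum_{j}\epsilon_{j}+\sum_{i}\delta_{i}=0\right\},$$
and that it suffices to construct $R(x)$ for a set of generators, since compositions again satisfy \eqref{comp}: if $\tilde{A}R_1=(\sigma_q R_1)A$ and $\tilde{\tilde{A}}R_2=(\sigma_q R_2)\tilde{A}$, then $\tilde{\tilde{A}}(R_2R_1)=(\sigma_q R_2)(\sigma_q R_1)A=\sigma_q(R_2R_1)A$. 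I would take as generators the moves $a_i\mapsto q a_i,\ a_k\mapsto q^{-1}a_k$ and $a_i\mapsto q a_i,\ \kappa_j\mapsto q^{-1}\kappa_j$. The role of the constraint $\sum\epsilon_j+\sum\delta_i=0$ is twofold and should be made explicit: it is exactly what preserves the product relation \eqref{qgenprodcon} (so that $\tilde{A}_0=A_0=\operatorname{diag}(\theta_i)$ remains consistent), and, equivalently, it is what makes the forced determinant ratio $\det\tilde{A}(x)/\det A(x)=\det R(qx)/\det R(x)$ realizable by a \emph{rational} $R$; for the two generators one checks directly that $\det R(x)\propto (x-qa_i)^{-1}$ and $\det R(x)\propto (x-a_k)/(x-qa_i)$ respectively produce the required ratio.

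Next I would build the matrix $R(x)$ itself. There are two equivalent routes. The first is to set $R(x)=\tilde{Y}_0(x)Y_0(x)^{-1}=\tilde{Y}_\infty(x)Y_\infty(x)^{-1}$, where the tilded fundamental solutions carry the shifted characteristic data; single-valuedness of $R$ is the statement that the associated connection matrix is preserved, and the crucial point is that the transcendental factors $e_{q,\theta_i}$, $e_{q,\kappa_j}$ and the $\theta_q$-powers in \eqref{basicqdif} cancel in the ratio precisely because of the balancing condition $\sum\epsilon_j+\sum\delta_i=0$, leaving a matrix that is meromorphic on $\mathbb{C}^{*}$ with poles confined to the $q$-orbits of the $a_k$, hence genuinely rational. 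The second, more hands-on route fixes the residual directions of $R$ from the kernel and image data at the relevant zero $a_i$ of $\det A(x)$ (the vector $w_i$ with $A(a_i)w_i=0$), choosing them so that $\tilde{A}(x)=\sigma_q R(x)\,A(x)\,R(x)^{-1}$ is polynomial. Either way I would then verify that $\tilde{A}$ lands in the prescribed moduli space: that the pole of $R(x)^{-1}$ is cancelled by a matching zero of $(\sigma_q R)A$, so that $\tilde{A}$ is a polynomial of degree $n$ with $\tilde{A}_0=A_0$; that $\det\tilde{A}$ has zeros at the shifted points $a_kq^{\epsilon_k}$; and that the leading eigenvalues emerge as $\kappa_jq^{\delta_j}$ from the behaviour of $R$ at $x=\infty$. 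The relation \eqref{Atildeev} for $\tilde{Y}_0,\tilde{Y}_\infty$ then holds by construction.

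The main obstacle, and the source of the Zariski-open restriction to $\mathcal{A}$, is the genericity required to keep $\tilde{A}$ polynomial of exactly this form. The construction degenerates when the local data at the $a_k$ fails to be in general position---for instance when two zeros coincide modulo $q^{\mathbb{Z}}$, when a kernel vector $w_i$ lines up badly with the image used to define $R$, or when the non-resonance hypotheses of Lemma~\ref{lemq} are violated for the shifted eigenvalues---and it is precisely the complement of these degeneracy loci, each a proper closed subvariety, that supplies the non-empty Zariski open set $\mathcal{A}$. Establishing that this complement is nonempty (equivalently, exhibiting one point at which the cancellation of the pole of $R^{-1}$ and the polynomiality of $\tilde{A}$ both succeed) and confirming that $R$ is rational rather than merely $q$-elliptic is where the real work lies; everything else reduces to the determinant bookkeeping sketched above and to reading off the boundary coefficients from \eqref{basicqdif}.
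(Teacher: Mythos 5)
Your proposal is correct, and your ``second, more hands-on route'' is essentially the paper's own proof: the paper reduces to the single atomic operation $e_{1,1}\colon \kappa_1\to\kappa_1/q$, $a_1\to qa_1$ (conjugated by the $S_{Mn}\times S_M$ permutation actions, which generate the same sublattice $L$ as your two families of elementary moves), and realizes it by a constant change of basis sending $\ker A(a_1)$ and the eigenvectors $\ker(A_n-\kappa_j)$, $j\geq 2$, to the coordinate axes, followed by the diagonal gauge transformation that divides the first column by $(1-x/a_1)$ and multiplies the first row by $(1-x/(qa_1))$; the determinant and $x\to\infty$ asymptotics then give the shifts of $a_1$ and $\kappa_1$ exactly as in your bookkeeping, and invertibility of each step yields the full lattice and the Zariski-open genericity locus. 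Your first route, defining $R(x)=\tilde{Y}_0(x)Y_0(x)^{-1}=\tilde{Y}_\infty(x)Y_\infty(x)^{-1}$ from the formal solutions of Lemma~\ref{lemq}, is a genuinely different (Birkhoff/Jimbo--Sakai style) argument that the paper deliberately avoids: it requires knowing that the two local ratios glue to a single rational matrix, which is equivalent to the preservation of the connection matrix and so carries a circularity risk (and a convergence issue when the series parts of \eqref{basicqdif} are only formal) unless one first establishes rationality independently. The paper's constructive route sidesteps this entirely, which is why it is the one carried out; your proposal would be complete upon executing that route, and the only point I would press you on is to make the choice of basis and the resulting diagonal form of $R(x)$ fully explicit rather than leaving the ``residual directions'' implicit.
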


\begin{proof}
It is suf\/f\/icient to specify an atomic operation that performs the following invertible operation
\begin{gather*}
e_{1,1} \colon \ \kappa_1 \to \kappa_1/q, \qquad a_1 \to qa_1,
\end{gather*}
which when composed with actions that permutes $a_1,\ldots, a_{Mn}$ and $\kappa_1, \ldots, \kappa_n$ give us all transformations we require. A matrix that does this is found by using a constant gauge transformation to change the basis so that the vectors
\begin{gather*}
\ker(A(a_1)), \ \ker(A_m - \kappa_2), \ \ldots, \ \ker(A_m - \kappa_M)
\end{gather*}
are the new coordinate vectors. We then perform a gauge transformation of the form \eqref{comp} whose ef\/fect is dividing the f\/irst column by~$(1-x/a_1)$ and multiplying the f\/irst row by~$(1- x/a_1q)$. Reverting back to a basis in which~$A_0$ is the constant coef\/f\/icient matrix using another constant matrix gives the required matrix. It should be clear from the determinant that $a_1 \to q a_1$, while looking at $\tilde{A}(x)$ asymptotically around $x=\infty$ it is clear $\kappa_1 \to \kappa_1/q$. Since all these steps were invertible, the inverse atomic operation is also rational, hence, we obtain all possible transformations this way.
\end{proof}

Systems of linear $q$-dif\/ference equations can also be treated as discrete connections, where the matrix presentations of these systems of linear $q$-dif\/ference equations arise as trivializations of linear maps between the f\/ibres of a vector bundle. In this framework, the theorem above may also be deduced by purely geometric means, as was done in the $h$-dif\/ference case in~\cite{Arinkin2006}. The $q$-dif\/ference version of this framework was the subject of a recent paper by Kinzel~\cite{Knizel2015}.

\begin{rem}
The elementary translations are those that multiply any collection of up to $m$ of the $a_i$'s by $q$ and multiply the same number of $\kappa_j$'s by $q^{-1}$. For the applications that follow, this formulation will be suf\/f\/icient, however, this is a slightly less general result than possible. One may generally f\/ind a rational matrix in which the $\theta_i$ values are shifted by $q$-powers in a~way that preserves \eqref{qgenprodcon}.
\end{rem}

\subsection{Dif\/ference equations and vector bundles}

The aim of this section is to present the theorems required for the existence of meromorphic solutions to \eqref{lineartau}, which we write as two cases:
\begin{gather*}
Y(-x-h) = A(x)Y(x)\qquad \text{and} \qquad Y(x) = Y(-x),
\end{gather*}
or
\begin{gather*}
Y(1/qx) = A(x)Y(x)\qquad \text{and} \qquad Y(x) = Y(1/x).
\end{gather*}
To prove the general existence of solutions with these symmetry properties, we turn to some general results concerning sheaves on compact Riemann surfaces (see~\cite{Forster1980} for example). For a~connected Riemann surface,~$\Sigma$, we may denote the sheaves of holomorphic and meromorphic functions on~$\Sigma$ by $\mathcal{O}_{\Sigma}$ and $\mathcal{M}_{\Sigma}$ respectively. A holomorphic or meromorphic vector bundle of rank~$n$ is a~sheaf of $\mathcal{O}_{\Sigma}$-modules or $\mathcal{M}_{\Sigma}$-modules which is locally isomorphic to~$\mathcal{O}_{\Sigma}^n$ or $\mathcal{M}_{\Sigma}^n$ respectively.

\begin{thm}[{\cite[Theorem~3]{Praagman:Solutions}}]\label{Praagman}
Let $G$ be a group of automorphisms of $\mathbb{P}_1$, $L$ is the limit set of~$G$ and~$U$ a component of $\mathbb{P}_1\setminus L$ such that $G(U) = U$. If there is a map, $G \to \mathrm{GL}_M(\mathcal{M}_U)$, $g \to A_g(x)$ satisfying
\begin{gather*}
A_{gh}(x) = A_g(h(z)) A_h(z),
\end{gather*}
then the system of equations
\begin{gather*}
Y(\gamma(z)) = A_{\gamma}(z) Y(z) ,\qquad \gamma \in G,
\end{gather*}
possesses a meromorphic solution.
\end{thm}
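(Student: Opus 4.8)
The plan is to convert the functional system into a statement about global sections of a vector bundle on the quotient surface $\Sigma = U/G$, and then to invoke the cohomological triviality of meromorphic bundles on a Riemann surface. Since $U$ is a connected component of the domain of discontinuity $\mathbb{P}_1\setminus L$ and $G(U)=U$, the group $G$ acts properly discontinuously on $U$ by holomorphic automorphisms, so the quotient $\Sigma = U/G$ carries a natural structure of Riemann surface and the projection $\pi\colon U\to\Sigma$ is holomorphic, branched over the images of the points with nontrivial stabilizer. First I would observe that the cocycle identity $A_{gh}(z)=A_g(h(z))A_h(z)$ is exactly the $1$-cocycle condition needed to descend the trivial bundle $\mathcal{M}_U^M$ to $\Sigma$: for an open $W\subseteq\Sigma$ set
\[
\mathcal{V}(W)=\big\{Y\in \mathcal{M}_U\big(\pi^{-1}(W)\big)^M : Y(\gamma(z))=A_\gamma(z)Y(z)\ \text{for all}\ \gamma\in G\big\}.
\]
This is a sheaf of $\mathcal{M}_\Sigma$-modules, because a meromorphic function on $\Sigma$ pulls back to a $G$-invariant meromorphic function on $U$ and hence carries equivariant vectors to equivariant vectors. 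By construction $H^0(\Sigma,\mathcal{V})$ is precisely the space of meromorphic solutions of $Y(\gamma(z))=A_\gamma(z)Y(z)$, so the theorem reduces to exhibiting a nonzero global section of $\mathcal{V}$.

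Second, I would check that $\mathcal{V}$ is locally free of rank $M$, i.e.\ a genuine meromorphic vector bundle. Over a point of $\Sigma$ whose preimage consists of points with trivial stabilizer, $\pi$ is locally a disjoint union of biholomorphic copies of a disc $W$; on one sheet $Y$ may be prescribed arbitrarily and is then propagated to the others by equivariance, giving $\mathcal{V}(W)\cong \mathcal{M}_\Sigma(W)^M$. The delicate points are the ramification points, where some $p\in U$ has a nontrivial finite cyclic stabilizer $G_p=\langle g_0\rangle$ of order $m$. In a local coordinate $w$ at $p$ with $g_0\colon w\mapsto \zeta w$ and downstairs coordinate $u=w^m$, equivariance forces $Y(\zeta w)=A_{g_0}(w)Y(w)$, while the cocycle condition together with $A_e=I$ makes the product of the $m$ consecutive translates of $A_{g_0}$ equal to the identity; this exhibits $A_{g_0}$ as a finite-order automorphism, which over the field $\mathcal{M}_\Sigma$ may be placed in a normal form so that the module of equivariant germs is again free of rank $M$. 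Establishing this local freeness at the branch points is the step I expect to be the main obstacle, since it is exactly here that the naive descent could fail to be locally free and one must genuinely use that we work meromorphically rather than holomorphically.

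Finally, with $\mathcal{V}$ known to be a meromorphic vector bundle on the Riemann surface $\Sigma$, I would invoke the standard sheaf theory of Riemann surfaces to produce a section. Every meromorphic vector bundle on a Riemann surface is meromorphically trivial: on a compact $\Sigma$ one realizes $\mathcal{V}$ as the sheaf of meromorphic sections of a holomorphic bundle $\mathcal{E}$ (clearing the poles of the transition matrices), twists by $\mathcal{O}_\Sigma(kp)$ for $k\gg0$ so that Riemann--Roch forces $H^0(\Sigma,\mathcal{E}(kp))\neq0$, and reads these off as nonzero meromorphic sections of $\mathcal{E}$, hence of $\mathcal{V}$; on a non-compact $\Sigma$ one instead appeals to Grauert's theorem that holomorphic bundles are holomorphically trivial. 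In either case $H^0(\Sigma,\mathcal{V})\neq0$. A nonzero global section of $\mathcal{V}$ is by definition a nonzero meromorphic $Y$ on $U$ with $Y(\gamma(z))=A_\gamma(z)Y(z)$ for all $\gamma\in G$, which is the desired solution; indeed, meromorphic triviality of $\mathcal{V}$ yields a full meromorphic fundamental matrix.
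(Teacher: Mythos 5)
The paper offers no proof of this statement --- it is imported verbatim as Theorem~3 of Praagman's 1986 paper --- so the comparison is really with Praagman's own argument, and your outline reconstructs that argument faithfully: pass to the quotient Riemann surface $\Sigma=U/G$ (the action on a component of the domain of discontinuity is properly discontinuous), observe that the cocycle identity is exactly the descent datum making the trivial rank-$M$ meromorphic sheaf on $U$ into a locally free sheaf of $\mathcal{M}_{\Sigma}$-modules, and conclude from the meromorphic triviality of vector bundles on Riemann surfaces (Grauert for open $\Sigma$, Riemann--Roch/GAGA for compact $\Sigma$). You have also correctly located the genuinely delicate step, namely local freeness over the branch points of $U\to\Sigma$: what is needed there is a local Hilbert~90 for the cyclic stabilizer, i.e., given $A$ with $A\big(\zeta^{m-1}w\big)\cdots A(\zeta w)A(w)=I$ one must write $A(w)=B(\zeta w)B(w)^{-1}$ over the field of meromorphic germs in $w$. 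The paper itself isolates exactly the order-two instance of this as Lemma~\ref{lem:existencegenatu}, and the averaging construction $v=\bar{w}+A^{-1}w$ used there generalizes verbatim to a sum over the cyclic group, so your ``normal form'' step can be closed by the same device.

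Two cautions. First, the theorem is used downstream to produce an invertible fundamental matrix, and your Riemann--Roch step as written yields only a single nonzero section of $\mathcal{V}$; you must either twist far enough that the global sections generate the fibre at a generic point (so that $M$ of them form a meromorphic frame), or quote outright the standard fact that every meromorphic vector bundle on a connected Riemann surface is meromorphically trivial --- your final sentence asserts this, but the displayed argument does not yet deliver it. Second, for the groups $G_h$ and $G_q$ actually used in the paper the generators $\tau_1$ and $\tau_2$ have fixed points with nontrivial stabilizers, so the orbifold analysis at branch points is not a corner case but precisely the reason the paper must appeal to Praagman rather than to Birkhoff's classical existence theory.
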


The two important examples in the context pertain to the case in which $G$ is a group of automorphisms of $\mathbb{P}_1$ admitting the presentation
\begin{gather*}
G = \big\langle \tau_1,\tau_2 \,|\, \tau_1^2 = \tau_2^2 = 1 \big\rangle,
\end{gather*}
which is often called the inf\/inite dihedral group. In particular, we are interested in the case in which the groups of automorphisms are
\begin{gather*}
G_h = \langle \tau_1,\tau_2 \,|\, \tau_1(x) = -h-x, \, \tau_2(x) = -x \rangle,\\
G_q = \left\langle \tau_1, \tau_2 \,|\, \tau_1(x) = \frac{1}{qx}, \, \tau_2(x) = \frac{1}{x} \right\rangle.
\end{gather*}
If we let $A_{\tau_2} = I$ in each case and $A_{\tau_1}(x)$ be some rational matrix, $A(x)^{-1}$, the commutation relation on $\tau_1$ and $\tau_2$ requires
\begin{gather*}
A(x) = A(-h-x)^{-1} \qquad \text{or}\qquad A(x) = A\left(\frac{1}{qx}\right)^{-1},
\end{gather*}
respectively.

\begin{lem}\label{lem:existencegenatu}
Let $\mathbb{L}/\mathbb{K}$ be a quadratic field extension and $A \in \mathrm{GL}_n(\mathbb{L})$ be a matrix such that $\bar{A}=A^{-1}$, where $\bar{A}$ is the conjugation of $A$ in $\mathbb{L}$ over $\mathbb{K}$. Then there exists a matrix $B \in \mathrm{GL}_n(\mathbb{L})$ such that $A = \bar{B}B^{-1}$ and $B$ is unique up to right-multiplication by $\mathrm{GL}_n(\mathbb{K})$.
\end{lem}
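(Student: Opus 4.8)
The plan is to recognise the statement as the generalised Hilbert Theorem~90 (Speiser's theorem) for the quadratic extension $\mathbb{L}/\mathbb{K}$. Write $G=\mathrm{Gal}(\mathbb{L}/\mathbb{K})=\{1,\sigma\}$, where $\sigma$ is the conjugation $x\mapsto\bar x$. The hypothesis $\bar A=A^{-1}$, that is $A\bar A=I$, is precisely the $1$-cocycle identity $A\,\sigma(A)=I$ for the cochain $\sigma\mapsto A$ valued in $\mathrm{GL}_n(\mathbb{L})$, and producing $B$ with $A=\bar B B^{-1}$ is exactly the assertion that this cocycle is a coboundary, i.e.\ that $H^1(G,\mathrm{GL}_n(\mathbb{L}))$ is trivial. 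Rather than merely quote this, I would give the explicit averaging construction, since it also makes the uniqueness transparent.

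For existence I would set, for an arbitrary $X\in M_n(\mathbb{L})$,
\[
 B = X + \bar A\,\bar X .
\]
A one-line computation using $A\bar A=\bar A A=I$ gives $\bar B=\bar X+AX=AB$, hence $A=\bar B B^{-1}$ for \emph{every} $X$ for which $B$ happens to be invertible. Everything therefore reduces to choosing $X$ so that $B\in\mathrm{GL}_n(\mathbb{L})$, and this is the step I expect to be the main obstacle. The map $\iota\colon Z\mapsto\bar A\bar Z$ is a $\sigma$-semilinear involution of $M_n(\mathbb{L})$ (that $\iota^2=\mathrm{id}$ is again $\bar A A=I$), so its fixed space $W=\{Z:\bar Z=AZ\}$ is a $\mathbb{K}$-form of $M_n(\mathbb{L})$ with $W\otimes_{\mathbb{K}}\mathbb{L}=M_n(\mathbb{L})$ and $\dim_{\mathbb{K}}W=n^2$. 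Since $B=X+\iota(X)$ lies in $W$ and every $Z\in W$ is realised as $B$ for $X=Z/2$, the image of $X\mapsto B$ is all of $W$. Picking a $\mathbb{K}$-basis of $W$, which is simultaneously an $\mathbb{L}$-basis of $M_n(\mathbb{L})$, the determinant becomes a polynomial in the coordinates that is nonzero over $\mathbb{L}$ (it is the usual determinant on $M_n(\mathbb{L})$); as $\mathbb{K}$ is infinite in all cases of interest here, this polynomial cannot vanish on all of $\mathbb{K}^{\,n^2}$, so some $\mathbb{K}$-point, i.e.\ some choice of $X$, yields an invertible $B$. For an arbitrary base field one instead runs Galois descent, using $\mathbb{L}\otimes_{\mathbb{K}}\mathbb{L}\cong\mathbb{L}\times\mathbb{L}$ to trivialise the twisted form after base change and then descending; this recovers $H^1(G,\mathrm{GL}_n(\mathbb{L}))=1$ unconditionally.

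For uniqueness, suppose $A=\bar B B^{-1}=\bar C C^{-1}$. Then $\bar C^{-1}\bar B=C^{-1}B$, and since $\bar C^{-1}\bar B=\overline{C^{-1}B}$ this says $C^{-1}B$ is fixed by conjugation, whence $M:=C^{-1}B\in\mathrm{GL}_n(\mathbb{K})$ and $B=CM$. Conversely, if $B'=BM$ with $M\in\mathrm{GL}_n(\mathbb{K})$ then $\bar M=M$ gives $\bar B'(B')^{-1}=\bar B\,\bar M M^{-1}B^{-1}=\bar B B^{-1}=A$, so right multiplication by $\mathrm{GL}_n(\mathbb{K})$ is exactly the ambiguity in $B$. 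The only genuinely delicate point in the whole argument is the non-vanishing of the determinant on the $\mathbb{K}$-form $W$; the rest is formal.
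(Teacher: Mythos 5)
Your proof is correct, and its engine is the same as the paper's: the twisted averaging map that produces Galois-fixed data. The paper sets $v=\bar w+A^{-1}w$ for column vectors $w$ and builds $B$ from such vectors; you apply the identical map at the matrix level, $B=X+\bar A\bar X$. The genuine difference is how invertibility of $B$ is secured. The paper applies the averaging map to a $\mathbb{K}$-basis of $\mathbb{L}^n$ and keeps $n$ of the resulting vectors that are independent over $\mathbb{K}$, tacitly using that vectors in the fixed space which are $\mathbb{K}$-independent are automatically $\mathbb{L}$-independent; you instead take a generic $\mathbb{K}$-combination of a basis of the fixed space $W\subset M_n(\mathbb{L})$ and invoke non-vanishing of the determinant polynomial over the infinite field $\mathbb{K}$. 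Both routes lean on the same standard (and here unproved) descent fact that the fixed space is a $\mathbb{K}$-form of the ambient $\mathbb{L}$-space; your Zariski-density step additionally needs $\mathbb{K}$ infinite, which is harmless in this paper since $\mathbb{K}$ is a field of invariant rational functions over $\mathbb{C}$. Your uniqueness argument is the paper's, but with the correct (right) quotient: $M=C^{-1}B$ satisfies $\bar M=M$, hence $B=CM$ with $M\in\mathrm{GL}_n(\mathbb{K})$, giving uniqueness up to right multiplication as the lemma states; the paper's proof as written takes $C=B_1B_2^{-1}$, which satisfies $\bar C=ACA^{-1}$ rather than $\bar C=C$, so your version in fact repairs a small slip in the published argument.
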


\begin{proof}
Given a vector $w \in \mathbb{L}^n$, it is easy to see that if $v = \bar{w} + A^{-1} w$ then $\bar{v} = w + A\bar{w} = A v$. Applying to any basis for $\mathbb{L}^n$ over $\mathbb{K}$ gives at least $n$ vectors satisfying $\bar{v} = Av$ that are linearly independent over $\mathbb{K}$, whose columns give a matrix $B$ such that
\begin{gather}\label{hil90proof}
\bar{B} = AB.
\end{gather}
For uniqueness we suppose two such matrices, $B_1$ and $B_2$, satisfy~\eqref{hil90proof}, then $C = B_1 B_2^{-1}$ satisf\/ies $C = \bar{C}$, in which case $C \in \mathrm{GL}_n(\mathbb{K})$.
\end{proof}

\begin{rem}
This lemma is a special case of what is often called ``Hilbert's theorem 90", which states that any $1$-cocycle of a Galois group with values in
$\mathrm{GL}_n$ is trivial. Hilbert dealt with the case in which $\mathrm{Gal}(\mathbb{L}/\mathbb{K})$ is cyclic, and $n=1$.
\end{rem}

Specializing to the function f\/ields $\mathbb{L}= \mathbb{C}(x)$ and $\mathbb{K}$ is the subf\/ield of rational functions invariant under $x \to -x-h$ or $x \to 1/x$ allows us to write $A(x)$ as one of two cases;
\begin{subequations}
\begin{gather}
\label{symmdiff}A(x) = B(-h-x) B(x)^{-1},\\
\label{symmqdiff}A(x) = B\left( \frac{1}{qx} \right) B(x)^{-1},
\end{gather}
\end{subequations}
where $B(x)$ is rational. This reduces the problem of determining the algebraic variety of all $n$-tuples of matrices with a symmetry condition to determining $n$-tuples of matrices with prescribed properties. In particular, it makes sense to let
\begin{gather*}
B(x) = B_0 + B_1 x + \cdots + B_nx^n,
\end{gather*}
where either
\begin{gather*}
(B_0,\ldots, B_{n-1}) \in \mathcal{M}_h(a_1, \ldots, a_{Mn}; d_1, \ldots, d_M;\rho_1, \ldots, \rho_M),\\
(B_0,\ldots, B_{n-1}) \in \mathcal{M}_q(a_1, \ldots, a_{Mn}; \kappa_1, \ldots, \kappa_M;\theta_1,\ldots, \theta_M).
\end{gather*}
In discussing the isomonodromic deformations, we specify two dif\/ferent types of discrete isomonodromic deformations; those that act on the left and those that act on the right, which are given as follows
\begin{subequations}\label{symlax}
\begin{gather}
\label{leftRB}\tilde{B}(x) = \lambda(x) R_l(x) B(x),\\
\label{rightRB}\tilde{B}(x) = \lambda(x) B(x) R_r(x),
\end{gather}
\end{subequations}
where $\lambda(x)$ is some rational scalar factor. This scalar factor only swaps poles and roots of the determinant and should be considered trivial from the perspective of integrability. These two equations should be thought of as the symmetric equivalent of~\eqref{comp}. We may rigidify the def\/initions of $R_l(x)$ or $R_r(x)$ by insisting that these matrices are proportional to identity matrices around $x=\infty$.

If we insist that $R_l(x)$ is invariant under $\tau_2$, i.e., we have the symmetry $R_l(x) = \tau_2 R_l(x)$, then it is clear that a transformation of the form~\eqref{leftRB} coincides with a transformation of the form~\eqref{comp}, hence, will be considered a discrete isomonodromic deformation. Furthermore, if we may f\/ind such a matrix, Theorem~\ref{hisolattice} or Theorem~\ref{qlattice}, depending on the case, tells us that this matrix and resulting transformation are unique, hence, the discrete isomonodromic deformation does preserve the required symmetry.

\subsection{Preserving the Galois group}\label{sec:Galois}

The main reason for passing from connection preserving deformations to the Galois theory of dif\/ference equations is that we have not shown that systems of the form \eqref{lineartau} possess connection matrices. While mechanically, we still have Lax pairs using \eqref{comp} or \eqref{symlax}, the implications of possessing a discrete Lax pair of any form are not generally known. We wish to show that \eqref{comp} and \eqref{symlax} preserve the associated Galois group.

This is an issue that is not conf\/ined to symmetric Lax pairs. Various Painlev\'e equations are known to arise as relations of the form of \eqref{comp} where the series part of the formal solutions at $x=\pm \infty$ or $x= 0$ are not convergent \cite{OrmerodqPV, Ormerodlattice}. From an integrable systems perspective, it is useful to know precisely what is preserved, and it turns out the associated dif\/ference module is always preserved under transformations of the form \eqref{Atildeev}. We require some of the formalism described in~\cite{VanderPut2003} to demonstrate this.

\begin{defn}
A dif\/ference ring is a commutative ring/f\/ield, $R$, with $1$, together with an automorphism $\sigma \colon R \to R$. The constants, denoted $C_R$ are the elements satisfying $\sigma(f) = f$. An ideal of a dif\/ference ring is an ideal,~$I$, such that $\phi(I) \subset I$. If the only dif\/ference ideals are~$0$ and~$R$ then the dif\/ference ring is called simple.
\end{defn}

This is a natural discrete analogue of a dif\/ferential f\/ield. In Picard--Vessiot theory, a Picard--Vessiot extension is formed by extending the f\/ield of constants by the solutions of a homogenous linear ordinary dif\/ferential equation \cite{VanderPut2003}. The analogue of this for dif\/ference equations is the following construction.

\begin{defn}
Let $\mathbb{K}$ be a dif\/ference f\/ield and \eqref{linearsigma} be a f\/irst-order system with $A(x) \in \mathrm{GL}_n(\mathbb{K})$. We call a $\mathbb{K}$-algebra, $R$, a Picard--Vessiot ring for \eqref{linearsigma} if:
\begin{enumerate}\itemsep=0pt
\item[1)] an extension of $\sigma$ to $R$ is given,
\item[2)] $R$ is a simple dif\/ference ring,
\item[3)] there exists a solution of \eqref{linearsigma} with coef\/f\/icients in $R$,
\item[4)] $R$ is minimal in the sense that no proper subalgebra satisf\/ies $1$, $2$ and $3$.
\end{enumerate}
\end{defn}

We are treating $\mathbb{C}(x)$ as a dif\/ference f\/ield where $\sigma_h$ and $\sigma_q$ are the relevant automorphisms. The f\/ield of constants contain $\mathbb{C}$ extended by the $\sigma$-periodic functions (e.g., $e^{2 i \pi x/h}$ and $\phi_{c,d} = e_{q,c}e_{q,d}/e_{q,cd}$). We may formally construct a Picard--Vessiot ring for \eqref{linearsigma} by considering a~mat\-rix of inderminants, $Y(x) = (y_{i,j}(x))$. We extend $\sigma$ to $\mathbb{K}(Y)$ via the entries of \eqref{linearsigma}. If $I$ is a maximal dif\/ference ideal, then we obtain a Picard--Vessiot ring for \eqref{linearsigma} by considering the quotient $\mathbb{K}(Y)/ I$. This quotient by a maximal dif\/ference ideal ensures the resulting construction is a simple dif\/ference ring.

This formal construction may be replaced by a fundamental system of meromorphic solutions of either \eqref{linearsigma} or \eqref{lineartau} specif\/ied by Theorem \ref{lem:existencegenatu}. For $q$-dif\/ference equations, in general (see~\cite{vanderPut}) the entries of any solution are elements of the f\/ield $\mathcal{M}(\mathbb{C})(l_q, (e_{q,c})_{c \in \mathbb{C}^*})$.

\begin{defn}
If $R$ is a Picard--Vessiot ring for~\eqref{linearsigma}, the Galois group, $G = \mathrm{Gal}(R/C_R)$ is the group of automorphisms of~$R$ commuting with $\sigma$.
\end{defn}

Let us brief\/ly describe the role of the connection matrix in this context. We have given conditions for there to exist two fundamental solutions, which we will call $Y_1(x)$ and $Y_2(x)$, which are distinguished by the regions of the complex plane in which they def\/ine meromorphic functions. If we adjoin the entries of $Y_1(x)$ or $Y_2(x)$ we describe two Picard--Vessiot extensions, denoted $R_1$ and $R_2$. We expect $R_1$ and $R_2$ to be isomorphic to the formal construction above, in particular, there exists an isomorphism between $R_1$ and $R_2$. The connection matrix, $P(x)$, relates solutions via
\begin{gather*}
Y_1(x) = P(x)Y_2(x),
\end{gather*}
which def\/ines such an isomorphism between $R_1$ and $R_2$. For any generic value of $x$ for which~$P(x)$ is def\/ined, $P(x)$ describes a connection map, which is an isomorphism of Picard--Vessiot extensions, hence, for generic values of $u$ and $v$ for which the connection matrix is def\/ined, the matrix $P(u)P(v)^{-1}$ def\/ines an automorphism of $R_1$. In the case of regular systems of $q$-dif\/ference equations, it is a result of Etingof that the Galois group is a linear algebraic group over $\mathbb{C}$ that is generated by matrices of the form $P(u)P(v)^{-1}$ for $u,v \in C$ where def\/ined~\cite{Etingof1995}. This mirrors dif\/ferential Galois theory where it is generally known that the dif\/ferential Galois group is generated by the monodromy matrices, the Stokes matrices and the exponential torus~\cite{Martinet1989}. More generally, this relation between values of the connection matrix and the Galois group has been the subject of works of a number of authors \cite{Sauloy, vanderPut}.

We may generalize the def\/inition of the Galois group from a category theoretic perspective. Given a dif\/ference f\/ield, $\mathbb{K}$ (e.g., $\mathbb{C}(x)$), with a dif\/ference operator $\sigma$, we can consider the ring of f\/inite sums of dif\/ference operators in a new operator,~$\phi$,
\begin{gather*}
k[\phi, \phi^{-1}] = \left\{ \sum_{n \in \mathbb{Z}} a_n \phi^n \right\},
\end{gather*}
where $\phi$ is def\/ined by the relation $\phi(\lambda) = \sigma(\lambda) \phi$ for $\lambda \in \mathbb{K}$. We can consider the category of left modules, $M$, over $\mathbb{K}$. Under a suitable basis, we may identify $M$ with $\mathbb{K}^m$. In this basis, the action of $\phi$ is identif\/ied with a matrix by
\begin{gather}\label{moduledef}
\phi Y = A \sigma Y.
\end{gather}
Conversely, given a dif\/ference equation of the form $\sigma Y = AY$, we may endow $\mathbb{K}^m$ with the structure of a dif\/ference module via~\eqref{moduledef}.

\begin{thm}\label{diffmods}
Two systems, $\sigma Y(x) = A(x)Y(x)$ and $\sigma \tilde{Y}(x) = \tilde{A}(x)\tilde{Y}(x)$ define isomorphic difference modules if and only if the matrices $A(x)$ and $\tilde{A}(x)$ are related by~\eqref{comp}.
\end{thm}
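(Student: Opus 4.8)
The plan is to prove the two statements equivalent by unwinding, in a fixed basis, what it means for the difference modules attached to the two systems via \eqref{moduledef} to be isomorphic. Both modules have underlying space $\mathbb{K}^m$ (with $\mathbb{K}=\mathbb{C}(x)$), and on each the operator $\phi$ acts $\sigma$-semilinearly, $\phi(\lambda v)=\sigma(\lambda)\phi(v)$, with $\phi Y = A\sigma Y$ on the first and $\phi Y = \tilde A\sigma Y$ on the second. A morphism of left $\mathbb{K}[\phi,\phi^{-1}]$-modules is a $\mathbb{K}$-linear map commuting with $\phi$; being $\mathbb{K}$-linear it is multiplication by a matrix $R\in M_m(\mathbb{K})$, and it is an isomorphism exactly when $R\in\mathrm{GL}_m(\mathbb{K})$. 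The whole content is thus to translate ``$R$ intertwines the two $\phi$-actions'' into a matrix identity and identify it with \eqref{comp}.

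First I would write out the intertwining relation and expand both sides using semilinearity together with \eqref{moduledef}: applying $\phi$ after $R$ produces a factor $\sigma(R)$ (this is the one place the semilinear twist must be tracked), while applying $R$ after $\phi$ does not. Cancelling the common $\sigma$-image of an arbitrary vector---legitimate since $\sigma$ is an automorphism of $\mathbb{K}$, so $\sigma Y$ ranges over all of $\mathbb{K}^m$---leaves precisely the relation \eqref{comp} between $A$, $\tilde A$ and $R$. Because this is a chain of equivalences rather than a one-way implication, the same computation delivers both directions at once: $R$ defines a module homomorphism if and only if $(A,\tilde A,R)$ satisfy \eqref{comp}.

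It remains to pass from homomorphism to isomorphism, i.e.\ to verify invertibility of $R$ over $\mathbb{K}$. Here the field structure does all the work: $R\in\mathrm{GL}_m(\mathbb{K})$ means only that $\det R$ is a nonzero rational function. Taking determinants in \eqref{comp} gives $\det\tilde A\cdot\det R=\sigma(\det R)\cdot\det A$, so for $A,\tilde A\in\mathrm{GL}_m(\mathbb{K})$ any nonzero $R$ solving \eqref{comp} automatically has $\det R\neq 0$ and hence lies in $\mathrm{GL}_m(\mathbb{K})$. The inverse map is then $R^{-1}$, which one checks satisfies \eqref{comp} with $A$ and $\tilde A$ interchanged, reflecting that isomorphism of difference modules is an equivalence relation.

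I expect the only real point requiring care to be convention bookkeeping: fixing the module structure so that its horizontal vectors are exactly the solutions of \eqref{linearsigma}, and then keeping the semilinear twist $\sigma$ on the correct factor when expanding the intertwining relation, so that the identity emerges as \eqref{comp} itself rather than a $\sigma$-shifted, inverted, or transposed variant. Once the bookkeeping is pinned down, the statement is a direct dictionary between the categorical language of \cite{VanderPut2003} and the compatibility relation \eqref{comp}, with existence of the inverse guaranteed for free by working over a field.
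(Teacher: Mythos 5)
Your overall route --- identifying morphisms between the two difference modules with matrices $R \in M_m(\mathbb{K})$ that intertwine the semilinear operators, expanding the intertwining relation, and reading off \eqref{comp} --- is the standard one (the paper states Theorem~\ref{diffmods} without proof, in effect deferring to \cite{VanderPut2003}), and your warning about conventions is well placed: with the van der Put--Singer normalization $\phi Y = A^{-1}\sigma Y$, under which the horizontal (fixed) vectors are exactly the solutions of \eqref{linearsigma}, the intertwining relation $\tilde A^{-1}\sigma(R) = R A^{-1}$ rearranges to precisely $\tilde A(x)R(x) = \sigma R(x)\, A(x)$, whereas the paper's literal formula \eqref{moduledef} would instead produce $\tilde A\,\sigma(R) = RA$, i.e., \eqref{comp} for the pair $(A^{-1},\tilde A^{-1})$ rather than for $(A,\tilde A)$.

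The genuine gap is your invertibility step. The claim that any nonzero $R$ solving \eqref{comp} automatically has $\det R \neq 0$ is false, and the determinant identity cannot prove it: if $\det R = 0$, both sides of $\det\tilde A\cdot\det R = \sigma(\det R)\cdot\det A$ vanish identically and the identity says nothing. Concretely, take $\sigma = \sigma_q$, $A = \operatorname{diag}(1,1)$, $\tilde A = \operatorname{diag}(1,x)$, $R = \operatorname{diag}(1,0)$: then $\tilde A R = \operatorname{diag}(1,0) = \sigma(R)A$, so \eqref{comp} holds with $R \neq 0$ singular, yet the two modules are \emph{not} isomorphic --- an isomorphism would be an invertible $R'$ satisfying \eqref{comp}, and taking determinants would force $x = \sigma_q(\det R')/\det R'$, which is impossible because the right-hand side tends to a finite nonzero constant as $x \to \infty$. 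In categorical terms, nonzero non-invertible morphisms (for instance projections onto a common direct summand) always satisfy the intertwining relation, so invertibility is an essential hypothesis, not a free consequence. The repair is to read ``related by \eqref{comp}'' as the paper does throughout: $R(x)$ is an invertible rational matrix, being the gauge matrix of the transformation \eqref{Atildeev} between fundamental solutions. With that reading, the forward direction hands you $R \in \mathrm{GL}_m(\mathbb{K})$ by the definition of isomorphism, the backward direction needs no determinant argument at all, and your observation that $R^{-1}$ then satisfies \eqref{comp} with $A$ and $\tilde A$ interchanged closes the equivalence.
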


The object that is being preserved under these deformations is the local system/sheaf of solutions. We could also call these transformations isomodular since the dif\/ference module is preserved.

The advantage of this def\/inition is that the category of dif\/ference modules over a dif\/ference f\/ield is a rigid abelian tensor category. We may use the def\/initions of \cite{Deligne1981} to def\/ine the Galois group from a category theoretic perspective. While it is dif\/f\/icult to see a priori that a transformation of the form \eqref{Atildeev} necessarily preserves the Galois group, from the perspective of the category theory, isomorphic dif\/ference modules resulting from Theorem \ref{diffmods} yield isomorphic Galois groups.

\begin{cor}\label{corintegrability}
Two systems, $\sigma Y(x) = A(x)Y(x)$ and $\sigma \tilde{Y}(x) = \tilde{A}(x)\tilde{Y}(x)$, related by \eqref{Atildeev} defines a transformation that preserves the Galois group.
\end{cor}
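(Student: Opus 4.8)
The plan is to reduce Corollary~\ref{corintegrability} to Theorem~\ref{diffmods}, which does all the heavy lifting. Theorem~\ref{diffmods} asserts that two systems $\sigma Y = A(x) Y$ and $\sigma \tilde{Y} = \tilde{A}(x)\tilde{Y}$ define isomorphic difference modules precisely when $A(x)$ and $\tilde{A}(x)$ satisfy the compatibility relation~\eqref{comp}, namely $\tilde{A}(x) R(x) = \sigma R(x) A(x)$. Since a relation of the form~\eqref{Atildeev}, $\tilde{Y}(x) = R(x) Y(x)$, is exactly what produces~\eqref{comp} (as derived in the discussion following~\eqref{comp} in the introduction), the hypothesis of the corollary hands us an $R(x)$ witnessing the isomorphism of difference modules. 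So the first step is simply to observe that the pair of systems related by~\eqref{Atildeev} satisfies the hypotheses of Theorem~\ref{diffmods}, and therefore the two difference modules $M$ and $\tilde{M}$ associated to $A(x)$ and $\tilde{A}(x)$ via~\eqref{moduledef} are isomorphic as objects of the category of difference modules over $\mathbb{K}$.

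The second step is to invoke the functoriality of the Galois-group construction. Here I would lean on the fact, emphasized in the preceding paragraph, that the category of difference modules over a difference field is a rigid abelian tensor category, so that by the Tannakian formalism of~\cite{Deligne1981} the Galois group is recovered as the group of tensor automorphisms of the fibre functor restricted to the Tannakian subcategory generated by the module. The essential point is that an isomorphism $M \cong \tilde{M}$ of difference modules induces an equivalence between the Tannakian subcategories $\langle M \rangle^{\otimes}$ and $\langle \tilde{M}\rangle^{\otimes}$ they generate (these subcategories coincide, since isomorphic objects generate the same tensor subcategory), and an equivalence of Tannakian categories compatible with the fibre functors yields an isomorphism of the corresponding affine group schemes. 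Concretely, the isomorphism $R(x)$ intertwines the two difference-module structures, hence conjugates the action that defines one Picard--Vessiot ring into the other, inducing an isomorphism $R_1 \cong R_2$ of Picard--Vessiot rings commuting with $\sigma$, and therefore an isomorphism $\mathrm{Gal}(R_1/C_{R_1}) \cong \mathrm{Gal}(R_2/C_{R_2})$ of the Galois groups.

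I expect the main obstacle to be purely expository rather than mathematical: making precise the sense in which the Galois group is ``preserved.'' Strictly speaking the two systems have a priori distinct Picard--Vessiot rings and hence distinct Galois groups, so the content of the statement is that these are canonically isomorphic, and one must be slightly careful that the isomorphism supplied by $R(x)$ respects the constant field $C_R$ and commutes with $\sigma$ so that it descends to an isomorphism of Galois groups rather than merely an abstract group isomorphism. This is exactly the compatibility guaranteed by the Tannakian machinery once the difference-module isomorphism is in hand, so no genuinely new argument is needed; the corollary is essentially a restatement of Theorem~\ref{diffmods} together with the observation that isomorphic objects in a Tannakian category have isomorphic Tannakian automorphism groups. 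The proof therefore reduces to citing Theorem~\ref{diffmods} to get the module isomorphism and then appealing to the category-theoretic formalism of~\cite{Deligne1981} to transport it to the level of Galois groups.
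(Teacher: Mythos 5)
Your proposal is correct and follows essentially the same route as the paper: the authors justify the corollary precisely by noting that \eqref{Atildeev} yields \eqref{comp}, hence isomorphic difference modules by Theorem~\ref{diffmods}, and that isomorphic difference modules yield isomorphic Galois groups via the category-theoretic (Tannakian) definition of~\cite{Deligne1981}. Your additional remarks on fibre functors and the induced isomorphism of Picard--Vessiot rings merely flesh out the same argument.
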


These structures can be def\/ined without reference to a connection matrix, it only requires the existence of a linearly independent set of solutions specif\/ied by Theorem \ref{Praagman}. In particular, it specif\/ies that the birational maps of Theorems \ref{hisolattice} and \ref{qlattice} are integrable with respect to the preservation of a Galois group. What may be interesting from an integrable systems perspective is to consider the combinatorial data that specif\/ies the dif\/ference module. Such data would be the analogue of the characteristic constants involved in isomonodromic deformations, and the map from the given dif\/ference module to this data would constitute a discrete analogue of the Riemann--Hilbert map \cite{Boalch2005}.

\section{Discrete Garnier systems}\label{sec:Garnier}

We now turn to the parameterization of our discrete Garnier systems, which has drawn inspiration from a series of results concerning the description of various integrable autonomous mappings and discrete Painlev\'e equations in terms of reductions of partial dif\/ference equations~\cite{Ormerod2014a}. We have denoted the various cases of discrete Garnier systems by a~value~$m$ in a way that the case $m=1$ coincides with a discrete Garnier system that possesses the sixth Painlev\'e equation as a limit. With respect to the Garnier systems increasing~$m$ increases the number of poles of the matrix of the associated linear problem whereas increasing~$m$ by one in what we are calling the discrete Garnier systems increases the number of roots of the determinant of the matrix for the associated linear problem by two.

\subsection[The asymmetric $h$-dif\/ference Garnier system]{The asymmetric $\boldsymbol{h}$-dif\/ference Garnier system}

We start with \eqref{lineardiff} where $A(x)$ is specif\/ied by \eqref{prodform} for $N = 2m+4$ with each factor of \eqref{prodform} is taken to be of the form $L_i(x) = L(x,u_i,a_i)$ where
\begin{gather}\label{difffactor}
L(x,u,a) = \begin{pmatrix} u & 1 \\ x-a+u^2 & u \end{pmatrix}.
\end{gather}
The variable $a$ parameterizes the value of the spectral parameter, $x$, in which $L$ is singular. Some of the useful properties of these matrices are
\begin{subequations}\label{detL}
\begin{gather}
\det L(x,u,a) = a-x, \label{detrel}\\
L(x+\delta,u,a+\delta) = L(x,u,a),\label{hinvar} \\
L(x,u,a)^{-1} = \frac{1}{x-a} L(x,-u,a),\label{hinv}\\
\operatorname{Ker} L(a,u,a) = \left\langle \begin{pmatrix} 1 \\ - u \end{pmatrix} \right\rangle, \label{hker}\\
\operatorname{Im} L(a,u,a) = \left\langle \begin{pmatrix} 1 \\ u \end{pmatrix} \right\rangle, \label{hIm}
\end{gather}
\end{subequations}
hence we think of $u$ as the variable parameterizing the image and kernel vectors. The resulting matrix, $A(x)$, takes the general form
\begin{gather}\label{mtupleh}
A(x) = A_0 + A_1 x + \dots + A_{m+1} x^{m+1} + A_{m+2}x^{m+2}.
\end{gather}

\begin{prop}
Let $A(x)$ be the matrix specified by \eqref{prodform} where each factor is given by \eqref{difffactor} subject to the constraints
\begin{gather}
\label{constrainth}\sum_{i = 1}^{2m+4} u_i = 0,\\
\label{constrainth2} \sum_{k \textrm{ even}} \big(u_{k}^2-a_k\big) \neq \sum_{k \textrm{ odd}} (u_{k}^2-a_k),
\end{gather}
then $A(x)$ defines an $(m+2)$-tuple $(A_0,\ldots, A_{m+1})$ via \eqref{mtupleh} where $A_{m+2} = I$ with
\begin{gather*}
(A_0, \ldots, A_{m+1}) \in \mathcal{M}_h(a_1,\ldots, a_{2m+4};d_1, d_2;1,1),
\end{gather*}
where the values of $d_1$ and $d_2$ are
\begin{subequations}\label{dvals}
\begin{gather}
\label{d1}d_1 = \sum_{i=1}^{N}\sum_{j=1}^{i-1} u_i u_j + \sum_{k \textrm{ even}} \big(u_{k}^2-a_k\big), \\
\label{d2}d_2 = \sum_{i=1}^{N}\sum_{j=1}^{i-1} u_i u_j + \sum_{k \textrm{ odd}} \big(u_{k}^2-a_k\big).
\end{gather}
\end{subequations}
\end{prop}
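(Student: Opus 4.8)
My plan is to exploit the fact that each factor splits into a scalar part and an anti-diagonal part. Writing $L_i = u_i I + F_i$ with $F_i = \begin{pmatrix} 0 & 1 \\ \ell_i & 0 \end{pmatrix}$ and $\ell_i := x - a_i + u_i^2$, I expand the product as
\begin{gather*}
A(x) = \prod_{i=1}^{N}(u_i I + F_i) = \sum_{S \subseteq \{1,\ldots,N\}} \Big(\prod_{i \notin S} u_i\Big)\prod_{i \in S} F_i,
\end{gather*}
where each inner product is taken in increasing index order. The point of this rewriting is that an ordered product of anti-diagonal matrices is diagonal when its length is even and anti-diagonal when it is odd, since $F_iF_j = \operatorname{diag}(\ell_j,\ell_i)$; moreover its degree in $x$ is exactly $\lceil |S|/2 \rceil$ (the lower-left entry of an odd product carries one extra factor $\ell$). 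Reading off the maximal degree from $|S| = N = 2m+4$ gives $\deg A = m+2$ and establishes the form \eqref{mtupleh}.

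I then compute the top two coefficients term by term. Only $|S|=N$ (a diagonal term whose leading part is $x^{m+2}I$) and the $|S|=N-1$ sets (each omitting one index $p$, contributing $u_p$ in the $(2,1)$ slot at leading order) reach degree $m+2$, so summing gives $A_{m+2} = \begin{pmatrix} 1 & 0 \\ \sum_i u_i & 1 \end{pmatrix}$, which equals $I$ precisely under the constraint \eqref{constrainth}. The determinant is immediate from \eqref{detrel}: $\det A(x) = \prod_i(a_i - x) = \prod_i(x-a_i)$ since $N$ is even, so the roots are $a_1,\ldots,a_{2m+4}$ and the leading coefficient is $1=\rho_1\rho_2$, consistent with $A_{m+2}=I$.

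For the subleading coefficient $A_{m+1}$ I collect the degree-$(m+1)$ contributions, which come only from $|S|\in\{N,N-1,N-2,N-3\}$. The diagonal part arises from $|S|=N$, whose $x^{m+1}$ coefficient is $\operatorname{diag}\big(\sum_{k\text{ even}}(u_k^2-a_k),\,\sum_{k\text{ odd}}(u_k^2-a_k)\big)$, and from $|S|=N-2$, where each omitted pair $\{p,q\}$ contributes $u_pu_q$ to both diagonal entries, summing to $\sum_{i<j}u_iu_j$; together these reproduce exactly the expressions $d_1,d_2$ of \eqref{dvals}, which are the constants recorded by the diagonal of $A_{m+1}$ in the parameterization of $\mathcal{M}_h$ (equivalently, the exponents in \eqref{seriessolhdiff}). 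The off-diagonal part comes from the odd sets $|S|=N-1$ and $|S|=N-3$; the $(1,2)$ entry collapses to $\sum_i u_i$ alone, while the $(2,1)$ entry is a combination of the same linear sum with cubic corrections. Invoking \eqref{constrainth} forces both off-diagonal entries to vanish, so $A_{m+1}$ is diagonal, which is exactly the normalization demanded for membership in $\mathcal{M}_h(a_1,\ldots,a_{2m+4};d_1,d_2;1,1)$; finally \eqref{constrainth2} is precisely the statement $d_1\neq d_2$, guaranteeing that the diagonalization is genuine.

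The step I expect to be the main obstacle is the vanishing of the $(2,1)$ entry of $A_{m+1}$: unlike the $(1,2)$ entry, it mixes the linear contribution of the $|S|=N-1$ sets with the cubic contribution of the $|S|=N-3$ sets, and checking that these cancel under $\sum_i u_i=0$ requires careful bookkeeping of the even/odd position parities inside the ordered products $\prod_{i\in S}F_i$. To streamline this, I would use the adjugate identity $\prod_i(x-a_i)\,A(x)^{-1} = L(x,-u_N,a_N)\cdots L(x,-u_1,a_1)$, which follows from \eqref{detrel} and \eqref{hinv} and simultaneously exchanges the diagonal entries while negating the off-diagonal ones; this both confirms the symmetry between $d_1$ and $d_2$ under the even/odd swap and constrains the $(2,1)$ entry enough to pin down its cancellation.
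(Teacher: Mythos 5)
Your combinatorial expansion via $L_i = u_iI + F_i$ is correct and cleanly reproduces most of what the paper computes: the determinant from \eqref{detrel}, the leading coefficient $A_{m+2}=I$ under \eqref{constrainth}, the diagonal entries $d_1$, $d_2$ of $A_{m+1}$, and the vanishing of its $(1,2)$ entry. But your final step is false: the $(2,1)$ entry of $A_{m+1}$ does \emph{not} vanish under \eqref{constrainth}. A structural test with two factors already shows this: if $u_2=-u_1$, the subleading $(2,1)$ entry of $L_1L_2$ is $u_1(\ell_2-\ell_1)=u_1(a_1-a_2)$, generically nonzero. The same persists for every $N$: writing the $(2,1)$ entry of $A_{m+1}$ as $\sum_p u_p S_p + \sum_{p<q<r}u_pu_qu_r$, where $S_p$ is the sum of the constant terms of the $\ell_k$ occupying odd positions in the ordered product with $p$ omitted, the coefficient of $a_1$ in this expression is $-\sum_{p\ge 2}u_p$, which equals $u_1$ under \eqref{constrainth}; hence the expression cannot vanish identically. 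This nonzero quantity is exactly what the paper records as $r_{2,1}$ in \eqref{r21h}, and the paper never claims it is zero. Your proposed rescue via the adjugate identity cannot help: that identity only encodes the symmetry that swaps the diagonal entries and negates the off-diagonal ones, which is perfectly consistent with a nonzero $(2,1)$ entry, so there is no cancellation to ``pin down.''

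The correct completion, and the point where the paper's proof differs from yours, is that under \eqref{constrainth} the coefficient $A_{m+1}$ is merely \emph{lower triangular}, with diagonal $(d_1,d_2)$; one then conjugates $A(x)$ by a constant unipotent lower-triangular matrix $P = \begin{pmatrix} 1 & 0 \\ c & 1 \end{pmatrix}$ with $c = r_{2,1}/(d_2-d_1)$, which kills the $(2,1)$ entry while preserving $A_{m+2}=I$, the determinant, and the values $d_1$, $d_2$, thereby producing the required element of $\mathcal{M}_h(a_1,\ldots,a_{2m+4};d_1,d_2;1,1)$. This is where \eqref{constrainth2} genuinely enters: it says $d_1\neq d_2$, without which no constant triangular gauge can diagonalize $A_{m+1}$, and membership in the moduli space (which the paper defines to require a diagonal subleading coefficient when the leading coefficient is $I$) would fail. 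In your write-up \eqref{constrainth2} is left with only the vague role of ``guaranteeing that the diagonalization is genuine,'' which signals that the actual mechanism --- diagonalization by gauge, not vanishing by the linear constraint --- is missing.
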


\begin{proof}
The determinant of $A(x)$ is given by
\begin{gather}\label{deth}
\det A(x) = (x-a_1) \cdots (x-a_{N}),
\end{gather}
which follows from \eqref{detrel}. The f\/irst two terms in the asymptotic expansion around $x= \infty$ are
\begin{gather*}
A(x) = x^{m+2} \begin{pmatrix}
1 & 0 \\
r_{1,2} & 1
\end{pmatrix}
+ x^{m+1}
\begin{pmatrix} d_1 & r_{1,2} \\
r_{2,1} & d_2
\end{pmatrix}
+ O\big(x^{m-1}\big),
\end{gather*}
where $d_1$ and $d_2$ are given by (\ref{dvals}) and $r_{1,2}$ is given by the left-hand side of~\eqref{constrainth}, hence, $A_{m+2} = I$ when assuming the constraints. The value of $r_{2,1}$ is
\begin{gather}\label{r21h}
r_{2,1} = \sum _{i=1}^{m+2} \left(\big(a_{2 i-1}+u_{2 i-1}^2\big) \sum _{k=2 i}^{N} u_k+\big(a_{2 i}+u_{2 i}^2\big) \sum _{k=1}^{2 i-1} u_k\right) + \sum_{1\leq k < j < i \leq N} u_i u_j u_k,
\end{gather}
which \looseness=-1 may be used in a constant lower triangular gauge transformation that diagona\-lizes~$A_n\!$.~This naturally preserves $d_1$ and $d_2$, hence, def\/ines an element of $\mathcal{M}_h(a_1,\ldots, a_{2m+4};d_1, d_2;1,1)$.
\end{proof}

While it is a consequence of \eqref{deth}, \eqref{constrainth}, \eqref{d1} and \eqref{d2}, it should be noted that $d_1$ and $d_2$ satisfy
\begin{gather}\label{drel}
d_1 + d_2 + \sum_{i=1}^{N} a_i = 0,
\end{gather}
which is a constraint that is necessarily satisf\/ied by any element of $\mathcal{M}_h(a_1,\ldots, a_{2m+4};d_1, d_2;1,1)$.

Suppose we are given $A_{m+2} = I$, and an $(m+2)$-tuple
\begin{gather*}
(A_0,\ldots, A_{m+1}) \in \mathcal{M}_h(a_1,\ldots, a_N;d_1,d_2;1,1),
\end{gather*}
where $A_{m+1}$ has been diagonalized, we wish to know whether there is a corresponding matrix of the form~\eqref{prodform}. We claim that the subvariety of $(m+2)$-tuples arising from \eqref{prodform} is of the same dimension. If we f\/ix $A_{m+2} = I$ and $A_{m+1} = \operatorname{diag}(d_1,d_2)$ then each of the $4(m+1)$ entries of the $A_i$'s, for $i = 0, \ldots, m$, are considered free. We have $2m+3$ coef\/f\/icients of the determinant not automatically satisf\/ied. Conjugating by diagonal matrices may also be used to f\/ix one additional of\/f-diagonal entry, which also removes any gauge freedom, making a algebraic variety of dimension $2m$ (or $2m+1$ with a gauge freedom).

Similarly, a product of the form \eqref{prodform} is specif\/ied by $2m+4$ values, $u_i$ for $i = 1,\ldots, 2m+4$ subject to two constraints, namely \eqref{constrainth} and and \eqref{constrainth}, one gauge freedom and two constants related by~\eqref{drel}, giving a total of $2m+2$ free variables. Fixing $r_{2,1}$ removes another variable, as does conjugating by diagonal matrices, which gives an algebraic variety dimension $2m$ (or $2m+1$ with a gauge freedom), as above.

 We may also describe maps between $\mathcal{M}_h(a_1,\ldots, a_N;d_1,d_2;1,1)$ and matrices given by \eqref{prodform}. To obtain an element of $\mathcal{M}_h(a_1,\ldots, a_N;d_1,d_2;1,1)$, we expand the product and diagonalize. To obtain \eqref{prodform} we obtain left (or right) factors of $A(x)$ by observing the corresponding image (kernel) vectors at the points $x = a_1$ (or $x=a_n$).

The property we will use to parameterize the system of discrete isomonodromic deformations is given by the following observation.

\begin{lem}\label{com:diff}
The matrices of the form of \eqref{difffactor} satisfy the commutation relation
\begin{gather*}
L(x,u_i,a_i)L(x,u_j,a_j) = L(x,\tilde{u}_j,a_j)L(x,\tilde{u}_i,a_i),
\end{gather*}
where the map $(u_i,u_j ) \to \left(\tilde{u}_i,\tilde{u}_j\right)$ is given by
\begin{gather}\label{FV}
\tilde{u}_i = u_j + \frac{a_i - a_j}{u_i+u_j}, \qquad \tilde{u}_j = u_i - \frac{a_i - a_j}{u_i+u_j}.
\end{gather}
\end{lem}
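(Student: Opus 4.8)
The plan is to verify the identity by direct comparison of the two $2\times 2$ products, organised so that the algebra is controlled by the singular structure of the factors. I would first expand both sides using \eqref{difffactor}: every entry of $L(x,u_i,a_i)L(x,u_j,a_j)$ is a polynomial of degree at most one in $x$, and the same holds on the right. The first observation is that the $(1,2)$ entry of either product is simply the sum of the two $u$-parameters, so matching there forces $\tilde u_i+\tilde u_j=u_i+u_j$, which is manifestly satisfied by \eqref{FV}. Since the coefficient of $x$ in the $(2,1)$ entry is also this same sum, while the coefficients of $x$ in the $(1,1)$ and $(2,2)$ entries are both $1$, the $x$-linear parts of all four entries agree automatically once the sum is preserved. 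Hence it remains only to match the constant-in-$x$ parts, reducing the lemma to three scalar identities.

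The conceptual way to see where \eqref{FV} comes from, and the route I would actually present, uses the image and kernel data at the singular points. By \eqref{detrel} the product has determinant $(a_i-x)(a_j-x)$, so it is singular exactly at $x=a_i$ and $x=a_j$. Evaluating at $x=a_j$, the right factor $L(a_j,u_j,a_j)$ is singular with image $\binom{1}{u_j}$ by \eqref{hIm} while the left factor is invertible, so the image of the product is $L(a_j,u_i,a_i)\binom{1}{u_j}$; on the other side the leading factor $L(a_j,\tilde u_j,a_j)$ is singular with image $\binom{1}{\tilde u_j}$, again by \eqref{hIm}. Requiring these lines to coincide forces $\tilde u_j=u_i-(a_i-a_j)/(u_i+u_j)$. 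Symmetrically, evaluating at $x=a_i$ and matching the kernels of \eqref{hker} (inverting the remaining factor via \eqref{hinv}) yields $\tilde u_i=u_j+(a_i-a_j)/(u_i+u_j)$. These are exactly \eqref{FV}.

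The step needing the most care is the passage from ``same determinant, same $x$-linear coefficient, same image at $a_j$, same kernel at $a_i$'' to ``same matrix'': one must argue that these data pin down an affine-linear $2\times 2$ pencil uniquely, and I expect this uniqueness to be the main obstacle to a clean write-up, since a naive count of conditions against the free entries of the constant term only gives equality generically. The safe, self-contained way to close the proof is therefore to substitute \eqref{FV} into the three remaining scalar identities and verify them outright. In each the apparent denominator $u_i+u_j$ cancels precisely because $u_i+u_j$ is the preserved sum, so that combinations such as $(a_i-a_j)/(u_i+u_j)\cdot(u_i+u_j)$ collapse to $a_i-a_j$. The lengthiest of these is the $(2,1)$ entry, where after clearing the common factor $u_i+u_j$ the identity reduces to the expansion of $(a_i-a_j)(u_i-u_j)$ and matches term by term.
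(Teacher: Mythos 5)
Your proposal is correct. The paper does not actually prove Lemma~\ref{com:diff}; it states the relation as well known and cites the literature, so a direct verification is exactly what is needed. Your computation is sound: the $(1,2)$ entry of either product is $u_i+u_j$, the $x$-coefficients of all four entries depend only on this sum, and substituting \eqref{FV} into the three remaining constant parts closes the argument (writing $s=u_i+u_j$ and $\delta=(a_i-a_j)/s$, the diagonal entries reduce to $\tilde u_i s - a_i = u_j s - a_j$ and its mirror, and the $(2,1)$ entry collapses after the cancellations $\delta s = a_i-a_j$ and $\delta^2 s=(a_i-a_j)\delta$, exactly as you describe). Your caution about the image/kernel route is also well placed: matching determinant, leading term, image at $a_j$ and kernel at $a_i$ only pins down the product generically, so it serves better as motivation for \eqref{FV} than as a complete proof, and your fallback to explicit substitution is the right call.
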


This is a well known relation for these matrices \cite{Adler1993, KajiwaraSym, Suris2003}. This map is related to the discrete potential Korteweg--de Vries equation \cite{Papageorgiou2006}. If we let $R_{i,j}$ be the map
\begin{gather}\label{RijYB}
R_{i,j} \colon \ ( u_1, \ldots, u_i, \ldots, u_j, \ldots, u_n ) \to (u_1, \ldots, \tilde{u}_i, \ldots, \tilde{u}_j, \ldots, u_n),
\end{gather}
then this map satisf\/ies the relation
\begin{gather}\label{YangBaxter}
R_{23}R_{13}R_{12}(u,v,w) = R_{12}R_{13}R_{23}(u,v,w),
\end{gather}
which is known as the Yang--Baxter property for maps. This map appears as $\mathrm{F}_{\mathrm{V}}$ in the classif\/ication of quadrirational Yang--Baxter maps \cite{Adler2003}. A common pictorial representation of this property appears in Fig.~\ref{fig:YangBaxter}. More generally, it has been remarked upon in \cite{Borodin:connection} that the set of commuting transformations obtained by discrete isomonodromic deformations def\/ine solutions to the set-theoretic Yang--Baxter maps \cite{Veselov2003}.

\begin{figure}[!ht]\centering
\begin{tikzpicture}[scale=2]
\draw[thick] (0,0) -- (1,.5) -- (1,1.5) -- (0,2) -- (-1,1.5) -- (-1,.5)--cycle;
\draw (-1,.5) --(0,1) -- (0,2);
\draw (0,1) -- (1,.5);
\draw[dashed] (1,1) arc (60:120:2cm);
\draw[dashed] (-.5,.25) arc (-61:-7:2cm);
\draw[dashed] (.5,.25) arc (-120:-173:2cm);
\filldraw[fill=white,draw=black] (.5,.25) circle (.08cm);
\filldraw[fill=white,draw=black] (1,1) circle (.08cm);
\filldraw[fill=white,draw=black] (.5,1.75) circle (.08cm);
\filldraw[fill=black,draw=black] (-.5,.25) circle (.08cm);
\filldraw[fill=black,draw=black] (-1,1) circle (.08cm);
\filldraw[fill=black,draw=black] (-.5,1.75) circle (.08cm);
\node at (-.6,0) {$w$};
\node at (-.6,2) {$u$};
\node at (-1.3,1) {$v$};
\node at (.6,2) {$\tilde{w}$};
\node at (.6,0) {$\tilde{u}$};
\node at (1.3,1) {$\tilde{v}$};
\node at (-1.1,1.7) {$R_{12}$};
\node at (0,-.2) {$R_{13}$};
\node at (1.1,1.7) {$R_{23}$};
\begin{scope}[xshift=4cm]
\draw[thick] (0,0) -- (1,.5) -- (1,1.5) -- (0,2) -- (-1,1.5) -- (-1,.5)--cycle;
\draw (0,0) --(0,1) -- (1,1.5);
\draw (0,1) -- (-1,1.5);
\draw[dashed] (1,1) arc (-60:-120:2cm);
\draw[dashed] (-.5,.25) arc (-187:-237:2cm);
\draw[dashed] (.5,.25) arc (7:60:2cm);
\filldraw[fill=white,draw=black] (.5,.25) circle (.08cm);
\filldraw[fill=white,draw=black] (1,1) circle (.08cm);
\filldraw[fill=white,draw=black] (.5,1.75) circle (.08cm);
\filldraw[fill=black,draw=black] (-.5,.25) circle (.08cm);
\filldraw[fill=black,draw=black] (-1,1) circle (.08cm);
\filldraw[fill=black,draw=black] (-.5,1.75) circle (.08cm);
\node at (-.6,0) {$w$};
\node at (-.6,2) {$u$};
\node at (-1.3,1) {$v$};
\node at (.6,2) {$\tilde{w}$};
\node at (.6,0) {$\tilde{u}$};
\node at (1.3,1) {$\tilde{v}$};
\node at (-1.1,.3) {$R_{23}$};
\node at (0,2.2) {$R_{13}$};
\node at (1.1,.3) {$R_{12}$};
\end{scope}
\end{tikzpicture}
\caption{The quadralaterals labeled by $R_{i,j}$ denote the application of~\eqref{RijYB} to the triple $(u,v,w)$. The equivalence of the left and right pictures is the Yang--Baxter property.} \label{fig:YangBaxter}
\end{figure}
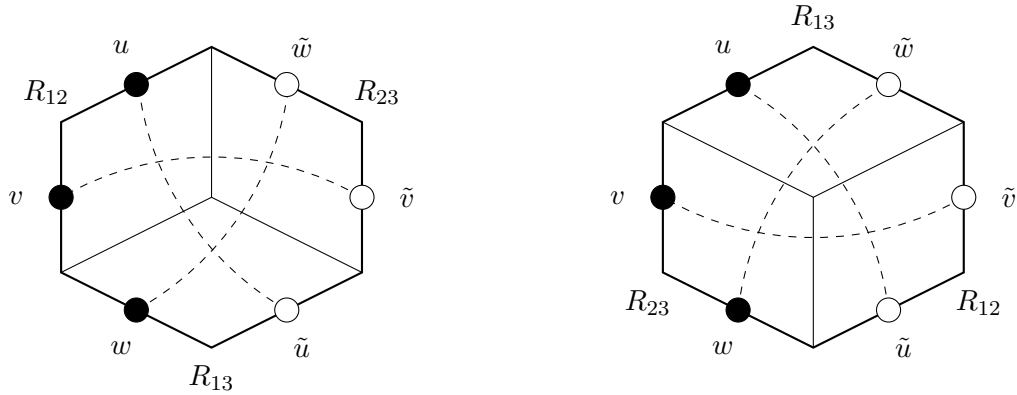

We may use Lemma \ref{com:diff} to def\/ine an action of $S_N$ on $A(x)$. Given a permutation, $\sigma \in S_N$, we denote the corresponding rational transformation of the $u_i$ and $a_i$ by $s_{\sigma} u_i$ and $s_{\sigma} a_i$ respectively. The group $S_N$ is generated by 2-cycles of the form $(i,i+1)$, whose action we denote by $s_i = s_{(i,i+1)}$ for $i = 1, \ldots, N-1$. Using Lemma~\ref{com:diff} these are given by
\begin{subequations}\label{permute}
\begin{gather}
s_i \colon \ u_i \to u_{i+1} + \frac{a_i - a_{i+1}}{u_i+u_{i+1}}, \qquad s_i \colon \ a_i = a_{i+1},\\
s_i \colon \ u_{i+1} \to u_i - \frac{a_i - a_{i+1}}{u_i+u_{i+1}}, \qquad s_i \colon \ a_{i+1} = a_i.
\end{gather}
\end{subequations}
By construction for any $\sigma \in S_N$, the ef\/fect of~$s_{\sigma}$ on~$A(x)$ is trivial. We may use the action of~$S_n$ to determine the image or kernel of $A(x)$ at $x=a_i$ by acting on $A(x)$ by a permutation that sends the factor that is singular at $x=a_i$ to either the f\/irst or last term of~\eqref{prodform} respectively.

We are now in a position to def\/ine an elementary collection of translations, $T_i$, whose ef\/fect on the parameters, $a_i$, is given by
\begin{gather*}
T_i \colon \ a_j \to \begin{cases}
a_i + h & \text{if $i = j$},\\
a_j & \text{if $i\neq j$},
\end{cases}
\end{gather*}
and whose action on the $u_i$ variables is the subject of the following proposition.

\begin{prop}
The matrix $R(x) = L(x-h,u_1,a_1)^{-1}$ in \eqref{Atildeev} defines a birational map between linear algebraic varieties
\begin{gather*}
T_1 \colon \ \mathcal{M}_h(a_1, a_2, a_3, \ldots, a_N; d_1, d_2;1,1) \to \mathcal{M}_h(a_1+h, a_2+h, a_3, \ldots, a_N;d_2-h, d_1;1,1).
\end{gather*}
The effect of $T_1$ on the $u_i$ variables is given by
\begin{gather*}
T_1 u_k = \begin{cases}
u_{1,k} + \dfrac{a_1+h-a_k}{u_{1,k}+u_k} & \text{for $k = 2, \ldots, N$},\\
u_{1,1} & \text{for $k = 1$},
\end{cases}
\end{gather*}
where
\begin{gather*}
u_{1,k-1} = \begin{cases}
u_{k-1} - \dfrac{a_1+h-a_k}{u_{1,k}+u_k} & \text{for $k = 2, \ldots, N-1$}, \\
u_1 & \text{for $k= N+1$}.
\end{cases}
\end{gather*}
\end{prop}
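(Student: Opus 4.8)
The plan is to read $\tilde A(x)$ straight off the compatibility relation \eqref{comp} and the prescribed $R(x)$, and then to recognise the answer as a cyclically shifted product of the same factors. Since $\sigma_h R(x)=R(x+h)=L(x,u_1,a_1)^{-1}$, relation \eqref{comp} gives $\tilde A(x)=L(x,u_1,a_1)^{-1}A(x)L(x-h,u_1,a_1)$. Substituting the factorisation $A(x)=L(x,u_1,a_1)L(x,u_2,a_2)\cdots L(x,u_N,a_N)$, the leading factor cancels, and the invariance \eqref{hinvar} rewrites the trailing $L(x-h,u_1,a_1)$ as $L(x,u_1,a_1+h)$. Hence $\tilde A(x)=L(x,u_2,a_2)\cdots L(x,u_N,a_N)L(x,u_1,a_1+h)$; that is, $T_1$ deletes the first factor and re-appends it on the far right with its singular value advanced by $h$. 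This is the single genuinely essential computation, and it is short.

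Next I would check that $\tilde A(x)$ lands in the asserted variety. The determinant is immediate: either from \eqref{detrel} applied factorwise, or from $\det\tilde A=\det A\cdot\det R(x+h)/\det R(x)$, the roots become $a_1+h,a_2,\dots,a_N$, so only $a_1$ is advanced. As the multiset of $u$-values is unchanged, the constraint \eqref{constrainth} persists and the leading coefficient is still $I$, giving $A_{m+2}=I$. For the exponents I would feed the reindexed factors into \eqref{d1}--\eqref{d2}: relabelling the factors $1,\dots,N$ of $\tilde A(x)$ reverses parity, so the even-indexed and odd-indexed sums exchange roles, while the advanced root $a_1+h$ supplies the extra $-h$; a direct computation then yields $\tilde d_1=d_2-h$ and $\tilde d_2=d_1$. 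One verifies $\tilde d_1+\tilde d_2+\sum_i\tilde a_i=0$ in accordance with \eqref{drel}, which pins down that exactly one of the roots advances. Diagonalising the (lower-triangular) coefficient of $x^{m+1}$ by a constant gauge transformation, exactly as in the preceding proposition, then exhibits the tuple in $\mathcal M_h(a_1+h,a_2,\dots,a_N;d_2-h,d_1;1,1)$.

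To extract the explicit action on the $u_i$ I would return $\tilde A(x)$ to canonical factored order, with the factor singular at $a_1+h$ in the first slot. This is done by commuting $L(x,u_1,a_1+h)$ leftward past $L(x,u_N,a_N),\dots,L(x,u_2,a_2)$ one factor at a time, each step an instance of Lemma~\ref{com:diff} with the pair $(a_k,a_1+h)$. Introducing the auxiliary variables $u_{1,k}$ to record the value carried by the travelling factor immediately before it crosses the factor singular at $a_k$ (so $u_{1,N}=u_1$ at the start) converts the single map \eqref{FV} into the stated recursion: crossing the $k$-th factor updates the travelling value to $u_{1,k-1}$ and deposits the value $T_1 u_k$ at the $a_k$-slot, with $T_1 u_1=u_{1,1}$ the final travelling value. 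Since $\tilde u_i+\tilde u_j=u_i+u_j$ at every crossing, $\sum_k T_1 u_k=\sum_i u_i=0$ is preserved, re-confirming $A_{m+2}=I$ after the reordering.

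Birationality is then essentially automatic: the conjugation, the reparametrisations coming from \eqref{FV}, and the diagonalising gauge are each rational and invertible, so $T_1$ is a birational map with rational inverse (the inverse deformation being the analogous construction that retracts $a_1+h$ to $a_1$), and Theorem~\ref{hisolattice} guarantees that the deformation matrix — here produced explicitly as $L(x-h,u_1,a_1)^{-1}$ — is the unique one effecting this lattice translation. I expect the only real labour to be the middle step: correctly tracking the parity exchange in \eqref{d1}--\eqref{d2} and threading the chain $u_{1,N}\to u_{1,N-1}\to\cdots\to u_{1,1}$ through the iterated commutations, since a sign or an index shift there is the most error-prone part of the argument.
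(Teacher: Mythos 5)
Your proposal is correct and follows essentially the same route as the paper's proof: rearranging the compatibility relation \eqref{comp} via \eqref{hinvar} into the cyclically shifted product $L(x,u_2,a_2)\cdots L(x,u_N,a_N)L(x,u_1,a_1+h)$, reading off $T_1d_1=d_2-h$ and $T_1d_2=d_1$ from the subleading coefficient, and obtaining the $u$-recursion by iterating the commutation of Lemma~\ref{com:diff} (the paper phrases this last step as inductively reading kernels of partial products through the $S_N$-action and \eqref{hker}, which is the identical computation in different words). One point in your favor rather than against you: where you conclude that only $a_1$ advances, the printed statement and proof write the target as $\mathcal{M}_h(a_1+h,a_2+h,a_3,\ldots,a_N;d_2-h,d_1;1,1)$, but your version is the internally consistent one --- the determinant of the rearranged product and the relation \eqref{drel} force exactly one root to shift, so the ``$a_2+h$'' is evidently a typo carried over from the $T_{1,2}$ proposition (similarly, a literal application of \eqref{FV} places the signs in the stated recursion the other way around, so be aware that matching your commutation bookkeeping against the printed formulas requires correcting those misprints rather than your argument).
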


\begin{proof}
To ascertain the how this transformation acts on $A(x)$, we observe that a rearrangement of \eqref{Atildeev} is that
\begin{gather}\label{tildeprod}
\tilde{A}(x) = L(x,u_2,a_2) \cdots L(x,u_{2m+4},a_{2m+4}) L( x, u_1, a_1 + h),
\end{gather}
where we have used \eqref{hinvar}. It is convenient to leave it in this form and read of\/f the transformed values of $d_1$ and $d_2$ in the expansion of \eqref{tildeprod} to be given by
\begin{gather*}
T_1 d_1 = \sum_{i=1}^{N}\sum_{j=1}^{i-1} u_i u_j + \sum_{k \textrm{ odd}} \big(u_{k}^2-a_k\big) - h = d_2 - h,\\
T_1 d_2 =\sum_{i=1}^{N}\sum_{j=1}^{i-1} u_i u_j + \sum_{k \textrm{ even}} \big(u_{k}^2-a_k\big) = d_1,
\end{gather*}
which determines that $\tilde{A}(x)$ is an element of $\mathcal{M}_h(a_1+h, a_2+h, a_3, \ldots, a_{2m+4};d_2-h, d_1;1,1)$. We may inductively determine $T_1 u_k$ by observing that the kernel of~$\tilde{A}(a_{2m+2})$, giving us
\begin{gather*}
T_1 \colon \ u_{2m+4} = u_1 +\frac{a_1+h-a_{2m+4}}{u_{1}+u_{2m+4}},
\end{gather*}
by applying $s_{2m+3}$ and \eqref{hker}. Any subsequent kernels may be found inductively by examining the kernel of
\begin{gather*}
 L(a_k,u_2,a_2) \cdots L(x,u_{k},a_{k}) L( a_k, u_{1,k}, a_1 + h),
\end{gather*}
for $k > 1$ and where $u_{1,2m+2} = u_1$.
\end{proof}

Rather than computing compatibility relations explicitly, we have simply exploited the commutation relations between the $L_i$ factors. All the other elementary transformations may be obtained by conjugating by elements of~$S_N$. One of the issues with this type of transformation is that it is singular at $x =\infty$, which manifests itself in the way it has swapped the roles of~$d_1$ and~$d_2$. If we conjugate by the matrix with $1$'s on the of\/f diagonal, we can also swap the roles of~$d_1$ and~$d_2$, however, the ef\/fect this has on the~$u_i$ variables is not so clear, as it requires a~nontrivial refactorization into a product of the appropriate form. We may now present the ge\-ne\-ra\-tors for the discrete Garnier systems, which are compositions of the form $T_{i,j} = T_i \circ T_j$ where $i \neq j$.

\begin{prop}\label{tranha1a2}
The matrix $R(x) = L(x-h,u_2,a_2)^{-1}L(x-h,u_1,a_1)^{-1}$ in \eqref{Atildeev} defines a~birational map between linear algebraic varieties
\begin{gather*}
T_{1,2} \colon \ \mathcal{M}_h (a_1, a_2, a_3, \ldots, a_N;d_1,d_2;1,1) \\
\hphantom{T_{1,2} \colon}{} \ {}\to \mathcal{M}(a_1+h, a_2+h, a_3, \ldots, a_N;d_1-h, d_2-h;1,1).
\end{gather*}
The effect of $T_{1,2}$ on the $u_i$ variables is given by
\begin{gather}\label{conprevuh}
T_{1,2} \colon \ u_i = \begin{cases}
u_{1,2} & \text{for $i = 1$},\\
u_{2,2} & \text{for $i = 2$},\\
u_k + (u_{k,1} -u_{k-1,1})+ (u_{2,k}- u_{2,k-1}) & \text{for $k = 3,\ldots, N$},
\end{cases}
\end{gather}
where
\begin{gather*}
u_{1,k-1}= u_k + \frac{a_1+h-a_k}{u_{1,k}+u_k},\\
u_{2,k-1}= u_k - u_{1,k-1} + u_{1,k}+ \frac{a_2+h-a_k}{u_k + u_{2,k} + u_{1,k} - u_{1,k-1}},
\end{gather*}
for $k = 2, \ldots, N$, $u_{1,N} = u_1$ and $u_{2,N} = u_2$.
\end{prop}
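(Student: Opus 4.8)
The plan is to mirror the proof for $T_1$: first compute $\tilde A(x)$ explicitly from \eqref{comp}, then read off the determinant and the constants $d_1,d_2$ of the target, and finally reorder the factored form into the standard order to extract the action on the $u_i$. Substituting $R(x)=L(x-h,u_2,a_2)^{-1}L(x-h,u_1,a_1)^{-1}$ into \eqref{comp} gives $\tilde A(x)=\sigma R(x)A(x)R(x)^{-1}=R(x+h)A(x)R(x)^{-1}$. Because $R(x+h)=L(x,u_2,a_2)^{-1}L(x,u_1,a_1)^{-1}$ cancels the first two factors of $A(x)=L(x,u_1,a_1)\cdots L(x,u_N,a_N)$, one finds $R(x+h)A(x)=L(x,u_3,a_3)\cdots L(x,u_N,a_N)$; multiplying on the right by $R(x)^{-1}=L(x-h,u_1,a_1)L(x-h,u_2,a_2)$ and rewriting these via the shift identity \eqref{hinvar} as $L(x,u_1,a_1+h)L(x,u_2,a_2+h)$ produces
\begin{gather*}
\tilde A(x)=L(x,u_3,a_3)\cdots L(x,u_N,a_N)\,L(x,u_1,a_1+h)\,L(x,u_2,a_2+h),
\end{gather*}
the two-factor analogue of \eqref{tildeprod}.

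Viewing this as a product of the form \eqref{prodform} with relabelled factors, the determinant \eqref{detrel} is $(x-a_3)\cdots(x-a_N)(x-a_1-h)(x-a_2-h)$, so exactly $a_1$ and $a_2$ are shifted by $h$. Applying the formulas \eqref{d1}--\eqref{d2} to this reordered list, I would use the key observation that a cyclic shift of the factors by two positions preserves the parity of each factor's position, since $N=2m+4$ is even. Consequently the even- and odd-indexed sums in \eqref{d1}--\eqref{d2} are unchanged apart from the two shifted roots, each of which contributes an additional $-h$, yielding $\tilde d_1=d_1-h$ and $\tilde d_2=d_2-h$; the symmetric term $\sum_{i<j}u_iu_j$ and the constraint \eqref{constrainth} are order-independent and persist. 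This places $\tilde A(x)$ in $\mathcal M_h(a_1+h,a_2+h,a_3,\ldots,a_N;d_1-h,d_2-h;1,1)$.

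The heart of the proposition is the action on the $u_i$, which I would obtain by transporting $\tilde A(x)$ into the standard order $(a_1+h,a_2+h,a_3,\ldots,a_N)$ using Lemma \ref{com:diff} in two stages. In the first stage I migrate $L(x,u_1,a_1+h)$ leftward past $L(x,u_N,a_N),\ldots,L(x,u_3,a_3)$ to the front; each use of the map \eqref{FV} sends the value of the moving factor from $u_{1,k}$ to $u_{1,k-1}=u_k+\frac{a_1+h-a_k}{u_{1,k}+u_k}$ (seeded by $u_{1,N}=u_1$) and leaves the factor at $a_k$ carrying $u_k+u_{1,k}-u_{1,k-1}$. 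In the second stage I migrate $L(x,u_2,a_2+h)$ from the far right to position two, now passing the stage-one factors; the same map \eqref{FV} then gives the coupled recursion $u_{2,k-1}=u_k-u_{1,k-1}+u_{1,k}+\frac{a_2+h-a_k}{u_k+u_{2,k}+u_{1,k}-u_{1,k-1}}$ (seeded by $u_{2,N}=u_2$), whose denominator is precisely the stage-one value of the $a_k$ factor. After both passes the factor at $a_k$ carries $u_k+(u_{1,k}-u_{1,k-1})+(u_{2,k}-u_{2,k-1})$ for $k\ge 3$, while the two migrated factors come to rest at $u_{1,2}$ and $u_{2,2}$ (which one may also confirm directly from the image relation \eqref{hIm} at $x=a_1+h$ and $x=a_2+h$); this is exactly \eqref{conprevuh}.

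Birationality is then automatic, since every step is either the quadrirational map \eqref{FV} or the invertible shift \eqref{hinvar}, so both $T_{1,2}$ and its inverse are rational. I expect the main obstacle to be the bookkeeping of the second stage: the sequence $u_{2,k}$ is coupled to $u_{1,k}$ through the already-modified intermediate factors, so one must carefully track which value each factor carries at the instant it is passed. No essentially new difficulty arises beyond iterating \eqref{FV}, but this coupling is what makes the closed form \eqref{conprevuh} less transparent than the single-shift case.
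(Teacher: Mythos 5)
Your proposal is correct and takes essentially the same route as the paper: the same cancellation yielding $\tilde A(x)=L(x,u_3,a_3)\cdots L(x,u_N,a_N)L(x,u_1,a_1+h)L(x,u_2,a_2+h)$, the same reading-off of $T_{1,2}d_i=d_i-h$ from the expansion at $x=\infty$ (your parity remark just makes this explicit), and the same iteration of Lemma \ref{com:diff} to generate the recursions for $u_{1,k}$ and $u_{2,k}$ --- your two-stage migration is precisely the paper's inductive kernel computation ``using the action of $S_N$''. Incidentally, your increment $u_{1,k}-u_{1,k-1}$ is the correct one; the paper's \eqref{conprevuh} prints it with transposed subscripts ($u_{k,1}-u_{k-1,1}$), evidently a typo.
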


\begin{proof}
As was the case in the previous proposition, using the identif\/ication of $\tilde{A}(x)$ with the action of $T_{1,2}$ we f\/ind that
\begin{gather*}
\tilde{A}(x) = L(x,u_3,a_3) \cdots L(x,u_{N}, a_{N}) L(x,u_1,a_1+ h)L(x,u_2,a_2+h),
\end{gather*}
whose expansion around $x=\infty$ reveals that $T_{1,2} d_i = d_i - h$ for $i = 1,2$, showing that the image of~$T_{1,2}$ is indeed in $\mathcal{M}_h(a_1+h, a_2+h, a_3, \ldots, a_N;d_1-h, d_2-h)$. To compute the action on the~$u_i$ variables, we inductively compute the kernel of
\begin{gather*}
 L(a_k,u_3,a_3) \cdots L(a_k,u_{k},a_{k}) L( a_k, u_{1,k}, a_1 + h)L( a_k, u_{2,k}, a_2 + h),
\end{gather*}
using the action of $S_N$, which gives \eqref{conprevuh} with an initial step where $u_{1,N} = u_1$ and $u_{2,N} = u_2$ as above.
\end{proof}

We may construct a generic element $T_{i,j}$, whose action on the space of parameters is
\begin{gather}
T_{i,j} \colon \ \mathcal{M}_h (a_1, \ldots, a_i, \ldots, a_j, \ldots, a_N;d_1,d_2;1,1) \nonumber\\
\hphantom{T_{i,j} \colon}{} \ {} \to \mathcal{M}_h(a_1, \ldots, a_i+h, \ldots, a_j+h, \ldots, a_N;d_1-h,d_2-h;1,1),\label{hGarnierAction}
\end{gather}
by conjugating by the element $\sigma_{(1i)(2j)}$. That is to say
\begin{gather*}
T_{i,j} = \sigma_{(1i)(2j)} \circ T_{1,2} \circ \sigma_{(1i)(2j)}.
\end{gather*}
The system of transformations of the form $T_{i,j}$ constitutes what we call the $h$-Garnier system. The simplest case, when $m=1$, is shown to coincide with the dif\/ference analogue of the sixth Painlev\'e equation in Section~\ref{dP6sec}.

As a consequence of Theorem \ref{hisolattice}, we have the following.

\begin{cor}
The set of transformations of the form $T_{i,j}$ satisfy the following
\begin{enumerate}\itemsep=0pt
\item[$1.$] The action is symmetric in $i$ and $j$, i.e.,
\begin{gather*}
T_{i,j} = T_{j,i}.
\end{gather*}

\item[$2.$] These actions commute, i.e.,
\begin{gather*}
T_{i_1, j_1} \circ T_{i_2, j_2} = T_{i_2, j_2} \circ T_{i_1, j_1}.
\end{gather*}
\end{enumerate}
\end{cor}

\subsection[The symmetric $h$-dif\/ference Garnier system]{The symmetric $\boldsymbol{h}$-dif\/ference Garnier system}

Let us consider dif\/ference equations whose solutions satisfy $Y(x) = Y(-x)$. The consistency of~\eqref{lineartau} requires that
\begin{gather*}
A(x)A(-h-x) =I.
\end{gather*}
Under these conditions we express $A(x)$ by \eqref{symmdiff}, in which
\begin{gather*}
B(x) = L_1(x) \cdots L_{N'}(x),
\end{gather*}
where $L_i(x) = L(x,u_i, a_i)$ given by \eqref{difffactor} and $N' = 2m+4$ as before. In this case, by using~\eqref{hinv} we may write
\begin{gather*}
A(x) = B(-h-x)^{-1}B(x)= \!\left[\prod_{k=1}^{N'} \frac{1}{x+a_k+h}\right]\! L(-x,-u_{N'}, a_{N'}+h){\cdots} L(-x,-u_1, a_1+h)\\
\hphantom{A(x) = B(-h-x)^{-1}B(x)=}{} \times L(x,u_1, a_1+h) \cdots L(x,-u_{N'}, a_{N'}+h).
\end{gather*}
This could be transformed via $\Gamma$ functions to a matrix of the form of~\eqref{prodform} for $N = 2N'$ and where the last $N$ factors take a slightly dif\/ferent form. If we were to apply Theorem~\ref{hisolattice}, it is not clear at this point that the solutions would preserve the symmetry.

Due to \eqref{hinv} and the invariance of~\eqref{permute} under changes to the spectral variable, it is easy to see that one may simultaneously act on~$B(x)$ and $B(-h-x)^{-1}$ by $S_n$ in the same way as~\eqref{permute}. As discussed previously, we expect to f\/ind transformations induced by multiplication on the left and the right. The left multiplication is expected to def\/ine a trivial transformation of~$A(x)$, but what is not expected is that the transformation is similar to the transformation specif\/ied by Lemma~\ref{com:diff}.

\begin{lem}
The rational matrix
\begin{gather*}
R_l(x) = \begin{pmatrix}(x-a_1)(x+a_1) & 0 \\ 0& (x-a_2)(x+a_2) \end{pmatrix}
 + (u_1+u_2)(a_1+ a_2) \begin{pmatrix} u_1 & -1 \\ u_1 E_{1,2} u_1 & -u_1 \end{pmatrix}
\end{gather*}
defines a birational transformation
\begin{gather*}
E_{1,2} \colon \ \mathcal{M}_h (a_1, \ldots, a_{N'}; d_1,d_2;1,1) \\
\hphantom{E_{1,2} \colon}{} \ {} \to \mathcal{M}_h(-a_1,-a_2, a_3, \ldots, a_{N'}; d_1+a_1+a_2,d_2+a_1+a_2;1,1),
\end{gather*}
via \eqref{leftRB} with $\lambda = (x-a_1)^{-1}(x-a_2)^{-1}$. The effect on the $u_i$ variables is given by
\begin{gather*}
E_{1,2} u_1 = u_1 - \frac{a_1-a_2}{u_1+ u_2},\qquad E_{1,2} u_2 = u_2 + \frac{a_1-a_2}{u_1+ u_2}.
\end{gather*}
\end{lem}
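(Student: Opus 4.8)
The plan is to verify the claimed transformation directly by substituting the given $R_l(x)$ into the left-multiplication relation \eqref{leftRB} and checking three things: that $\tilde{B}(x)$ has the correct polynomial form and degree, that the determinant is shifted in the prescribed way (so that $a_1,a_2 \to -a_1,-a_2$ and $d_1,d_2 \to d_1+a_1+a_2, d_2+a_1+a_2$), and that the induced action on the $u_i$ recovers the stated formula. The central structural observation I would exploit is that since the first two factors of $B(x)$ are $L(x,u_1,a_1)L(x,u_2,a_2)$, the product $R_l(x)L(x,u_1,a_1)L(x,u_2,a_2)$ should again factor as $\lambda(x)^{-1}L(x,\tilde u_1,-a_1)L(x,\tilde u_2,-a_2)$ times something, with the remaining factors $L_3,\ldots,L_{N'}$ untouched. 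In other words, $E_{1,2}$ acts only on the first two factors, replacing their data $(u_1,a_1,u_2,a_2)$ by $(\tilde u_1,-a_1,\tilde u_2,-a_2)$, while leaving $L_3,\ldots,L_{N'}$ fixed.

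Concretely, I would first compute the product $M(x) := R_l(x)\,L(x,u_1,a_1)\,L(x,u_2,a_2)$ using \eqref{difffactor} and show, after clearing the scalar factor $\lambda(x) = (x-a_1)^{-1}(x-a_2)^{-1}$, that it equals $L(x,\tilde u_1, -a_1)\,L(x,\tilde u_2, -a_2)$ for the stated $\tilde u_1, \tilde u_2$. The determinant bookkeeping is then immediate from \eqref{detrel}: each $L(x,\cdot,a_i)$ contributes a factor $(a_i - x)$, so replacing $a_i$ by $-a_i$ together with the scalar $\lambda(x)$ effects exactly $(x-a_1)(x-a_2)\to(x+a_1)(x+a_2)$ in $\det A(x)$, matching the claimed shift of the determinant roots. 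To pin down $\tilde u_1,\tilde u_2$ I would use the image/kernel characterizations \eqref{hker} and \eqref{hIm}: the kernel of $M(x)$ at $x=-a_2$ and the image structure at $x=-a_1$ determine $\tilde u_2$ and $\tilde u_1$ respectively, and comparing with the explicit entries of $R_l(x)$ yields the formulas $\tilde u_1 = u_1 - (a_1-a_2)/(u_1+u_2)$ and $\tilde u_2 = u_2 + (a_1-a_2)/(u_1+u_2)$. I would then read off the shift in $d_1,d_2$ from the subleading term of the asymptotic expansion of $\tilde A(x) = (\tau_1\tilde B)^{-1}\tilde B$ around $x=\infty$, exactly as in the proof of the preceding proposition, confirming $d_i \to d_i + a_1 + a_2$.

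The step I expect to be the main obstacle is verifying that the \emph{left} multiplication \eqref{leftRB} by $R_l(x)$, which a priori changes $A(x) = (\tau_1 B(x))^{-1}B(x)$, in fact induces an admissible isomonodromic deformation preserving the $\tau_2$-symmetry, i.e.\ that it corresponds to a relation of the form \eqref{comp} compatible with $A(x)A(-h-x)=I$. The subtlety is that $R_l(x)$ is \emph{not} $\tau_2$-invariant, so the general principle quoted after \eqref{symlax} (that $\tau_2$-invariance of $R_l$ guarantees preservation of the symmetry) does not apply directly; instead the compensating transformation on $(\tau_1 B)$ must be tracked. I would handle this by checking that the transformed $A(x)$ still satisfies the required symmetry constraint after the substitution $a_1,a_2\to -a_1,-a_2$, using that the diagonal part of $R_l(x)$ is even in $x$ while the rank-one correction carries the factor $(u_1+u_2)(a_1+a_2)$ precisely so that the two half-integer shifts cancel. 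The polynomial entry $u_1 E_{1,2}u_1$ in the lower-left of $R_l(x)$ is evidently meant to be read as the $E_{1,2}$-transformed value of $u_1^2$ (or a compatibility expression in the transformed variables), and confirming its correct interpretation so that $R_l(x)$ closes the relation \eqref{leftRB} consistently is where the bulk of the genuine computation lies.
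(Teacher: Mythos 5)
Your first two paragraphs are correct and are, in substance, the paper's own argument: the paper disposes of this lemma as an elementary calculation that is easily verified, and the calculation it has in mind is exactly the refactorization you propose, namely
\begin{gather*}
R_l(x)\,L(x,u_1,a_1)\,L(x,u_2,a_2) = (x-a_1)(x-a_2)\,L(x,\tilde u_1,-a_1)\,L(x,\tilde u_2,-a_2),
\qquad
\tilde u_1 = u_1 - \frac{a_1-a_2}{u_1+u_2}, \quad \tilde u_2 = u_2 + \frac{a_1-a_2}{u_1+u_2},
\end{gather*}
with the factors $L_3,\ldots,L_{N'}$ untouched; the determinant bookkeeping via \eqref{detrel} and the shift $d_i \to d_i + a_1+a_2$ (which can be read off directly from \eqref{dvals}, using $\tilde u_1 + \tilde u_2 = u_1+u_2$) then follow as you say. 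One small correction of reading: the entry $u_1 E_{1,2}u_1$ is the product of the \emph{old} $u_1$ with the \emph{new} $\tilde u_1$, not ``the $E_{1,2}$-transformed value of $u_1^2$'' (that would be $\tilde u_1^2$); your parenthetical hedge happens to cover this, and with this reading the identity above closes.

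The genuine error is in your last paragraph. You assert that $R_l(x)$ is \emph{not} $\tau_2$-invariant and make this the main obstacle; that is false. Every entry of $R_l(x)$ is even in $x$: the diagonal entries are $x^2-a_1^2$ and $x^2-a_2^2$, and the rank-one correction has no $x$-dependence whatsoever (there are no ``half-integer shifts'' to cancel). Hence $R_l(-x)=R_l(x)$, which is precisely the point the paper records (``It is also seen that $R_l(x)=R_l(-x)$, as required''), so the general principle stated after \eqref{symlax} applies directly: a $\tau_2$-invariant $R_l$ makes \eqref{leftRB} coincide with a transformation of the form \eqref{comp}, and the uniqueness in Theorem~\ref{hisolattice} identifies it as the discrete isomonodromic deformation. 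Moreover, preservation of the constraint $A(x)A(-h-x)=I$ needs no verification at all: the transformation acts on $B(x)$, and any $\tilde A = (\tau_1 \tilde B)^{-1}\tilde B$ as in \eqref{AviaB} satisfies the symmetry by construction, whatever $R_l$ is — this is exactly why the paper parameterizes the symmetric systems through $B$ rather than $A$. So the step you flagged as containing ``the bulk of the genuine computation'' is either automatic or immediate; the actual content of the lemma is the two-factor identity displayed above, which your first two paragraphs already handle correctly.
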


This is an elementary calculation that is easily verif\/ied. It is also seen that $R_l(x) = R_l(-x)$, as required, and that
\begin{gather*}
\det R_l(x) = (x-a_1)(x+a_1)(x-a_2)(x+a_2).
\end{gather*}
This def\/ines an involution on the parameter space.

\begin{lem}
The rational matrix
\begin{gather*}
R_r(x) = \begin{pmatrix}(x-a_{N'-1})(x+h+a_{N'-1}) & 0 \\ 0& (x-a_{N'})(x+h+a_{N'}) \end{pmatrix} \\
\hphantom{R_r(x) = }{} + (u_{N'}+u_{N'-1})(a_{N'}+ a_{N'-1}) \begin{pmatrix} u_{N'} & -1 \\ u_{N'} F_{N',N'-1}u_{N'} & -u_{N'} \end{pmatrix}
\end{gather*}
defines a birational transformation
\begin{gather*}
F_{N',N'-1}\colon \ \mathcal{M}_h(a_1,\ldots, a_{N'}; d_1,d_2;1,1) \\
\hphantom{F_{N',N'-1}\colon}{} \ {}\to \mathcal{M}_h(a_1, \ldots, a_{N'-2}, -a_{N'-1}-h,-a_{N'}-h;\\
\hphantom{F_{N',N'-1}\colon \ {}\to \mathcal{M}_h(}{} d_1+a_{N'}+a_{N'-1}+h,d_2+a_{N'}+a_{N'-1}+h;1,1),
\end{gather*}
via \eqref{rightRB} where $\lambda(x) = (x-a_{N'-1})(x-a_{N'})$ whose effect on the $u_i$ variables is given by
\begin{gather*}
F_{N',N'-1}u_{N'} = u_{N'} - \frac{a_{N'}-a_{N'-1}}{u_{N'}+ u_{N'-1}},\\
F_{N',N'-1}u_{N'-1} = u_{N'-1} + \frac{a_{N'}-a_{N'-1}}{u_{N'}+ u_{N'-1}}.
\end{gather*}
\end{lem}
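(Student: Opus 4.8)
The proof will follow the pattern of the preceding lemma for $R_l$ and $E_{1,2}$; write $F$ for $F_{N',N'-1}$ throughout. Since $R_r(x)$ is built to interact only with the last two factors of $B(x)=L_1(x)\cdots L_{N'}(x)$, and the factors $L_1,\ldots,L_{N'-2}$ are left untouched, the global relation \eqref{rightRB} reduces to the local $2\times 2$ identity
\begin{gather*}
\lambda(x)\, L(x,u_{N'-1},a_{N'-1})\, L(x,u_{N'},a_{N'})\, R_r(x) = L(x, F u_{N'-1}, -a_{N'-1}-h)\, L(x, F u_{N'}, -a_{N'}-h).
\end{gather*}
The first thing I would check is the degree bookkeeping: $L(x,u_{N'-1},a_{N'-1})L(x,u_{N'},a_{N'})$ has entries of degree at most one, and $R_r$ has degree two, so the left-hand product has degree three; for $\tilde B$ to remain of degree $N'$ the scalar factor must be $\lambda(x)=[(x-a_{N'-1})(x-a_{N'})]^{-1}$, as in the preceding lemma, which forces $L(x,u_{N'-1},a_{N'-1})L(x,u_{N'},a_{N'})R_r(x)$ to be divisible by $(x-a_{N'-1})(x-a_{N'})$. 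This divisibility is the structural heart of the statement and is what allows the product to close up again as two factors of the form \eqref{difffactor}.

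With the degrees reconciled, the identity itself is a direct comparison of coefficients in $x$, and the parameters are read off in three steps. The determinant, using $\det L = a-x$ from \eqref{detrel} and the value $\det R_r(x)=(x-a_{N'-1})(x+h+a_{N'-1})(x-a_{N'})(x+h+a_{N'})$ (which, as for $R_l$, is unchanged by the constant correction term), forces the two singular points to reflect to $-a_{N'-1}-h$ and $-a_{N'}-h$, giving the stated action on the $a_i$. The action on the $u_i$ is pinned down by evaluating the left-hand matrix at the reflected singular points and reading off kernel and image vectors via \eqref{hker} and \eqref{hIm}, exactly as the translations $T_i$ were computed earlier; this reproduces $F u_{N'}=u_{N'}-\tfrac{a_{N'}-a_{N'-1}}{u_{N'}+u_{N'-1}}$ and $F u_{N'-1}=u_{N'-1}+\tfrac{a_{N'}-a_{N'-1}}{u_{N'}+u_{N'-1}}$, which are the mirror image of the formulas for $E_{1,2}$ with the first pair of factors replaced by the last. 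Finally, expanding $\tilde B(x)=\lambda(x)B(x)R_r(x)$ about $x=\infty$ and comparing the subleading coefficient with $\operatorname{diag}(d_1,d_2)$ yields the shift of both $d_1$ and $d_2$ by $a_{N'}+a_{N'-1}+h$; the extra $+h$ relative to $E_{1,2}$ comes from the $+h$ in the reflected points, and one checks that this shift is precisely what preserves the relation \eqref{drel}.

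It then remains to argue that $F$ is a genuine symmetric discrete isomonodromic deformation and not merely a parameter substitution. For this I would verify the reflection symmetry appropriate to a right action, $R_r(-h-x)=R_r(x)$: each diagonal entry $(x-a)(x+h+a)$ is invariant under $x\mapsto -h-x$ and the correction is constant, so the identity holds. This is the right-multiplication analogue of the invariance $R_l(x)=R_l(-x)$ imposed on $R_l$, and by the discussion preceding the lemma it guarantees that \eqref{rightRB} descends to a transformation of the form \eqref{comp} respecting the constraint \eqref{tau2}. Theorem~\ref{hisolattice} then supplies existence and uniqueness and identifies the map as the stated birational map of varieties.

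The main obstacle is the $(2,1)$-entry in the local identity. The constant correction in $R_r$, and in particular its $(2,1)$-entry $u_{N'}(F u_{N'})$, has been chosen precisely so that the cross terms generated by reflecting both singular points cancel and the degree-three product becomes divisible by $(x-a_{N'-1})(x-a_{N'})$, closing up as a clean product of two factors \eqref{difffactor}; confirming this cancellation, together with the absence of any surviving degree-two remainder after dividing by $\lambda^{-1}$, is the one genuinely delicate computation, the diagonal and $(1,2)$ entries being comparatively immediate. A cleaner alternative, which I would keep as a cross-check rather than the main route, is to derive the lemma from the already-verified $R_l$/$E_{1,2}$ lemma by the order-reversing symmetry — combining matrix conjugation with the reflection $x\mapsto -h-x$ — that exchanges left and right multiplication; this explains structurally why $F$ is the mirror of $E_{1,2}$ and why the $+h$ appears, but setting up the index correspondence precisely is itself some work.
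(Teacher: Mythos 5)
Your proposal is correct and is essentially the paper's own approach: the paper gives no argument beyond declaring the computation elementary and recording exactly the facts you isolate — the reflection symmetry $R_r(x)=R_r(-x-h)$, the determinant factorization, and the observation that the induced map is the commutation map of Lemma~\ref{com:diff} with the roles of the two $u$'s swapped — with the parameter actions read off just as you propose (determinant for the $a_i$, kernel/image vectors for the $u_i$, expansion at $x=\infty$ for the $d_i$), and your degree count correctly restores $\lambda(x)=(x-a_{N'-1})^{-1}(x-a_{N'})^{-1}$, which is the normalization the paper itself uses in the proofs of Propositions~\ref{eh} and~\ref{fh}. The only place you go beyond both the statement and the paper — arguing that $F_{N',N'-1}$ descends to a transformation of the form \eqref{comp} and invoking Theorem~\ref{hisolattice} for uniqueness — is also the one shaky step, since the paper makes that identification only for the left action with $\tau_2$-invariant $R_l$, and Theorem~\ref{hisolattice} concerns integer translations of the $a_i$ rather than the reflections $a\mapsto -a-h$ occurring here; fortunately nothing in the lemma as stated requires it.
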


It is easy to see $R_r(x) = R_r(-x-h)$ and
\begin{gather*}
\det R_r(x) = (x-a_1)(x+a_1+h)(x-a_2)(x+a_2+h).
\end{gather*}
These matrices are not of the same form as $L_i(x)$, yet the resulting transformation takes the form specif\/ied in Lemma~\ref{com:diff} where the roles of~$u_i$ and~$u_j$ have been swapped.

It is f\/itting that we def\/ine the generators of the symmetric dif\/ference Garnier system to be the maps~$E_{i,j}$ and $F_{i,j}$, which may be expressed as
\begin{subequations}\label{symgenh}
\begin{gather}
\label{heij}E_{i,j} = s_{(1i)(2j)} \circ E_{1,2} \circ s_{(1i)(2j)},\\
\label{hfij}F_{i,j} = s_{(1i)(2j)} \circ F_{1,2} \circ s_{(1i)(2j)}.
\end{gather}
\end{subequations}
The translations, $T_{i,j}$, are specif\/ied in terms of these generators as
\begin{gather*}
T_{i,j} = F_{i,j} \circ E_{i,j},
\end{gather*}
which form the generators for the system of translations in the $h$-dif\/ference Garnier system. While this bears some similarity with~\eqref{hGarnierAction}, the dif\/ference is that given an $A(x)$ with the appropriate symmetry, the resulting action is inequivalent since the resulting transformations of~\eqref{hGarnierAction} do not necessarily preserve the symmetry, whereas by acting upon~$B(x)$, the resulting matrix~$A(x)$ necessarily possesses the required symmetry. The key dif\/ference is not the moduli space itself, but the actions being considered on them.

\subsection[$q$-dif\/ference Garnier systems]{$\boldsymbol{q}$-dif\/ference Garnier systems}

As we did with the $h$-dif\/ference systems, we start with \eqref{linearsigma} where $\sigma = \sigma_q$ and where~$A(x)$ is specif\/ied by~\eqref{prodform}. Before def\/ining $L_i(x)$, we specify two matrices, $L(x,u,a)$ and a diagonal matrix which we call~$D$ given by
\begin{gather}
\label{qdifffactor} L(x,u,a) = \begin{pmatrix}
1 & u \\
 \dfrac{x}{au} & 1
\end{pmatrix},\qquad D = \begin{pmatrix}
\theta_1 & 0 \\
0& \theta_2
\end{pmatrix},
\end{gather}
which satisfy the commutation relation
\begin{gather}\label{commDL}
L(x,u,a)D = D L\left(x, \frac{\theta_2 u}{\theta_1},a\right) .
\end{gather}
Due to \eqref{commDL}, rather than letting each factor take the form $D L(x,u_i,a_i)$ it is suf\/f\/icient to letting only the f\/irst f\/irst factor take the form $DL(x,u_1,a_1)$ while all other factors are of the form $L(x,u_i,a_i)$, i.e., we let~$A(x)$ take the general form~\eqref{prodform} where
\begin{gather}\label{qfactor}
 L_i(x) = \begin{cases} D L(x,u_1,a_1) & \text{for $i = 1$}, \\
L(x,u_i,a_i) & \text{for $i \neq 1$}.
\end{cases}
\end{gather}
As in the previous section, some of the desirable properties of $L(x,u,a)$ are
\begin{subequations}
\begin{gather}
\label{detLq}\det L(x,u,a) = 1- \frac{x}{a},\\
\label{qident2}L(qx,u,qa) = L(x,u,a),\\
\label{qinv}L(x,u,a)^{-1} = \frac{a}{x-a} L(x,-u,a),\\
\label{kerq} \operatorname{Ker} L(a,u,a) = \left\langle \begin{pmatrix} -u \\ 1 \end{pmatrix} \right\rangle, \\
\label{Imq} \operatorname{Im} L(a,u,a) = \left\langle \begin{pmatrix} u \\ 1 \end{pmatrix} \right\rangle.
\end{gather}
\end{subequations}

By expanding \eqref{prodform} we have that $A(x)$ takes the general form
\begin{gather*}
A_0 + A_1 x + \cdots + A_{m+1} x^{m+1}.
\end{gather*}
As we did with the previous section, we f\/ind that the properties of $A(x)$ are given by the following proposition.

\begin{prop}\label{qalgvariety}
Given $A(x)$ specified by \eqref{prodform} where each factor is given by \eqref{qfactor}, with the constraints $\theta_1 \neq \theta_2$ and
\begin{gather}\label{con2q}
\theta_1\prod_{i \ \textrm{odd}} \big(a_iu_i^2\big) \neq \theta_2\prod_{j \ \textrm{odd}} \big(a_ju_j^2\big),
\end{gather}
defines an element,
\begin{gather*}
(A_0, \ldots, A_{m+1}) \in \mathcal{M}_q(a_1,\ldots, a_N; \kappa_1, \kappa_2; \theta_1, \theta_2),
\end{gather*}
where
\begin{subequations}\label{kappavals}
\begin{gather}
\kappa_1 = \theta_1\prod_{i \ \textrm{odd}} u_i \prod_{j \ \textrm{even}} (a_ju_j)^{-1},\\
\kappa_2 = \theta_2 \prod_{i \ \textrm{even}} u_i \prod_{j \ \textrm{odd}} (a_ju_j)^{-1},
\end{gather}
\end{subequations}
and $\theta_1$ and $\theta_2$ appear as they do in \eqref{qfactor}.
\end{prop}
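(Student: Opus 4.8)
The plan is to follow the template of the $h$-difference proposition: read off the determinant, the constant coefficient $A_0$, and the leading coefficient directly from the factored form \eqref{qfactor}, and then check these against the defining data of $\mathcal{M}_q$. Since $D$ enters only through the first factor and acts by left multiplication, I would write $A(x)=D\,L(x,u_1,a_1)\cdots L(x,u_N,a_N)$ and regard $D$ as rescaling the first and second rows by $\theta_1$ and $\theta_2$; here $N$ is even and equals twice the degree of $A(x)$.

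First I would compute the determinant. By \eqref{detLq} each factor has $\det L(x,u_i,a_i)=1-x/a_i$ and $\det D=\theta_1\theta_2$, so multiplicativity gives $\det A(x)=\theta_1\theta_2\prod_{i=1}^{N}(1-x/a_i)$. As $N$ is even this equals $(\theta_1\theta_2/\prod_i a_i)\prod_i(x-a_i)$, so the roots of $\det A$ are exactly $a_1,\dots,a_N$ and the leading constant is $\kappa_1\kappa_2$ once we verify $\kappa_1\kappa_2=\theta_1\theta_2/\prod_i a_i$. Cancelling the odd/even products in \eqref{kappavals} confirms this identity, which is precisely the determinantal constraint \eqref{qgenprodcon}.

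Next I would evaluate $A_0=A(0)$. At $x=0$ every factor in \eqref{qfactor} becomes the unipotent upper-triangular matrix $\begin{pmatrix}1 & u_i\\ 0 & 1\end{pmatrix}$, so $A(0)$ is upper triangular with diagonal $(\theta_1,\theta_2)$ supplied by $D$. Because $\theta_1\neq\theta_2$, a constant upper-triangular gauge transformation $G$ diagonalises it, $G A(0)G^{-1}=\operatorname{diag}(\theta_1,\theta_2)$; applying this constant conjugation to all of $A(x)$ brings $A_0$ to the form required by $\mathcal{M}_q$ while leaving the determinant and the eigenvalues of every coefficient untouched.

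The substantive step is the leading coefficient. Writing the $(p,q)$ entry of $L(x,u_1,a_1)\cdots L(x,u_N,a_N)$ as a sum over sequences $p=s_0,\dots,s_N=q$ weighted by the matrix entries---of which only the $(2,1)$ entry $x/(a_iu_i)$ carries a power of $x$---the top power of $x$ is produced by the unique alternating sequence, giving diagonal leading entries $\prod_{i\ \mathrm{odd}}u_i/\prod_{j\ \mathrm{even}}a_ju_j$ and $\prod_{i\ \mathrm{even}}u_i/\prod_{j\ \mathrm{odd}}a_ju_j$, with the $(1,2)$ entry of strictly lower degree. Left multiplication by $D$ then makes the leading coefficient $A_{m+1}$ lower triangular with diagonal exactly $(\kappa_1,\kappa_2)$ from \eqref{kappavals}, so its eigenvalues are $\kappa_1$ and $\kappa_2$; these survive the conjugation by $G$. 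A cross-multiplication of \eqref{kappavals} shows that \eqref{con2q} is equivalent to $\kappa_1\neq\kappa_2$, which forces $A_{m+1}$ to be semisimple, as $\mathcal{M}_q$ demands. Combining the three computations, the conjugated tuple $(A_0,\dots,A_{m+1})$ has $A_0=\operatorname{diag}(\theta_1,\theta_2)$, leading coefficient with distinct eigenvalues $\kappa_1,\kappa_2$, and the prescribed determinant, and hence lies in $\mathcal{M}_q(a_1,\dots,a_N;\kappa_1,\kappa_2;\theta_1,\theta_2)$. I expect the only delicate point to be confirming that each diagonal leading entry arises from a single alternating path, so that no cancellation occurs and the triangularity of $A_{m+1}$ is genuine; the determinant and constant-term computations and the semisimplicity argument are routine.
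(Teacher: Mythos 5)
Your proposal is correct and follows essentially the same route as the paper's proof: read off the determinant from \eqref{detLq}, observe that $A(0)$ is upper triangular with diagonal $(\theta_1,\theta_2)$ while the leading coefficient is lower triangular with diagonal $(\kappa_1,\kappa_2)$, and invoke the two inequality constraints to guarantee semisimplicity --- the paper simply states these expansions, where you additionally justify the triangular structure by the path-counting argument and make the diagonalizing constant gauge explicit. The only point worth flagging is that your reading of \eqref{con2q} as $\kappa_1\neq\kappa_2$ silently corrects an evident typo in the displayed condition (the second product should run over even indices), which is indeed the intended meaning.
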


\begin{proof}
The property \eqref{detLq} is suf\/f\/icient to tell us
\begin{gather}
\label{detq} \det A(x) = \theta_1\theta_2 \left( 1- \frac{x}{a_1} \right) \cdots \left( 1- \frac{x}{a_N} \right),
\end{gather}
where expansions around $x = \infty$ and $x = 0$ are
\begin{gather*}
A(x) = x^{m+1} \begin{pmatrix} \kappa_1 & 0 \\ r_{1,2} & \kappa_2 \end{pmatrix} + O\big(x^m\big),\\
A(x) = \begin{pmatrix} \theta_1 & \theta_1 \left( \sum\limits_{i = 0}^N u_i \right) \\
0 & \theta_2 \end{pmatrix} + O(x),
\end{gather*}
where the values of $\kappa_1$ and $\kappa_2$ are as above and
\begin{gather}\label{r12q}
r_{1,2} = \frac{\theta_2 \kappa_1}{\theta_1 u_1} \sum_{j = 1}^{N-1} \prod_{i=1}^{j} \left(\frac{a_{i+1}}{a_i}\right)^{\frac{1 - (-1)^i}{2}}\left( \frac{u_i}{u_{i+1}}\right)^{(-1)^i}.
\end{gather}
The constraints that both $\theta_1$ and $\theta_2$ and \eqref{con2q} are suf\/f\/icient (but not necessary) to ensure that~$A_0$ and~$A_{m+1}$ are semisimple.
\end{proof}

By a similar counting argument to the $h$-dif\/ference case, we may show that matrices taking the form given by~\eqref{prodform} and $\mathcal{M}_q(a_1,\ldots, a_N; \kappa_1, \kappa_2; \theta_1, \theta_2)$ both describe algebraic varieties of dimension~$2m$ with birational maps between the two. This justif\/ies that we may parameterize our discrete isomonodromic deformations in terms of actions on matrices of the form~\eqref{prodform}.

As we mentioned above, the conditions that $\theta_1= \theta_2$ or equality holds in \eqref{con2q} are not necessary for $A_0$ and $A_{m+1}$ to be semisimple, as requires that the matrix $A_0$ is diagonalizable, which amounts to requiring that $A_0$ is diagonal, which imposes the constraint
\begin{gather*}
\sum_{i=0}^N u_i = 0.
\end{gather*}
On the other hand, if equality holds in \eqref{con2q}, we require that $r_{1,2} = 0$. If $\theta_1 = \theta_2$ then the case of $m = 1$ has too many constraints to be interesting, hence, it is more natural to consider $m=2$ to be the f\/irst interesting case. Similarly, if both $\theta_1 = \theta_2$ and equality holds in \eqref{con2q}, then we have an additional constraint, making $m=3$ the f\/irst interesting case for similar reasons.

\begin{lem}\label{lem:qcom}
Matrices of the form of \eqref{qdifffactor} satisfy the commutation relation
\begin{gather*}
L(x,u_i,a_i)L(x,u_j,a_j) = L(x,\tilde{u}_j,a_j)L(x,\tilde{u}_i,a_i),
\end{gather*}
where the map $\left(u_i,u_j \right) \to \left(\tilde{u}_i,\tilde{u}_j\right)$ is given by
\begin{gather}\label{commD}
\tilde{u}_i = \frac{a_i u_i(u_i + u_j)}{a_i u_i + a_j u_j},\qquad \tilde{u}_j = \frac{a_j u_j(u_i + u_j)}{a_i u_i + a_j u_j} .
\end{gather}
\end{lem}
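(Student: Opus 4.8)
The plan is to prove the identity by direct matrix multiplication, using the determinant identity \eqref{detLq} to keep the verification short. Writing out the left-hand side first, I would compute
\[
L(x,u_i,a_i)L(x,u_j,a_j) = \begin{pmatrix} 1 + \dfrac{u_i x}{a_j u_j} & u_i + u_j \\ \dfrac{x}{a_i u_i} + \dfrac{x}{a_j u_j} & 1 + \dfrac{u_j x}{a_i u_i} \end{pmatrix},
\]
and then expand the right-hand side $L(x,\tilde u_j,a_j)L(x,\tilde u_i,a_i)$ with $\tilde u_i,\tilde u_j$ treated as unknowns. Each entry is affine in $x$, and by \eqref{detLq} together with multiplicativity of the determinant both products have determinant $(1-x/a_i)(1-x/a_j)$; this equality of determinants is what lets me avoid expanding the messiest entry by hand.

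Next I would extract scalar conditions by comparing entries. The $(1,2)$-entries force the additive relation $\tilde u_i + \tilde u_j = u_i + u_j$, which is immediate from the proposed formulas since $a_iu_i(u_i+u_j) + a_ju_j(u_i+u_j) = (u_i+u_j)(a_iu_i+a_ju_j)$. Comparing the coefficient of $x$ in the $(1,1)$-entries yields a multiplicative relation fixing the ratio $\tilde u_i/\tilde u_j$ in terms of $a_iu_i$ and $a_ju_j$; the $(2,2)$-entries give the reciprocal relation, so this is not an independent condition. The additive and multiplicative relations together determine the pair $(\tilde u_i,\tilde u_j)$ uniquely, and solving them reproduces \eqref{commD}. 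Finally, since the two matrices already agree in their determinants and in the entries $(1,1)$, $(1,2)$ and $(2,2)$, and since the common $(1,2)$-entry $u_i+u_j$ is generically nonzero, equality of the remaining $(2,1)$-entries is automatic and need not be checked directly.

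Conceptually, the same relations arise by tracking the rank-one data of the product at its singular points. By \eqref{detLq} the product degenerates exactly at $x=a_i$ and $x=a_j$, and \eqref{qinv} shows that at each singular point the product inherits its image from the left factor and its kernel from the right factor; \eqref{kerq} and \eqref{Imq} then identify these vectors explicitly in terms of the relevant $u$-parameter. Demanding that the refactored product $L(x,\tilde u_j,a_j)L(x,\tilde u_i,a_i)$ carry the same image and kernel vectors pins down $\tilde u_i$ and $\tilde u_j$ and gives a gauge-invariant derivation of the formulas.

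I do not expect any genuine obstacle here; the content is a routine $2\times2$ computation. The only place that demands care — and the one place where an indexing error can realistically creep in — is the bookkeeping of which transformed parameter attaches to which singular point $a_i$ versus $a_j$ once the factors are interchanged. I would cross-check this pairing against the image/kernel computation of the previous paragraph before committing to the final assignment recorded in \eqref{commD}.
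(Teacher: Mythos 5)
Your strategy---direct $2\times2$ multiplication, matching entries, with the determinant identity \eqref{detLq} disposing of the last entry---is sound, and it is in fact the only proof available: the paper states this lemma without proof, merely remarking afterwards that the map is the quadrirational Yang--Baxter map $F_{\rm III}$ of Adler--Bobenko--Suris. Your shortcut for the $(2,1)$-entry is also valid: two $2\times2$ matrices with equal determinants that agree in the $(1,1)$, $(1,2)$ and $(2,2)$ entries, with common $(1,2)$-entry nonzero, must agree everywhere.

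The genuine gap sits exactly at the step you deferred. Carry out your own plan: the $(1,1)$-entries of the two products are $1+\frac{u_i}{a_ju_j}x$ and $1+\frac{\tilde u_j}{a_i\tilde u_i}x$, so the multiplicative relation is $a_iu_i\tilde u_i=a_ju_j\tilde u_j$, i.e.\ $\tilde u_i/\tilde u_j=a_ju_j/(a_iu_i)$. Combined with $\tilde u_i+\tilde u_j=u_i+u_j$ this forces
\[
\tilde u_i=\frac{a_ju_j(u_i+u_j)}{a_iu_i+a_ju_j},\qquad
\tilde u_j=\frac{a_iu_i(u_i+u_j)}{a_iu_i+a_ju_j},
\]
which is \eqref{commD} with the roles of $i$ and $j$ interchanged on the right-hand sides, \emph{not} \eqref{commD} as printed. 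So your claim that solving the two relations ``reproduces \eqref{commD}'' is false; worse, it masks the fact that the statement as printed is internally inconsistent: with the product identity written as $L(x,u_i,a_i)L(x,u_j,a_j)=L(x,\tilde u_j,a_j)L(x,\tilde u_i,a_i)$, formula \eqref{commD} cannot hold (for $a_i=a_j$ it would give $\tilde u_k=u_k$, asserting that the two factors commute, which they visibly do not). The indexing worry you flagged at the end is therefore not hypothetical---it is precisely where the printed lemma goes wrong, and your proof is incomplete until you resolve it. Your own image/kernel cross-check settles it: by \eqref{kerq} the kernel of the right-hand product at $x=a_i$ is spanned by $(-\tilde u_i,1)^{T}$, and matching this against $L(a_i,u_j,a_j)^{-1}\ker L(a_i,u_i,a_i)$ yields $\tilde u_i\propto a_ju_j$, confirming the swapped formulas. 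The same convention---that the tilde-variable attached to $a_i$ is a deformation of $u_j$---is what appears in the $h$-difference analogue, Lemma \ref{com:diff} (where $\tilde u_i=u_j+\cdots$), and in the $s_i$-action displayed immediately after the present lemma, so the typo is in \eqref{commD} rather than in the product identity.
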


Once again, this map satisf\/ies the Yang--Baxter property in that if we def\/ine $R_{i,j}$ in the same manner as~\eqref{RijYB} then~\eqref{YangBaxter} holds. In the classif\/ication of quadrirational Yang--Baxter maps~\eqref{commD} appears as~$F_{\rm III}$~\cite{Adler2003}.

In the same manner as the previous section, it is useful to utilize Lemma~\ref{lem:qcom} to def\/ine the action of $S_N$ on $A(x)$. Given a permutation, $\sigma \in S_N$, we denote the action of $\sigma$ on the~$u_i$ and~$a_i$ by~$s_{\sigma} u_i$ and~$s_{\sigma} a_i$. Following the notation from the previous section, we specify the action of the generators is computed using~\eqref{lem:qcom} to be
\begin{alignat*}{3}
& s_i \colon \ u_i \to \frac{a_{i+1}u_{i+1}(u_i + u_{i+1})}{a_i u_i + a_{i+1} u_{i+1}}, \qquad && s_i \colon \ a_i = a_{i+1},&\\
& s_i \colon \ u_{i+1} \to \frac{a_{i+1}u_{i+1}(u_i + u_{i+1})}{a_i u_i + a_{i+1} u_{i+1}}, \qquad && s_i \colon \ a_{i+1} = a_i.&
\end{alignat*}
Once again, the ef\/fect of $s_\sigma$ is trivial on $A(x)$. This action and \eqref{commDL} will be suf\/f\/icient to express the discrete isomonodromic deformations. We wish to specify the transformation whose action on the parameters is
\begin{gather*}
T_i \colon \ a_j \to \begin{cases}
qa_i & \text{if $i = j$},\\
a_j & \text{if $i\neq j$},
\end{cases}
\end{gather*}
and whose action on the $u_i$ is to be specif\/ied. Once again, the matrices that def\/ine the elementary Schlesinger transformations are of the form of~$L(x,u,a)$. The most basic transformation is specif\/ied in terms of the left-most factor.

\begin{prop}
The matrix $R(x) = L(x/q,u_1,a_1)^{-1} D^{-1}$ in \eqref{Atildeev} defines a birational map between algebraic varieties
\begin{gather*}
T_1 \colon \ \mathcal{M}_q(a_1,\ldots, a_N; \kappa_1, \kappa_2 ; \theta_1, \theta_2) \to \mathcal{M}_q(a_1,\ldots, a_N; \kappa_2, \kappa_1/q; \theta_1, \theta_2).
\end{gather*}
The effect of $T_1$ on the $u_i$ variables is given by
\begin{gather}\label{qtrans1}
T_1 u_k = \begin{cases}
\dfrac{qa_1u_{1,k}(u_k \theta_2 + \theta_1 u_{1,k})}{a_k u_k\theta_2 + q a_1 \theta_1 u_{1,k}} & \text{for $k = 2, \ldots, N$},\\
u_{1,1} & \text{for $k = 1$},
\end{cases}
\end{gather}
where
\begin{gather*}
u_{1,k-1} =
\frac{a_ku_k\theta_2(u_k \theta_2 + \theta_1 u_{1,k})}{\theta_1(a_k u_k\theta_2 + q a_1 \theta_1 u_{1,k})}
u_1 \qquad \text{for $k= N+1$}.
\end{gather*}
\end{prop}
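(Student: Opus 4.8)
The plan is to mirror the proof of the asymmetric $h$-difference translation, exploiting the factored form of $A(x)$ and the structural identities \eqref{detLq}--\eqref{Imq} together with the two commutation relations \eqref{commDL} and \eqref{commD}. First I would write the compatibility relation \eqref{comp} as $\tilde A(x) = \sigma_q R(x)\,A(x)\,R(x)^{-1}$ and substitute $R(x) = L(x/q,u_1,a_1)^{-1}D^{-1}$, so that $\sigma_q R(x) = L(x,u_1,a_1)^{-1}D^{-1}$ and $R(x)^{-1} = D\,L(x/q,u_1,a_1)$. Since $A(x) = D\,L(x,u_1,a_1)\,L(x,u_2,a_2)\cdots L(x,u_N,a_N)$, the leading factors telescope and leave
\begin{gather*}
\tilde A(x) = L(x,u_2,a_2)\cdots L(x,u_N,a_N)\,D\,L(x/q,u_1,a_1).
\end{gather*}
Identity \eqref{qident2} rewrites $L(x/q,u_1,a_1) = L(x,u_1,qa_1)$, so the $a_1$-singularity migrates to the opposite end with $a_1\mapsto qa_1$, and \eqref{commDL} transports $D$ back to the front at the cost of scaling the intermediate $u_k$ by $\theta_2/\theta_1$. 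This yields an explicit product in canonical form over the shifted set of singular points $\{qa_1,a_2,\ldots,a_N\}$.

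Next I would check membership in the target moduli space. The determinant identity \eqref{detLq} gives $\det\tilde A(x) = \theta_1\theta_2\,(1-x/(qa_1))\prod_{k=2}^{N}(1-x/a_k)$, which confirms the shift $a_1\mapsto qa_1$; since the constant term is still upper triangular with diagonal $(\theta_1,\theta_2)$, the parameters $\theta_1,\theta_2$ are preserved (up to the constant gauge normalising $A_0$ to $\operatorname{diag}(\theta_1,\theta_2)$, exactly as in Proposition~\ref{qalgvariety}). Expanding the new product around $x=\infty$ and comparing with \eqref{kappavals} then gives $T_1\kappa_1 = \kappa_2$ and $T_1\kappa_2 = \kappa_1/q$, placing $\tilde A$ in $\mathcal{M}_q(\ldots;\kappa_2,\kappa_1/q;\theta_1,\theta_2)$.

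To extract the action on the $u_i$, I would read the $u$-coordinates of the canonical factorisation of $\tilde A$ from its kernel vectors \eqref{kerq} at the singular points, exactly as in the $h$-difference case. Using the $S_N$-action generated by \eqref{commD} (together with the $D$-commutation \eqref{commDL}), I transport the migrated $qa_1$-factor one step at a time into canonical position; after it has passed the factors indexed $N,N-1,\ldots,k+1$ I denote its running $u$-value by $u_{1,k}$, initialised by $u_{1,N}=u_1$. A single application of the swap rule then computes the kernel of the relevant partial product at $x=a_k$, producing the recursion for $u_{1,k-1}$, the formula for $T_1 u_k$ ($k\ge 2$) in \eqref{qtrans1}, and finally $T_1 u_1 = u_{1,1}$ for the factor that ends up first. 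Birationality is then automatic: the telescoping, the identity \eqref{qident2}, the $D$-commutation \eqref{commDL} and each Yang--Baxter swap are individually invertible and rational, so $T_1$ is birational with inverse obtained by reversing the steps.

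The step I expect to be the main obstacle is this last inductive bookkeeping. Unlike the $h$-difference translation, every commutation here is entangled with the multiplicative scaling $\theta_2/\theta_1$ of \eqref{commDL}, so the auxiliary variables $u_{1,k}$ carry $\theta_1,\theta_2$-weights that must be tracked with care; verifying that the successive applications of \eqref{commD} collapse to the compact rational expressions \eqref{qtrans1}, rather than to an unsimplified nested form, is the delicate computational point.
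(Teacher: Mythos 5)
Your proposal is correct and follows essentially the same route as the paper's proof: telescope \eqref{comp} to obtain $\tilde A(x)=L(x,u_2,a_2)\cdots L(x,u_N,a_N)\,D\,L(x,u_1,qa_1)$ via \eqref{qident2}, commute $D$ back to the front with \eqref{commDL}, read off the determinant and the leading coefficients, and then inductively extract the transformed $u_i$ from kernel vectors using the $S_N$-action and \eqref{commD} with $u_{1,N}=u_1$. One minor point: a direct expansion of the rewritten product gives $T_1\kappa_1=\kappa_2/q$ and $T_1\kappa_2=\kappa_1$ (as in the paper's own computation), rather than $(\kappa_2,\kappa_1/q)$ as you assert; since $\kappa_1\kappa_2\mapsto\kappa_1\kappa_2/q$ either way, this is only a labelling discrepancy inherited from the proposition's statement, not a gap in your argument.
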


\begin{proof}
This proposition follows in a similar manner, in that we identify $\tilde{A}(x)$ with
\begin{gather*}
\tilde{A}(x) = L(x,u_2,a_2) \cdots L(x,u_{N},a_{N})DL(x,u_1,q a_1),
\end{gather*}
using \eqref{comp} and \eqref{qident2}, which allows us to compute the determinant. Secondly, we note that we may use \eqref{commDL} to show
\begin{gather*}
\tilde{A}(x) = DL\left(x,\frac{\theta_2u_2}{\theta_1},a_2\right) \cdots L\left(x,\frac{\theta_2u_N}{\theta_1},a_{N}\right) L(x,u_1,q a_1),
\end{gather*}
which is of the form in Proposition \ref{qalgvariety}, which shows
\begin{gather*}
T_1 \kappa_1 = \theta_1 (qa_1 u_1)^{-1} \prod_{i\ \mathrm{ even}} \left( \frac{\theta_2u_i}{\theta_1} \right)\prod_{i\ \mathrm{ odd}, \; i \neq 1} \left(a \frac{\theta_2u_i}{\theta_1} \right)^{-1}
 = \frac{\theta_2}{q} \prod_{i\ \mathrm{ even}} u_i\prod_{i\ \mathrm{ odd}} \left(a u_i \right)^{-1} = \frac{\kappa_2}{q},\\
T_1 \kappa_2 = \kappa_1.
\end{gather*}
This shows that the image of $T_1$ is indeed in $\mathcal{M}_q(a_1,\ldots, a_N; \kappa_2, \kappa_1/q; \theta_1, \theta_2)$. To determine the ef\/fect on the $u_i$ variables, the only dif\/ference in the inductive step is that we need to use~\eqref{commD} in combination with Lemma~\ref{lem:qcom}. We compute the kernel of
\begin{gather*}
 L(x,u_2,a_2) \cdots L(x,u_{k},a_{k}) D L(x,u_{1,k},q a_1),
\end{gather*}
using the action of $S_N$ and \eqref{commD}, which inductively provides us with~\eqref{qtrans1} with $u_{1,N} = u_1$.
\end{proof}

The $T_i$ transformations may be obtain through conjugation by the action of $S_N$. This transformation is also not a transformation of the form specif\/ied in Theorem~\ref{qlattice}, since it swaps the role of $\kappa_1$ and $\kappa_2$. To def\/ine the generators of the $q$-Garnier system, we compute $T_1 \circ T_2$, which may be used to compute~$T_{i,j}$.

\begin{prop}
The matrix $R(x) = L(x/q,u_2,a_2)^{-1} L(x/q,u_1,a_1)^{-1}D^{-1}$ in \eqref{comp} defines a~birational map between algebraic varieties
\begin{gather*}
T_{1,2} \colon \ \mathcal{M}_q(a_1, \ldots, a_N; \kappa_1, \kappa_2; \theta_1, \theta_2) \to \mathcal{M}_q(qa_1, qa_2, \ldots, a_N; \kappa_1/q, \kappa_2/q; \theta_1, \theta_2).
\end{gather*}
The effect of $T_{1,2}$ on the $u_k$ coordinates is given by
\begin{gather}
\label{qT12} T_{1,2} u_k = \begin{cases}
u_{1,2} & \text{for $k = 1$},\\
u_{2,2} & \text{for $k = 2$},\\
\dfrac{a_2u_ku_{2,k-1}u_{2,k}\theta_2}{a_1u_{1,k}u_{1,k-1}\theta_1} & \text{for $k = 3, \ldots, N$},
\end{cases}
\end{gather}
where
\begin{gather*}
u_{1,k-1} = \frac{\theta _2 a_k u_k \left(\theta _1 u_{1,k}+\theta _2 u_k\right)}{\theta _1 \left(qa_1 \theta _1 u_{1,k}+\theta _2 a_k u_k\right)},\\
u_{2,k-1} =\frac{a_1 u_{1,k-1} u_{1,k} \left(\theta _1 \left(u_{1,k}-u_{1,k-1}+u_{2,k}\right)+\theta _2
 u_k\right)}{a_1 \theta _1 u_{1,k-1} u_{1,k}+a_2 \theta _2 u_k u_{2,k}}.
\end{gather*}
\end{prop}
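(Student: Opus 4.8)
The plan is to mirror the proof of the preceding proposition for $T_1$ (and its $h$-difference counterpart, Proposition~\ref{tranha1a2}): I would first exhibit $\tilde{A}(x)$ as an explicit reordered product of elementary factors, then confirm from its determinant and its leading behaviour at $x=\infty$ that it lands in the stated moduli space, and finally extract the action on the $u_k$ by an inductive kernel computation driven by the Yang--Baxter map \eqref{commD}.

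First I would rearrange \eqref{comp}. Writing $A(x) = D L(x,u_1,a_1) L(x,u_2,a_2)\cdots L(x,u_N,a_N)$ and noting that $R(qx) = L(x,u_2,a_2)^{-1}L(x,u_1,a_1)^{-1}D^{-1}$, the product $R(qx)A(x)$ telescopes: the two inverse $L$-factors together with $D^{-1}$ cancel the leading $D\,L(x,u_1,a_1)L(x,u_2,a_2)$, leaving $L(x,u_3,a_3)\cdots L(x,u_N,a_N)$. Multiplying on the right by $R(x)^{-1} = D\,L(x/q,u_1,a_1)L(x/q,u_2,a_2)$ and applying \eqref{qident2} in the form $L(x/q,u,a) = L(x,u,qa)$ gives
\[
\tilde{A}(x) = L(x,u_3,a_3)\cdots L(x,u_N,a_N)\, D\, L(x,u_1,qa_1)L(x,u_2,qa_2).
\]
Commuting $D$ leftward through $L(x,u_3,a_3),\ldots,L(x,u_N,a_N)$ by repeated use of \eqref{commDL} scales those factors' parameters by $\theta_2/\theta_1$ and casts $\tilde{A}(x)$ into the standard form of Proposition~\ref{qalgvariety}, now with the factors originally at $a_1,a_2$ relocated to the right and carrying the shifted singular points $qa_1,qa_2$.

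Next I would identify the image moduli space. The determinant follows at once from \eqref{detLq}, confirming that among the singular points only $a_1\mapsto qa_1$ and $a_2\mapsto qa_2$ move. For the exponents I would read off the leading coefficient at $x=\infty$ and apply \eqref{kappavals} to the reordered product: since $N$ is even and the relocation shifts every surviving index by two, all parities are preserved, so the $\theta_2/\theta_1$ factors introduced by commuting $D$ enter $\kappa_1$ and $\kappa_2$ in equal numbers in numerator and denominator and cancel, leaving only the single factor of $q^{-1}$ produced by $a_2\mapsto qa_2$ in $\kappa_1$ and by $a_1\mapsto qa_1$ in $\kappa_2$. This yields $T_{1,2}\kappa_1 = \kappa_1/q$ and $T_{1,2}\kappa_2 = \kappa_2/q$, placing $\tilde{A}(x)$ in $\mathcal{M}_q(qa_1,qa_2,a_3,\ldots,a_N;\kappa_1/q,\kappa_2/q;\theta_1,\theta_2)$.

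The substantive part, and the main obstacle, is the recursion for the transformed $u_k$. As in Proposition~\ref{tranha1a2} one reads $T_{1,2}u_k$ off the kernel of $\tilde{A}$ at its singular point by using the $S_N$-action of Lemma~\ref{lem:qcom} to move the relevant factor into the rightmost slot and applying \eqref{kerq}; for $k\ge 3$ this requires commuting the factor singular at $a_k$ past the two relocated factors at $qa_1$ and $qa_2$. Each such commutation is an application of the map \eqref{commD}, and the two relocated factors generate the coupled auxiliary sequences $u_{1,k}$ and $u_{2,k}$, initialized by $u_{1,N}=u_1$ and $u_{2,N}=u_2$ and satisfying the stated recursions. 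The delicate step is the simultaneous bookkeeping of these two sequences---correctly tracking both the $\theta_1,\theta_2$ weights left behind by the commutation of $D$ and the order in which \eqref{commD} is applied as the singular factor threads past the $a_1$- and $a_2$-factors---which is precisely what produces the multiplicative formula \eqref{qT12}; the terminal values $u_{1,2},u_{2,2}$ then give the $k=1,2$ cases directly. Finally, since $R(x)$ is given explicitly with a rational inverse and each application of \eqref{commD} is birational, the resulting map is birational, consistent with the general existence statement of Theorem~\ref{qlattice}.
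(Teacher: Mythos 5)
Your proposal is correct and follows essentially the same route as the paper, which for this proposition simply states that the induction proceeds as in Proposition~\ref{tranha1a2}: rearrange \eqref{comp} to exhibit $\tilde{A}(x)=L(x,u_3,a_3)\cdots L(x,u_N,a_N)\,D\,L(x,u_1,qa_1)L(x,u_2,qa_2)$, read off the determinant and the $\kappa_i$ from the reordered product via \eqref{commDL} and \eqref{kappavals}, and obtain the $u_k$-recursion by the inductive kernel computation through the $S_N$-action and \eqref{commD} with seeds $u_{1,N}=u_1$, $u_{2,N}=u_2$. In fact you supply more detail (notably the parity bookkeeping showing the $\theta_2/\theta_1$ factors cancel in $\kappa_1$, $\kappa_2$) than the paper's own one-line justification.
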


The induction follows in the same way as it did for Proposition~\ref{tranha1a2}. In a similar way, we specify that that what we call the $q$-Garnier system is the system of transformations whose action on the parameters is specif\/ied by
\begin{gather}
T_{i,j} \colon \ \mathcal{M}_q (a_1, \ldots, a_i, \ldots, a_j, \ldots, a_N;\kappa_1,\kappa_2;\theta_1,\theta_2)\nonumber\\
\hphantom{T_{i,j} \colon}{} \ {}\to \mathcal{M}_q(a_1, \ldots, qa_i, \ldots, qa_j, \ldots, a_N;\kappa_1/q,\kappa_2/q;\theta_1,\theta_2),\label{qGarnierAction}
\end{gather}
which is given by
\begin{gather*}
T_{i,j} = \sigma_{(1i)(2j)} \circ T_{1,2} \sigma_{(1i)(2j)}.
\end{gather*}
The simplest case, when $m=1$, is the $q$-analogue of the sixth Painlev\'e equation.

The following arises as a consequence of Theorem~\ref{qlattice}.

\begin{cor}
The set of transformations of the form $T_{i,j}$ satisfy the following
\begin{enumerate}\itemsep=0pt
\item[$1.$] The action is symmetric in $i$ and $j$, i.e.,
\begin{gather*}
T_{i,j} = T_{j,i}.
\end{gather*}

\item[$2.$] These actions commute, i.e.,
\begin{gather*}
T_{i_1, j_1} \circ T_{i_2, j_2} = T_{i_2, j_2} \circ T_{i_1, j_1}.
\end{gather*}
\end{enumerate}
\end{cor}

\subsection[Symmetric $q$-Garnier system]{Symmetric $\boldsymbol{q}$-Garnier system}

We now impose the symmetry constraint that the solutions satisfy $Y(x) = Y(1/x)$. The consistency of~\eqref{lineartau} requires that
\begin{gather*}
A(x)A(1/(qx))=I.
\end{gather*}
We assume that $A(x)$ takes the form of \eqref{symmqdiff} where $B(x)$ is given by the product of $L$-matrices as
\begin{gather*}
B(x) = L_1(x) \cdots L_{N'}(x),
\end{gather*}
where $L_i$ is given by \eqref{qfactor} where the diagonal entry cancels, hence, without loss of generality, we may choose $D = I$. Using~\eqref{qinv}, we may write $A(x)$ as
\begin{gather*}
A(x) = \left[\prod_{i=1}^{N'} \frac{1}{1-x/a_i}\right] L\left(\frac{1}{x},-u_{N'},qa_{N'}\right) \cdots L\left(\frac{1}{x},-u_{1},qa_{1}\right) \\
\hphantom{A(x) =}{} \times
 L(x,u_{1},a_{1}) \cdots L(x,u_{N'},a_{N'}) ,
\end{gather*}
which def\/ines an matrix in terms of a product of $L$-matrices.

\begin{prop}
The rational matrix
\begin{gather*}
R_l(x) = \begin{pmatrix} x+\dfrac{1}{x} - \dfrac{1}{a_1} - \dfrac{1}{a_2}\!\! & 0 \\ 0 & x+\dfrac{1}{x} -a_1 - a_2 \end{pmatrix}
 + \frac{(1-a_1a_2)}{a_2u_2}\! \begin{pmatrix} -u_1 & u_1(u_1 + u_2) \vspace{1mm}\\ -1- \dfrac{a_2u_2}{a_1 u_1}\!\! & u_1 \end{pmatrix},
\end{gather*}
defines a birational transformation
\begin{gather*}
E_{1,2}\colon \ \mathcal{M}_q(a_1,a_2, a_3, \ldots, a_{N'}; \kappa_1, \kappa_2; 1, 1) \\
\hphantom{E_{1,2}\colon}{} \ {} \to \mathcal{M}_q\left(\frac{1}{a_1},\frac{1}{a_2}, a_3, \ldots, a_{N'}; a_1a_2\kappa_1, a_1a_2 \kappa_2; 1, 1\right) ,
\end{gather*}
via \eqref{leftRB} where $\lambda(x) = (x-a_1)^{-1}(x-a_2)^{-1}$, whose effect on the $u_i$ variables is given by
\begin{gather*}
E_{1,2} u_1 = \frac{a_1 u_1 (u_1 + u_2)}{a_1 u_1 + a_2 u_2},\qquad
E_{1,2} u_2 = \frac{a_2 u_2 (u_1 + u_2)}{a_1 u_1 + a_2 u_2}.
\end{gather*}
\end{prop}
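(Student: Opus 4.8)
The plan is to treat this exactly as the symmetric $h$-difference lemmas were treated: to recognise $R_l(x)$ as a $\tau_2$-invariant matrix implementing the left deformation \eqref{leftRB}, so that by the discussion following \eqref{symlax} it agrees with an honest deformation of the form \eqref{comp} and, by the uniqueness asserted in Theorem~\ref{qlattice}, automatically carries the symmetric factorization \eqref{symmqdiff} to another one of the same type. With that in place only the action on $a_i$, $\kappa_i$ and $u_i$ remains to be identified. The first step is therefore to verify $R_l(x) = R_l(1/x)$, i.e.\ $\tau_2$-invariance. This is immediate: the sole $x$-dependence of $R_l$ enters through the scalar $x + 1/x$ along the diagonal, which is fixed by $x \mapsto 1/x$, while the second summand is a constant matrix.

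Next I would pin down the reflection $a_1, a_2 \mapsto 1/a_1, 1/a_2$ from determinants. A direct computation gives
\begin{gather*}
\det R_l(x) = \left(x + \frac{1}{x} - a_1 - \frac{1}{a_1}\right)\left(x + \frac{1}{x} - a_2 - \frac{1}{a_2}\right) = \frac{(x-a_1)(x-1/a_1)(x-a_2)(x-1/a_2)}{x^2},
\end{gather*}
which is the $\tau_2$-symmetric completion of the factors of $\det B(x) = \prod_i(1 - x/a_i)$ (using \eqref{detLq}) located at $a_1, a_2$. Since $\tilde B = \lambda R_l B$ with $\lambda = (x-a_1)^{-1}(x-a_2)^{-1}$, the zeros of $\det \tilde B$ at $a_1, a_2$ are replaced by zeros at $1/a_1, 1/a_2$, the residual factor at the $\tau_2$-fixed point $x = 0$ being trivial in the sense noted after \eqref{symlax}.

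To obtain the action on the $u_i$ I would argue as in the asymmetric propositions. As $R_l$ involves only $a_1, a_2, u_1, u_2$, the deformation alters only the first two factors, so it suffices to establish the factored identity
\begin{gather*}
\lambda(x) R_l(x) L(x,u_1,a_1) L(x,u_2,a_2) = \frac{1}{a_1 a_2 x}\, L\!\left(x, \tilde u_1, \tfrac{1}{a_1}\right) L\!\left(x, \tilde u_2, \tfrac{1}{a_2}\right),
\end{gather*}
where the scalar $(a_1 a_2 x)^{-1}$ is again the trivial factor and $\tilde u_1, \tilde u_2$ are the claimed images. Rather than expanding both sides, I would extract $\tilde u_1, \tilde u_2$ from the image and kernel relations \eqref{Imq} and \eqref{kerq} at the new singular points: since $\det R_l$ vanishes at $x = 1/a_1$ and $x = 1/a_2$, the matrix $R_l$ degenerates to rank one there, so at $x = 1/a_1$ the image of the left-hand side is $\operatorname{Im} R_l(1/a_1)$, which must be $\langle (\tilde u_1, 1)^{\mathrm{T}}\rangle$, while at $x = 1/a_2$ the kernel, carried back through $L(x,u_1,a_1)L(x,u_2,a_2)$, must be $\langle (-\tilde u_2, 1)^{\mathrm{T}}\rangle$. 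Using $R_l(1/a_i) = R_l(a_i)$ each reduces to a short rational computation yielding the stated $\tilde u_1, \tilde u_2$, which coincide with the commutation map \eqref{commD} of Lemma~\ref{lem:qcom}. A useful consistency check is that the residues at $x = 0$ match precisely when $u_1 + u_2 = \tilde u_1 + \tilde u_2$, which the stated formulas satisfy. Finally the shift $\kappa_i \mapsto a_1 a_2 \kappa_i$ follows by substituting $a_1 \mapsto 1/a_1$, $a_2 \mapsto 1/a_2$, $u_1 \mapsto \tilde u_1$, $u_2 \mapsto \tilde u_2$ into \eqref{kappavals} with $\theta_1 = \theta_2 = 1$; the telescoping products leave exactly a factor $a_1 a_2$, and birationality is clear since the $u$-map is rational with a rational inverse of the same shape.

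The step I expect to be the main obstacle is verifying that $\tilde B = \lambda R_l B$ is genuinely a product of the standard factors \eqref{qfactor} up to the trivial scalar, rather than merely a rational matrix with the correct determinant. Because $R_l$ has a pole at $x = 0$ (through $x + 1/x$) and grows linearly at $x = \infty$, one must check that after multiplication by $\lambda$ and the reflection $a_i \mapsto 1/a_i$ the only surviving singularities at the $\tau_2$-fixed points $0$ and $\infty$ are the trivial scalar ones. This is exactly where the precise coefficient $(1 - a_1 a_2)/(a_2 u_2)$ and the $x + 1/x$ structure forced by $\tau_2$-invariance are indispensable, and it is the one part of the argument not already delivered by the uniqueness in Theorem~\ref{qlattice}.
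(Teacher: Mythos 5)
Your proposal is correct, and it actually supplies more than the paper does: the paper's entire proof of this proposition is the sentence ``This is easy to verify directly,'' followed by exactly the two observations you open with, namely $R_l(x)=R_l(1/x)$ and the factorization of $\det R_l$. What you do differently is to replace brute-force expansion by a structured verification of the local identity
\begin{gather*}
\lambda(x)\,R_l(x)\,L(x,u_1,a_1)L(x,u_2,a_2)=\frac{1}{a_1a_2x}\,L\left(x,\tilde u_1,\frac{1}{a_1}\right)L\left(x,\tilde u_2,\frac{1}{a_2}\right),
\end{gather*}
reading $\tilde u_1$, $\tilde u_2$ off the rank-one degenerations of $R_l$ via \eqref{Imq} and \eqref{kerq}. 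This is methodologically faithful to the paper, since it is exactly how the asymmetric propositions (the inductive kernel computations behind $T_1$ and $T_{1,2}$) and Propositions~\ref{eq} and~\ref{fq} are proved there; it buys a conceptual derivation of the formulas, and in particular makes the coincidence with the commutation map \eqref{commD} of Lemma~\ref{lem:qcom} transparent rather than accidental. Two remarks. First, your determinant $\det R_l(x)=(x-a_1)(x-1/a_1)(x-a_2)(x-1/a_2)/x^{2}$ is the correct one; the formula printed after the proposition in the paper omits the $x^{-2}$, which does reappear (up to units) when $R_l$ is reused in the proof of Propositions~\ref{eq} and~\ref{fq}.

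Second, the ``main obstacle'' you flag is genuine --- the image/kernel analysis only determines what $\tilde u_1,\tilde u_2$ must be \emph{if} the factored form exists --- but it closes with two residue checks of precisely the same kind as the ones you already perform: the poles of $\lambda$ at $x=a_1,a_2$ cancel because $\ker R_l(a_1)=\langle(u_1,1)^{\mathrm{T}}\rangle=\operatorname{Im}L(a_1,u_1,a_1)$ and $\ker R_l(a_2)=L(a_2,u_1,a_1)\,\operatorname{Im}L(a_2,u_2,a_2)$, so that $R_l L_1L_2$ vanishes identically at $x=a_1$ and $x=a_2$. Once those poles are cancelled, $a_1a_2x\,\lambda(x)R_l(x)L_1(x)L_2(x)$ is a matrix polynomial of degree at most one in $x$, hence is determined by its value at $x=0$ and its coefficient of $x$; these are exactly the two asymptotic matchings you record (the trace condition $\tilde u_1+\tilde u_2=u_1+u_2$ at $x=0$, and the leading behaviour at $x=\infty$, which holds because $\tilde u_1/\tilde u_2=a_1u_1/(a_2u_2)$ and $1/(a_2\tilde u_1)+1/(a_1\tilde u_2)=1/(a_1u_1)+1/(a_2u_2)$). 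So the identity follows, the remaining factors $L_3\cdots L_{N'}$ are untouched, and your telescoping computation giving $\kappa_i\mapsto a_1a_2\kappa_i$ from \eqref{kappavals} is right.
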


This is easy to verify directly. Furthermore, we have that $R_l(x)= R_l(1/x)$ and that
\begin{gather*}
\det R_l(x) = (x-a_1)(x-a_2)\big(x-a_1^{-1}\big)\big(x-a_2^{-1}\big).
\end{gather*}
The transformation induced by right multiplication is given by the following proposition.

\begin{prop}The rational matrix
\begin{gather*}
R_r(x) = \begin{pmatrix} x+\dfrac{1}{qx} - \dfrac{1}{qa_{N'-1}} - \dfrac{1}{qa_{N'}} & 0 \\ 0 & x+\dfrac{1}{x} -a_{N'-1} - a_{N'} \end{pmatrix} \\
\hphantom{R_r(x) =}{} + \frac{(1-qa_{N'-1}a_{N'})}{qa_{N'-1}u_{N'-1}} \begin{pmatrix} -u_{N'} & -u_{N'}(u_{N'-1} + u_{N'}) \vspace{1mm}\\ 1 + \dfrac{a_{N'-1}u_{N'-1}}{a_{N'} u_{N'}} & u_{N'} \end{pmatrix}
\end{gather*}
defines a birational transformation
\begin{gather*}
F_{N',N'-1}\colon \ \mathcal{M}_q(a_1, \ldots,a_{N'-2},a_{n-2}, a_n; \kappa_1, \kappa_2; 1, 1) \\
\hphantom{F_{N',N'-1}\colon}{} \ {} \to \mathcal{M}_q\left(a_1, \ldots, a_{n-2}, 1/(qa_{n-1}), 1/q a_n; \kappa_1, \kappa_2; 1, 1\right),
\end{gather*}
via \eqref{rightRB} where $\lambda = x(1-x/(qa_{N'-1}))^{-1}(1-x/(qa_{N'}))^{-1}$, whose effect on the $u_i$ variables is given by
\begin{gather*}
F_{N',N'-1} u_{N'-1} = \frac{a_{N'-1} u_{N'-1} (u_{N'} + u_{N'-1})}{a_{N'-1} u_{N'-1} + a_{N'} u_{N'}},\\
F_{N',N'-1} u_{N'} = \frac{a_{N'} u_{N'} (u_{N'} + u_{N'-1})}{a_{N'-1} u_{N'-1} + a_{N'} u_{N'}}.
\end{gather*}
\end{prop}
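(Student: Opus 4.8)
The plan is to mirror the direct verification used for the companion matrix $R_l$. Since the deformation is effected by right-multiplication \eqref{rightRB}, namely $\tilde{B}(x)=\lambda(x)B(x)R_r(x)$, and since $R_r$ depends only on $u_{N'-1},u_{N'},a_{N'-1},a_{N'}$, it interacts with only the two right-most factors of the product $B(x)=L_1(x)\cdots L_{N'}(x)$, where $L_k(x)=L(x,u_k,a_k)$ is the factor \eqref{qdifffactor}. Writing $F=F_{N',N'-1}$ for brevity, the whole content of the proposition is the single $2\times2$ identity
\begin{gather*}
\lambda(x)\,L(x,u_{N'-1},a_{N'-1})\,L(x,u_{N'},a_{N'})\,R_r(x)
= L\!\left(x,\,Fu_{N'-1},\,\tfrac{1}{qa_{N'-1}}\right) L\!\left(x,\,Fu_{N'},\,\tfrac{1}{qa_{N'}}\right),
\end{gather*}
with $Fu_{N'-1},Fu_{N'}$ as stated. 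First I would establish this by expanding both sides as rational $2\times2$ matrices in $x$ and matching entries; the prescribed $\lambda$ is forced (up to an irrelevant constant) by comparing determinants via $\det L(x,u,a)=1-x/a$ from \eqref{detLq}.

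Granting the identity, the rest is immediate. The left-hand side exhibits $\tilde B$ as a product of factors of the form \eqref{qdifffactor} in which the two spectral parameters $a_{N'-1},a_{N'}$ have been replaced by $1/(qa_{N'-1}),1/(qa_{N'})$ while all other $a_k$ and $u_k$ are untouched, so Proposition \ref{qalgvariety} identifies $\tilde B$ with a point of the target variety. Matching the two sides directly yields the claimed rational expressions for $Fu_{N'-1}$ and $Fu_{N'}$, and I would record the (mildly surprising) point that, although $R_r$ is not itself of the form \eqref{qdifffactor}, the resulting map on $(u_{N'-1},u_{N'})$ is exactly the quadrirational map \eqref{commD} of Lemma \ref{lem:qcom}.

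To see that this right-multiplication is an admissible move for the symmetric system \eqref{symmqdiff}, I would check that $R_r$ carries the covariance under $\tau_1\colon x\mapsto 1/(qx)$ appropriate to right factors (just as $R_l(x)=R_l(1/x)$ encodes the $\tau_2$-covariance for left factors), so that the induced transformation of $A(x)=B(1/(qx))B(x)^{-1}$ again lies in the symmetric class; together with the uniqueness furnished by Theorem \ref{qlattice} this shows the required symmetry is preserved, while the factored form of $R_r^{-1}$ makes $F$ birational. Finally I would record $\det R_r(x)$, exactly as was done for $R_l$.

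The main obstacle is the first identity, and within it the precise role of $\lambda$. The off-diagonal correction in $R_r$, the term proportional to $(1-qa_{N'-1}a_{N'})/(qa_{N'-1}u_{N'-1})$, is what stops $R_r$ from being a single factor \eqref{qdifffactor}; the refactorization therefore cannot be read off from Lemma \ref{lem:qcom} and must be verified by hand. The delicate accounting is that $\lambda$ merely trades poles against roots of the determinant: one must confirm that, after the normalization by $\lambda$, the leading and constant coefficients of $\tilde B$ are arranged so that $\kappa_1,\kappa_2$ are genuinely unchanged and $\theta_1=\theta_2=1$ is preserved, rather than being rescaled as a naive application of the leading-coefficient formula \eqref{kappavals} to the new product would suggest.
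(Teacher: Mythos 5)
Your reduction of the proposition to a single $2\times 2$ refactorization identity in the two right-most factors is the right strategy, and you correctly identify the induced map on $(u_{N'-1},u_{N'})$ with \eqref{commD}. However, the identity you propose to verify is false, and this is a genuine gap rather than a computational detail. Write $a=a_{N'-1}$, $b=a_{N'}$, set $P(x)=L(x,u_{N'-1},a)L(x,u_{N'},b)$, $\tilde P(x)=L\big(x,Fu_{N'-1},\tfrac{1}{qa}\big)L\big(x,Fu_{N'},\tfrac{1}{qb}\big)$, and let $M(x)=(1-x/a)(1-x/b)\,P(x)^{-1}\tilde P(x)$, which is polynomial by \eqref{qinv}. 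Your identity $\lambda P R_r=\tilde P$ requires $R_r$ to be a scalar (rational-function) multiple of $M$. But expanding $M$ entrywise shows that the stated $R_r$ equals $M^*(x)/(qx)$, where $M^*$ is the \emph{adjugate} of $M$ (this identification also requires correcting the $(2,2)$ entry of the printed $R_r$ from $x+1/x$ to $x+1/(qx)$, without which even the asserted symmetry $R_r(x)=R_r(1/(qx))$ fails). By Cayley--Hamilton, $M^*$ is proportional to $M$ only if $M$ is traceless or proportional to the identity, and it is neither; hence \emph{no} scalar $\lambda$ makes your identity hold, and the entry-matching you plan would not close. Your determinant argument also misfires: it forces $\lambda=\pm qx\,(1-x/a)^{-1}(1-x/b)^{-1}$, with poles at $a,b$ rather than at the prescribed $qa,qb$. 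What is true is the reversed identity
\begin{gather*}
\tilde P(x)\,R_r(x)=\frac{(1-qax)(1-qbx)}{qx}\,P(x),
\end{gather*}
so right multiplication by the stated $R_r$, written in the \emph{source} variables, carries the transformed product back to the original one. Since $F_{N',N'-1}$ is an involution, the matrix implementing $B\mapsto\tilde B$ via \eqref{rightRB} is $R_r$ evaluated at the \emph{image} variables, equivalently the adjugate of the printed matrix up to the scalar $qx$. This is precisely the point the paper makes when proving Propositions~\ref{eq} and~\ref{fq}, where the relations are set up with cofactor matrices and declared ``equivalent to \eqref{leftRB} and \eqref{rightRB} applied to the transformed values'', and it is visible in the $h$-difference analogue, whose $R_r$ explicitly contains $F_{N',N'-1}u_{N'}$ in its entries. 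Your premise of mirroring the $R_l$ verification is exactly where this is lost: for left multiplication one needs $R_l\propto \tilde B B^{-1}$, and the stated $R_l$ in source variables does satisfy this; for right multiplication one needs $R_r\propto B^{-1}\tilde B$, and the stated matrix is, up to scalar, the \emph{inverse} of that.

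Your closing worry about $\kappa_1,\kappa_2$ is well founded, but the confirmation you plan cannot be carried out. Once the refactorization is set up correctly, $\tilde B$ is exactly a product of factors \eqref{qdifffactor} with $a_{N'-1},a_{N'}$ replaced by $1/(qa_{N'-1}),1/(qa_{N'})$, so \eqref{kappavals} --- equivalently the determinant together with the constraint \eqref{qgenprodcon} --- forces both $\kappa$'s to be multiplied by $qa_{N'-1}a_{N'}$, in exact parallel with the companion proposition for $E_{1,2}$, where they are multiplied by $a_1a_2$. No choice of the scalar $\lambda$ can remove this factor while keeping $\theta_1=\theta_2=1$, since an overall scalar rescales the $\theta$'s as well. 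A correct proof should therefore \emph{derive} this rescaling of the target moduli space rather than attempt to verify the invariance printed in the statement.
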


This is also easy to verify, as is the property that $R_r(x)= R_r(1/qx)$ and
\begin{gather*}
\det R_r(x) = (x-a_1)(x-a_2)\big(x-(qa_1)^{-1}\big)\big(x-(qa_2)^{-1}\big).
\end{gather*}
In the same way as the $h$-dif\/ference case, we def\/ine the $q$-dif\/ference Garnier system to be generated by maps $E_{i,j}$ and $F_{i,j}$, which may be expressed as
\begin{subequations}\label{symgenh2}
\begin{gather}
\label{qeij}E_{i,j} = s_{(1i)(2j)} \circ E_{1,2} \circ s_{(1i)(2j)},\\
\label{qfij}F_{i,j} = s_{(1i)(2j)} \circ F_{1,2} \circ s_{(1i)(2j)}.
\end{gather}
\end{subequations}
The translations, $T_{i,j}$, also specif\/ied by
\begin{gather*}
T_{i,j} = F_{i,j} \circ E_{i,j},
\end{gather*}
generate the translational portion of the symmetric $q$-Garnier system.

\section{Reparameterization}\label{sec:reparam}

The aim of this section is to express the above systems in terms of variables that have been chosen to make a correspondence between our $q$-Garnier systems and the $q$-Garnier system specif\/ied in the work of Sakai~\cite{Sakai:Garnier}. This choice makes sense in both the $h$-dif\/ference and $q$-dif\/ference setting.

\subsection[$h$-dif\/ference Garnier systems]{$\boldsymbol{h}$-dif\/ference Garnier systems}

Let us consider the $h$-dif\/ference Garnier system def\/ined by~\eqref{linearsigma} where $\sigma = \sigma_h$, where $A(x)$ is specif\/ied by~\eqref{prodform} for $N = 2m+2$ and $L_i$ is given by~\eqref{difffactor} subject to the constraint \eqref{constrainth}. As~\eqref{constrainth} implies that the leading coef\/f\/icient of $A(x)$ is proportional to the identity matrix, provided $d_1 \neq d_2$, which is specif\/ied~\eqref{d1} and~\eqref{d2}, we may gauge by a constant lower triangular matrix so that the next leading coef\/f\/icient is diagonal. Under these conditions, we specify a new set of variables, $y_i$, $z_i$ and~$w_i$, related to~$A(x)$ by
\begin{gather*}
A(a_i) = y_i \begin{pmatrix} 1 \vspace{1mm}\\ \dfrac{z_i}{w} \end{pmatrix} \begin{pmatrix} w_i & w \end{pmatrix} = \begin{pmatrix} w_i y_i & w y_i \vspace{1mm}\\
 \dfrac{w_i y_i z_i}{w} & y_i z_i \end{pmatrix},
\end{gather*}
for $i=1, \ldots, N$. This choice is inspired by many works on the matter, such as the work on the $q$-Garnier systems \cite{Sakai:Garnier}, and various works on the Lagrangian approaches to dif\/ference equa\-tions~\cite{Dzhamay2007, Dzhamay2013}. This def\/ines $3N$ parameters, many of which are redundant. After diagonalizing, with $A(x) = (a_{i,j}(x))$, we have that each $a_{i,j}(x)$ is a polynomial with the following properties specifying their coef\/f\/icients:
\begin{itemize}\itemsep=0pt
\item $a_{i,i}(x) = x^{m+2} + d_i x^{m+1} + O(x^m)$ with $a_{1,1}(a_k) = y_kw_k$ and $a_{2,2}(a_k) = y_kz_k$ for $k=1,\ldots, m+1$,
\item $a_{1,2}(a_k) = wy_k$ and $a_{2,1} = w_ky_kz_k/w$ for $k=1, \ldots, m+1$.
\end{itemize}
We use a form of Lagrangian interpolation in the following way: If we let
\begin{gather*}
D(x) = \prod_{i=1}^{m+1} (x-a_i),
\end{gather*}
then the polynomial $D(x)/(x-a_k)$ satisf\/ies
\begin{gather}\label{deltapoly}
\frac{D(x)}{(x-a_k) D'(a_k)} =
\begin{cases}
0 & \text{if $x = a_j$ for $j \neq k$},\\
1 & \text{if $x = a_k$}.
\end{cases}
\end{gather}
This allows us to express the entries of $A(x)$ as
\begin{subequations}\label{largrangeinth}
\begin{gather}
a_{1,1}(x) = D(x)\left( x + d_1 + \sum_{i=1}^{m+1} a_i+ \sum_{i=1}^{m+1} \frac{y_iw_i}{(x-a_i)D'(a_i)}\right),\\
a_{2,2}(x) = D(x)\left( x + d_2 + \sum_{i=1}^{m+1} a_i+ \sum_{i=1}^{m+1} \frac{y_iz_i}{(x-a_i)D'(a_i)}\right),\\
a_{1,2}(x) = w D(x) \sum_{i=1}^{m+1} \frac{y_i}{D'(a_i)(x-a_i)},\\
a_{2,1}(x) = \frac{D(x)}{w} \sum_{i=1}^{m+1} \frac{y_iz_iw_i}{D'(a_i)(x-a_i)}.
\end{gather}
\end{subequations}
It is convenient to write the expressions for each of the $y_k$, $z_k$ and $w_k$ as
\begin{gather*}
y_k := \frac{a_{1,2}(a_k)}{w}, \qquad w_k = \frac{a_{1,1}(a_k)}{y_i}, \qquad z_k = \frac{a_{2,2}(a_k)}{y_i},
\end{gather*}
which is trivially true for $k = 1,\ldots, m+1$ and def\/ines an expression for $y_k$, $z_k$ and $w_k$ in terms of the f\/irst $m+1$ values for $k = m+2, \ldots, N$. This also produces expressions for each of the new variables in terms of the $u_i$. Naturally, this does not take into account any constants with respect to $T_{i,j}$. After diagonalizing the leading coef\/f\/icient in the polynomial expansion in $x$, it is easy to see that the matrix inducing $T_{i,j}$ takes the form
\begin{gather}\label{Rhform}
R(x) = \frac{xI + R_0}{(x-a_i-h)(x-a_j-h)},
\end{gather}
hence, we may calculate the equivalent of \eqref{hGarnierAction} on the $y_k$, $z_k$ and $w_k$ variables.

\begin{thm}
The system \eqref{hGarnierAction} is equivalent to the following action on the variables $y_k$, $z_k$ and $w_k$
\begin{subequations}\label{hGarnierBirrational}
\begin{gather}
\label{tijrhy} \frac{(T_{i,j} wy_k)}{w}= y_k \frac{z_i (a_k-a_j-h) - z_j(a_k-a_i-h) + w_k(a_i-a_j)}{z_i(a_k-a_i)-z_j(a_k-a_j)+ w (T_{i,j}z_k)(a_i-a_j)},\\
T_{i,j}z_k = \left( \frac{T_{i,j}w}{w} \right)\frac{z_iz_j(a_i-a_j) + z_iz_k(a_k-a_i) + z_jz_k(a_j-a_k)}{z_i(a_k-a_j) + z_j(a_i-a_k) + z_k(a_j-a_i)},\\
\label{whrew}T_{i,j}w_k = \left( \frac{T_{i,j}w}{w} \right)\frac{z_iz_j(a_i-a_j) + z_iw_k(a_i+h-a_k) - z_jw_k(a_j+h-a_k)}{z_i(a_j+h-a_k) - z_j(a_i+h-a_k) - w_k(a_i-a_j)},
\end{gather}
for $k \neq i, j$
\begin{gather}\allowdisplaybreaks
\frac{T_{i,j}y_i}{a_i - a_j} = \frac{w^2 D(a_i+h)-(w a_{1,1}(a_i+h)+z_i
 a_{1,2}(a_i+h)) (w a_{2,2}(a_i+h)-z_i
 a_{1,2}(a_i+h))}{h a_{1,2}(a_i+h) (T_{i,j}w)
(z_i-z_j){}^2}\nonumber\\
{} +\frac{(w a_{1,1}(a_i+h)+z_i
 a_{1,2} (a_i+h ) ) (w a_{2,2} (a_i+h )-z_j
 a_{1,2} (a_i+h ) )-w^2 P (a_i+h )}{a_{1,2} (a_i+h ) (T_{i,j}w)
 (z_i-z_j ){}^2 (a_i-a_j+h )},\!\!\!\\
\frac{T_{i,j} z_i}{a_i-a_j} = \frac{w^2 a_{2,1} (a_i+h )+w z_i a_{2,2} (a_i+h )+z_i z_j T_{i,j}(w y_i)}{w
 (T_{i,j}y_i) (z_i (a_i-a_j+h )-h z_j )},\\
w (T_{i,j}w_i) = -z_j T_{i,j}w, \qquad \frac{T_{i,j}w}{w} = 1 + \frac{(a_i-a_j)(d_1 -d_2+h)}{z_i-z_j},
\end{gather}
\end{subequations}
whereas for $k = j$ we swap the roles of $i$ and $j$ above.
\end{thm}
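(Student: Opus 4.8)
The plan is to work directly with the compatibility relation \eqref{comp}, which for $\sigma=\sigma_h$ reads $\tilde A(x)R(x)=R(x+h)A(x)$, using the ansatz \eqref{Rhform}, namely $R(x)=(xI+R_0)/[(x-a_i-h)(x-a_j-h)]$, and then to translate everything into the coordinates $(y_k,z_k,w_k)$ defined through the rank-one factorization $A(a_k)=y_k\begin{pmatrix}1\\ z_k/w\end{pmatrix}\begin{pmatrix}w_k & w\end{pmatrix}$. First I would pin down $R_0$. Comparing determinants in $\det\tilde A(x)\,\det R(x)=\det R(x+h)\,\det A(x)$ and using \eqref{deth} forces $\det(xI+R_0)=(x-a_i-h)(x-a_j-h)$; consequently $R(x)^{-1}=\operatorname{adj}(xI+R_0)$ is polynomial and $\tilde A(x)=R(x+h)A(x)\operatorname{adj}(xI+R_0)$ has possible poles only at $x=a_i,a_j$. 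Demanding that $\tilde A(x)$ be polynomial there forces $\operatorname{Im}A(a_i)=\langle\begin{pmatrix}1\\ z_i/w\end{pmatrix}\rangle$ and $\operatorname{Im}A(a_j)=\langle\begin{pmatrix}1\\ z_j/w\end{pmatrix}\rangle$ to be the eigenvectors of $R_0$ with eigenvalues $-(a_i+h)$ and $-(a_j+h)$. This determines $R_0$ explicitly as a rational expression in $z_i,z_j,w,a_i,a_j,h$, and hence makes $R(x)$ and $R(x)^{-1}$ fully explicit.

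Second, I would handle the bulk variables $k\neq i,j$, for which $a_k$ remains a root of $\det\tilde A$. Evaluating $\tilde A(a_k)=R(a_k+h)A(a_k)R(a_k)^{-1}$ and using that $A(a_k)$ is rank one, the image direction $R(a_k+h)\begin{pmatrix}1\\ z_k/w\end{pmatrix}$ yields $T_{i,j}z_k$, the row data $\begin{pmatrix}w_k & w\end{pmatrix}\operatorname{adj}(a_kI+R_0)$ yields $T_{i,j}w_k$, and matching the overall scalar against the transformed normalization gives $T_{i,j}(wy_k)/w$. Substituting the explicit $R_0$ from the first step reduces these to the rational formulas \eqref{tijrhy}; this part is purely linear-algebraic and should fall out by direct computation.

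Third, and this is where the real work lies, I would treat the shifted variables $k=i$ (the case $k=j$ following by interchanging $i$ and $j$). Now $a_i+h$ is the new root and $R(x)$ has a pole there, so I would instead evaluate $\tilde A(a_i+h)=R(a_i+2h)A(a_i+h)\operatorname{adj}((a_i+h)I+R_0)$. Since $(a_i+h)I+R_0$ is singular, its adjugate is rank one with image $\langle\begin{pmatrix}1\\ z_i/w\end{pmatrix}\rangle$ and kernel $\operatorname{Im}((a_i+h)I+R_0)=\langle\begin{pmatrix}1\\ z_j/w\end{pmatrix}\rangle$; this last fact is exactly what produces the relation $w\,T_{i,j}w_i=-z_j\,T_{i,j}w$. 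Because $a_i+h$ is not a root of $\det A$, the matrix $A(a_i+h)$ is full rank, so extracting $T_{i,j}y_i$ and $T_{i,j}z_i$ requires the actual entry values $a_{1,1}(a_i+h),a_{1,2}(a_i+h),a_{2,1}(a_i+h),a_{2,2}(a_i+h)$ supplied by the interpolation formulas \eqref{largrangeinth} together with $D(a_i+h)$. The factor $h$ and the denominators $(z_i-z_j)^2$ in the stated formulas arise from the $\det((a_i+h)I+R_0)$-type quantities and from the residue at the double pole of $R$. Finally, the normalization is fixed by the behaviour at $x=\infty$: matching the subleading coefficient of $\tilde A$, whose diagonal must become $\operatorname{diag}(d_1-h,d_2-h)$ as dictated by \eqref{hGarnierAction}, \eqref{d1} and \eqref{d2}, yields $T_{i,j}w/w=1+(a_i-a_j)(d_1-d_2+h)/(z_i-z_j)$.

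The main obstacle is the $k=i,j$ computation: unlike the bulk case, where the rank-one structure makes each transformed quantity a clean image or kernel of an invertible map, here the full-rank entries $a_{m,n}(a_i+h)$ must be pushed through the singular adjugate $\operatorname{adj}((a_i+h)I+R_0)$, and one must carefully track the double-pole residue to obtain the correct dependence on $h$ and on $z_i-z_j$. Keeping the normalization $w$ consistent across all three families of variables and confirming that the image lands in $\mathcal{M}_h(\dots;d_1-h,d_2-h;1,1)$ is the most delicate bookkeeping; I would sanity-check the resulting expressions by specializing to $m=1$, where \eqref{hGarnierBirrational} must reproduce the difference Painlev\'e VI map of Section~\ref{dP6sec}.
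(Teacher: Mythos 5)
Your proposal is correct and follows essentially the same route as the paper: fix $R_0$ by forcing the image vectors of $A(a_i)$, $A(a_j)$ to be eigenvectors of $R_0$ with eigenvalues $-(a_i+h)$, $-(a_j+h)$ (the paper gets this by clearing denominators in \eqref{comp} and evaluating at $x=a_i,a_j$, which is the same condition), then read off the bulk variables from \eqref{comp} at $x=a_k$ and the shifted variables from $\tilde{A}(a_i+h)=R(a_i+2h)A(a_i+h)R(a_i+h)^{-1}$, with the $k=j$ case by the $i\leftrightarrow j$ symmetry. Your derivation of $w\,T_{i,j}w_i=-z_j\,T_{i,j}w$ via the kernel of the singular adjugate is just an equivalent reformulation of the paper's comparison of the two expressions for $R_0$ (from the untransformed data at $x=a_i,a_j$ and the transformed data at $x=a_i+h,a_j+h$), so the two proofs coincide in substance.
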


\begin{proof}
We temporarily use the notation $\tilde{u} = T_{i,j}u$. Given \eqref{Rhform}, we multiply the left and right-hand sides of \eqref{comp} by $(x-a_i)(x-a_j)(x-a_i-h)(x-a_j-h)$ whereby evaluating the resulting expression at $x = a_i$ gives us
\begin{gather*}
((a_i+h)I + R_0)\tilde{y}_i \begin{pmatrix} 1 \vspace{1mm}\\ \dfrac{z_i}{w} \end{pmatrix} \begin{pmatrix} w_i & w \end{pmatrix},
\end{gather*}
which specif\/ies that the rows of $((a_i+h)I + R_0)$ are annihilated by the image of $A(a_i)$. Imposing the same condition for $x=a_j$ uniquely specif\/ies $R_0$ by
\begin{gather}\label{R0h1}
R_0 = \frac{1}{z_i-z_j} \begin{pmatrix} (a_i+h)z_j - (a_j+h)z_i & (a_j-a_i)w \vspace{1mm}\\ \dfrac{(a_i-a_j)z_iz_j}{w} & (a_j+h)z_j - (a_i+h)z_i \end{pmatrix}.
\end{gather}
Using the values $x= a_i + h$ and $x=a_j+h$ gives us
\begin{gather*}
R_0 = \frac{1}{\tilde{w}_i - \tilde{w}_j} \begin{pmatrix} (a_j+h)\tilde{w}_j - (a_i+h)\tilde{w} & (a_j-a_i)\tilde{w} \vspace{1mm}\\
\dfrac{(a_i-a_k)\tilde{w}_i\tilde{w}_j}{\tilde{w}} & (a_i+h)\tilde{w} \end{pmatrix},
\end{gather*}
whose equivalence with \eqref{R0h1} gives the f\/irst part of \eqref{whrew}. Using \eqref{R0h1} with \eqref{comp} at $x= a_k$ gives us
\begin{gather*}
y_kR(a_k+h) \begin{pmatrix} 1 \\ \frac{z_k}{w} \end{pmatrix} \begin{pmatrix} w_k & w \end{pmatrix} = \tilde{y}_k \begin{pmatrix} 1 \\ \frac{\tilde{z}_k}{\tilde{w}} \end{pmatrix} \begin{pmatrix} \tilde{w}_k & \tilde{w} \end{pmatrix} R(a_k),
\end{gather*}
which is equivalent to (\ref{tijrhy})--(\ref{whrew}). The remaining parts may be calculated from evaluating
\begin{gather*}
\tilde{A}(x) = R(x+h)A(x)R(x)^{-1},
\end{gather*}
which is equivalent to \eqref{comp} using \eqref{R0h1} at $x = a_i +h$. The symmetry and uniqueness of $R(x)$ determines that the corresponding formula for $k=j$ may be obtained by swapping the roles of~$i$ and~$j$.
\end{proof}

While we have chosen to express the system in this way, this is not to be considered a~$3(m+1)$-dimensional map since it has enough constants with respect to $T_{i,j}$ to be considered a~$(N - 2)$-dimensional system in terms of the~$u_i$.

The symmetric version may be treated in the same way by considering transformations of~$B(x)$ instead of $A(x)$. We take $A(x)$ to be given by~\eqref{AviaB} where~$B(x)$ is given by~\eqref{prodform}, in which case we may parameterize~$B(x)$ in the same way by introducing the variables~$y_i$,~$z_i$ and~$w_i$ by
\begin{gather*}
B(a_i) = y_i \begin{pmatrix} 1 \vspace{1mm}\\ \dfrac{z_i}{w} \end{pmatrix} \begin{pmatrix} w_i & w \end{pmatrix} = \begin{pmatrix} w_i y_i & w y_i \vspace{1mm}\\
 \dfrac{w_i y_i z_i}{w} & y_i z_i \end{pmatrix},
\end{gather*}
for $i = 1,\ldots, N'$. The Lagrangian interpolation is the same as it was for $A(x)$ above, hence the entries of $B(x) = (b_{i,j}(x))$ are also given by \eqref{largrangeinth}. We may calculate the ef\/fect of~$E_{i,j}$ and $F_{i,j}$ on these new variables.

\begin{prop}\label{eh}
The system~\eqref{heij} is equivalent to the following action on the variables~$y_k$,~$z_k$ and $w_k$
\begin{subequations}
\begin{gather}
\label{hEij1}E_{i,j}y_k = \frac{w y_k \big(a_i^2 (z_k-z_j )+a_k^2(z_j-z_i)+a_j^2(z_i-z_k)\big)}{(a_i-a_k) (a_k-a_j)
 w(h+t) (z_i-z_j)},\\
E_{i,j} z_k = \frac{(E_{i,j}w) \big(a_i^2 z_i (z_k-z_j)+a_j^2 z_j
(z_i-z_k)+a_k^2 z_k (z_j-z_i)\big)}{w \big(a_i^2 (z_k-z_j )+a_k^2 (z_j-z_i )+a_j^2 (z_i-z_k )\big)},\\
\label{hEij3} wE_{i,j} w_k = w_k E_{i,j}w, \qquad \frac{E_{i,j} w}{w} = 1 - \frac{a_i^2 - a_j^2}{z_i - z_j},
\end{gather}
for $k \neq i,j$ and for $k = i$
\begin{gather}
(E_{i,j}y_iw) = \frac{(a_i-a_j)(wa_{2,2}(-a_i) - a_{1,2}(-a_i)z_i)}{2a_i(z_j-z_i)},\\
wE_{i,j}z_i = z_j E_{i,j}w, \qquad E_{i,j} w_i = E_{i,j}w \frac{z_i a_{1,1}(-a_i) - w a_{2,1}(-a_i)}{z_i a_{1,2}(-a_i) - w a_{2,2}(-a_i)},
\end{gather}
\end{subequations}
whereas for $k=j$ we swap the roles of $i$ and $j$ above.
\end{prop}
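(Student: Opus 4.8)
The plan is to mirror the proof of the preceding reparameterization theorem for $T_{i,j}$, with the conjugation relation \eqref{comp} replaced by the left action \eqref{leftRB} on $B(x)$. Since $E_{i,j}$ is obtained from $E_{1,2}$ by conjugating with $s_{(1i)(2j)}$, and since the matrix $R_l(x)$ realizing $E_{1,2}$ satisfies $R_l(x)=R_l(-x)$, it contains only even powers of $x$; after diagonalizing the leading coefficient the matrix inducing $E_{i,j}$ therefore has no linear term and takes the form
\begin{gather*}
\tilde{B}(x)=\frac{x^2 I + R_0}{(x-a_i)(x-a_j)}\,B(x),
\end{gather*}
for a constant matrix $R_0$ to be determined. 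This is the symmetric analogue of \eqref{Rhform}, the squaring $x\mapsto x^2$ reflecting the involution $a_i\mapsto -a_i$ effected by $E_{i,j}$.

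First I would fix $R_0$. Because $\tilde{B}(x)$ is a matrix polynomial, the poles of the prefactor at $x=a_i$ and $x=a_j$ must cancel, which forces $(a_i^2 I + R_0)B(a_i)=0$ and $(a_j^2 I + R_0)B(a_j)=0$; as $B(a_k)$ has image vector $(1,z_k/w)^{\mathrm{T}}$, these are two vector conditions, i.e.\ four scalar equations, that determine $R_0$ explicitly in terms of $a_i,a_j,z_i,z_j,w$. This is exactly how \eqref{R0h1} was obtained in the asymmetric case, now with $a_i^2,a_j^2$ in place of the shifts $a_i+h,a_j+h$. Re-deriving $R_0$ from the image vectors of $\tilde{B}$ at the \emph{new} singular points $x=-a_i$ and $x=-a_j$ and equating the two expressions yields the scalar relation $E_{i,j}w/w = 1-(a_i^2-a_j^2)/(z_i-z_j)$ together with $wE_{i,j}w_k=w_kE_{i,j}w$, which is precisely \eqref{hEij3}.

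Next, for $k\neq i,j$ the node $a_k$ is fixed, so evaluating the transformation at $x=a_k$ gives
\begin{gather*}
(E_{i,j}y_k)\begin{pmatrix}1 \\ (E_{i,j}z_k)/(E_{i,j}w)\end{pmatrix}\begin{pmatrix}E_{i,j}w_k & E_{i,j}w\end{pmatrix}
= \frac{a_k^2 I + R_0}{(a_k-a_i)(a_k-a_j)}\,y_k\begin{pmatrix}1 \\ z_k/w\end{pmatrix}\begin{pmatrix}w_k & w\end{pmatrix}.
\end{gather*}
Reading off the rank-one factorization of the left-hand side and substituting the explicit $R_0$ produces \eqref{hEij1}--\eqref{hEij3} after simplification. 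For $k=i$ the relevant point is the new node $x=-a_i$, where $\det\tilde{B}$ acquires its zero; there $B(-a_i)$ is generically invertible, so $\tilde{B}(-a_i)$ is rank one with image equal to that of $a_i^2 I + R_0$, and extracting its columns gives $E_{i,j}y_i$, $E_{i,j}z_i$ and $E_{i,j}w_i$ in terms of the entries $a_{1,1}(-a_i)$, $a_{1,2}(-a_i)$, $a_{2,2}(-a_i)$ and $a_{2,1}(-a_i)$ of $B$. The case $k=j$ then follows from the symmetry of $R_l$ under interchange of $i$ and $j$.

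The main obstacle will be the $k=i$ computation: the point $-a_i$ is not one of the interpolation nodes $a_1,\dots,a_{m+1}$, so the entries $a_{r,s}(-a_i)$ must be evaluated through the Lagrangian interpolation formula \eqref{largrangeinth}, after which one has to disentangle the rank-one structure of $\tilde{B}(-a_i)$ to isolate the three new variables. The remaining steps are routine but lengthy rational-function simplifications; the one point requiring verification at the outset is that $R_l$ indeed has leading coefficient proportional to $I$, so that the diagonalization underlying the displayed normal form is legitimate.
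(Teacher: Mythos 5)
Your proposal is correct and takes essentially the same route as the paper's own proof: posit $R_l(x)=x^2I+R_0$ with $\lambda(x)=(x-a_i)^{-1}(x-a_j)^{-1}$, fix $R_0$ by the vanishing of $(x^2I+R_0)B(x)$ at $x=a_i,a_j$, then evaluate the resulting polynomial identity at $x=a_k$ for $k\neq i,j$ and at the new singular points $x=-a_i,-a_j$ for $k=i,j$, which is exactly how the paper proves Propositions~\ref{eh} and~\ref{fh} by mirroring its proof of the asymmetric $T_{i,j}$ reparameterization theorem. One minor imprecision: the two rank-one characterizations of $R_0$ (rows annihilating the image of $B(a_i)$, columns spanning the image of $\tilde{B}(-a_i)$) involve only the ratios $z_i/w$ and $E_{i,j}z_i/E_{i,j}w$, so equating them yields scale-invariant relations such as $wE_{i,j}z_i=z_jE_{i,j}w$ but cannot by itself determine $E_{i,j}w/w=1-(a_i^2-a_j^2)/(z_i-z_j)$; that scalar formula must be extracted from the leading-order comparison of the $(1,2)$ entries as $x\to\infty$, just as the paper does in its proofs of the asymmetric theorems.
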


\begin{prop}\label{fh}
The system \eqref{hfij} is equivalent to the following action on the variables~$y_k$,~$z_k$ and $w_k$
\begin{subequations}
\begin{gather}
\label{hFij1}\frac{F_{i,j}wy_k}{w y_k} =\frac{w_i(a_k-a_j(a_j+h))- w_j(a_k-a_i(a_i+h)) - w_k(a_i-a_j)(a_i+a_j+h) }{(a_i-a_k) (a_k-a_j)
(w_i-w_j) (a_i-a_k+h)(-a_j+a_k-h)}, \\
\frac{wF_{i,j} w_k}{F_{i,j}w} = -\frac{w_iw_j(a_j\!-\!a_i)(a_i\!+\!a_j\!+h)\! +\! w_i w_k (a_i(a_i\!+\!h)\! -\! a_k)\! -\! w_j w_k(a_j(a_j\!+\!h)\! - \! a_k)}{w_k(a_j\!-\!a_i)(a_i\!+\!a_j\! +\! h)\! +\! w_j(a_i(a_i\!+\!h) \!-\! a_k)\! -\! w_i(a_j(a_j\!+h)\!-a_k)},\!\!\!\!\\
\label{hFij3}wF_{i,j}z_k =z_k F_{i,j} w, \qquad F_{i,j} w = 1- \frac{(a_i-a_j)(a_i+a_j+h)}{w_i-w_j},
\end{gather}
for $k \neq i,j$ and for $k = i$
\begin{gather}
(F_{i,j}y_iw) = \frac{(a_i-a_j)(wa_{1,1}(-a_i-h) - a_{1,2}(-a_i-h)w_i)}{(2a_i+h)(w_j-w_i)},\\
w F_{i,j} w_i = w_j F_{i,j} w, \qquad \frac{F_{i,j}}{F_{i,j} w} = \frac{w_i a_{2,2}(-h-a_i) - w a_{2,1}(-h-a_i)}{a_{1,2}(-h-a_i)- w a_{1,1}(-h-a_i)},
\end{gather}
\end{subequations}
whereas for $k=j$ we swap the roles of $i$ and $j$ above.
\end{prop}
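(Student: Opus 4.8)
The plan is to follow the template used for the $T_{i,j}$ reparameterization (the theorem producing \eqref{hGarnierBirrational}) and for Proposition~\ref{eh}, but with the left multiplication replaced by the right multiplication \eqref{rightRB}. Conjugating the lemma defining $F_{N',N'-1}$ by $s_{(1i)(2j)}$, the matrix $R_r(x)$ is a monic quadratic matrix polynomial; the symmetry $R_r(x)=R_r(-x-h)$ recorded after that lemma forces its linear term to be scalar, so I would take
\begin{gather*}
R_r(x) = x^2 I + hx I + R_0,
\end{gather*}
with $R_0$ a constant matrix still to be determined. Since $F_{i,j}$ removes the singularities of $B$ at $x=a_i,a_j$ (these being absorbed into $\lambda$) and creates new ones at $x=-a_i-h,-a_j-h$, regularity of $\tilde B=\lambda BR_r$ there forces $B(a_i)R_r(a_i)=B(a_j)R_r(a_j)=0$, that is $\operatorname{Im}R_r(a_i)=\ker B(a_i)=\langle (w,-w_i)^{T}\rangle$ and likewise at $a_j$. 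These four rank-one conditions pin down $R_0$ entirely in terms of $w_i,w_j,w,a_i,a_j,h$, exactly as \eqref{R0h1} was obtained from the image data in the $T_{i,j}$ case, and reading off the induced scalar normalization gives the expression for $F_{i,j}w$ in \eqref{hFij3}.

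With $R_r(x)$ explicit I would first treat the indices $k\neq i,j$. Evaluating \eqref{rightRB} at an unmoved singular point $x=a_k$ and writing $B(a_k)=y_k(1,z_k/w)^{T}(w_k,w)$, right multiplication leaves the column direction $(1,z_k/w)^{T}$ fixed, which is precisely $wF_{i,j}z_k=z_kF_{i,j}w$; the row $(w_k,w)$ is sent to $(w_k,w)R_r(a_k)$, and normalising its two components while tracking the scalar $\lambda(a_k)y_k$ produces the stated formulas for $F_{i,j}(wy_k)$ and $F_{i,j}w_k$ in \eqref{hFij1}.

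For the indices $k=i,j$ I would evaluate at the reflected points, temporarily writing $\tilde q=F_{i,j}q$ for each variable $q$. Using $R_r(-a_i-h)=R_r(a_i)$ gives $\tilde B(-a_i-h)=\lambda(-a_i-h)\,B(-a_i-h)R_r(a_i)$, which is again rank one: its column is $B(-a_i-h)$ applied to the image direction $(w,-w_i)^{T}$ of $R_r(a_i)$, while its row is the row factor of $R_r(a_i)$ fixed in the previous step. Equating with $\tilde y_i(1,\tilde z_i/\tilde w)^{T}(\tilde w_i,\tilde w)$ and substituting the interpolation \eqref{largrangeinth} for the entries of $B$ at $x=-a_i-h$ yields the expressions for $F_{i,j}(y_iw)$ and $F_{i,j}z_i$, while the row factor gives $wF_{i,j}w_i=w_jF_{i,j}w$. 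The symmetry and uniqueness of $R_r$ guaranteed by Theorem~\ref{hisolattice} then supply the $k=j$ case by the interchange $i\leftrightarrow j$.

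The conceptual content is routine once the condition $\operatorname{Im}R_r(a_i)=\ker B(a_i)$ is set up; I expect the only real obstacle to be computational, namely simplifying the entries of $B$ at $-a_i-h$ through \eqref{largrangeinth} so that the $k=i$ expressions collapse to the compact rational forms displayed, and verifying that $\lambda(a_k)$ combines with $y_k$ to give exactly the normalisations in \eqref{hFij1}--\eqref{hFij3}.
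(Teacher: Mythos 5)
Your proposal follows the paper's own proof of Propositions~\ref{eh} and~\ref{fh} essentially verbatim: the same ansatz $R_r(x)=x(x+h)I+R_1$ with $\lambda=(x-a_i)^{-1}(x-a_j)^{-1}$, the same determination of the constant term from the vanishing of $B(x)R_r(x)$ at $x=a_i,a_j$ (the paper's ``evaluate at $x=a_i$ and $x=a_j$'' is exactly your condition $\operatorname{Im}R_r(a_i)=\ker B(a_i)$), and the same evaluation at $x=a_k$ for the $k\neq i,j$ formulas. Your explicit treatment of $k=i,j$ via the reflected points $x=-a_i-h$ and the symmetry $R_r(-x-h)=R_r(x)$ is the natural completion of the step the paper compresses into ``easily follow,'' and it reproduces the stated expressions involving $a_{\cdot,\cdot}(-h-a_i)$.
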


\begin{proof}[Proof of Propositions~\ref{eh} and~\ref{fh}]
We note that for $B(x)$ to be of the same form we require that $R_l(x)$ and $R_r(x)$ from \eqref{leftRB} and \eqref{rightRB} take the forms
\begin{alignat*}{3}
& R_l(x) = x^2 I + R_0, \qquad && \det R_l(x) = (x-a_i)(x-a_j)(x+a_i)(x+a_j),& \\
& R_r(x) = x(x+h) + R_1, \qquad && \det R_r(x) = (x-a_i)(x-a_j)(x+a_i+h)(x+a_j+h),&
\end{alignat*}
with $\lambda = (x-a_i)^{-1}(x-a_j)^{-1}$. We may multiply the left and right-hand sides of~\eqref{leftRB} and~\eqref{rightRB} by $(x-a_i)(x-a_j)$ to see that $R_0$ and $R_1$ satisfy
\begin{gather*}
\big(x^2+R_0\big)A(x) = (x-a_i)(x-a_j)\tilde{A}(x), \qquad \big(x^2+R_1\big)A(x) = (x-a_i)(x-a_j)\hat{A}(x),
\end{gather*}
where we use the notation $E_{i,j}u = \tilde{u}$ and $F_{i,j}u = \hat{u}$ for the parameters of $A(x)$ and $A(x)$ itself. Evaluating at $x = a_i$ and $x = a_j$ gives us the two matrices
\begin{gather*}
R_l(x) = \begin{pmatrix} (x-a_i)(x+a_i) & 0 \\ 0 & (x-a_j)(x+a_j) \end{pmatrix}
+ \frac{\big(a_i^2-a_j^2\big)}{z_i-z_j} \begin{pmatrix} z_i & -w \vspace{1mm}\\ \dfrac{z_iz_j}{w} & -z_i \end{pmatrix}, \\
R_r(x) = \begin{pmatrix} (x-a_i)(x+a_i+h) & 0 \\ 0 & (x-a_j)(x+a_j+h) \end{pmatrix}\\
\hphantom{R_r(x) =}{}
+ \frac{(a_i-a_j)(h+a_i+a_j)}{w_i-w_j} \begin{pmatrix}- w_j & w \vspace{1mm}\\ \dfrac{w_iw_j}{w} & w_j \end{pmatrix},
\end{gather*}
from which using these values in \eqref{leftRB} and \eqref{rightRB} evaluated at $x= a_k$ give (\ref{hEij1})--(\ref{hEij3}) and (\ref{hFij1})--(\ref{hFij3}) easily follow.
\end{proof}

\subsection[$q$-dif\/ference Garnier systems]{$\boldsymbol{q}$-dif\/ference Garnier systems}

Let us consider the $q$-dif\/ference Garnier system, def\/ined by \eqref{linearsigma} where $\sigma = \sigma_q$, where $A(x)$ is specif\/ied by~\eqref{prodform} for $N = 2m+2$ and $L_i$ is given by~\eqref{qdifffactor}. We may diagonalize the leading coef\/f\/icient matrices around $x = 0$ and $x=\infty$ provided $\theta_1 \neq \theta_2$ and $\kappa_1 \neq \kappa_2$ using a lower diagonal constant matrix. From this matrix, we def\/ine a new set of variables, $y_i$, $z_i$ and $w_i$, for
\begin{gather*}
A(a_i) = y_i \begin{pmatrix} 1 \vspace{1mm}\\ \dfrac{z_i}{w} \end{pmatrix} \begin{pmatrix} w_i & w \end{pmatrix} = \begin{pmatrix} w_i y_i & w y_i \vspace{1mm}\\
 \dfrac{w_i y_i z_i}{w} & y_i z_i \end{pmatrix},
\end{gather*}
for $i=1, \ldots, N$. This specif\/ication in terms of the image and kernel of $A(a_i)$ means that we may use~\eqref{kerq} and/or~\eqref{Imq} and the action of $S_n$ to determine the values of $z_i/w$ and $w_i/w$. This def\/ines $3N$ parameters, many of which are redundant. However, if we choose the f\/irst $N$ (or any collection), we may reconstruct~$A(x)$ using Lagrangian interpolation using any collection of~$m+1$ values with the following data:
\begin{itemize}\itemsep=0pt
\item $a_{i,i}(x) = \kappa_i x^{m+1} + O(x^m)$ with $a_{1,1}(a_k) = y_kw_k$ and $a_{2,2}(a_k) = y_kz_k$ for $k=1,\ldots, m+1$,
\item $a_{1,2}(a_k) = wy_k$ and $a_{2,1} = w_ky_kz_k/w$ for $k=1, \ldots, m+1$.
\end{itemize}
If we let this collection be the f\/irst $m$ values, and let $D(x)$ satisfy \eqref{deltapoly}. We use this to express the entries of $A(x)$ as
\begin{subequations}\label{LagrangianIntq}\allowdisplaybreaks
\begin{gather}
a_{1,1}(x) = \kappa_1 D(x) \left[1+ \sum_{i=1}^{m} \frac{w_iy_i}{D'(a_i)(x-a_i)} \right], \\
a_{1,2}(x) = \kappa_2 w D(x) \left[\sum_{i=1}^{m} \frac{y_i}{D'(a_i)(x-a_i)} \right],\\
a_{2,1}(x) = \frac{\kappa_1 D(x)}{w} \left[\sum_{i=1}^{m} \frac{w_iz_iy_i}{D'(a_i)(x-a_i)} \right],\\
a_{2,2}(x) = \kappa_2 D(x) \left[1+\sum_{i=1}^{m} \frac{z_iy_i}{D'(a_i)(x-a_i)} \right].
\end{gather}
\end{subequations}
After diagonalizing the leading coef\/f\/icient in the polynomial expansion in $x$, it is easy to see that the matrix inducing $T_{i,j}$ takes the form
\begin{gather}\label{Rqformywz}
R(x) = \frac{xI + R_0}{(x-qa_i)(x-qa_j)},
\end{gather}
from which we may calculate the equivalent action on the variables $y_i$, $z_i$, $w_i$ and $w$.

\begin{prop}
The action of $T_{i,j}$ specified by the action of $S_n$ and \eqref{conprevuh} is equivalent to the following action on the variables $y_k$, $z_k$ and $w_k$:
\begin{subequations}\label{qGarnier}
\begin{gather}
\frac{(T_{i,j} wy_k)}{w}= y_k\frac{(qa_j - a_k)(qa_i-a_k)}{(z_i-z_j)^2} \left(\frac{w_k+z_j}{qa_j-a_k} - \frac{w_k+z_i}{qa_i-a_k} \right)\left( \frac{z_k-z_j}{a_k-a_j} - \frac{z_k-z_i}{a_k-a_i}\right),\!\!\!\!\\
\frac{T_{i,j}z_k}{z_iz_j} \left( \frac{z_k-z_j}{a_k-a_j} - \frac{z_k-z_i}{a_k-a_i} \right) = \frac{T_{i,j}w}{w}\left(\frac{1}{z_j} \frac{z_k-z_j}{a_k-a_j} - \frac{1}{z_i}\frac{z_k-z_i}{a_k-a_i}\right),\\
\label{qwk} (T_{i,j}w_k)\left(\frac{w_k+z_i}{a_k-qa_i}-\frac{w_k+z_j}{a_k-qa_j}\right) = z_iz_j \frac{T_{i,j}w}{w} \left(\frac{1}{z_j} \frac{z_k-z_j}{a_k-a_j} - \frac{1}{z_i}\frac{z_k-z_i}{a_k-a_i}\right),\\
\label{qw} \frac{T_{i,j}w}{w} = \frac{T_{i,j}w_i - T_{i,j}w_j}{z_i - z_j} = 1+\frac{(\kappa _1 q/\kappa_2-1) (a_i-a_j)}{z_i-z_j},
\end{gather}
for $k \neq i,j$ where for $k=i$ we have $wT_{i,j} w_i = z_iT_{i,j}w$ and
\begin{gather}
\frac{T_{i,j} y_i}{a_i-a_j} = \frac{\big(w^2 \kappa_1\kappa_2D(qa_i)- (w a_{1,1} (q a_i )+z_i a_{1,2} (q a_i ) ) (w a_{2,2} (q a_i )-z_i a_{1,2} (q a_i ) )\big)}{(q-1) a_i a_{1,2} (q a_i ) (z_i-z_j ){}^2 (T_{i,j}w)}\nonumber\\
\hphantom{\frac{T_{i,j} y_i}{a_i-a_j} =}{} -\frac{\big(\kappa_1\kappa_2w^2 D(qa_i)- (w a_{1,1} (q a_i )+z_i a_{1,2} (q a_i ) ) (w a_{2,2} (q a_i )-z_j a_{1,2} (q a_i ) )\big)}{a_{12}(q a_i) (z_i-z_j){}^2 (q a_i-a_j)(T_{i,j}w)},\!\!\!\!\label{tijyi}\\
\label{tijzi}\frac{T_{i,j} z_i}{a_i-a_j}= \frac{\big(w a_{2,2} (q a_i ) z_i+w^2 a_{2,1} (q a_i )+z_i z_j (T_{i,j}w y_i)\big)}{w (T_{i,j}y_i) (z_i (q a_i-a_j )-(q-1) a_i z_j )}.
\end{gather}
\end{subequations}
Swapping $i$ and $j$ gives the case for $k = j$.
\end{prop}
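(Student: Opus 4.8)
The plan is to follow the structure of the $h$-difference reparametrization proved above, replacing the additive shift $a\mapsto a+h$ by the multiplicative shift $a\mapsto qa$ throughout. Given the ansatz \eqref{Rqformywz}, $R(x)=(xI+R_0)/((x-qa_i)(x-qa_j))$, the compatibility relation \eqref{comp} reads $\tilde A(x)R(x)=R(qx)A(x)$, and since $R(qx)=(qxI+R_0)/(q^2(x-a_i)(x-a_j))$ this becomes, after clearing all denominators, the polynomial identity
\[
q^2\,\tilde A(x)\,(xI+R_0)\,(x-a_i)(x-a_j)=(qxI+R_0)\,A(x)\,(x-qa_i)(x-qa_j).
\]
First I would evaluate this at $x=a_i$ and $x=a_j$. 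Because $A(a_k)$ is rank one with image spanned by $\begin{pmatrix}1 & z_k/w\end{pmatrix}^{\mathrm T}$, these two evaluations force $(qa_iI+R_0)$ and $(qa_jI+R_0)$ to annihilate the respective image vectors; equivalently $-qa_i,-qa_j$ are the eigenvalues of $R_0$, with eigenvectors given by the image columns at $a_i,a_j$. This determines $R_0$ uniquely — the direct $q$-analogue of \eqref{R0h1} — and, as a bookkeeping bonus, yields $\det(xI+R_0)=(x-qa_i)(x-qa_j)$, so that $R(x)^{-1}=\operatorname{adj}(xI+R_0)$ is polynomial; this is the identity I expect to streamline the remaining computations.

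Next I would read off the relations in two stages. Evaluating the polynomial identity at $x=qa_i$ and $x=qa_j$, where $\tilde A(\tilde a_i)=\tilde A(qa_i)$ is singular, forces the co-image row $\begin{pmatrix}\tilde w_i & \tilde w\end{pmatrix}$ of $\tilde A(qa_i)$ to lie in the left null space of $qa_iI+R_0$; comparing with the explicit $R_0$ from the previous step produces \eqref{qw}, where the coefficient $\kappa_1 q/\kappa_2$ is pinned down by the leading $x\to\infty$ asymptotics of $R(x)$ together with the known action $T_{i,j}\kappa_1=\kappa_1/q$, $T_{i,j}\kappa_2=\kappa_2/q$ from Theorem~\ref{qlattice}. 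For the indices $k\neq i,j$, where the singular point is unmoved ($\tilde a_k=a_k$), I would instead evaluate the compatibility relation $\tilde A(x)R(x)=R(qx)A(x)$ directly at the regular point $x=a_k$; both sides are then rank-one matrices, and the equation $\tilde y_k\,\widetilde{\mathrm{col}}_k\,\widetilde{\mathrm{row}}_k\,R(a_k)=R(qa_k)\,y_k\,\mathrm{col}_k\,\mathrm{row}_k$ splits into a comparison of column spaces (giving $T_{i,j}z_k$), of row spaces (giving $T_{i,j}w_k$) and of the overall scalar (giving $(T_{i,j}wy_k)/w$), which are precisely the first three lines of \eqref{qGarnier}.

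Finally, the shifted indices $k=i$ (and $k=j$ by the $i\leftrightarrow j$ symmetry and the uniqueness of $R$ guaranteed by Theorem~\ref{qlattice}) require evaluating $\tilde A(x)=R(qx)A(x)R(x)^{-1}$ at the new singular point $x=qa_i$, using the polynomial inverse $R(x)^{-1}=\operatorname{adj}(xI+R_0)$. Extracting $\tilde y_i$ and $\tilde z_i$ from the resulting matrix is where the determinant factor $\det A(qa_i)=\kappa_1\kappa_2 D(qa_i)$ and the bilinear combinations of the entries $a_{1,1}(qa_i),a_{1,2}(qa_i),a_{2,1}(qa_i),a_{2,2}(qa_i)$ enter, yielding \eqref{tijyi}--\eqref{tijzi}. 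I expect this last step to be the main obstacle: unlike the $k\neq i,j$ case it is not a pure rank-one manipulation but a genuine evaluation of a triple matrix product at a pole-shifted point, and keeping track of the determinant term and the $q$-dependent denominators is exactly where the $q$-system departs non-trivially from its $h$-difference counterpart. The relation \eqref{qw} is a secondary delicate point, since its $\kappa_1 q/\kappa_2$ coefficient must be extracted consistently from the asymptotics rather than from a finite evaluation.
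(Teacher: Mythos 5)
Your proposal is correct and follows essentially the same route as the paper's own proof: fix $R_0$ by evaluating the denominator-cleared compatibility at $x=a_i,a_j$ (right null vectors given by the image columns) and at $x=qa_i,qa_j$ (left null vectors given by the transformed kernel rows), compare the two expressions to get \eqref{qw} with the $\kappa_1 q/\kappa_2$ coefficient from the leading-order expansion, evaluate at $x=a_k$ for $k\neq i,j$ to split into the rank-one comparisons, and obtain \eqref{tijyi}--\eqref{tijzi} from $\tilde{A}(x)=R(qx)A(x)R(x)^{-1}$ at $x=qa_i$, with the $k=j$ case by uniqueness and the $i\leftrightarrow j$ symmetry. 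Your observation that $\det(xI+R_0)=(x-qa_i)(x-qa_j)$ makes $R(x)^{-1}=\operatorname{adj}(xI+R_0)$ polynomial is a mild streamlining of that last evaluation, but the argument is the same as the paper's.
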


\begin{proof}
For a parameter or matrix, $u$, we use the notation $\tilde{u} = T_{i,j}u$. After establishing \eqref{Rqformywz}, we may multiply \eqref{comp} by $(x-a_i)(x-a_j)(x-qa_i)(x-qa_j)$, whereby cancelling the denominators and evaluating at $x = qa_i$ shows that
\begin{gather*}
\tilde{y}_i \begin{pmatrix} 1 \vspace{1mm}\\ \dfrac{\tilde{z}_i}{\tilde{w}} \end{pmatrix} \begin{pmatrix} \tilde{w}_i & \tilde{w} \end{pmatrix}(qa_i I + R_0) = 0,
\end{gather*}
which specif\/ies that the columns are in the kernel of $\tilde{A}(qa_i)$, whereas evaluating \eqref{comp} at $x=qa_j$ gives a similar equation which is enough to uniquely specif\/ies $R_0$, which can be written explicitly as
\begin{gather*}
\frac{1}{\tilde{w}_i-\tilde{w}_j} \begin{pmatrix} qa_j\tilde{w}_j - qa_i \tilde{w}_i & q(a_j-a_i)\tilde{w} \\ q(a_i-a_j)\tilde{w}_i\tilde{w}_j/\tilde{w} & qa_i\tilde{w}_j - qa_j \tilde{w}_i \end{pmatrix}.
\end{gather*}
Evaluating \eqref{comp} at $x= a_i$ gives that
\begin{gather*}
(qa_iI + R_0) y_i \begin{pmatrix} 1 \vspace{1mm}\\ \dfrac{z_i}{w} \end{pmatrix} \begin{pmatrix} w_i & w \end{pmatrix},
\end{gather*}
which specif\/ies that the rows are in the kernel, which means $R_0$ may be computing in terms of~$z_i$ and~$z_j$, which is explicitly given by
\begin{gather*}
R_0 =\frac{1}{z_i-z_j} \begin{pmatrix} q a_i z_j-q a_j z_i & q w (a_j-a_i )\vspace{1mm} \\
 \dfrac{q \left(a_i-a_j\right) z_i z_j}{w} & q a_j z_j-q a_i z_i
\end{pmatrix}.
\end{gather*}
The comparison of these values specif\/ies $wT_{i,j} w_i = z_iT_{i,j}w$ which implies \eqref{qw}. The second part of~\eqref{qw} is specif\/ied by looking at the leading order expansion of~\eqref{comp} in the top right-hand entry. The remaining values of are easily and uniquely determined by evaluating \eqref{comp} at $x= a_k$.

We need only determine the action on $y_i$ and $z_i$, which can be achieved by evaluating
\begin{gather*}
\tilde{A}(x) = R(qx)A(x)R(x)^{-1},
\end{gather*}
at $x = qa_i$, whereby using the value of $R_0$ above gives \eqref{tijyi} and \eqref{tijzi}. By Proposition \ref{qlattice}, the uniqueness of $R(x)$ shows $T_{i,j} = T_{j,i}$, and the symmetry of $A(x)$ with respect to swapping $i$ and $j$ implies the action on $y_j$ and $z_j$ are obtained by swapping $i$ and $j$ in \eqref{tijyi} and~\eqref{tijzi}.
\end{proof}

The resulting form of the evolution was called the birational form of the $q$-Garnier system in~\cite{Sakai:Garnier}.

\begin{rem}
The author of \cite{Sakai:Garnier} also produces another parameterization in which every root of the polynomial $a_{1,2}(x)$ is a parameter say $y_1, \ldots, y_m$, while the other parameter are the values of $z_i = a_{1,1}(y_i)$ for $i=1,\ldots, n$. This may be considered a natural extension of known parameterizations of Lax pairs for Painlev\'e equations and discrete Painlev\'e equations. The issue in def\/ining a collection of variables in this way is that we can only formally distinguish the roots of $a_{1,2}(x)$. A discrete isomonodromic will produce $\tilde{a}_{1,2}(x)$, whose roots are $\tilde{y}_1, \ldots, \tilde{y}_n$, yet there is no way of ordering the $y_i$ and $\tilde{y}_i$ in a way that makes the mapping $y_i \to \tilde{y}_i$. The space formed by considering set of roots of monic polynomials of degree $n$ is a construction for the $n$-th symmetric power of $\mathbb{C}$, which may be consider the correct setting for such a parameterization. In the continuous setting, this parameterization makes more sense as the variables change continuously.
\end{rem}

Let us now start with a matrix satisfying $A(x)A(1/qx) = I$, then we take $A(x)$ to be given by \eqref{AviaB} where $B(x)$ is given by \eqref{prodform2}. We def\/ine variables $y_i$, $z_i$ and $w_i$ by
\begin{gather*}
B(a_i) = y_i \begin{pmatrix} 1 \vspace{1mm}\\ \dfrac{z_i}{w} \end{pmatrix} \begin{pmatrix} w_i & w \end{pmatrix} = \begin{pmatrix} w_i y_i & w y_i\vspace{1mm}\\
 \dfrac{w_i y_i z_i}{w} & y_i z_i \end{pmatrix},
\end{gather*}
for $i = 1,\ldots, N'$. The Lagrangian interpolation is equivalent to the formulation for~$A(x)$ above, hence the entries of $B(x) = (b_{i,j}(x))$ are also given by~\eqref{LagrangianIntq}. We may calculate the ef\/fect of~$E_{i,j}$ and~$F_{i,j}$ on these new variables.

\begin{prop}\label{eq}
The system \eqref{qeij} is equivalent to the following action on the variables~$y_k$,~$z_k$ and $w_k$
\begin{subequations}\allowdisplaybreaks
\begin{gather}
\label{qEij1}\frac{(E_{i,j}y_k) \tilde{w}}{wy_k} = \frac{a_j (a_i a_k-1)(z_j-z_k)}{(a_j-a_k) (z_i-z_j)}-\frac{a_i (a_j a_k-1)(z_i-z_k)}{(a_i-a_k) (z_i-z_j)},\\
\frac{E_{i,j}z_ky_k}{w_ky_k} - \frac{a_ia_jz_k}{w} =\frac{a_i z_j (a_i a_j-1) (z_i-z_k)}{w(a_k-a_i) (z_i-z_j)}+\frac{a_j z_i (a_i a_j-1) (z_j-z_k)}{w
(a_j-a_k) (z_i-z_j)},\\
\label{qEij3}wE_{i,j}w_k = w_kE_{i,j}w, \qquad
\frac{E_{i,j} w}{w} = 1 + \frac{(a_j-a_i)(a_ia_j-1)}{a_ia_j(z_i-z_j)},
\end{gather}
for $k \neq i,j$ and for $k=i$ with
\begin{gather}
\label{qEij4}\frac{(E_{i,j}wy_i)}{a_i -a_j} = \frac{b_{1,2}\left(\frac{1}{a_i}\right) z_i-b_{2,2}\left(\frac{1}{a_i}\right) w}{\left(a_i-\frac{1}{a_i}\right) (z_i-z_j)},\\
\label{qEij5} w E_{i,j} z_i = z_j E_{i,j}w, \qquad E_{i,j}\frac{w_i}{w} = \frac{wb_{2,1}\left(\frac{1}{a_i}\right) - b_{1,1}\left(\frac{1}{a_i}\right)z_i}{wb_{2,2}\left(\frac{1}{a_i}\right) - b_{1,2}\left(\frac{1}{a_i}\right)z_i},
\end{gather}
\end{subequations}
with the equivalent form for $k=j$ obtained by interchanging $i$ and $j$.
\end{prop}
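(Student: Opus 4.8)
The plan is to follow the argument used for Propositions~\ref{eh} and~\ref{fh} in the $h$-difference setting, replacing the additive symmetry $x\mapsto -x$ by the multiplicative one $x\mapsto 1/x$ and the reflection $a_i\mapsto -a_i$ by the inversion $a_i\mapsto 1/a_i$. By \eqref{qeij} it suffices to treat the generator $E_{1,2}$ and then conjugate by $s_{(1i)(2j)}$. The variables $y_k,z_k,w_k$ are those attached to the rank-one matrices $B(a_k)$, and the transformation acts on $B(x)$ through \eqref{leftRB}. The starting observation is that, for $E_{1,2}B(x)$ to be of the same factored form, the matrix $R_l(x)$ exhibited earlier for $E_{1,2}$ must satisfy $R_l(x)=R_l(1/x)$ and $\det R_l(x)=(x-a_1)(x-a_2)(x-1/a_1)(x-1/a_2)$, and must be normalised to be $\sim xI$ at infinity; together these force $R_l(x)=(x+1/x)I+R_0$ for a constant matrix $R_0$, with $\lambda(x)=(x-a_1)^{-1}(x-a_2)^{-1}$.

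First I would clear denominators. Writing $\tilde{B}(x)=E_{1,2}B(x)$ and multiplying \eqref{leftRB} by $x(x-a_1)(x-a_2)$ yields the polynomial identity
\begin{gather*}
\big(x^2I+xR_0+I\big)B(x)=x(x-a_1)(x-a_2)\,\tilde{B}(x),
\end{gather*}
whose left factor at $x=a_i$ equals $a_iR_l(a_i)$. Evaluating at $x=a_1$ and $x=a_2$ kills the right-hand side, so $R_l(a_1)$ and $R_l(a_2)$ annihilate the image columns of $B(a_1)$ and $B(a_2)$; since these images are recorded by $z_1,z_2$ and $w$, the two resulting linear conditions determine $R_0$ explicitly in terms of $z_1,z_2,w$ (this is the $R_l$ of the statement re-expressed through the $z$'s rather than the $u$'s). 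Evaluating the same identity at $x=a_k$ for $k\neq 1,2$ then gives $(a_k^2I+a_kR_0+I)B(a_k)$ on the left, a rank-one matrix whose image and covector are read against the rank-one factorisation of $\tilde B(a_k)$; matching the two factorisations produces \eqref{qEij1}--\eqref{qEij3}. The scalar $E_{1,2}w/w$ in \eqref{qEij3} is pinned down by comparing the expression for $R_0$ obtained from the image vectors of $B$ with the one obtained from the image vectors of $\tilde B$, exactly as the two computations of $R_0$ were matched in the $h$-difference proof.

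The remaining, and genuinely delicate, case is $k=1$ (the case $k=2$ following by the $1\leftrightarrow 2$ symmetry). Under $E_{1,2}$ the singular point $a_1$ is relocated to $1/a_1$, so the new data $E_{1,2}y_1,E_{1,2}z_1,E_{1,2}w_1$ must be extracted from the rank-one matrix $\tilde B(1/a_1)$ rather than from $\tilde B(a_1)$. Evaluating the polynomial identity at $x=1/a_1$ gives $a_1^{-1}R_l(1/a_1)B(1/a_1)$ on the left, and here the symmetry $R_l(1/a_1)=R_l(a_1)$ is decisive: $R_l(a_1)$ is the rank-one matrix annihilating the image of $B(a_1)$, and its image is the $z_2$-eigendirection of $R_0$, a fixed line; this forces $w\,E_{1,2}z_1=z_2\,E_{1,2}w$ and expresses $E_{1,2}y_1,E_{1,2}w_1$ through the entries $b_{p,q}(1/a_1)$ of $B$ at $1/a_1$, giving \eqref{qEij4}--\eqref{qEij5}. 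I expect this step to be the main obstacle: it requires evaluating $B$ at the inverted node $1/a_1$ while keeping track of the scalar factor $\lambda$ and of the renormalisation $w\mapsto E_{1,2}w$, so that the rank-one decomposition of $\tilde B(1/a_1)$ is consistent, whereas the $k\neq 1,2$ formulas are routine linear algebra once $R_0$ is known. The uniqueness supplied by Theorem~\ref{qlattice} guarantees that $R_l(x)$, hence the whole transformation, is the one claimed, and legitimises obtaining the $k=2$ formulas by interchanging the indices.
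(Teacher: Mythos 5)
Your overall strategy is sound, but it is worth flagging that it is not the paper's: the proof of Propositions~\ref{eq} and~\ref{fq} explicitly opens by taking ``a different approach from the proofs of Propositions~\ref{eh} and~\ref{fh}'', exploiting the involutive nature of $E_{i,j}$ to write the backward relation with the cofactor matrix $R_0^*$, so that $R_l(x)$ comes out parameterized by the \emph{transformed} data $\tilde{z}_i,\tilde{z}_j,\tilde{w}$; the relations $wE_{i,j}z_i=z_jE_{i,j}w$ from \eqref{qEij5} are then used at the very end to convert \eqref{qEij1}--\eqref{qEij3} into the untransformed variables. Your route — pinning $R_0$ by the annihilation conditions $R_l(a_1)B(a_1)=R_l(a_2)B(a_2)=0$ (four linear equations in the four entries of $R_0$, solved in terms of $z_1,z_2,w$), then evaluating at $x=a_k$ for the generic formulas and at $x=1/a_1$, via $R_l(1/a_1)=R_l(a_1)$, for the relocated point — is the direct $q$-analogue of the $h$-symmetric proof, and it is correct for every scale-invariant piece of the statement: in particular your identification of the image of $R_l(a_1)$ with the eigendirection $(1,z_2/w)^{T}$ of $R_0$, forcing $w\,E_{1,2}z_1=z_2\,E_{1,2}w$, and the extraction of $E_{1,2}(y_1w)$ and $E_{1,2}w_1/E_{1,2}w$ from the entries $b_{p,q}(1/a_1)$, are exactly right.

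The genuine gap is the second part of \eqref{qEij3}, the formula for $E_{i,j}w/w$, and the mechanism you propose for it provably cannot deliver it. Every entry of $R_0$, expressed through the transformed data, involves only scale-invariant combinations such as $\tilde{w}/(\tilde{z}_i-\tilde{z}_j)$ and $\tilde{z}_i\tilde{z}_j/(\tilde{w}(\tilde{z}_i-\tilde{z}_j))$; indeed the rescaling $(\tilde{y}_k,\tilde{z}_k,\tilde{w}_k,\tilde{w})\mapsto(\tilde{y}_k/s,\,s\tilde{z}_k,\,s\tilde{w}_k,\,s\tilde{w})$ leaves every rank-one matrix $\tilde{B}(\,\cdot\,)$, hence $\tilde{B}$ and $R_0$, unchanged. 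Consequently, equating the expression for $R_0$ obtained from the image vectors of $B$ with the one obtained from the image vectors of $\tilde{B}$ yields precisely $wE_{i,j}z_i=z_jE_{i,j}w$ and $wE_{i,j}z_j=z_iE_{i,j}w$ together with identities, and leaves the ratio $E_{i,j}w/w$ completely free. (The same is true in the $h$-difference proof you invoke: there the matching of the two computations of $R_0$ produced only the ratio relation $w\,T_{i,j}w_i=-z_j\,T_{i,j}w$, the first part of \eqref{whrew}, not the formula for $T_{i,j}w/w$.) Since the unlabeled formula for $E_{i,j}z_k$ and \eqref{qEij1}, \eqref{qEij4} fix the transformed variables absolutely rather than up to this scale, the scale must be supplied by a separate normalization computation — for instance comparing the leading coefficients of the top-right entries on both sides of $(x-a_i)(x-a_j)\tilde{B}(x)=R_l(x)B(x)$, which is exactly how the paper fixes the analogous constant in the second part of \eqref{qw} for the non-symmetric system. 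Once that step is inserted, your argument goes through, and your appeal to uniqueness to justify the $i\leftrightarrow j$ interchange matches the paper's reasoning.
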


\begin{prop}\label{fq}
The system \eqref{qfij} is equivalent to the following action on the variables~$y_k$,~$z_k$ and $w_k$
\begin{subequations}
\begin{gather}
\label{qFij1}\frac{(F_{i,j}y_kw)-qa_ia_jwy_k}{(1-qa_ia_j)wy_k} = \frac{a_i (w_i-w_k)}{(a_i-a_k) (w_i-w_j)}+\frac{a_j (w_j-w_k)}{(a_k-a_j) (w_i-w_j)},\\
F_{i,j} z_k - \frac{w_jz_kF_{i,j}w}{w^2} = \frac{a_j y_k z_k (w_j-w_k) (q a_i a_k-1)}{w (a_j-a_k) (F_{i,j}y_k)},\\
\label{qFij3} z_k F_{i,j}w_k = w F_{i,j} z_k, \qquad \frac{F_{i,j} w}{w} = 1 + \frac{(a_i-a_j)(1-qa_ia_j)\kappa_1}{qa_ia_j(w_i-w_j)\kappa_2},
\end{gather}
for $k \neq i,j$ whereas for $k =i$ we have
\begin{gather}
(F_{i,j} wy_i) = \frac{(1-q)qa_ia_ja_{1,2}\left( \frac{1}{qa_i} \right)}{(1-qa_i^2)(1-qa_ia_j)}\nonumber \\
\hphantom{(F_{i,j} wy_i) =}{} - \frac{q^2 a_i \left(a_i-a_j\right) (a_i a_j-1) \left(z_j a_{12}\left(\frac{1}{q a_i}\right)+w a_{1,1}\left(\frac{1}{q a_i}\right)\right)}{\big(q a_i^2-1\big)(z_i-z_j) (q a_i a_j-1)},\label{qFij4}\\
\label{qFij5}\frac{F_{i,j} w_i y_i}{z_i} + \frac{F_{i,j}w y_i}{w} = \frac{(1-q)qa_ia_j a_{1,2}\left(\frac{1}{qa_i}\right)}{\big(1-qa_i^2\big)(1-qa_ia_j)w} +\frac{(1-q)qa_ia_j a_{1,1}\left(\frac{1}{qa_i}\right)}{\big(1-qa_i^2\big)(1-qa_ia_j)z_i},
\end{gather}
\end{subequations}
with the equivalent form for $k=j$ obtained by interchanging $i$ and $j$.
\end{prop}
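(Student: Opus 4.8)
The plan is to mirror the derivation already carried out for the non-symmetric $q$-Garnier system and for the symmetric $h$-difference maps in Propositions~\ref{eh} and~\ref{fh}, treating $F_{i,j}$ exactly as its companion $E_{i,j}$ in Proposition~\ref{eq}: determine the explicit right factor $R_r(x)$ that keeps $B(x)$ of the product form \eqref{prodform2}, reduce the defining relation \eqref{rightRB} to a (Laurent-)polynomial identity by clearing the scalar prefactor $\lambda(x)$, and then read off the action by evaluating at the nodes where $B$ is singular. Since $F_{i,j}$ is obtained from $F_{N',N'-1}$ by conjugating with $s_{(1i)(2j)}$, and since the image and kernel data of each factor are governed by \eqref{kerq} and \eqref{Imq}, it suffices to handle the pair $(i,j)$ and record the outcome in the reparameterized variables defined through $B(a_k)=y_k\begin{pmatrix}1\\ z_k/w\end{pmatrix}\begin{pmatrix}w_k & w\end{pmatrix}$.

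First I would pin down $R_r(x)$. Its determinant is prescribed, $\det R_r(x)=(x-a_i)(x-a_j)(x-1/(qa_i))(x-1/(qa_j))$, and by the rigidification following \eqref{symlax} it is normalized to be proportional to the identity near $x=\infty$; together with the symmetry requirement $R_r(x)=R_r(1/(qx))$ this fixes the leading diagonal part (with entries of the type $x+1/(qx)$ and $x+1/x$) and leaves only a constant matrix $C$ undetermined. Clearing $\lambda(x)$ in \eqref{rightRB} turns the relation into a polynomial identity between $B(x)R_r(x)$ and $\widetilde B(x)$. Evaluating at the two nodes $x=a_i$ and $x=a_j$, where $R_r$ drops rank while $B(a_k)$ is rank one with kernel governed by $w_k$, yields four scalar conditions that solve for $C$ explicitly in terms of $w_i,w_j,w$, exactly as in the matrix written down in the proof of Proposition~\ref{fh}. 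With $C$ known, evaluating the identity at a generic node $x=a_k$ with $k\neq i,j$ expresses $\widetilde B(a_k)$, and matching its rank-one decomposition against $B(a_k)R_r(a_k)$ gives $F_{i,j}y_k$, $F_{i,j}z_k$, $F_{i,j}w_k$; comparing the leading coefficient in $x$ supplies the scalar ratio $F_{i,j}w/w$. This produces \eqref{qFij1}--\eqref{qFij3}.

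For the remaining indices the node $x=a_i$ has already been consumed in fixing $C$, so to obtain $F_{i,j}y_i$, $z_i$, $w_i$ I would instead invoke the symmetry \eqref{symmqdiff}, $A(x)=B(1/(qx))B(x)^{-1}$, and evaluate the transformed relation at the reflected point $x=1/(qa_i)$; through the Lagrangian interpolation \eqref{LagrangianIntq} the values of the linear problem there are explicit in the $y,z,w$ variables, and this yields \eqref{qFij4}--\eqref{qFij5}, with $k=j$ following by interchanging $i$ and $j$, a symmetry guaranteed abstractly by the uniqueness clause of Theorem~\ref{qlattice}. The main obstacle is precisely this $k=i$ computation: because the natural evaluation node is used up, one is forced to exploit the reflection $x\mapsto 1/(qx)$ and to track the factors of $q$ and the reciprocal arguments $1/(qa_i)$ carefully, which is where the asymmetry between the two diagonal entries of $R_r(x)$ (one carrying $1/(qx)$, the other $1/x$) makes the bookkeeping delicate. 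A secondary point that must be \emph{checked} rather than merely computed is that the constructed $R_r(x)$ genuinely satisfies $R_r(x)=R_r(1/(qx))$, so that, by the discussion following \eqref{symlax}, the transformation descends to a discrete isomonodromic deformation of the symmetric linear problem and preserves the constraint $Y(x)=Y(1/x)$.
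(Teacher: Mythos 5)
Your proposal is correct in substance and would produce the stated formulas, but it is not the paper's route: the paper opens the (joint) proof of Propositions~\ref{eq} and~\ref{fq} by explicitly declaring that it takes ``a different approach from the proofs of Propositions~\ref{eh} and~\ref{fh}'', and your plan is precisely the Propositions~\ref{eh}/\ref{fh} strategy transplanted to the $q$-case. Concretely, you fix the constant part of $R_r(x)=\bigl(x+\tfrac{1}{qx}\bigr)I+R_1$ from the forward rank conditions $B(a_i)R_r(a_i)=0$, $B(a_j)R_r(a_j)=0$, so that $R_1$ comes out in the \emph{original} variables $w_i,w_j,w$. The paper instead exploits the involutivity of $E_{i,j}$ and $F_{i,j}$: multiplying \eqref{rightRB} by the cofactor (adjugate) matrix of $R_r$ and using $\det R_r(x)=(x-a_i)(x-a_j)(qxa_i-1)(qxa_j-1)/q^2x^2$ gives a reverse relation expressing $B$ as the image of $F_{i,j}B$ under a right multiplication; evaluating that relation at the reflected points $x=1/(qa_i)$, $x=1/(qa_j)$, where $F_{i,j}B$ drops rank, pins down $R_r$ in the \emph{transformed} variables $F_{i,j}w_i$, $F_{i,j}w_j$, $F_{i,j}w$. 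What the paper's route buys is that the swap relations (the first parts of \eqref{qFij3} and \eqref{qFij5}) and the law for $F_{i,j}w$ fall out immediately from comparing the two presentations of the same matrix $R_r$ (unique by Theorem~\ref{qlattice}), after which \eqref{qFij1}--\eqref{qFij3} are converted to the displayed form; what your route buys is elementarity and directness, since \eqref{qFij1}--\eqref{qFij3} then appear in the original variables with no conversion step, in complete parallel with the $h$-case. For $k=i$ the two routes coincide: one evaluates the cleared relation at $x=1/(qa_i)$, where $R_r(1/(qa_i))=R_r(a_i)$ by the built-in symmetry and $B(1/(qa_i))$ is explicit through the interpolation \eqref{LagrangianIntq}, exactly as you describe.

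Three small repairs to your write-up, none of which derails the argument. First, invariance under $x\to 1/(qx)$ forces \emph{both} diagonal entries of $R_r$ to have the form $x+\tfrac{1}{qx}+\mathrm{const}$; the combination $x+\tfrac{1}{x}$ belongs to $R_l$, so the ``asymmetry between the two diagonal entries of $R_r$'' you anticipate does not exist, and for the same reason the symmetry $R_r(x)=R_r(1/(qx))$ that you propose to check at the end holds automatically for your ansatz (there is genuinely nothing to verify there). Second, $\det R_r$ is a Laurent polynomial behaving like $x^2$ at infinity, not the quartic polynomial you wrote; only its zero set $\{a_i,a_j,1/(qa_i),1/(qa_j)\}$ is as you say. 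Third, because $R_r$ has a simple pole at $x=0$, the scalar $\lambda(x)$ in \eqref{rightRB} must carry a factor of $x$ (compare the $\lambda$ in the proposition defining $F_{N',N'-1}$) so that $F_{i,j}B$ is again polynomial with the correct value at the origin; your ``clearing $\lambda$'' step should account for this normalization.
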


\begin{proof}[Proof of Propositions \ref{eq} and \ref{fq}]
We wish to take a dif\/ferent approach from the proofs of Propositions~\ref{eh} and~\ref{fh} by deducing $R_l(x)$ and $R_r(x)$ in terms of $\tilde{A}(x)$ and $\hat{A}(x)$ where $\tilde{u} = E_{i,j} u$ and $\hat{u} = F_{i,j}u$ respectively. Since we know the determinant of $R_l(x)$ must include and factor of $(x-a_i)(x-a_i)$ and is symmetric with respect to the action $x \to 1/x$, we have that~$R_l(x)$ takes the form
\begin{gather*}
R_l(x) = I\left(x+ \frac{1}{x}\right) + R_0, \qquad \det R_l(x) = (x-a_i)(x-a_j)(xa_i-1)(xa_j-1)/x^2,
\end{gather*}
and $\lambda(x) = (x-a_i)^{-1}(x-a_j)^{-1}$, whereas $R_r$ is symmetric with respect to $x \to 1/qx$, hence~$R_r(x)$ takes the form
\begin{gather*}
R_r(x) = I\left(x+ \frac{1}{qx}\right) + R_1, \qquad \det R_r(x) = (x-a_i)(x-a_j)(qxa_i-1)(qxa_j-1)/q^2x^2,
\end{gather*}
with the same $\lambda(x)$. Due to the involutive nature of the transformation, it is natural that the~$R_0$ and~$R_1$ satisfy
\begin{gather*}
(xa_i-1)(xa_j-1) A(x) = \left( \left( x + \frac{1}{x} \right)I + R_0^*\right) \tilde{A}(x),\\
(qxa_i-1)(qxa_j-1) A(x) =\hat{A}(x) \left( \left( x + \frac{1}{x} \right)I + R_1^*\right) ,
\end{gather*}
where $R_i^*$ is the cofactor matrix for $R_i$ for $i = 0,1$. Since $\tilde{\tilde{u}} = \hat{\hat{u}} = u$, these two equations are equivalent to~\eqref{leftRB} and~\eqref{rightRB} applied to the transformed values of~$A(x)$. This gives
\begin{gather*}
R_l(x) = \begin{pmatrix} x + \dfrac{1}{x} - a_j - \dfrac{1}{a_j} & 0 \\ 0 & x + \dfrac{1}{x} - a_i - \dfrac{1}{a_i} \end{pmatrix}
 + \frac{(a_i-a_j)(1-a_ia_j)}{a_ia_j(\tilde{z}_i-\tilde{z}_j)} \begin{pmatrix} \tilde{z}_j & -\tilde{w} \vspace{1mm}\\ \dfrac{\tilde{z}_i\tilde{z}_j}{\tilde{w}} & -\tilde{z}_i \end{pmatrix},\nonumber\\
R_r(x) = \begin{pmatrix} x + \dfrac{1}{qx} - a_j - \dfrac{1}{qa_j} & 0 \\ 0 & x + \dfrac{1}{qx} - a_i - \dfrac{1}{qa_i} \end{pmatrix}\\
\hphantom{R_r(x) =}{}
 + \frac{(a_i-a_j)(1-qa_ia_j)}{qa_ia_j(\hat{w}_i-\hat{w}_j)} \begin{pmatrix} \hat{w}_j & \hat{w} \vspace{1mm}\\ -\dfrac{\hat{w}_i\hat{w}_j}{\tilde{w}} & -\hat{w}_j \end{pmatrix},
\end{gather*}
from which we may calculate and equivalent form of (\ref{qEij1})--(\ref{qEij3}) and (\ref{qFij1})--(\ref{qFij3}) in terms of $\tilde{w}_i$'s and $\hat{w}_i$ respectively. Comparing entries of \eqref{leftRB} and \eqref{rightRB} using these values at $x = 1/a_i$ and $x= 1/qa_i$ gives the remaining values and brings gives (\ref{qEij4})--(\ref{qEij5}) and (\ref{qFij4})--(\ref{qFij5}). The f\/irst parts of \eqref{qEij5} and \eqref{qFij5} bring (\ref{qEij1})--(\ref{qEij3}) and (\ref{qFij1})--(\ref{qFij3}) into their presented form, similarly with $x= 1/a_j$ and $x=1/qa_j$.
\end{proof}

\section{Special cases}\label{sec:special}

We wish to demonstrate that the simplest cases of the $h$-dif\/ference and $q$-dif\/ference Garnier systems are known to coincide with discrete versions of the sixth Painlev\'e equation. Specializing the higher cases coincide with discrete Painlev\'e equations that appear higher in Sakai's hierarchy. We summarize the results in Table~\ref{sumtable}. To avoid confusion, we have used the value of $N$ and since we have used the notation $r_{1,2}$ and $r_{2,1}$ in both sections we state that the value of $r_{2,1}$ in Table~\ref{sumtable} is specif\/ied by~\eqref{r21h} and the value of $r_{1,2}$ is given by~\eqref{r12q}.

\begin{table}[!ht]\centering
\begin{tabular}{|p{4cm} |p{1cm}| p{5cm} |p{3cm}|} \hline
& $N$& conditions & Painlev\'e equation \\ \hline \hline
$h$-Garnier & $6$ & $d_1 \neq d_2$ & $d$-$\mathrm{P}\big(A_2^{(1)}\big)$ \\
 & $8$ & $d_1 = d_2$, $r_{2,1} = 0$ & $d$-$\mathrm{P}\big(A_1^{(1)}\big)$ \\ \hline
 symmetric $h$-Garnier & $8$ & $d_1 = d_2$, $r_{2,1} = 0$ & $d$-$\mathrm{P}\big(A_1^{(1)}\big)$ \\ \hline
 $q$-Garnier & $4$ & $\kappa_1\neq \kappa_2$, $\theta_1\neq \theta_2$ & $q$-$\mathrm{P}\big(A_3^{(1)}\big)$ \\
 & $6$ & $\kappa_1 = \kappa_2$, $\theta_1=\theta_2$ & $q$-$\mathrm{P}\big(A_2^{(1)}\big)$ \\
 & $8$ & $\kappa_1 = \kappa_2$, $\theta_1=\theta_2$, $r_{1,2} = 0$ & $q$-$\mathrm{P}\big(A_1^{(1)}\big)$ \\ \hline
symmetric $q$-Garnier & $8$ & $\kappa_1 = \kappa_2$, $\theta_1 = \theta_2$, $r_{1,2} = 0$ & $q$-$\mathrm{P}\big(A_0^{(1)}\big)$ \\ \hline
\end{tabular}
\caption{A summary of the special cases of discrete Garnier systems whose evolution coincides with dis\-cre\-te Painlev\'e equations.\label{sumtable}}
\end{table}

We remark that scalar Lax pairs for the $q$-dif\/ference cases of discrete Painlev\'e equations we present have also been presented in~\cite{Yamada:LaxqEs} and more recently scalar Lax pairs for the $h$-dif\/ference cases appeared in~\cite{Kajiwara2015}. A correspondence between the scalar Lax pairs and matrix Lax pairs for the $q$-$\mathrm{P}\big(A_2^{(1)}\big)$ case that appears here was constructed in~\cite{Ormerod:qE6}. Such correspondences are almost sure to exist for the other cases, however, we do not pursue these lengthy correspondences here. We do however remark that the characteristic properties of the Lax pairs presented in~\cite{Yamada:LaxqEs} and~\cite{Kajiwara2015} and scalar versions of the Lax pairs we present here seem to coincide up to some nontrivial transformations.

\subsection[The twisted $m=1$ asymmetric $q$-dif\/ference Garnier system]{The twisted $\boldsymbol{m=1}$ asymmetric $\boldsymbol{q}$-dif\/ference Garnier system}

The f\/irst system we present as a special case is the $q$-analogue of the sixth Painlev\'e equation, which we write as
\begin{subequations}\label{qP6}
\begin{gather}
z(q t)z(t) = \frac{b_3 b_4(y(t) - a_1t)(y(t)-a_2t)}{(y-a_3)(y-a_4)},\\
y(q t)y(t) = \frac{a_3a_4(z(q t) - b_1 t)(z(q t) - b_2 t)}{(z(q t) - b_3)(z(q t) - b_4)},
\end{gather}
\end{subequations}
where
\begin{gather*}
q = \frac{a_1a_2b_3b_4}{b_1 b_2 a_3a_4},
\end{gather*}
which was f\/irst presented by Jimbo and Sakai~\cite{Sakai:qP6}. We consider an associated linear problem of the from \eqref{linearqdiff}, where
\begin{gather}\label{ALqP6}
A(x) = \begin{pmatrix} \theta_1 & 0 \\ 0 & \theta_2 \end{pmatrix} \begin{pmatrix}
1 & u_1 \vspace{1mm}\\
 \dfrac{x}{a_1u_1} & 1
\end{pmatrix}\begin{pmatrix}
1 & u_2 \vspace{1mm}\\
 \dfrac{x}{a_2u_2} & 1
\end{pmatrix} \begin{pmatrix}
1 & u_3 \vspace{1mm}\\
 \dfrac{x}{a_3u_3} & 1
\end{pmatrix} \begin{pmatrix}
1 & u_4 \vspace{1mm}\\
 \dfrac{x}{a_4u_4} & 1
\end{pmatrix}.
\end{gather}
This matrix is of the form
\begin{gather*}
A(x) = A_0 + A_1 x + A_2 x^2,
\end{gather*}
where $A_0$ is upper triangular with diagonal entries $\theta_1$ and $\theta_2$, while $A_2$ is lower triangular with diagonal entries
\begin{gather*}
\kappa_1 = \frac{\theta_1u_1 u_3}{a_2a_4u_2u_4}, \qquad \kappa_1 = \frac{\theta_2u_2 u_4}{a_1a_3u_1u_3}.
\end{gather*}
The two natural consequences that
\begin{gather}
\label{detqP6}\det A(x) = \kappa_1\kappa_2 (x-a_1)(x-a_2)(x-a_3) (x-a_4),\\
\theta_1 \theta_2 = \kappa_1 \kappa_2 a_1 a_2 a_3 a_4,\nonumber
\end{gather}
which means that by diagonalizing the constant coef\/f\/icient, we may let $A_0 = \operatorname{diag}(\theta_1,\theta_2)$ and have a pair
\begin{gather*}
(A_1,A_2) \in \mathcal{M}(a_1, \ldots, a_4; \kappa_1, \kappa_2;\theta_1, \theta_2).
\end{gather*}
We may diagonalize $A_2$ in order to bring this Lax pair into the form of Jimbo and Sakai~\cite{Sakai:qP6}. We propose a slightly dif\/ferent form in which $A_0$ and $A_2$ are upper and lower triangular respectively. This gives us a simple alternative parameterization, which takes the general form
\begin{gather}\label{quasisakaiqP6}
A(x,t) = \begin{pmatrix}
\kappa_2 x^2 + \alpha x + \theta_1 & w (x-y) \vspace{1mm}\\
\dfrac{x^2 \gamma + \delta x}{w} & \kappa_2 x^2 + \beta x + \theta_2
\end{pmatrix}.
\end{gather}
We satisfy \eqref{detqP6} when $x=y$ by letting
\begin{alignat*}{3}
&a_{1,1}(x) = \kappa_1z_1, \qquad &&a_{11}(x) = \kappa_2z_2,& \\
&z_1 = \frac{(y-a_1)(y-a_2)}{z}, \qquad &&z_2 = (y-a_3)(y-a_4)z.&
\end{alignat*}
We may solve for $\alpha$, $\beta$, $\gamma$ and $\delta$ in terms of $y$, $z$ and $w$ to show
\begin{gather*}\allowdisplaybreaks
\alpha = \frac{z_1 - \kappa_1 y^2 - \theta_1}{y}, \qquad \beta = \frac{z_1 - \kappa_1 y^2 - \theta_1}{y},\\
\gamma = \kappa_1\kappa_2(a_1 + a_2 + a_3 + a_4) + \frac{\alpha}{\kappa_1} + \frac{\beta}{\kappa_2}, \\
\delta= \frac{\kappa_1\kappa_2(a_1a_2a_3 + a_1a_2a_4 + a_1a_3a_4+ a_2a_3a_4)}{y}-\frac{\theta_1\alpha + \theta_2 \beta}{y}.
\end{gather*}
The only minor dif\/ference in the theory presented above is that the constant coef\/f\/icient in the series part of the solution, $Y_{\infty}(x)$, is lower triangular, rather than the identity, as is the leading term in the discrete isomonodromic deformation.

As above, we wish to we have four variables, $u_1, \ldots,u_4$, with one constant with respect to~$T_{1,2}$, which we wish to identify with the variables $y$, $z$ and $w$. Equating the various coef\/f\/icients of~\eqref{quasisakaiqP6} with the corresponding expressions in~\eqref{ALqP6} gives the following expressions for $y$ and $z$
\begin{subequations}\label{qP6yz}
\begin{gather}
y = -\frac{a_2 a_3 u_2 u_3 (u_1+u_2+u_3+u_4)}{a_2 u_2 (u_1+u_2) u_4+a_3 u_1 u_3 (u_3+u_4)},\\
z = -\frac{a_3a_4(y-a_1)(y-a_2)(u_3+u_4)}{(y-a_3)(y-a_4)(u_1+u_2)\theta_1}.
\end{gather}
\end{subequations}
Conversely, we notice that since the right-most factor of $A(a_4)$, $L(a_4,u_4,a_4)$, has a $0$ eigenvector of the form $(u_4,-1)$, we may iteratively def\/ine $u_i$ by determining the $0$-eigenvector at $x=a_i$ for $i = 1,\ldots, 4$. For example, using $x= a_4$ we see
\begin{gather*}
u_4 = -a_{1,2}(a_4)/a_{1,1}(a_4),
\end{gather*}
which is given in terms of $y$, $w$ and $z$ above. This gives a right factor which we may remove to iteratively proceed for $x= a_3$ and so on and so forth. This gives us a one-to-one correspondence between $u_1, \ldots, u_4$ and $y$, $z$ and $w$ with $\kappa_1$ and $\kappa_2$ specif\/ied, with constraint, in terms of $u_1, \ldots, u_4$.

\begin{prop}\label{qP6ident}
The birational transformation of algebraic varieties
\begin{gather*}
T_{1,2} \colon \ \mathcal{M}_q(a_1,a_2,a_3, a_4; \kappa_1, \kappa_2;\theta_1, \theta_2) \to \mathcal{M}_q(qa_1,qa_2,a_3, a_4; \kappa_1/q, \kappa_2/q;\theta_1, \theta_2)
\end{gather*}
is equivalent to the mapping $t \to qt$ in \eqref{qP6} where the values of~$b_i$ are given by
\begin{gather}\label{qP6bvals}
b_1 = \frac{q^2 a_1a_2}{\theta_1}, \qquad b_2 = \frac{q^2 a_1 a_2}{\theta_2}, \qquad b_3 = \frac{q}{\kappa_1}, \qquad b_4 = \frac{q^2}{\kappa_2}.
\end{gather}
\end{prop}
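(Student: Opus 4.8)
The plan is to treat the explicit parameterization \eqref{quasisakaiqP6}, together with the already-solved expressions for $\alpha,\beta,\gamma,\delta$ in terms of $(y,z,w)$, as coordinates on the $m=1$ moduli space and to show that $T_{1,2}$ acts on $(y,z)$ exactly as the map $t\to qt$ of \eqref{qP6}. First I would fix the dictionary: by \eqref{qP6yz} the pair $(y,z)$ is a rational function of $(u_1,\dots,u_4)$, and conversely the $u_i$ are recovered one at a time from the kernel vectors of $A(a_i)$ (as explained after \eqref{qP6yz}), so that $(y,z,w)$ together with the fixed data $\kappa_1,\kappa_2,\theta_1,\theta_2$ are birational coordinates on $\mathcal{M}_q(a_1,\dots,a_4;\kappa_1,\kappa_2;\theta_1,\theta_2)$. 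I then set $y(t)=y$, $z(t)=z$ and identify the $T_{1,2}$-images with $y(qt)=T_{1,2}y$ and $z(qt)=T_{1,2}z$, using the action on the $u_i$ from \eqref{qT12}, so that the two relations of \eqref{qP6} become algebraic identities between $y$, $z$ and their $T_{1,2}$-transforms; the variable $w$ plays only the role of a gauge scaling, transforming by the scalar factor recorded in \eqref{qw}.

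The first relation of \eqref{qP6} comes purely from the factorization structure. Since $a_{1,2}(x)=w(x-y)$ vanishes at $x=y$, the matrix $A(y)$ is lower triangular, so $\det A(y)=a_{1,1}(y)\,a_{2,2}(y)$; comparing with \eqref{detqP6} and the splitting $a_{1,1}(y)=\kappa_1 z_1$, $a_{2,2}(y)=\kappa_2 z_2$ with $z_1=(y-a_1)(y-a_2)/z$ and $z_2=z(y-a_3)(y-a_4)$ records $z$ as the ratio of the two triangular factors at $x=y$. Applying the same description at the $T_{1,2}$-shifted level, where $a_1,a_2$ become $qa_1,qa_2$ and $\kappa_1,\kappa_2$ become $\kappa_1/q,\kappa_2/q$, and multiplying the two expressions produces the first equation of \eqref{qP6}, the constant $b_3b_4$ absorbing the $\kappa$-dependence. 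The second relation is the genuine deformation statement: here I would use $R(x)=L(x/q,u_2,a_2)^{-1}L(x/q,u_1,a_1)^{-1}D^{-1}$, which by \eqref{Rqformywz} has the form $R(x)=(xI+R_0)/\big((x-qa_1)(x-qa_2)\big)$, and evaluate the compatibility \eqref{comp} at the distinguished points $x=y$ (where $A$ is triangular) and $x=T_{1,2}y$ (the root of the transformed $(1,2)$ entry). Feeding in the explicit $\alpha,\beta,\gamma,\delta$ turns these into scalar equations whose elimination of $R_0$ yields $y(qt)\,y(t)$ as the quoted rational function of $z(qt)$; this is exactly the content already packaged, for general $m$, in \eqref{qGarnier} specialized to $m=1$, $i=1$, $j=2$.

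It remains to pin down the identification \eqref{qP6bvals}. The points $b_1,b_2$ (resp.\ $b_3,b_4$) are the singular values of the $z$-equation, and they are read off from the characteristic data of the linear problem: $\theta_1,\theta_2$ are the eigenvalues of $A(x)$ at $x=0$ and $\kappa_1,\kappa_2$ those of the leading coefficient at $x=\infty$, subject to $\theta_1\theta_2=\kappa_1\kappa_2 a_1a_2a_3a_4$ from \eqref{detqP6}. Matching these against the poles and zeros appearing in the second relation of \eqref{qP6} forces $b_1=q^2a_1a_2/\theta_1$, $b_2=q^2a_1a_2/\theta_2$, $b_3=q/\kappa_1$, $b_4=q^2/\kappa_2$, after which one verifies the closure condition relating $q$ to the $a_i$ and $b_i$. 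I expect the main obstacle to be precisely this bookkeeping of the time variable: \eqref{qP6} carries an explicit $t$, whereas on the moduli side the deformation shifts only $a_1,a_2$ by $q$. One must therefore reinterpret the shift $a_i\to qa_i$ for $i=1,2$ as a $t$-step, carefully distinguishing the moving points $a_1,a_2$ from the fixed points $a_3,a_4$ and distributing the resulting powers of $q$ (the $q$ versus $q^2$ in \eqref{qP6bvals}) correctly among the $b_i$. Once the coordinates and parameters are matched, the uniqueness of the deformation furnished by Theorem~\ref{qlattice} guarantees that the computed map and the map $t\to qt$ of \eqref{qP6} coincide, completing the identification.
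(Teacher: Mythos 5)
Your overall strategy (use the $(y,z,w)$ coordinates from \eqref{quasisakaiqP6}, determine $R(x)$ in the form \eqref{Rqformywz}, and read the map off the compatibility \eqref{comp}) is essentially the paper's: there, $u_1$ and $u_2$ are computed from the image vectors via \eqref{Imq}, which pins down $R(x)$ explicitly in terms of $(y,z,w)$, and then \emph{all} of the relations for $T_{1,2}w$, $T_{1,2}z$ and $T_{1,2}y$ are extracted from \eqref{comp}. However, there is a genuine gap in your treatment of the first equation of \eqref{qP6}. You claim it ``comes purely from the factorization structure'': record $z$ as the ratio of the triangular entries of $A$ at $x=y$, do the same at the shifted level, and multiply. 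This cannot work. The definitional relations express $z$ in terms of $y$ and $a_{2,2}(y)$, and $T_{1,2}z$ in terms of $T_{1,2}y$ and $\tilde a_{1,1}(T_{1,2}y)$; their product therefore involves $T_{1,2}y$ and the entries of both $A$ and $\tilde A$, whereas the first equation of \eqref{qP6} involves only $y(t)$ and constants. Before \eqref{comp} is imposed there is no relation whatsoever between $\tilde A$ and $A$ --- the compatibility \emph{is} the definition of the map --- so no identity linking $z(qt)z(t)$ to $y(t)$ alone can be definitional. Indeed, in the paper's proof the relation $T_{1,2}z \cdot z = \frac{q^2}{\kappa_1\kappa_2}\,\frac{(y-a_1)(y-a_2)}{(y-a_3)(y-a_4)}$ is derived from \eqref{comp} on exactly the same footing as the relation for $T_{1,2}y\cdot y$: both halves of \eqref{qP6} are deformation statements, not bookkeeping.

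The fix is already contained in the rest of your plan: both equations follow from evaluating \eqref{comp}, with $R$ of the form \eqref{Rqformywz} and $\det(xI+R_0)=(x-qa_1)(x-qa_2)$, at the distinguished points $x=y$ and $x=T_{1,2}y$, together with the behaviour at $x=0$ and $x=\infty$ (which is where $\theta_1,\theta_2$ and $\kappa_1,\kappa_2$ enter and produce $b_1,b_2$ and $b_3,b_4$); equivalently, one can specialize the general birational form \eqref{qGarnier} to $m=1$, $i=1$, $j=2$ (modulo the change of variables between the two parameterizations), or, as the paper remarks after the proof, combine \eqref{qP6yz} with \eqref{qT12}. Once the first equation is rederived this way, your identification \eqref{qP6bvals} and the appeal to uniqueness in Theorem~\ref{qlattice} go through as you describe.
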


\begin{proof}
Using \eqref{Imq}, we see that the image of $A(a_1)$ and $A(a_2)$ gives $u_1$ and $u_2$, which are explicitly given by
\begin{gather*}
u_1 = \frac{\theta _2 w (a_1-y)}{\theta _1 \big(a_1 \beta +a_1^2 \kappa _2+\theta _2\big)}, \\
u_2 = -\frac{a_1 \theta _2 w (a_1-y) (\kappa _2 (a_1 (y-a_2)+a_2 y)+\theta _2+\beta y)}{\theta _1 (a_1
(a_1 \kappa _2+\beta )+\theta _2) (a_1 (a_2 (\beta +\kappa _2 y)+\theta _2)+\theta _2
(a_2-y))}.
\end{gather*}
This determines the an $R(x)$ inducing $T_{1,2}$ in terms of $y$, $w$ and $z$. This is used into be used in~\eqref{comp}. If we temporarily introduce the notation $T_{1,2} f = \tilde{f}$ then these calculations reveal
\begin{gather*}
\tilde{w}= w\frac{q^2-\tilde{z}\kappa_1}{q-\tilde{z}\kappa_2},\qquad
\tilde{z}z = \frac{q^2}{\kappa_1\kappa_2} \frac{(y-a_1)(y-a_2)}{(y-a_3)(y-a_4)},\\
\tilde{y}y = \frac{(\theta_1\tilde{z}-q^2a_1a_2)(\kappa_1\kappa_2a_3a_4 \tilde{z}-q^2 \theta_1)}{(\kappa\tilde{z}-q)(\kappa_2 \tilde{z}-q^2)},
\end{gather*}
 which coincides with \eqref{qP6} when the $b_i$ are specif\/ied by \eqref{qP6bvals}.
\end{proof}

Alternatively, we may simply use \eqref{qP6yz} and~\eqref{qT12} and the expressions for the~$u_i$ in terms of~$y$ and~$z$.

\subsection[A special case of the $m=1$ $h$-dif\/ference Garnier system]{A special case of the $\boldsymbol{m=1}$ $\boldsymbol{h}$-dif\/ference Garnier system}\label{dP6sec}

The second system we present is the case of the dif\/ference analogue of the sixth Painlev\'e equation, which we present as
\begin{subequations}\label{hP6}
\begin{gather}
(y(t) + z(t))(y(t+h)+z(t)) = \frac{(z(t)+a_3)(z(t)+a_4)(z(t)+a_5)(z(t)+a_6)}{(z(t)+a_7+t)(z(t)+a_8+t)},\\
(y(t+h) + z(t))(y(t+h)+z(t+h))\nonumber\\
\qquad{}= \frac{(y(t+h)-a_3)(y(t+h)-a_4)(z(t+h)-a_5)(y(t+h)-a_6)}{(y(t+h)-a_1-t-h)(y(t+h)-a_2-t-h)},
\end{gather}
\end{subequations}
where
\begin{gather*}
h = a_3 + a_4 + a_5 + a_6 - a_1 - a_2 - a_7 - a_8.
\end{gather*}
We consider an associated linear problem of the from \eqref{linearqdiff}, where
\begin{gather}
A(x) = \begin{pmatrix} u_1 & 1 \\ x-a_1+u_1^2 & u_1 \end{pmatrix}\begin{pmatrix} u_2 & 1 \\ x-a_2+u_2^2 & u_2 \end{pmatrix}\begin{pmatrix} u_3 & 1 \\ x-a_3+u_3^2 & u_3 \end{pmatrix}\nonumber\\
\hphantom{A(x) =}{} \times\begin{pmatrix} u_4 & 1 \\ x-a_4+u_4^2 & u_4 \end{pmatrix}\begin{pmatrix} u_5 & 1 \\ x-a_5+u_5^2 & u_5 \end{pmatrix}\begin{pmatrix} u_6 & 1 \\ x-a_6+u_6^2 & u_6 \end{pmatrix},\label{ALhP6}
\end{gather}
where we impose the constraint
\begin{gather*}
\sum_{i=1}^{8} u_i = 0.
\end{gather*}
This product takes the general form
\begin{gather*}
A(x) = A_0 + A_1 x + A_2 x^2 + A_3x^3,
\end{gather*}
and may be expressed in the general form
\begin{gather*}
A(x) = x^3I + \begin{pmatrix} d_1 ((x-\alpha)(x-y) + z_1) & w(x-y) \\ a_{2,1}(x) & d_2((x-\beta)(x-y)+z_2) \end{pmatrix},
\end{gather*}
where $a_{2,1}(x)$ is a polynomial of degree $2$ before we diagonalize $A_2$. After diagonalizing $A_2$ it becomes a linear function in $x$, which we write as
\begin{gather*}
a_{2,1}(x) = \frac{\gamma x + \delta}{w}.
\end{gather*}
The values of $\alpha$, $\beta$, $\gamma$ and $\delta$ are uniquely determined by \eqref{deth}. The values of $z_1$ and $z_2$ are satisfy
\begin{gather*}
\big( y^3 + d_1 z_1 \big)\big( y^3 + d_2 z_2 \big) = (y-a_1)(y-a_2)(y-a_3)(y-a_4)(y-a_5)(y-a_6).
\end{gather*}
This relation is solved by introducing a variable, $z$, via
\begin{gather*}
 y^3 + d_1 z_1 = \frac{(y-a_3)(y-a_4)(y-a_5)(y-a_6)}{y+z},\\
 y^3 + d_2 z_2 = (y-a_1)(y-a_2)(y+z).
\end{gather*}
We also have that the variables $d_1$ and $d_2$ are specif\/ied by
\begin{gather*}
d_1 = a_1 + a_3 + a_5 + u_1^2 + u_3^2 + u_5^2 + \sum_{i=1}^6 \sum_{j=1}^{i-1} u_iu_j,\\
d_2 = a_2 + a_4 + a_5 + u_2^2 + u_4^2 + u_6^2 + \sum_{i=1}^6 \sum_{j=1}^{i-1} u_iu_j,
\end{gather*}
which are known to be constant with repect to $T_{i,j}$. Using the determinantal relations, and the correspondence between~$r_i$ and~$d_i$, we have by setting $A_3 = I$, we have the $3$-tuple
\begin{gather*}
(A_0, A_1, A_2) \in \mathcal{M}(a_1, \ldots, a_6; d_1,d_2;1,1).
\end{gather*}

\begin{thm}\label{hP6ident}
The action of the translation $T_{1,2}$ is equivalent to \eqref{hP6} where $a_7$ and $a_8$ are given by
\begin{gather}\label{hP6achoice}
a_7 = -h - a_1 - a_2 - d_1,\qquad a_8= a_3 + a_4 + a_5 + a_6 + d_1.
\end{gather}
\end{thm}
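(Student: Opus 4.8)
The plan is to follow the strategy used for the $q$-analogue in Proposition~\ref{qP6ident}: realize $T_{1,2}$ explicitly on $A(x)$ and read off the induced map on the variables $y$, $z$ and $w$. By Proposition~\ref{tranha1a2} the translation $T_{1,2}$ is induced by $R(x)=L(x-h,u_2,a_2)^{-1}L(x-h,u_1,a_1)^{-1}$ in \eqref{Atildeev}, so that \eqref{comp} becomes $\tilde{A}(x)=R(x+h)A(x)R(x)^{-1}$, where I abbreviate $\tilde{f}=T_{1,2}f$. Using \eqref{hinv} and \eqref{hinvar}, and after the lower-triangular gauge that renders the subleading coefficient of $A(x)$ diagonal, this $R(x)$ takes the normalized form \eqref{Rhform}, namely $R(x)=(xI+R_0)/((x-a_1-h)(x-a_2-h))$.

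First I would express $u_1$ and $u_2$, and hence $R_0$, in terms of $y$, $z$ and $w$. Since $L(x,u_1,a_1)$ is the leftmost factor of $A(x)$, property \eqref{hIm} gives $\operatorname{Im}A(a_1)=\langle(1,u_1)^{\top}\rangle$, so $u_1$ is read off from the entries of $A(a_1)$ through the parameterization in which $a_{1,2}(x)=w(x-y)$. Iterating, $\operatorname{Im}A(a_2)=L(a_2,u_1,a_1)\langle(1,u_2)^{\top}\rangle$ fixes $u_2$ once $u_1$ is known; together these pin down $R_0$ explicitly in $y$, $z$, $w$, in the manner of \eqref{R0h1}. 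I would then substitute into $\tilde{A}(x)=R(x+h)A(x)R(x)^{-1}$ and extract the transformed variables: $\tilde{y}$ is the root of $\tilde{a}_{1,2}(x)$, while $\tilde{z}$ and $\tilde{w}$ follow from evaluating the diagonal entries at $x=\tilde{y}$ together with the defining relations $y^3+d_1z_1=(y-a_3)(y-a_4)(y-a_5)(y-a_6)/(y+z)$ and $y^3+d_2z_2=(y-a_1)(y-a_2)(y+z)$. These factorized relations are precisely what generate the factored numerators and denominators of \eqref{hP6}, with the time $t$ identified with the translation parameter, the fixed poles $a_3,\dots,a_6$ common to both descriptions, and the remaining constants $a_7,a_8$ encoding the asymptotic exponent datum $d_1$ via \eqref{hP6achoice}; the presentation of \eqref{hP6} as a pair of equations is the familiar two-step form of the asymmetric equation and matches the decomposition $T_{1,2}=T_1\circ T_2$ into half-translations. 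Finally I would verify that \eqref{hP6achoice} is consistent with the defining constraint on $h$: adding the two expressions gives $a_7+a_8=-h-a_1-a_2+a_3+a_4+a_5+a_6$, so that $h=a_3+a_4+a_5+a_6-a_1-a_2-a_7-a_8$ holds identically.

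The principal obstacle is the algebra of the extraction step: because $A(x)$ is now cubic, forming $R(x+h)A(x)R(x)^{-1}$ and matching the resulting rational expressions against the factored form \eqref{hP6} is appreciably heavier than in the quadratic $q$-case. The delicate point is recognizing the combination $y+z$ and choosing the offsets $a_7,a_8$ so that the pole data and the exponent $d_1$ collect into the factored products of \eqref{hP6}; the auxiliary scale $w$ enters only as a gauge and drops out of the Painlev\'e map, exactly as it did in Proposition~\ref{qP6ident}.
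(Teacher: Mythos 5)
Your proposal is correct and follows essentially the same route as the paper's proof: both mirror Proposition~\ref{qP6ident}, realizing $T_{1,2}$ by a matrix $R'(x)$ of the normalized form~\eqref{Rhform} expressed in the $(y,z,w)$ parameterization, feeding it into~\eqref{comp}, and matching the resulting coupled equations in $y$ and $z$ against~\eqref{hP6} to fix $a_7$, $a_8$ as in~\eqref{hP6achoice}. The only difference in execution is that the paper, precisely because the diagonalization of $A_2$ makes the correspondence between the product form~\eqref{ALhP6} and $R'(x)$ non-trivial, determines $R_0$ directly from the overdetermined relation~\eqref{comp} and invokes the uniqueness of $R(x)$ to identify the result with $T_{1,2}$, rather than assembling $R_0$ from $u_1$, $u_2$ via image conditions as you propose.
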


\begin{proof}
This follows much the same way as Proposition~\ref{qP6ident}, however, there is an added dif\/f\/iculty in that the diagonalization of $A_2$ introduces a non-trivial correspondence between the mat\-rix~\eqref{ALhP6} and its corresponding~$R(x)$, denoted~$R'(x)$, to be used in~\eqref{comp}. The resulting mat\-rix,~$R'(x)$, can be shown to be of the form
\begin{gather*}
R'(x) = \frac{x I + R_0}{(x- a_1-h)(x- a_2-h)},
\end{gather*}
for some constant matrix $R_0$, which can be calculated using \eqref{comp}. The uniqueness of $R(x)$ ensures this calculation coincides with~$T_{1,2}$, as def\/ined by~\eqref{conprevuh}. Using this same over determined relation, namely~\eqref{comp}, we may determine that the mapping in terms of the variables~$y$ and~$z$ are specif\/ied by
\begin{gather*}
(\tilde{y}+z)(\tilde{z}+\tilde{y}) = \frac{(\tilde{y}-a_3)(\tilde{y}-a_4)(\tilde{y}-a_5)(\tilde{y}-a_6)}{(\tilde{y}-a_1-h)(\tilde{y}-a_2-h)},\\
(\tilde{y}+z)(y+z) = \frac{(z+a_3)(z+a_4)(z+a_5)(z+a_6)}{(z+a_3+a_4+a_5+a_6+d_1)(z-d_1-a_1-a_2-h)},
\end{gather*}
where $\tilde{y}$ and $\tilde{z}$ are identif\/ied with $T_{1,2} y$ and $T_{1,2}z$ respectively. This coincides with \eqref{hP6} with~$a_7$ and~$a_8$ specif\/ied by~\eqref{hP6achoice}.
\end{proof}

\subsection[An extra special case of the $m=3$ asymmetric and symmetric $h$-dif\/ference Garnier system]{An extra special case of the $\boldsymbol{m=3}$ asymmetric\\ and symmetric $\boldsymbol{h}$-dif\/ference Garnier system}

We have one more special case to consider in the symmetric case, when we allow $A(x)$ to be given by the product
\begin{gather}
A(x) =\begin{pmatrix} u_1 & 1 \\ x-a_1+u_1^2 & u_1 \end{pmatrix}
\begin{pmatrix} u_2 & 1 \\ x-a_2+u_2^2 & u_2 \end{pmatrix}
\begin{pmatrix} u_3 & 1 \\ x-a_3+u_3^2 & u_3 \end{pmatrix}\nonumber\\
\hphantom{A(x) =}{} \times \begin{pmatrix} u_4 & 1 \\ x-a_4+u_4^2 & u_4 \end{pmatrix}
\begin{pmatrix} u_5 & 1 \\ x-a_5+u_5^2 & u_5 \end{pmatrix}
\begin{pmatrix} u_6 & 1 \\ x-a_6+u_6^2 & u_6 \end{pmatrix}\nonumber\\
\hphantom{A(x) =}{} \times \begin{pmatrix} u_7 & 1 \\ x-a_7+u_7^2 & u_7 \end{pmatrix}
\begin{pmatrix} u_8 & 1 \\ x-a_8+u_8^2 & u_8 \end{pmatrix},\label{hAprodform8}
\end{gather}
where we impose the constraint that
\begin{gather*}
u_1 + u_2 + u_3 + u_4 + u_5 +u_6 + u_7 + u_8 = 0.
\end{gather*}
Under this constraint, the coef\/f\/icient of $x^3$, denoted $A_3$, takes the form
\begin{gather*}
A_3 = \begin{pmatrix}
d_1 & 0 \\
d_{2,1} & d_2
\end{pmatrix}.
\end{gather*}
We introduce one more constraint that $d_1 = d_2$, where $d_1$ and $d_2$ are def\/ined by \eqref{dvals}. It is clear from above that $T_{1,2} d_i = d_i + h$, however, it is also easy to show that
\begin{gather*}
(T_{1,2} - I) d_{2,1} = (d_1-d_2)(u_1 + u_2),
\end{gather*}
hence, $d_{1,2}$ is constant with respect to $T_{1,2}$ if $d_1 = d_2$. We impose the constraint that the expression for $d_{1,2}$ is identically $0$ for $A(x)$ to def\/ine a regular system, so that $A_3 = d_1 I = d_2I$, where the equality implies that
\begin{gather*}
d_1 = d_2 = \frac{1}{2}\sum_{i=1}^{8} a_i.
\end{gather*}
As $d_1$ and $d_2$ are def\/ined in terms of the $u_i$ by
\begin{gather*}
d_1 = a_1 + a_3 + a_5 +a_7+ u_1^2 + u_3^2 + u_5^2 +u_7^2+ \sum_{i=1}^8 \sum_{j=1}^{i-1} u_iu_j,\\
d_2 = a_2 + a_4 + a_5 +a_8+ u_2^2 + u_4^2 + u_6^2 + u_8^2+ \sum_{i=1}^8 \sum_{j=1}^{i-1} u_iu_j,
\end{gather*}
this is considered an extra constraint on the $u_i$. The map resulting from~$T_{1,2}$ is two-dimensional, which an additional dif\/ference equations satisf\/ied by one additional gauge freedom. The result is a matrix of the general form
\begin{gather}
A(x) = (x-a_1)(x-a_2) \begin{pmatrix} x^2 + \alpha_1 x + \alpha_2 & w \vspace{1mm}\\ \dfrac{\gamma}{w} & x^2 + \beta_1 x + \beta_2 \end{pmatrix}\nonumber\\
\hphantom{A(x) =}{} + \frac{x-a_2}{a_1-a_2} y_1 \begin{pmatrix} 1 \vspace{1mm}\\ \dfrac{z_1}{w} \end{pmatrix} \begin{pmatrix} w_1 & w \end{pmatrix} + \frac{x-a_1}{a_2-a_1} y_2 \begin{pmatrix} 1 \vspace{1mm}\\ \dfrac{z_2}{w} \end{pmatrix} \begin{pmatrix} w_2 & w \end{pmatrix},\label{A0form}
\end{gather}
where
\begin{gather*}
\alpha_1 = \beta_1 = \frac{a_1}{2} + \frac{a_2}{2}-\sum_{i=3}^{8} \frac{a_i}{2}.
\end{gather*}
The determinant at $x = a_1$ and $x=a_2$ are automatically $0$ by construction. This is also a~polynomial of degree six with six nontrivial
conditions to satisfy, which are suf\/f\/icient to write down expressions for~$\alpha_2$,~$\beta_2$ and~$\gamma$.

\begin{prop}
The map $T_{1,2}$ on the variables $(z_1,z_2,w_1,w_2,y_1,y_2)$ is given by
\begin{gather*}\allowdisplaybreaks
T_{1,2} z_1 =
\big\{\tilde{w} \big(w^2 \bar{a}_{21} (z_2 (a_2-a_1-h)+h z_1)+w z_1 (h z_1 \bar{a}_{22}+z_2 (\bar{a}_{22} (a_2-a_1-h)\\
\hphantom{T_{1,2} z_1 = }{} -(a_2-a_1)
 \bar{a}_{11}))+(a_1-a_2) z_1^2 z_2 \bar{a}_{12}\big)\big\}\big/\big\{w (z_1 (w (\bar{a}_{11} (a_1-a_2+h)+(a_2-a_1) \bar{a}_{22})\\
\hphantom{T_{1,2} z_1 = }{}
 -h z_2 \bar{a}_{12})+w ((a_2-a_1) w \bar{a}_{21}-h z_2 \bar{a}_{11})+z_1^2 \bar{a}_{12} (a_1-a_2+h))\big\}, \\
T_{1,2} z_2 =
\big\{\tilde{w} \big(\hat{a}_{21} w^2 (z_1 (a_1-a_2-h)+h z_2)+w z_2 (\hat{a}_{22} h z_2+z_1 (\hat{a}_{22} (a_1-a_2-h)\\
\hphantom{T_{1,2} z_2 =}{} -(a_1-a_2)
 \hat{a}_{11}))+(a_2-a_1) \hat{a}_{12} z_1 z_2^2\big)\big\}\big/\big\{w (z_2 (w (\hat{a}_{11} (a_2-a_1+h)+(a_1-a_2) \hat{a}_{22})\\
\hphantom{T_{1,2} z_2 =}{}
 -\hat{a}_{12} h z_1)+w ((a_1-a_2) \hat{a}_{21} w-\hat{a}_{11} h z_1)+\hat{a}_{12} z_2^2 (-a_1+a_2+h))\big\}, \\
 T_{1,2} y_1 = \frac{(a_1-a_2) (w \bar{a}_{11}+z_1 \bar{a}_{12})}{(z_1-z_2) \tilde{w} (a_1-a_2+h)}-\frac{(a_1-a_2){}^2 (w^2 \bar{a}_{21}+w z_1
 (\bar{a}_{22}-\bar{a}_{11})-z_1^2 \bar{a}_{12})}{h (z_1-z_2){}^2 \tilde{w} (a_1-a_2+h)},\\
T_{1,2} y_2 = \frac{(a_2-a_1) (\hat{a}_{11} w+\hat{a}_{12} z_2)}{(z_2-z_1) \tilde{w} (-a_1+a_2+h)}-\frac{(a_2-a_1){}^2 (\hat{a}_{21} w^2+(\hat{a}_{22}-\hat{a}_{11}) w
 z_2-\hat{a}_{12} z_2^2)}{h (z_2-z_1){}^2 \tilde{w} (-a_1+a_2+h)},\\
T_{1,2} w_1 = -z_2 \frac{T_{1,2} w}{w}, \qquad T_{1,2} w_2 = -z_1 \frac{T_{1,2} w}{w}, \qquad T_{1,2}w = w+ \frac{(a_1-a_2) h w}{z_1-z_2},
\end{gather*}
where we use the notation $\bar{a}_{ij} = a_{ij} (a_1 +h)$ and $\hat{a}_{ij} = a_{ij}(a_2+h)$.
\end{prop}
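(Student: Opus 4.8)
The plan is to run the same scheme used for Theorem~\ref{hP6ident} and for the birational form \eqref{hGarnierBirrational}, specialized to the constrained family \eqref{A0form} in which $N=8$, $d_1=d_2$, and $A_3=d_1I$. Since $T_{1,2}$ shifts $a_1,a_2\to a_1+h,a_2+h$ while fixing the other nodes, Theorem~\ref{hisolattice} guarantees a unique deforming matrix $R(x)$, and because $A_4=I$ with $A_3$ already scalar, after the gauge that diagonalizes the leading coefficient $R(x)$ has the normalized shape \eqref{Rhform}, namely $R(x)=(xI+R_0)/((x-a_1-h)(x-a_2-h))$. First I would turn \eqref{comp} into the polynomial identity $\tilde A(x)(xI+R_0)(x-a_1)(x-a_2)=((x+h)I+R_0)A(x)(x-a_1-h)(x-a_2-h)$ by clearing the denominators of $R(x)$ and $R(x+h)$.

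Evaluating this identity at $x=a_1$ and $x=a_2$ makes both sides vanish except for the forced relation $((a_i+h)I+R_0)A(a_i)=0$; since $A(a_i)=y_i(1,z_i/w)^{\top}(w_i,w)$ is rank one with column space $(1,z_i/w)^{\top}$, this places the two image vectors in $\ker((a_i+h)I+R_0)$ and so determines $R_0$ in terms of $z_1,z_2,w$ exactly as in \eqref{R0h1} with $i,j=1,2$. Evaluating instead at $x=a_1+h$ and $x=a_2+h$ gives $\tilde A(a_i+h)((a_i+h)I+R_0)=0$, which forces the columns of $(a_i+h)I+R_0$ into $\ker\tilde A(a_i+h)=\langle(\tilde w,-\tilde w_i)^{\top}\rangle$ and yields a second expression for $R_0$ in the transformed variables $\tilde w,\tilde w_1,\tilde w_2$. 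Matching the two forms reads off $T_{1,2}w_1=-z_2\,T_{1,2}w/w$ and $T_{1,2}w_2=-z_1\,T_{1,2}w/w$, while the leading off-diagonal term supplies $T_{1,2}w/w=1+(a_1-a_2)h/(z_1-z_2)$; this last is the general formula $1+(a_1-a_2)(d_1-d_2+h)/(z_1-z_2)$ collapsed by the constraint $d_1=d_2$, a convenient consistency check on the whole setup.

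For the remaining coordinates I would extract the rank-one factorization of $\tilde A$ at the shifted nodes. Using $R(x)^{-1}=\operatorname{adj}(xI+R_0)$ (a polynomial, since $\det R(x)=1/((x-a_1-h)(x-a_2-h))$), one has $\tilde A(a_i+h)=R(a_i+2h)A(a_i+h)\operatorname{adj}((a_i+h)I+R_0)$, and the appearance of the factor $A(a_i+h)$ is precisely why the entries $\bar a_{jk}=a_{jk}(a_1+h)$ govern $T_{1,2}z_1,T_{1,2}y_1$ while $\hat a_{jk}=a_{jk}(a_2+h)$ govern $T_{1,2}z_2,T_{1,2}y_2$. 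Since $\tilde A(a_i+h)=\tilde y_i(1,\tilde z_i/\tilde w)^{\top}(\tilde w_i,\tilde w)$, its four entries give $\tilde y_i$, $\tilde z_i$, $\tilde w_i$ once the gauge $\tilde w$ from the previous paragraph is substituted, producing the displayed expressions for $T_{1,2}y_i$ and $T_{1,2}z_i$. Finally, the uniqueness of $R(x)$ together with the manifest symmetry of \eqref{A0form} under interchanging $(a_1,y_1,z_1,w_1)\leftrightarrow(a_2,y_2,z_2,w_2)$ shows that the $k=j$ formulas follow from the $k=i$ ones by swapping indices.

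The main obstacle is the evaluation $\tilde A(a_i+h)=R(a_i+2h)A(a_i+h)\operatorname{adj}((a_i+h)I+R_0)$: because $(a_i+h)I+R_0$ is singular, its adjugate is rank one, so $\tilde A(a_i+h)$ is automatically rank one and one must carefully separate the overall scale $\tilde y_i$ and the gauge factor $\tilde w$ from the intrinsic ratios $\tilde z_i/\tilde w$ and $\tilde w_i/\tilde w$ without introducing spurious factors. The remaining labor, namely eliminating $\alpha_2,\beta_2,\gamma$ through the six determinant conditions defining \eqref{A0form} and simplifying the resulting rational functions, is routine but very heavy, and is exactly what keeping the answer in the unexpanded entries $\bar a_{jk},\hat a_{jk}$ is designed to suppress.
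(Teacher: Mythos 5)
Your proposal is correct and follows essentially the same route the paper takes for this family of results: the proposition is stated without a separate proof, but its derivation is precisely the specialization you describe of the argument proving \eqref{hGarnierBirrational} --- clear the denominators in \eqref{comp}, pin down $R_0$ from the rank-one image conditions at $x=a_1,a_2$ and the kernel conditions at $x=a_1+h,a_2+h$ (yielding $T_{1,2}w_1=-z_2T_{1,2}w/w$, $T_{1,2}w_2=-z_1T_{1,2}w/w$ and, via the leading coefficient of the $(1,2)$ entry, $T_{1,2}w$), then read off $T_{1,2}y_i$ and $T_{1,2}z_i$ from $\tilde A(a_i+h)=R(a_i+2h)A(a_i+h)\operatorname{adj}\big((a_i+h)I+R_0\big)$. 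Your side remarks are also the right consistency checks: $A_3=d_1I$ being scalar removes the diagonalization subtlety that complicates Theorem~\ref{hP6ident}, and the constraint $d_1=d_2$ collapses the general formula $T_{i,j}w/w=1+(a_i-a_j)(d_1-d_2+h)/(z_i-z_j)$ to the one displayed in the proposition.
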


While this has been written as a 6-dimensional map, the action and constraints in the $u$ variables tells us there are $4$ invariants. For example, we could determine expressions for $A(x)$ in terms of the two values $z_1$ and $z_2$ and $w$, specif\/ied by
\begin{gather*}
z_1 = w u_1, \qquad z_2 = w \left( u_1 + \frac{a_1-a_2}{u_1 + u_2} \right),
\end{gather*}
where $w$ is determined by the coef\/f\/icient of $x^2$ in the top right entry of $A(x)$. The resulting mapping is a dif\/ference equation that sits above the $d$-$\mathrm{P}\big(A_2^{(1)}\big)$ and under some rigidif\/ication, it is clearer that the compactifying the moduli space of linear dif\/ference equations is indeed $\mathbb{P}_2$ blown up at $9$ points, which has been the subject of one of the authors work~\cite{Rains2013}.

Another two-dimensional mapping may be obtained by allowing $A(x)$ to symmetric with respect to the change $x \to -x$, in which case we may allow $A(x)$ to be given by~\eqref{symmdiff} and~$B(x)$ to be given by~\eqref{hAprodform8} subject to the same constraints. It is easy to show that, under the conditions, that
\begin{gather*}
(E_{1,2} - I) d_{2,1} = 0, \qquad (F_{1,2} - I) d_{2,1} = 0,
\end{gather*}
indicating that \eqref{A0form} is a valid parameterization of $B(x)$ that is invariant under the actions $E_{i,j}$ and $F_{i,j}$.

\begin{prop}
The maps $E_{1,2}$ and $F_{1,2}$ on the variables $(z_1,z_2,w_1,w_2,y_1,y_2)$ are given by
\begin{gather*}\allowdisplaybreaks
E_{1,2} z_1 =z_2 \frac{E_{1,2} w}{w},\qquad E_{1,2} w_1 = E_{1,2} w \frac{z_1a_{11}(-h-a_1) - wa_{21}(-h-a_1)}{z_1a_{12}(-h-a_1) - wa_{22}(-h-a_1)},\\
E_{1,2} z_2 =z_1 \frac{E_{1,2} w}{w},\qquad E_{1,2} w_2 = E_{1,2} w \frac{z_2a_{11}(-h-a_2) - wa_{21}(-h-a_2)}{z_2a_{12}(-h-a_2) - wa_{22}(-h-a_2)},\\
E_{1,2} y_1 = \frac{(a_1-a_2) (w a_{22}(-a_1-h)-z_1 a_{12}(-a_1-h))}{(z_2-z_1) \tilde{w} (2 a_1+h)},\\
E_{1,2} y_2 = \frac{(a_2-a_1)(w a_{22}(-a_2-h)-z_2 a_{12}(-a_2-h))}{(z_1-z_2) \tilde{w} (2 a_2+h)},
\end{gather*}
and
\begin{gather*}\allowdisplaybreaks
F_{1,2} z_1 = \frac{F_{1,2}w (a_{21}(-a_1) w-a_{22}(-a_1) w_1)}{a_{11}(-a_1) w-a_{12}(-a_1) w_1},\qquad F_{1,2} w_1 = w_2 \frac{F_{1,2} w}{w},\\
F_{1,2} z_2 = \frac{F_{1,2}w (a_{21}(-a_2) w-a_{22}(-a_2) w_2)}{a_{11}(-a_2) w-a_{12}(-a_2) w_2}, \qquad F_{1,2} w_2 = w_1 \frac{E_{1,2} w}{w},\\
F_{1,2} y_1 = \frac{(a_1-a_2) (w a_{22}(-a_1-h)-z_1 a_{12}(-a_1-h))}{(z_2-z_1) \tilde{w} (2 a_1+h)},\\
F_{1,2} y_2 = \frac{(a_2-a_1) (w a_{22}(-a_2-h)-z_1 a_{12}(-a_1-h))}{(z_2-z_1) \tilde{w} (2 a_1+h)}.
\end{gather*}
\end{prop}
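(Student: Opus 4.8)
The plan is to follow the proofs of Propositions~\ref{eh} and~\ref{fh} essentially verbatim, specialized to the indices $i=1$, $j=2$ and to the constrained parameterization~\eqref{A0form} of $B(x)$ (here~\eqref{A0form} parameterizes $B(x)$, and $A(x)=B(-h-x)B(x)^{-1}$ via~\eqref{symmdiff}). As in those proofs, demanding that the transformed matrix again lie in the family~\eqref{A0form} forces the deformation matrices in~\eqref{leftRB} and~\eqref{rightRB} to take the reduced shapes $R_l(x)=x^2I+R_0$ and $R_r(x)=x(x+h)I+R_1$, with $R_0,R_1$ constant, $\det R_l(x)=(x-a_1)(x-a_2)(x+a_1)(x+a_2)$, $\det R_r(x)=(x-a_1)(x-a_2)(x+a_1+h)(x+a_2+h)$, and $\lambda(x)=(x-a_1)^{-1}(x-a_2)^{-1}$ in both cases. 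Clearing denominators by multiplying~\eqref{leftRB} and~\eqref{rightRB} through by $(x-a_1)(x-a_2)$ turns them into the polynomial identities $(x^2+R_0)B(x)=(x-a_1)(x-a_2)\tilde B(x)$ and $B(x)\big(x(x+h)+R_1\big)=(x-a_1)(x-a_2)\hat B(x)$, where $\tilde u=E_{1,2}u$ and $\hat u=F_{1,2}u$.

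First I would pin down $R_0$ and $R_1$. Evaluating the first identity at $x=a_1$ and $x=a_2$, where~\eqref{A0form} exhibits $B(a_k)$ as a rank-one matrix with explicit image and kernel, the vanishing of the right side forces $(a_k^2I+R_0)$ to annihilate the image of $B(a_k)$; the two conditions determine $R_0$ explicitly as a function of $z_1,z_2,w$, exactly as in the display for $R_l$ in the proof of Proposition~\ref{eh}, and the analogous evaluation of the second identity pins down $R_1$ in terms of $w_1,w_2,w$. With $R_0,R_1$ in hand, matching the off-diagonal and leading coefficients immediately yields the scalar update for $E_{1,2}w$ recorded in~\eqref{hEij3} together with $E_{1,2}z_1=z_2\,E_{1,2}w/w$ and their $F$-counterparts from~\eqref{hFij3}. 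The components involving $y_1,y_2$ and the entries $a_{ij}$ evaluated at the reflected points $-h-a_k$ (for $E_{1,2}$) and $-a_k$ (for $F_{1,2}$) are then obtained by evaluating the cleared relations at those points and using $A(x)=B(-h-x)B(x)^{-1}$ to express $B$ at the reflected argument through the known entries; this is precisely where the explicit quotients in the statement arise.

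The main obstacle is not any single evaluation but verifying that the constrained form~\eqref{A0form} is genuinely invariant, so that the images of $E_{1,2}$ and $F_{1,2}$ can again be written with scalar leading coefficient $A_3=d_1I$ and with the fixed value $\alpha_1=\beta_1$. This reduces to checking that the regularity constraint $d_{2,1}=0$ is preserved, which is exactly the content of the already-recorded identities $(E_{1,2}-I)d_{2,1}=0$ and $(F_{1,2}-I)d_{2,1}=0$; granting these, the transformed matrix has the same reduced shape and the formulas close up on the six variables $(z_1,z_2,w_1,w_2,y_1,y_2)$. The remaining labor is the routine but lengthy elimination needed to rewrite the matched entries in the compact quotient form displayed, together with the symmetry observation that the $k=2$ formulas follow from the $k=1$ ones by interchanging the roles of $a_1$ and $a_2$.
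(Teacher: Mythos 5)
Your proposal is correct and takes essentially the same approach as the paper: for this specialized proposition the paper gives no separate argument beyond recording the invariances $(E_{1,2}-I)d_{2,1}=0$ and $(F_{1,2}-I)d_{2,1}=0$, and otherwise relies on exactly the computation you describe from Propositions~\ref{eh} and~\ref{fh} (reduced forms $R_l(x)=x^2I+R_0$ and $R_r(x)=x(x+h)I+R_1$ with $\lambda(x)=(x-a_1)^{-1}(x-a_2)^{-1}$, $R_0$ and $R_1$ fixed by the rank-one conditions at $x=a_1,a_2$, then evaluation at the reflected singular points). One small caveat: running that computation as in Propositions~\ref{eh} and~\ref{fh} yields entries of $B$ evaluated at $-a_k$ for $E_{1,2}$ and at $-h-a_k$ for $F_{1,2}$, so the opposite assignment of evaluation points that you echo from the statement appears to be an internal inconsistency of the paper rather than something your method would produce.
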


The map $E_{1,2} \circ F_{1,2}$ is once again 2-dimensional and specializes to $T_{1,2}$. This map is also acting on a surface obtained by blowing up $\mathbb{P}_2$ at~$9$ points, hence, this map coincides with $d$-$\mathrm{P}\big(A_0^{(1)}\big)$. We seek to establish a more explicit correspondence with well established versions of $d$-$\mathrm{P}\big(A_0^{(1)}\big)$ the future.

\subsection[An extra special case of the $m=3$ asymmetric and symmetric $q$-dif\/ference Garnier system]{An extra special case of the $\boldsymbol{m=3}$ asymmetric\\ and symmetric $\boldsymbol{q}$-dif\/ference Garnier system}

Let us consider the multiplicative version of the previous section, where $A(x)$ is given by the product
\begin{gather}
A(x) = \begin{pmatrix}
1 & u_1 \vspace{1mm}\\
 \dfrac{x}{a_1^2u_1} & 1
\end{pmatrix}\begin{pmatrix}
1 & u_2 \vspace{1mm}\\
 \dfrac{x}{a_2^2u_2} & 1
\end{pmatrix} \begin{pmatrix}
1 & u_3 \vspace{1mm}\\
 \dfrac{x}{a_3^2u_3} & 1
\end{pmatrix} \begin{pmatrix}
1 & u_4 \vspace{1mm}\\
 \dfrac{x}{a_4^2u_4} & 1
\end{pmatrix}\nonumber\\
\hphantom{A(x) =}{} \times \begin{pmatrix}
1 & u_1 \vspace{1mm}\\
 \dfrac{x}{a_5^2u_5} & 1
\end{pmatrix}\begin{pmatrix}
1 & u_6 \vspace{1mm}\\
 \dfrac{x}{a_6^2u_6} & 1
\end{pmatrix} \begin{pmatrix}
1 & u_7 \vspace{1mm}\\
 \dfrac{x}{a_7^2u_7} & 1
\end{pmatrix} \begin{pmatrix}
1 & u_8 \vspace{1mm}\\
 \dfrac{x}{a_8^2u_8} & 1
\end{pmatrix}.\label{Aprod8}
\end{gather}
As discussed above, if the $u_i$ variables satisfy
\begin{gather*}
u_1 + u_2 + u_3 + u_4 + u_5 + u_6 + u_7 + u_8,
\end{gather*}
then $A_0 = I$. With $\kappa_1$ and $\kappa_2$ specif\/ied by \eqref{kappavals}, then
\begin{gather*}
A_4 = \begin{pmatrix} \kappa_1 & 0 \\ \kappa_{2,1} & \kappa_2 \end{pmatrix}.
\end{gather*}
If $\kappa_1 = \kappa_2$ then we f\/ind that
\begin{gather*}
T_{1,2} \kappa_{2,1} = \frac{a_1 u_1^2}{qa_2u_2^2} \kappa_{2,1},
\end{gather*}
which means that if $\kappa_{2,1} = 0$ then $T_{1,2} \kappa_{2,1} = 0$. For similar reasons as the previous section, this def\/ines a two-dimensional mapping. Using a similar approach as the previous section, we may specify that the matrix $A(x)$ takes the general form
\begin{gather}
A(x) = \frac{(x-a_1)(x-a_2)}{a_1a_2} \begin{pmatrix} \alpha_2 x^2 + \alpha_1 x + 1 & x w\vspace{1mm}\\
\dfrac{\gamma}{w} & \beta_2x^2 + \beta_1 x + 1 \end{pmatrix}\nonumber\\
\hphantom{A(x) =}{} + \frac{x(x-a_2)}{a_1(a_1 - a_2)} y_1 \begin{pmatrix} 1 \vspace{1mm}\\ \dfrac{z_1}{w} \end{pmatrix} \begin{pmatrix} w_1 & w \end{pmatrix} + \frac{x(x-a_1)}{a_2(a_2 - a_1)} y_2 \begin{pmatrix} 1 \vspace{1mm}\\ \dfrac{z_2}{w} \end{pmatrix} \begin{pmatrix} w_2 & w \end{pmatrix}, \label{qBformA0A1}
\end{gather}
with $\kappa_1 = \kappa_2$ and the \eqref{detq} implying that
\begin{gather*}
\alpha_2 = \beta_2 = \sqrt{\frac{a_1a_2}{a_3a_4a_5a_6a_7}},
\end{gather*}
in which the mapping $T_{1,2}$ is may be computed accordingly.

\begin{prop}
The map $T_{1,2}$ on the variables $(z_1,z_2,w_1,w_2,y_1,y_2)$ is given by
\begin{gather*}\allowdisplaybreaks
T_{1,2} z_1 =\frac{z_2 T_{1,2}w}{w}+
\big\{a_1 (q-1) (z_2-z_1) T_{1,2}w (w (\bar{a}_{21} w+\bar{a}_{22} z_1)\\
\hphantom{T_{1,2} z_1 =}{} -z_2 (\bar{a}_{11} w+\bar{a}_{12} z_1))\big\}\big/\big\{w \big(a_1 \big(w z_1 (\bar{a}_{22}-\bar{a}_{11}
 q)+(q-1) z_2 (\bar{a}_{11} w+\bar{a}_{12} z_1)\\
\hphantom{T_{1,2} z_1 =}{}
 -\bar{a}_{12} q z_1^2+\bar{a}_{21} w^2\big)+a_2 \big((\bar{a}_{11}-\bar{a}_{22}) w z_1+\bar{a}_{12}
 z_1^2-\bar{a}_{21} w^2\big)\big)\big\}, \\
T_{1,2} z_2 =\frac{z_1 T_{1,2}w}{w}+
\big\{a_2 (q-1) (z_1-z_2) T_{1,2}w (w (\hat{a}_{21} w+\hat{a}_{22} z_2)\\
\hphantom{T_{1,2} z_2 =}{}
-z_1 (\hat{a}_{11} w+\hat{a}_{12} z_2))\big\}\big/\big\{w \big(a_2 \big(w z_2 (\hat{a}_{22}-\hat{a}_{11}
 q)+(q-1) z_1 (\hat{a}_{11} w+\hat{a}_{12} z_2)\\
\hphantom{T_{1,2} z_2 =}{}
 -\hat{a}_{12} q z_2^2+\hat{a}_{21} w^2\big)+a_1 \big((\hat{a}_{11}-\hat{a}_{22}) w z_2+\hat{a}_{12}
 z_2^2-\hat{a}_{21} w^2\big)\big)\big\},\\
T_{1,2}w_1 = z_2 \frac{T_{1,2}w}{w}, \qquad\! T_{1,2}w_2 = z_1 \frac{T_{1,2}w}{w},\qquad\! T_{1,2} w =\frac{q w ((a_1-a_2) (\alpha _2 q-\alpha _2)+z_1-z_2)}{z_1-z_2},\\
T_{1,2}y_1 = \frac{(a_2-a_1){}^2 ((\bar{a_{11}}-\bar{a_{22}}) w z_1+\bar{a_{12}} z_1^2-\bar{a_{21}} w^2)}{a_1 (q-1) (z_2-z_1){}^2 \tilde{w} (a_1
 q-a_2)}-\frac{(a_2-a_1) (\bar{a_{11}} w+\bar{a_{12}} z_1)}{(z_2-z_1) \tilde{w} (a_2-a_1 q)}, \\
T_{1,2}y_2 = \frac{(a_1-a_2){}^2 ((\hat{a_{11}}-\hat{a_{22}}) w z_2+\hat{a_{12}} z_2^2-\hat{a_{21}} w^2)}{a_2 (q-1) (z_1-z_2){}^2 \tilde{w} (a_2
 q-a_1)}-\frac{(a_1-a_2) (\hat{a_{11}} w+\hat{a_{12}} z_2)}{(z_1-z_2) \tilde{w} (a_1-a_2 q)},
\end{gather*}
where $\bar{a}_{ij} = a_{ij}(q a_1)$ and $\hat{a}_{ij} = a_{ij}(q a_2)$.
\end{prop}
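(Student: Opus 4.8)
The plan is to run the same three‑step argument that established the birational form \eqref{qGarnier} of the $q$‑Garnier system, now specialized to the constrained parameterization \eqref{qBformA0A1}. The first thing to check is that these coordinates are actually legitimate and stable: the constraint $\kappa_1=\kappa_2$ is manifestly invariant under $T_{1,2}$, and the identity $T_{1,2}\kappa_{2,1}=\big(a_1u_1^2/(q a_2 u_2^2)\big)\kappa_{2,1}$ recorded just above the proposition shows that $\kappa_{2,1}=0$ is preserved as well. Consequently $\tilde A(x)=T_{1,2}A(x)$ again has the shape \eqref{qBformA0A1}, with $a_1,a_2$ replaced by $qa_1,qa_2$, and the transformed data $(\tilde y_i,\tilde z_i,\tilde w_i)$ can be read off from the rank‑one factorizations of $\tilde A(qa_1)$ and $\tilde A(qa_2)$ via \eqref{kerq}--\eqref{Imq}. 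By \eqref{Rqformywz} the deformation matrix realizing $T_{1,2}$ in \eqref{comp} is $R(x)=(xI+R_0)/\big((x-qa_1)(x-qa_2)\big)$ for a single constant matrix $R_0$, with scalar factor $\lambda(x)=(x-a_1)^{-1}(x-a_2)^{-1}$.

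Next I would pin down $R_0$ in two independent ways. Multiplying \eqref{comp}, in the form $\tilde A(x)R(x)=R(qx)A(x)$, by $(x-a_1)(x-a_2)(x-qa_1)(x-qa_2)$ clears all denominators. Evaluating the result at the shifted nodes $x=qa_1,qa_2$ forces $\tilde A(qa_i)(qa_iI+R_0)=0$, so the columns of $qa_iI+R_0$ lie in $\ker\tilde A(qa_i)$; this determines $R_0$ through the transformed variables, giving
\[
R_0=\frac{q}{\tilde w_1-\tilde w_2}\begin{pmatrix} a_2\tilde w_2-a_1\tilde w_1 & (a_2-a_1)\tilde w\\ (a_1-a_2)\tilde w_1\tilde w_2/\tilde w & a_1\tilde w_2-a_2\tilde w_1\end{pmatrix}.
\]
Evaluating instead at the unshifted nodes $x=a_1,a_2$ forces $(qa_iI+R_0)A(a_i)=0$, so the image $(1,z_i/w)$ of $A(a_i)$ is in $\ker(qa_iI+R_0)$, yielding
\[
R_0=\frac{q}{z_1-z_2}\begin{pmatrix} a_1z_2-a_2z_1 & w(a_2-a_1)\\ (a_1-a_2)z_1z_2/w & a_2z_2-a_1z_1\end{pmatrix}.
\]
Comparing these two expressions for $R_0$ entry by entry produces the relations expressing $T_{1,2}w_1$ and $T_{1,2}w_2$ through $z_1,z_2$ and $T_{1,2}w$ (the $w_1,w_2$ formulas), while matching the top‑right coefficient of \eqref{comp}, using the value $\alpha_2=\beta_2$ fixed by the determinant \eqref{detq}, supplies the scalar law for $T_{1,2}w$. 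This is the exact analogue of \eqref{qw}, modified only by the normalization built into \eqref{qBformA0A1}.

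To obtain the action on $y_i$ and $z_i$ I would use $\tilde A(x)=R(qx)A(x)R(x)^{-1}$, substitute the explicit $R_0$, and evaluate at $x=qa_i$; the entries $a_{ij}(qa_1)$ and $a_{ij}(qa_2)$ of $A$ at the shifted nodes are exactly the quantities abbreviated $\bar a_{ij}$ and $\hat a_{ij}$, and reading off the rank‑one factorization of $\tilde A(qa_i)$ returns $T_{1,2}z_i$ and $T_{1,2}y_i$. Finally, since Theorem~\ref{qlattice} guarantees $R(x)$ is unique, $T_{1,2}=T_{2,1}$, so the index‑$2$ formulas follow from the index‑$1$ formulas by interchanging $1\leftrightarrow 2$ (together with $a_1\leftrightarrow a_2$).

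The genuinely delicate step is this last one. Because $R(x)$ has poles precisely at $x=qa_1,qa_2$, the product $R(qx)A(x)R(x)^{-1}$ cannot be evaluated at $x=qa_i$ by naive inversion; it must be extracted from the cleared compatibility relation as a residue/limit, and the bookkeeping is heavier than in the generic case since the specialization $\kappa_1=\kappa_2$, $\kappa_{2,1}=0$ couples the two leading coefficients $\alpha_2=\beta_2$. Arranging the algebra so that the output lands back in the coordinates $(z_i,w_i,y_i)$ of \eqref{qBformA0A1}, with denominators matching the stated forms, is where the real effort lies; everything preceding it is the same linear algebra that underlies \eqref{qGarnier}.
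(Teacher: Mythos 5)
Your proposal is, in essence, the paper's own proof: the paper gives no separate argument for this proposition, stating only that the mapping ``may be computed accordingly,'' i.e., by repeating the proof of the birational form \eqref{qGarnier} with the constrained parameterization \eqref{qBformA0A1}, and your steps (invariance of $\kappa_1=\kappa_2$ and $\kappa_{2,1}=0$, the two determinations of $R_0$ from the kernel conditions at $x=a_i$ and $x=qa_i$, the leading-order comparison for $T_{1,2}w$, evaluation of $\tilde A(x)=R(qx)A(x)R(x)^{-1}$ at the shifted nodes, and the uniqueness argument from Theorem~\ref{qlattice} for the $1\leftrightarrow 2$ swap) reproduce exactly that computation.

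Two corrections, one of substance. First, the scalar factor $\lambda(x)$ you attach to $R(x)$ is spurious: this proposition concerns the asymmetric system, so the compatibility is \eqref{comp} with no scalar prefactor; $\lambda$ enters only in the symmetric relations \eqref{symlax}. Second, the ``genuinely delicate step'' you single out does not exist. From either explicit expression for $R_0$ one checks $\operatorname{tr}R_0=-q(a_1+a_2)$ and $\det R_0=q^2a_1a_2$, so
\begin{gather*}
\det(xI+R_0)=(x-qa_1)(x-qa_2), \qquad
R(x)^{-1}=(x-qa_1)(x-qa_2)(xI+R_0)^{-1}=\operatorname{adj}(xI+R_0),
\end{gather*}
i.e., $R(x)^{-1}$ is a polynomial matrix (this cancellation is forced by consistency of \eqref{comp} with the determinant \eqref{detq}, and is what makes the whole scheme close up). Moreover $R(qx)\big|_{x=qa_i}=R(q^2a_i)$ is finite for generic parameters. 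Hence $\tilde A(qa_i)=R(q^2a_i)\,A(qa_i)\operatorname{adj}(qa_iI+R_0)$ is obtained by naive substitution --- no residue or limit extraction is required --- and this direct evaluation is precisely how the paper obtains \eqref{tijyi} and \eqref{tijzi} in the general case. Your fallback of reading the data off the cleared compatibility relation would also succeed, so the misconception does not break your argument; it only means the step you flag as the hard one is in fact routine, and the specialization $\alpha_2=\beta_2$ adds no coupling beyond substituting the constrained entries $\bar a_{ij}$, $\hat a_{ij}$ into the same formulas.
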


This map induces a map on the two-dimensional moduli space of linear systems of $q$-dif\/ference equations and also sits above the $q$-$\mathrm{P}\big(A_2^{(1)}\big)$ case, hence, we naturally expect this to coincide with the $q$-$\mathrm{P}\big(A_1^{(1)}\big)$ case.

We obtain a distinct two-dimensional mapping when we allow the matrix $Y(x)$ to be symmetric with respect to the change $x \to 1/x$, in which case we have~$A(x)$ given by~\eqref{symmqdiff} and~$B(x)$ given by~\eqref{Aprod8}. The f\/irst thing to check is that if $\kappa_1 = \kappa_2$, then $E_{1,2} \kappa_1 = E_{1,2} \kappa_2$ and~$F_{1,2} \kappa_1 = F_{1,2} \kappa_2$, which is easily done. It is also easy to check that if $\kappa_{2,1} = 0$ then $E_{1,2} \kappa_{2,1} = 0$ and $F_{1,2} \kappa_{2,1} = 0$. This ensures that the mappings $E_{1,2}$ and $F_{1,2}$ may be applied to a~mat\-rix~$B(x)$ that takes the form~\eqref{qBformA0A1}.

\begin{prop}
The maps $E_{1,2}$ and $F_{1,2}$ on the variables $(z_1,z_2,w_1,w_2,y_1,y_2)$ are given by
\begin{gather*}\allowdisplaybreaks
E_{1,2} z_1 = z_2 \frac{E_{1,2}w}{w}, \qquad E_{1,2}w_1 = \frac{w \left(a_{21}\left(\frac{1}{a_1}\right) w-a_{11}\left(\frac{1}{a_1}\right) z_1\right)}{a_{22}\left(\frac{1}{a_1}\right) w-a_{12}\left(\frac{1}{a_1}\right) z_1},\\
E_{1,2} z_2 = z_1 \frac{E_{1,2}w}{w}, \qquad E_{1,2}w_1 = \frac{w \left(a_{21}\left(\frac{1}{a_1}\right) w-a_{11}\left(\frac{1}{a_1}\right) z_1\right)}{a_{22}\left(\frac{1}{a_1}\right) w-a_{12}\left(\frac{1}{a_1}\right) z_1},\\
E_{1,2}wy_1 = \frac{a_1 a_2 a_{22}\left(\frac{1}{a_1}\right) w \left(1-a_1^2 (q-1)\right)}{\big(a_1^2-1\big) (a_1 a_2-1) z_2 }
-\frac{a_1^2 \left(a_1-a_2\right) a_2 q \left(a_{22}\left(\frac{1}{a_1}\right)
 w-a_{12}\left(\frac{1}{a_1}\right) z_1\right)}{\big(a_1^2-1\big) (a_1 a_2-1 ) (z_1-z_2 ) }, \\
 E_{1,2}wy_2 = \frac{a_1 a_2 a_{22}\left(\frac{1}{a_2}\right) w \big(1-a_2^2 (q-1)\big)}{\big(a_2^2-1\big) (a_1 a_2-1) z_1 }
 -\frac{a_2^2 (a_2-a_1) a_1 q \left(a_{22}\left(\frac{1}{a_2}\right)
 w-a_{12}\left(\frac{1}{a_2}\right) z_2\right)}{\big(a_2^2-1\big) (a_1 a_2-1) (z_2-z_1) },
\end{gather*}
and
\begin{gather*}\allowdisplaybreaks
F_{1,2} w_1 = w_2 \frac{F_{1,2}w}{w}, \qquad F_{1,2} z_1 = F_{1,2}w \frac{w a_{21}\left(\frac{1}{a_1 q}\right)-w_1 a_{22}\left(\frac{1}{a_1 q}\right)}{w a_{11}\left(\frac{1}{a_1 q}\right)-w_1 a_{12}\left(\frac{1}{a_1 q}\right)},\\
F_{1,2} w_2 = w_1 \frac{F_{1,2}w}{w}, \qquad F_{1,2} z_2 = F_{1,2}w \frac{w a_{21}\left(\frac{1}{a_2 q}\right)-w_2 a_{22}\left(\frac{1}{a_2 q}\right)}{w a_{11}\left(\frac{1}{a_2 q}\right)-w_2 a_{12}\left(\frac{1}{a_2 q}\right)},\\
F_{1,2}wy_1 = \frac{a_1 a_2 q \big(1-a_1^2 (q-1) q\big) a_{12}\left(\frac{1}{a_1 q}\right)}{\big(a_1^2 q-1\big) (a_1 a_2 q-1)} \\
\hphantom{F_{1,2}wy_1 =}{}
-\frac{a_1^2 (a_1-a_2) a_2 q^3 \left(w a_{11}\left(\frac{1}{a_1
 q}\right)+z_2 a_{12}\left(\frac{1}{a_1 q}\right)\right)}{(z_1-z_2) \big(a_1^2 q-1\big) (a_1 a_2 q-1)},\\
 F_{1,2}wy_2 = \frac{a_1 a_2 q \left(1-a_2^2 (q-1) q\right) a_{12}\left(\frac{1}{a_2 q}\right)}{\big(a_2^2 q-1\big) (a_1 a_2 q-1)} \\
 \hphantom{F_{1,2}wy_2 =}{}
 -\frac{a_2^2 (a_2-a_1) a_1 q^3 \left(w a_{11}\left(\frac{1}{a_2
 q}\right)+z_1 a_{12}\left(\frac{1}{a_2 q}\right)\right)}{(z_2-z_1) \big(a_2^2 q-1\big) (a_1 a_2 q-1)}.
\end{gather*}
\end{prop}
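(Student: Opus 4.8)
The plan is to follow the proofs of Propositions~\ref{eq} and~\ref{fq} verbatim, specialized to $i=1$, $j=2$, $N'=8$, and to a matrix $B(x)$ constrained to the form \eqref{qBformA0A1}. The discussion preceding the statement already shows that, once $\kappa_1=\kappa_2$ and $\kappa_{2,1}=0$ are imposed, the six-parameter form \eqref{qBformA0A1} is invariant under both $E_{1,2}$ and $F_{1,2}$; it therefore suffices to compute the induced action on $(z_1,z_2,w_1,w_2,y_1,y_2)$. Throughout, $a_{ij}(x)$ denotes the $(i,j)$ entry of $B(x)$ in \eqref{qBformA0A1}.

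I would first take the deforming matrices in the $\tau_2$-symmetric forms used in Propositions~\ref{eq} and~\ref{fq}: for $E_{1,2}$, $R_l(x)=(x+1/x)I+R_0$ with $\lambda(x)=[(x-a_1)(x-a_2)]^{-1}$ and $\det R_l(x)=(x-a_1)(x-a_2)(xa_1-1)(xa_2-1)/x^2$; for $F_{1,2}$, $R_r(x)=(x+1/(qx))I+R_1$ with the corresponding determinant. Multiplying \eqref{leftRB} (respectively \eqref{rightRB}) by $(x-a_1)(x-a_2)$ and using the involutivity $E_{1,2}^2=F_{1,2}^2=\mathrm{id}$ produces, through the cofactor identity, two presentations of $R_0$ (respectively $R_1$): one in terms of the kernel/image data $(z_1,z_2,w)$ of $B(a_1)$ and $B(a_2)$, and one in terms of the transformed data. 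Equating the two presentations yields the gauge relations $E_{1,2}w$ and $F_{1,2}w$ displayed in the statement, together with the linear relations $E_{1,2}z_1=z_2\,(E_{1,2}w)/w$, the stated expression for $E_{1,2}w_1$, and their $1\leftrightarrow2$ images, exactly as in the general case.

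To pin down the action on the $y_i$, I would evaluate the compatibility relation at the points paired by $\tau_2$, namely $x=1/a_1$ and $x=1/a_2$ for $E_{1,2}$ and $x=1/(qa_1)$, $x=1/(qa_2)$ for $F_{1,2}$, substituting the explicit entries $a_{ij}$ from \eqref{qBformA0A1} together with $\alpha_2=\beta_2=\sqrt{a_1a_2/(a_3a_4a_5a_6a_7)}$. Clearing denominators gives the closed forms for $E_{1,2}wy_1$ and $F_{1,2}wy_1$ in the statement; the formulas for the index $2$ follow from the symmetry of $R_l(x)$ and $R_r(x)$ under interchanging $a_1$ and $a_2$.

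The main obstacle is this last step. Extracting the $y_i$ formulas requires carefully propagating the two rank-one contributions in \eqref{qBformA0A1} through the evaluation at the $\tau_2$-related points and verifying the cancellations forced by $\kappa_1=\kappa_2$ and $\kappa_{2,1}=0$; it is here that the explicit $a_1,a_2,q$ dependence of the final expressions is generated, and where the algebraic bookkeeping is heaviest. Everything else reduces to the determinantal and cofactor computations already performed in Propositions~\ref{eq} and~\ref{fq}.
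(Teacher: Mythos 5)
Your proposal is correct and is essentially the paper's own (implicit) argument: the paper states this proposition without a separate proof, relying on the preceding observation that $\kappa_1=\kappa_2$ and $\kappa_{2,1}=0$ are preserved so that the form \eqref{qBformA0A1} is stable under $E_{1,2}$ and $F_{1,2}$, and on the method of the proof of Propositions~\ref{eq} and~\ref{fq} --- the $\tau_2$-symmetric ansatz $R_l(x)=(x+1/x)I+R_0$, $R_r(x)=(x+1/(qx))I+R_1$ with $\lambda(x)=(x-a_1)^{-1}(x-a_2)^{-1}$, the cofactor/involutivity identification of $R_0$, $R_1$ from the original and transformed kernel data, and evaluation of \eqref{leftRB}, \eqref{rightRB} at $x=1/a_1$, $1/a_2$, $1/(qa_1)$, $1/(qa_2)$ --- which is exactly what you specialize to $i=1$, $j=2$, $N'=8$. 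The only slight imprecision is that the explicit formulas for $E_{1,2}w$ and $F_{1,2}w$ are not displayed in the proposition itself (they appear only as ratios), so they must be imported from the general expressions in Propositions~\ref{eq} and~\ref{fq}; this does not affect the validity of your argument.
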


By construction, this is a map that sits above the case of $q$-$\mathrm{P}\big(A_1^{(1)}\big)$, hence, it should be $q$-$\mathrm{P}\big(A_0^{(1)}\big)$. We seek to establish a more explicit correspondence with well established versions of $q$-$\mathrm{P}\big(A_0^{(1)}\big)$ in future works.

\section{Reductions of partial dif\/ference equations}\label{sec:reductions}

One of the consequences of this work is that we will able to show that the $q$-Garnier system of \cite{Sakai:Garnier} arises as a set of reduction the lattice Schwarzian Korteweg--de Vries equation. By specializing the $h$-Garnier systems, this also means that $q$-$\mathrm{P}\big(A_1^{(1)}\big)$ and $d$-$\mathrm{P}\big(A_1^{(1)}\big)$ also arise as reductions of the lattice Schwarzian Korteweg--de Vries equation and lattice potential Korteweg--de Vries equation.

The general setting for reductions of partial dif\/ference equations on the quad(rilateral) is that we take a function $w \colon \mathbb{Z}^2 \to \mathbb{C}$, whose values are denoted $w_{l,m}$, in which for every $(l,m)\in \mathbb{Z}^2$ we impose the constraint
\begin{gather}\label{quad}
Q(w_{l,m}, w_{l+1,m}, w_{l,m+1}, w_{l+1,m+1}; \alpha_l, \beta_m )=0,
\end{gather}
where $Q$ is linear in each of the variables. Given a staircase of initial conditions, we are able to determine each value on $\mathbb{Z}^2$, hence, we require an inf\/inite number of initial conditions to specify a solution \cite{VanderKamp:IVPs}. For this reason, these systems are commonly referred to as inf\/inite-dimensional systems, and are considered to be the discrete analogues of partial dif\/ferential equations.

The twisted $(n_1,n_2)$-reduction is the system of solutions that satisfy an additional relation of the form
\begin{gather}\label{twist}
w_{l+n_1, m+n_2} = T(w_{l,m}),
\end{gather}
where the function, $T$, is called the twist \cite{Ormerod2014b}. We require that \eqref{quad} is invariant under $T$, i.e., we require that
\begin{gather*}
Q(w_{l,m}, w_{l+1,m}, w_{l,m+1}, w_{l+1,m+1}; \alpha_l, \beta_m )=0 \\
\qquad {} \iff \ Q(Tw_{l,m}, Tw_{l+1,m}, Tw_{l,m+1}, Tw_{l+1,m+1}; \alpha_l, \beta_m )=0.
\end{gather*}
The staircase of initial conditions for a twisted reduction consists of the $n_1 + n_2$ initial conditions in some f\/inite staircase extended inf\/initely in both directions using \eqref{twist}. Secondly, we require that the parameters change in a way that if two points are related by~\eqref{twist}, then the points calculated using those $n_1 + n_2$ initial conditions are also related by~\eqref{twist}. This means we require
\begin{gather}
Q(w_{l,m}, w_{l+1,m}, w_{l,m+1}, w_{l+1,m+1}; \alpha_l, \beta_m )=0\nonumber\\
\qquad \iff \ Q(w_{l,m}, w_{l+1,m}, w_{l,m+1}, w_{l+1,m+1}; \alpha_{l+n_1}, \beta_{m+n_2})=0. \label{parchange}
\end{gather}
The resulting system may be described by a $(n_1 + n_2)$-dimensional map, which we call a twisted reduction of~\eqref{quad}~\cite{Ormerod2014a}. If $\alpha_l = \alpha_{l+n_1}$ and $\beta_m = \beta_{m+n_2}$ then the resulting ordinary dif\/ference equation is necessarily autonomous, otherwise, the system is non-autonomous. In the special case that $T$ is the identity, we call the reduction a periodic reduction.

\begin{figure}[!ht]\centering
\begin{tikzpicture}[scale=1]
\draw[black!50] (-.3,-.3) grid (8.3,4.4);
\draw[very thick,red] (1,1) -- (4,1) -- (4,2)--(6,2) --(6,3)-- (7,3);
\draw[very thick,red!40] (1,1) -- (0,1)--(0,0) -- (-.3,0);
\draw[very thick,red!40] (7,3) -- (8.3,3);
\filldraw[fill=blue,draw=black] (1,1) circle (.08cm);
\filldraw[fill=blue,draw=black] (7,3) circle (.08cm);
\filldraw[fill=red,draw=black] (2,1) circle (.08cm);
\filldraw[fill=red,draw=black] (3,1) circle (.08cm);
\filldraw[fill=red,draw=black] (4,1) circle (.08cm);
\filldraw[fill=red,draw=black] (4,2) circle (.08cm);
\filldraw[fill=red,draw=black] (5,2) circle (.08cm);
\filldraw[fill=red,draw=black] (6,2) circle (.08cm);
\filldraw[fill=red,draw=black] (6,3) circle (.08cm);
\filldraw[fill=red!20,draw=black] (0,1) circle (.08cm);
\filldraw[fill=red!20,draw=black] (0,0) circle (.08cm);
\filldraw[fill=red!20,draw=black] (8,3) circle (.08cm);
\end{tikzpicture}
\caption{The intial conditions specif\/ied for a $(6,2)$-reduction, where the points in blue are related via~\eqref{twist}.}
\end{figure}
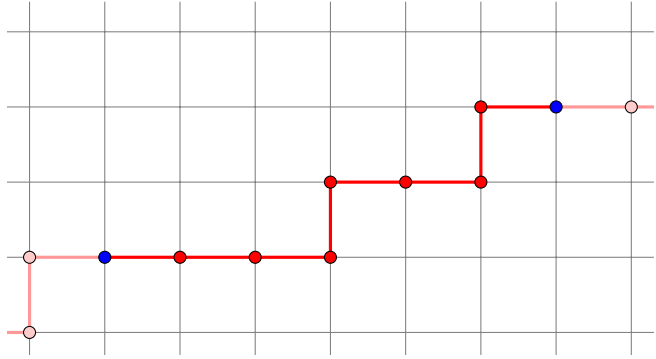

One def\/inition of integrability for systems of the form \eqref{quad} is 3-dimensional consistency. If we impose a constraint of the form \eqref{quad} on each of the faces of a cube, then 3-dimensional consistency requires that each way of determining the values on the vertices of the cube agree~\cite{Nijhoff:CAC}. A~classif\/ication of 3-dimensionally consistent multilinear equations of the form of \eqref{quad} was the subject of the classif\/ication of Adler et al.~\cite{ABS:ListI, ABS:ListII}.

We consider two equations of the form \eqref{quad}; the lattice potential Korteweg--de Vries equa\-tion~\cite{Nijhoff:lkdvreview},
\begin{gather}\label{lpkdv}
(w_{l,m} - w_{l+1,m+1})(w_{l+1,m} - w_{l,m+1}) = \alpha_l - \beta_m,
\end{gather}
which is also known as H1 in~\cite{ABS:ListI, ABS:ListII} and the lattice Schwarzian Korteweg--de Vries equation \cite{Nijhoff:dSKdV},
\begin{gather}
\alpha _l \left(\frac{1}{w_{l,m+1}-w_{l+1,m+1}}+\frac{1}{w_{l+1,m}-w_{l,m}}\right)\nonumber\\
\qquad {} = \beta _m \left(\frac{1}{w_{l+1,m}-w_{l+1,m+1}}+\frac{1}{w_{l,m+1}-w_{l,m}}\right), \label{dSKdV}
\end{gather}
which is also known as $\mathrm{Q1}_{\delta= 0}$ in \cite{ABS:ListI, ABS:ListII}.

It is easy to see that~\eqref{lpkdv} is invariant under translational twists, i.e., those of the form $T(u) = u + \lambda$ whereas~\eqref{dSKdV} is invariant under any twist in the full group of invertible M\"obius transformations. Secondly, we see that~\eqref{parchange} holds for~\eqref{lpkdv} if
\begin{gather}\label{parchangeh}
\alpha_{l + n_1} = \alpha_l + h, \qquad \beta_{m+n_2} = \beta_m + h,
\end{gather}
and \eqref{parchange} holds for \eqref{dSKdV} if
\begin{gather*}
\alpha_{l + n_1} = q\alpha_l, \qquad \beta_{m+n_2} = q\beta_m,
\end{gather*}

A Lax pair in the context of equations of the form of \eqref{quad} is a pair of equations of the form
\begin{gather*}
\Phi(l+1,m) = L_{l,m} \Phi(l,m),\qquad \Phi(l,m+1) = M_{l,m} \Phi(l,m),
\end{gather*}
whose compatibility reads
\begin{gather*}
L_{l,m+1}M_{l,m} - M_{l+1,m} L_{l,m} = 0.
\end{gather*}
The matrices, $L_{l,m}$ and $M_{l,m}$, for the lattice potential Korteweg--de Vries equation are
\begin{subequations}\label{LaxdpKdV}
\begin{gather}
L_{l,m} = \begin{pmatrix} w_{l+1,m} - w_{l,m} & 1 \\ \gamma -\alpha_l + (w_{l+1,m} - w_{l,m})^2 & w_{l+1,m} - w_{l,m} \end{pmatrix},\\
M_{l,m} = \begin{pmatrix} w_{l,m+1} - w_{l,m} & 1 \\ \gamma -\beta_m + (w_{l,m+1} - w_{l,m})^2 & w_{l,m+1} - w_{l,m} \end{pmatrix},
\end{gather}
\end{subequations}
which are both of the form \eqref{difffactor} for $u = w_{l+1,m} - w_{l,m}$ and $u = w_{l,m+1} - w_{l,m}$. The matrices, $L_{l,m}$ and $M_{l,m}$, for the lattice Schwarzian Korteweg--de Vries equation are
\begin{subequations}\label{LaxdSKdV}
\begin{gather}
L_{l,m} = \begin{pmatrix} 1&w_{l+1,m} - w_{l,m} \vspace{1mm}\\ \dfrac{\gamma}{\alpha_l(w_{l+1,m} - w_{l,m})}& 1 \end{pmatrix},\\
M_{l,m} = \begin{pmatrix} 1&w_{l,m+1} - w_{l,m} \vspace{1mm}\\ \dfrac{\gamma}{\beta_m(w_{l,m+1} - w_{l,m})} & 1 \end{pmatrix},
\end{gather}
\end{subequations}
which are also both of the form \eqref{qdifffactor} for $u = w_{l+1,m} - w_{l,m}$ and $u = w_{l,m+1} - w_{l,m}$. This determines a well known relation between integrable equations of the form~\eqref{quad} and Yang--Baxter maps \cite{Papageorgiou2006}. The general framework for determining Lax pairs for ordinary dif\/ference equations arising as twisted reductions of partial dif\/ference was recently outlined in \cite{Ormerod2014b}.

From the point of view of symmetries of reductions~\cite{Ormerod2014b}, it is slightly more conducive to regard a twisted $(n_1,n_2)$-reduction as a reduction on an $(n_1 + n_2)$-dimensional hypercube~\cite{Joshi2014}. The symmetries of the reductions arise from dif\/ferent paths on this hypercube from the points connected via~\eqref{twist}.

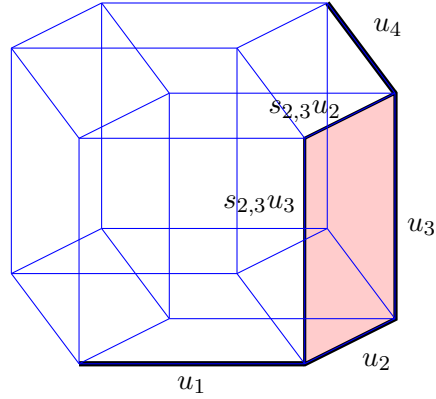
\begin{figure}[t]\centering
\begin{tikzpicture}[scale=3]
\draw[line width=2pt,black] (0,0) -- (1,0) node[below, midway] {$u_1$} -- (1.4,.2) node[below right, midway] {$u_2$} -- (1.4,1.2) node[below right, midway] {$u_3$} -- (1.1,1.6) node[above right, midway] {$u_4$} ;
\draw [fill=red!20,very thick] (1,0)--(1.4,.2) -- (1.4,1.2) -- (1,1) -- cycle;
\draw[blue] (0.,0.)--(-0.3,0.4);
\draw[blue] (0.,0.)--(0.4,0.2);
\draw[blue] (0.,0.)--(1.,0.);
\draw[blue] (0.,0.)--(0.,1.);
\draw[blue] (-0.3,0.4)--(0.1,0.6);
\draw[blue] (-0.3,0.4)--(0.7,0.4);
\draw[blue] (-0.3,0.4)--(-0.3,1.4);
\draw[blue] (0.4,0.2)--(0.1,0.6);
\draw[blue] (0.4,0.2)--(1.4,0.2);
\draw[blue] (0.4,0.2)--(0.4,1.2);
\draw[blue] (0.1,0.6)--(1.1,0.6);
\draw[blue] (0.1,0.6)--(0.1,1.6);
\draw[blue] (1.,0.)--(0.7,0.4);
\draw[blue] (1.,0.)--(1.4,0.2);
\draw[blue] (1.,0.)--(1.,1.);
\draw[blue] (0.7,0.4)--(1.1,0.6);
\draw[blue] (0.7,0.4)--(0.7,1.4);
\draw[blue] (1.4,0.2)--(1.1,0.6);
\draw[blue] (1.4,0.2)--(1.4,1.2);
\draw[blue] (1.1,0.6)--(1.1,1.6);
\draw[blue] (0.,1.)--(-0.3,1.4);
\draw[blue] (0.,1.)--(0.4,1.2);
\draw[blue] (0.,1.)--(1.,1.);
\draw[blue] (-0.3,1.4)--(0.1,1.6);
\draw[blue] (-0.3,1.4)--(0.7,1.4);
\draw[blue] (0.4,1.2)--(0.1,1.6);
\draw[blue] (0.4,1.2)--(1.4,1.2);
\draw[blue] (0.1,1.6)--(1.1,1.6);
\draw[blue] (1.,1.)--(0.7,1.4);
\draw[blue] (1.,1.)--(1.4,1.2);
\draw[blue] (0.7,1.4)--(1.1,1.6);
\draw[blue] (1.4,1.2)--(1.1,1.6);
\node at (.8,.7) {$s_{2,3} u_3$};
\node at (1,1.12) {$s_{2,3} u_2$};
\end{tikzpicture}
\caption{The reduction associated with $q$-$\mathrm{P}_{\rm VI}$ visualized on a hypercubic lattice. The relations between the values of $s_{2,3} u_2$ and $s_{2,3} u_3$ are def\/ined in terms of the highlighted face on the cube.}
\end{figure}

The key to constructing Lax pairs for periodic reductions is that we have two parameters, $\alpha_l$ and $\beta_m$, whereas the reductions of~\eqref{lpkdv} and~\eqref{dSKdV} depend upon a single variable $t= \alpha_l - \beta_m$ and $t = \alpha_l/\beta_m$ respectively, which is constant with respect to shifts $(l,m) \to (l+n_1,m + n_2)$. We simply need to choose a spectral variable that is not constant with respect to the shift $(l,m) \to (l+n_1,m + n_2)$.

Let us f\/irst treat the $h$-dif\/ference case. The correspondence between the discrete Garnier systems is made simple by taking periodic reductions of~\eqref{lpkdv} with
\begin{gather}\label{parchoiceh}
x = \alpha_l + a_l, \qquad t = \alpha_l - \beta_m + b_m,
\end{gather}
where $a_l$ and $b_m$ are $n_1$-periodic and $n_2$-periodic functions of $l$ and $m$ respectively. Note that an operator that shifts $(l,m) \to (l+n_1,m + n_2)$ by~\eqref{parchangeh} has the ef\/fect of f\/ixing $t$, and has the ef\/fect of shifting $x \to x+h$. The operator that shifts $(l,m) \to (l,m - n_2)$ f\/ixes $x$ and shifts $t \to t+h$. A matrix inducing the shift in $x$ may be written as
\begin{gather*}
A(x,t) = M_{l+n_1,m+n_2-1}\cdots M_{l+n_1,m} L_{l,n_1-1} \cdots L_{l,m},
\end{gather*}
where $L_{l,m}$ and $M_{l,m}$ are given by~\eqref{LaxdpKdV} and we have assumed the correspondence between $\alpha_l$ and $\beta_m$ and $x$ and $t$ is given by~\eqref{parchoiceh}. By writing~$A(x,t)$ in this way, we have chosen a path of initial conditions represents an L-shaped path. For the other operator, we have
\begin{gather*}
R(x,t) = M_{l,m-n_2}^{-1} \cdots M_{l,m-1}^{-1},
\end{gather*}
which brings us to the following result.

\begin{thm}
The $h$-Garnier system, as defined by the mapping~\eqref{hGarnierAction}, arises as a periodic $(N-2,2)$-reduction of~\eqref{lpkdv}.
\end{thm}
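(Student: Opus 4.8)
The plan is to realize the monodromy matrix of the $h$-Garnier linear problem directly as the transfer matrix of the lattice Lax pair \eqref{LaxdpKdV} around a staircase, and then to match the two lattice translations with the two shifts that generate the Garnier dynamics. First I would impose the periodic $(N-2,2)$-reduction $w_{l+N-2,m+2}=w_{l,m}$ together with the parameter evolution \eqref{parchangeh}, so that $n_1=N-2$ and $n_2=2$. Since each of $L_{l,m}$ and $M_{l,m}$ in \eqref{LaxdpKdV} is already of the form \eqref{difffactor}, with $u=w_{l+1,m}-w_{l,m}$ and $u=w_{l,m+1}-w_{l,m}$ respectively, the L-shaped transfer matrix $A(x,t)=M_{l+n_1,m+n_2-1}\cdots M_{l+n_1,m}L_{l+n_1-1,m}\cdots L_{l,m}$ is automatically a product of $N-2$ horizontal factors and $2$ vertical factors, i.e.\ of the form \eqref{prodform} with $N=2m+4$. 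Under the identification \eqref{parchoiceh} the $N$ singular points $a_i$ are the parameters $\alpha_{l'}$ and $\beta_{m'}$ along the path, and I would check that the constraint \eqref{constrainth} is automatic: the edge differences telescope around the staircase, $\sum_i u_i=(w_{l+n_1,m}-w_{l,m})+(w_{l+n_1,m+n_2}-w_{l+n_1,m})=w_{l+n_1,m+n_2}-w_{l,m}=0$, by periodicity.

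The second step is to match the two commuting shifts. The diagonal shift $(l,m)\mapsto(l+n_1,m+n_2)$ fixes $t$ but sends $\alpha_l\mapsto\alpha_l+h$, hence by \eqref{parchoiceh} it acts as $x\mapsto x+h$; the transfer matrix $A(x,t)$ is exactly the operator realizing this shift on the wavefunction, so the reduced linear system is precisely \eqref{lineardiff}, $Y(x+h)=A(x)Y(x)$. The orthogonal shift $(l,m)\mapsto(l,m-n_2)$ fixes $x$ but sends $t\mapsto t+h$, and is realized by $R(x,t)=M_{l,m-n_2}^{-1}\cdots M_{l,m-1}^{-1}$, a product of $n_2=2$ inverse vertical factors; this plays the role of the deformation \eqref{Atildeev}, $\tilde Y(x)=R(x)Y(x)$. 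The compatibility \eqref{comp} of this pair is then nothing but the discrete zero-curvature relation $L_{l,m+1}M_{l,m}=M_{l+1,m}L_{l,m}$ propagated around the reduced lattice, which holds precisely because $w$ solves \eqref{lpkdv} on every face; I would verify this by sliding the two $M^{-1}$ factors through the staircase using the face relations.

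Finally I would identify the resulting map with $T_{i,j}$. Using \eqref{hinv} together with the shift invariance \eqref{hinvar}, each inverse vertical factor is, up to the scalar prefactor, of the shape $L(x-h,u,a)^{-1}$, so $R(x,t)$ has exactly the form $L(x-h,u_2,a_2)^{-1}L(x-h,u_1,a_1)^{-1}$ appearing in Proposition~\ref{tranha1a2}; the two vertical edges occupy positions $i$ and $j$ in the transfer matrix, and the shift $t\mapsto t+h$ is the shift of the two associated singular points $a_i,a_j$ by $h$ together with $d_1,d_2\mapsto d_1-h,d_2-h$. By the uniqueness in Theorem~\ref{hisolattice}, this deformation coincides with the Garnier generator \eqref{hGarnierAction}, and the composite of two elementary shifts reproduces $T_{i,j}=T_i\circ T_j$. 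The main obstacle I anticipate is the spectral-parameter bookkeeping in this last identification: one must check that the staircase transfer matrix closes up as a genuine function on the reduced phase space, that the diagonal shift acts cleanly as $x\mapsto x+h$ while preserving the determinant \eqref{deth} and the leading coefficients $A_{m+2}=I$ and $A_{m+1}=\operatorname{diag}(d_1,d_2)$, and that the factor positions $i,j$ assigned to the two vertical edges match the Garnier labelling rather than a conjugate of it.
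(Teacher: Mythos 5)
Your proposal follows essentially the same route as the paper's proof: the L-shaped transfer matrix built from the factors \eqref{LaxdpKdV} is identified with the product form \eqref{prodform}, the constraint \eqref{constrainth} is obtained by telescoping the edge differences around the periodic staircase, and the two inverse vertical factors are recognized, via \eqref{hinv} and \eqref{hinvar}, as the matrix $R(x)=L(x-h,u_2,a_2)^{-1}L(x-h,u_1,a_1)^{-1}$ of Proposition~\ref{tranha1a2}, so that the compatibility \eqref{comp} with $\tilde{A}(x,t)=A(x,t+h)$ reproduces $T_{1,2}$. The one issue you flag but leave open---whether the two vertical edges can be made to sit at general positions $i,j$ so that the reduction yields $T_{i,j}$ rather than only $T_{1,2}$---is precisely what the paper settles in its closing remark: the $S_N$ action \eqref{permute}, i.e.\ the Yang--Baxter map \eqref{FV}, is equivalent to using the face relation \eqref{lpkdv} to re-route the staircase of initial conditions, so every $T_{i,j}=\sigma_{(1i)(2j)}\circ T_{1,2}\circ\sigma_{(1i)(2j)}$ is realized by the same reduction with a different choice of path; adding that observation closes your argument.
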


\begin{proof}
We start by showing that $T_{1,2}$ arises as a periodic reduction. Up to relabeling for some f\/ixed $l$, $m$, we let
\begin{gather*}
b_{m+1} = a_1,\qquad b_m = a_2,\qquad a_{l+N-3} = a_3,\qquad \ldots, \qquad a_l = a_N,
\end{gather*}
which is extended periodically with periods $2$ and $N-2$. For that same f\/ixed $l$, $m$ we let
\begin{gather*}
u_1 = w_{l+ N-2, m+2} - w_{l+ N-2, m+ 1}, \qquad u_2 = w_{l+ N-2,m+ 2} - w_{l+ N-2, m+ 1},\\
u_3 = w_{l+ N-2, m} - w_{l+ N-1, m}, \qquad \ldots, \qquad u_N = w_{l+ 1, m} - w_{l, m},
\end{gather*}
which, due to \eqref{twist} in the case that $T$ is the identity (i.e., the periodic case), then these values are also extended periodically in the lattice. Furthermore, the constraint
\begin{gather*}
u_1 + \dots + u_N = w_{l+N-2,m+2} - w_{l,m} = 0,
\end{gather*}
by \eqref{twist}. Under this labelling of the initial conditions the operator $A(x,t)$ is of the form~\eqref{prodform} where each factor is of the form \eqref{difffactor}. Furthermore, due to the periodicity, we have that
\begin{gather*}
R(x,t) = L(x,u_2,a_2+t+h)^{-1} L(x,u_1,a_1+t+h)^{-1} = L_2(x-h)^{-1} L_1(x-h)^{-1}.
\end{gather*}
The compatibility, given by \eqref{comp} where $\tilde{A}(x,t) = A(x,t+h)$ coincides with the computations in Proposition~\ref{tranha1a2}. To complete the correspondence, one notices that the action of~$S_n$ def\/ined by~\eqref{FV} is equivalent to using~\eqref{lpkdv} to def\/ine a dif\/ferent path of initial conditions.
\end{proof}

Due to the equivalence between \eqref{hGarnierAction} and the birational form, namely \eqref{hGarnierBirrational}, also arises as a~reduction, as do any of the special cases that have arisen in Section~\ref{sec:special}.

The correspondence between the $q$-Garnier systems is made simple by taking twisted reductions of \eqref{dSKdV} with
\begin{gather}\label{parchoiceq}
x = \alpha_l, \qquad t = \alpha_l/\beta_m.
\end{gather}
We may construct the matrix connecting points connected via \eqref{twist}, which is given by
\begin{gather*}
A'(x,t) = M_{l+n_1,m+n_2-1}\cdots M_{l+n_1,m} L_{l,n_1-1} \cdots L_{l,m},
\end{gather*}
where the matrices, $L_{l,m}$ and $M_{l,m}$ are given by~\eqref{LaxdSKdV} and we have assumed the correspondence between $x$ and $t$ and $\alpha_l$ and $\beta_m$ are given by~\eqref{parchoiceq}. As noted in~\cite{Ormerod2014a}, this has a non-trivial ef\/fect on the solutions of the linear problem. This introduces a twist matrix whose ef\/fect is given by
\begin{gather*}
T Y(x,t) = S Y(x,t) ,
\end{gather*}
where $S$ is independent of the spectral parameter, $x$. It follows that the corresponding associated linear problem takes the form
\begin{gather*}
Y(qx,t) = A(x,t) Y(x,t),
\end{gather*}
where
\begin{gather*}
A(x,t) = S^{-1} A'(x,t),
\end{gather*}
where $A'(x,t)$ is as above. This is the same path, but for dif\/ferent matrices. Also, the matrix inducing the transformation~$T_{1,2}$ is also changed by the twist to
\begin{gather*}
R(x,t) = S M_{l,m-n_2}^{-1} \cdots M_{l,m-1}^{-1}
\end{gather*}
and is also in terms of the corresponding $L_{l,m}$ and $M_{l,m}$. As in~\cite{Ormerod2014a}, we may calculate the twist via
\begin{gather}\label{twistmatcalc}
T(R(x,t)) S = (T_{1,2} S) R(x,t),
\end{gather}
which only requires $R(x,t)$ to obtain.

\begin{thm}
The $q$-Garnier system, as defined by the mapping~\eqref{qGarnierAction}, arises as a twisted $(N-2,2)$-reduction of~\eqref{LaxdSKdV} where the twist is an affine linear transformation,
\begin{gather}\label{affine}
T(u) = \theta_2/\theta_1 u + b.
\end{gather}
\end{thm}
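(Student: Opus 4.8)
The plan is to follow the proof of the preceding theorem for the $h$-difference case, the one genuinely new ingredient being the nontrivial twist matrix forced by the multiplicative structure. First I would fix the periods $(n_1,n_2)=(N-2,2)$ and, for a fixed base point $(l,m)$, relabel the lattice parameters so that $\beta_{m+1},\beta_m,\alpha_{l+N-3},\dots,\alpha_l$ become $a_1,\dots,a_N$, extended with periods $2$ and $N-2$, while the spectral and deformation variables are set by \eqref{parchoiceq}. Reading the differences $u_i=w_{\bullet+1}-w_\bullet$ along an L-shaped staircase of initial conditions identifies the $u_i$ exactly as in the $h$-case, and the twist \eqref{twist} propagates them to the whole lattice. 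With the parameter change $\alpha_{l+n_1}=q\alpha_l$, $\beta_{m+n_2}=q\beta_m$, the shift $(l,m)\mapsto(l+n_1,m+n_2)$ fixes $t$ and sends $x\mapsto qx$, giving the associated linear problem, whereas $(l,m)\mapsto(l,m-n_2)$ fixes $x$ and sends $t\mapsto qt$, which is the deformation direction $T_{1,2}$.

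Next I would identify the monodromy and, crucially, the twist matrix $S$ in $TY=SY$. Since each $L_{l,m}$ and $M_{l,m}$ of \eqref{LaxdSKdV} is of the form \eqref{qdifffactor}, the matrix $A'(x,t)=M_{l+n_1,m+n_2-1}\cdots M_{l+n_1,m}L_{l,n_1-1}\cdots L_{l,m}$ is a bare product of factors $L(x,u_i,a_i)$ as in \eqref{prodform}. Under the affine twist \eqref{affine} the additive constant $b$ cancels in every difference $w_{\bullet+1}-w_\bullet$, so $T$ scales each $u_i$ by the single factor $\theta_2/\theta_1$; by the commutation relation \eqref{commDL} this uniform scaling is precisely the effect of conjugating an $L$-factor by $D=\operatorname{diag}(\theta_1,\theta_2)$. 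This forces $S$ to be diagonal and proportional to $D$ (its exact orientation being fixed by \eqref{commDL}), so that $A(x,t)=S^{-1}A'(x,t)$ acquires the form \eqref{qfactor}, with $D$ sitting on the single leading factor.

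Finally I would match the deformation. The reduction gives $R(x,t)=S\,M_{l,m-n_2}^{-1}\cdots M_{l,m-1}^{-1}$, which with $n_2=2$, $S\propto D$, and the relabelling \eqref{qident2} of spectral arguments becomes, after commuting $D$ through the two inverse factors via \eqref{commDL}, the matrix $L(x/q,u_2,a_2)^{-1}L(x/q,u_1,a_1)^{-1}D^{-1}$ defining $T_{1,2}$; then \eqref{comp} with $\tilde A(x,t)=A(x,qt)$ reproduces verbatim the computation leading to \eqref{qT12}. The self-consistency of the twist is then checked against \eqref{twistmatcalc}, $T(R(x,t))S=(T_{1,2}S)R(x,t)$: since $T_{1,2}$ leaves $\theta_1,\theta_2$ invariant by \eqref{qGarnierAction}, the diagonal $S\propto D$ solves this relation and leaves no freedom beyond the harmless constant $b$. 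The remaining generators $T_{i,j}$ follow by the $S_N$ relabelling of the staircase exactly as in the $h$-case.

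The main obstacle is this middle step: pinning down $S$ and thereby establishing, rather than assuming, the affine form \eqref{affine} of the twist. A general M\"obius twist, under which \eqref{dSKdV} is also invariant, would not scale the difference variables uniformly and so would destroy the product-of-$L$ structure; the content of the theorem is that the multiplicative factor $D$ distinguishing \eqref{qfactor} from a naive product is exactly the shadow on the solution space of the twist's linear part $\theta_2/\theta_1$. Once $S$ is recognised as $\operatorname{diag}(\theta_1,\theta_2)$ up to scalar, everything else reduces to the index bookkeeping already carried out in the $h$-difference reduction.
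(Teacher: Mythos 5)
Your proposal is correct and follows essentially the same route as the paper's proof: the paper likewise relabels the staircase data exactly as in the $h$-difference case, takes the twist matrix to be the diagonal matrix $S=\operatorname{diag}\big(\theta_1^{-1},\theta_2^{-1}\big)$ solving \eqref{twistmatcalc} as guaranteed by \eqref{commDL}, and concludes that $A(x,t)=S^{-1}A'(x,t)$ has precisely the factored form \eqref{prodform}/\eqref{qfactor}, after which the isomonodromic deformations follow. The only slip is your parenthetical claim that $S$ is proportional to $D$: it is in fact proportional to $D^{-1}$ (so that $S^{-1}\propto D$), which is what your own, correct, conclusion that $A(x,t)$ acquires the form \eqref{qfactor} with $D$ on the leading factor actually requires.
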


\begin{proof}
The twist matrix in the case of \eqref{affine} is given by the diagonal matrix
\begin{gather*}
S = \begin{pmatrix} \theta_1^{-1} & 0 \\ 0 & \theta_2^{-1} \end{pmatrix},
\end{gather*}
which is a solution to \eqref{twistmatcalc}, as guaranteed by the relation~\eqref{commDL}. The remainder of this proof follows in a similar manner to before; up to relabeling for some f\/ixed~$l$,~$m$, we let
\begin{gather*}
b_{m+1}= a_1,\qquad b_m = a_2,\qquad a_{l+N-3} = a_3,\qquad \ldots, \qquad a_l = a_N,
\end{gather*}
which is extended periodically with periods $2$ and $N-2$. For that same f\/ixed~$l$, $m$ we let
\begin{gather*}
u_1 = w_{l+ N-2, m+2} - w_{l+ N-2, m+ 1}, \qquad u_2 = w_{l+ N-2,m+ 2} - w_{l+ N-2, m+ 1},\\
u_3 = w_{l+ N-2, m} - w_{l+ N-1, m}, \qquad \ldots, \qquad u_N = w_{l+ 1, m} - w_{l, m}.
\end{gather*}
in which case the form of $A(x,t)$ is precisely given by \eqref{prodform} in which case the discrete isomonodromic deformations follow.
\end{proof}

\begin{cor}
The discrete Painlev\'e equations of types $q$-$\mathrm{P}\big(A_3^{(1)}\big)$, $q$-$\mathrm{P}\big(A_2^{(1)}\big)$ and $q$-$\mathrm{P}\big(A_1^{(1)}\big)$ arise as special cases of twisted reductions of \eqref{dSKdV}. Similarly, $d$-$\mathrm{P}\big(A_2^{(1)}\big)$ and $d$-$\mathrm{P}\big(A_1^{(1)}\big)$ arise as special cases of periodic reductions of~\eqref{lpkdv}.
\end{cor}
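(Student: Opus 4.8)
The plan is to assemble this corollary from the two reduction theorems just proved together with the specializations collected in Section~\ref{sec:special} and summarized in Table~\ref{sumtable}. Those theorems establish that the (asymmetric) $h$-Garnier and $q$-Garnier systems, as defined by the actions \eqref{hGarnierAction} and \eqref{qGarnierAction}, coincide with periodic and twisted $(N-2,2)$-reductions of \eqref{lpkdv} and \eqref{dSKdV} respectively, for arbitrary $N=2m+2$. Meanwhile Section~\ref{sec:special} identifies each named discrete Painlev\'e equation with the Garnier system restricted to an invariant stratum cut out by the constraints of Table~\ref{sumtable}. It therefore suffices to verify that these restrictions are compatible with the reduction identifications, so that the composite \emph{reduction $\to$ specialization} realizes each Painlev\'e equation as a special case of the reduction.

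First I would fix the relevant value of $N$ in each case: $N=4$ for $q$-$\mathrm{P}\big(A_3^{(1)}\big)$, $N=6$ for $q$-$\mathrm{P}\big(A_2^{(1)}\big)$ and $d$-$\mathrm{P}\big(A_2^{(1)}\big)$, and $N=8$ for $q$-$\mathrm{P}\big(A_1^{(1)}\big)$ and $d$-$\mathrm{P}\big(A_1^{(1)}\big)$. Under the labelling of initial conditions used in the proofs of the two reduction theorems, each $u_i$ is a difference $w_{l+1,m}-w_{l,m}$ (or $w_{l,m+1}-w_{l,m}$) along the staircase, so the Garnier constraint \eqref{constrainth} is the telescoping identity $u_1+\cdots+u_N=0$ forced by \eqref{twist} and imposes nothing further; fixing $N$ merely fixes the period $(n_1,n_2)=(N-2,2)$ (for the $q$-cases, the twist \eqref{affine} with data $\theta_1,\theta_2$). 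I would then impose the additional constraints of Table~\ref{sumtable}: in the $q$-cases $\kappa_1=\kappa_2$, $\theta_1=\theta_2$ and $r_{1,2}=0$, and in the $d$-cases either $d_1\neq d_2$ (the generic stratum giving $A_2^{(1)}$) or $d_1=d_2$ together with $r_{2,1}=0$. Via \eqref{kappavals}, \eqref{dvals}, \eqref{r21h} and \eqref{r12q} each of these is a condition on the $u_i$, hence on the staircase of lattice values.

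The only genuine content is the invariance check, namely that each stratum is preserved by the translations $T_{i,j}$ and hence by the reduction map. This is exactly what was established case-by-case in Section~\ref{sec:special}: for instance $d_{2,1}$ is fixed by $T_{1,2}$ once $d_1=d_2$, and the vanishing loci $r_{2,1}=0$ and $\kappa_{2,1}=0$ are preserved by the evolution. Restricting the reduction to the corresponding invariant staircase therefore yields precisely the lower-dimensional map, which by Section~\ref{sec:special} is the named discrete Painlev\'e equation; composing with the reduction identifications of the two preceding theorems gives the corollary. I expect the main obstacle to be purely organizational rather than computational: matching the period $(N-2,2)$ and, in the $q$-setting, the twist \eqref{affine} against the parameter data $(\kappa_i,\theta_i)$ or $(d_i)$ of each stratum, so that the defining constraints are expressed consistently in the lattice variables $w_{l,m}$ and are manifestly invariant there.
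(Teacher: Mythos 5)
Your proposal is correct and follows essentially the same route as the paper: the corollary is an immediate consequence of the two reduction theorems combined with the Section~\ref{sec:special} identifications, the only substantive point being (as you note) that the Table~\ref{sumtable} constraints cut out strata invariant under the $T_{i,j}$, which Section~\ref{sec:special} already verified. One trivial slip: for $q$-$\mathrm{P}\big(A_3^{(1)}\big)$ (the $N=4$ case) the relevant conditions are the generic ones $\kappa_1\neq\kappa_2$, $\theta_1\neq\theta_2$ rather than the equalities you list, but since that stratum is open and requires no invariance check, nothing in your argument is affected.
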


In the cases of discrete Painlev\'e equations of types $q$-$\mathrm{P}\big(A_3^{(1)}\big)$, $q$-$\mathrm{P}\big(A_2^{(1)}\big)$ and $d$-$\mathrm{P}\big(A_2^{(1)}\big)$, by replacing $u$ with a dif\/ference in $w_{l,m}$-values, we have obtained explicit correspondences between the variables parameterizing the lattice equation and the Painlev\'e variables. However, in the cases of $d$-$\mathrm{P}\big(A_1^{(1)}\big)$ and $q$-$\mathrm{P}\big(A_1^{(1)}\big)$, we do not have this. We only know that the moduli spaces of the relevant dif\/ference equations have anticanonical divisors with two irreducible components~\cite{Rains2013}.

At this point, we are unsure as to how the symmetric cases may arise as twisted or periodic reductions. The problem with these cases is that the reductions described require that the spectral parameter enters via~\eqref{prodform} in the same way for each factor, whereas in the symmetric cases, the product form requires some involution of the spectral parameter. It may be the case that this is more natural from the perspective of reductions of cases higher than~\eqref{lpkdv} or~\eqref{dSKdV}.

\section{Discussion}

By far the most interesting feature of the above Lax pairs is the existence of a symmetry, which manifests itself in the elliptic case quite naturally, which we will present separately. There are two interesting consequences we may derive from this work; the existence of symmetric Lax pairs for the lower discrete Painlev\'e equations and the existence of discrete symmetric Lax pairs possessing continuous isomonodromic deformations, which we shall present in another paper.

From the above work, we have shown that the $d$-$\mathrm{P}\big(A_1^{(1)}\big)$ and $q$-$\mathrm{P}\big(A_1^{(1)}\big)$ arise as reductions of partial dif\/ference equations, however, it is unclear at this point how to make the full correspondence between the $u$-variables and the Painlev\'e variables. We have also shown that the $q$-Garnier system def\/ined by Sakai in~\cite{Sakai:Garnier} arises as a reduction, but the problem of f\/inding reductions for the symmetric Garnier systems is not clear from this work.

\subsection*{Acknowledgements}

The work of EMR was partially supported by the National Science Foundation under the grant DMS-1500806.

\addcontentsline{toc}{section}{References}
\LastPageEnding

\end{document}